\RequirePackage[l2tabu,orthodox]{nag}
\documentclass
[11pt,letterpaper]
{article} 

\usepackage[notes=true,later=false,camera=false]{dtrt}
\usepackage[utf8]{inputenc}
\usepackage{ stmaryrd }
\usepackage{xspace,enumerate}
\usepackage[T1]{fontenc}
\usepackage[full]{textcomp}
\usepackage[american]{babel}
\usepackage{mathtools}

\renewcommand{\hat}[1]{\widehat{#1}}
\usepackage{amsthm}
\usepackage{thmtools}
\usepackage[capitalise,nameinlink]{cleveref}

\usepackage{empheq}

\hypersetup{
colorlinks=true,
urlcolor=Cerulean,
linkcolor=NavyBlue,
citecolor=PineGreen,
linktocpage=true,
}
\renewcommand*\backref[1]{\ifx#1\relax \else (pg. #1) \fi}

\renewcommand{\tilde}{\widetilde}

\usepackage{paralist}
\usepackage{turnstile}
\usepackage[framemethod=TikZ]{mdframed}
\mdfsetup{frametitlealignment=\center}
\usepackage{tikz}
\usepackage{caption}
\DeclareCaptionType{Algorithm}
\usepackage{newfloat}

\newtheorem{theorem}{Theorem}[section]
\newtheorem{lemma}[theorem]{Lemma}
\newtheorem*{lemma*}{Lemma}
\newtheorem{claim}[theorem]{Claim}
\newtheorem{proposition}[theorem]{Proposition}
\newtheorem{fact}[theorem]{Fact}
\newtheorem{corollary}[theorem]{Corollary}

\newtheorem{question}[theorem]{Question}

\theoremstyle{definition}
\newtheorem{definition}[theorem]{Definition}
\newtheorem*{definition*}{Definition}
\newtheorem{remark}[theorem]{Remark}

\newtheorem{example}[theorem]{Example}

\usepackage[linesnumbered,ruled]{algorithm2e}
\crefname{lemma}{Lemma}{Lemmas}
\crefname{fact}{Fact}{Facts}
\crefname{theorem}{Theorem}{Theorems}
\crefname{mtheorem}{Theorem}{Theorems}
\crefname{itheorem}{Theorem}{Theorems}
\crefname{corollary}{Corollary}{Corollaries}
\crefname{claim}{Claim}{Claims}
\crefname{example}{Example}{Examples}
\crefname{algorithm}{Algorithm}{Algorithms}
\crefname{problem}{Problem}{Problems}
\crefname{definition}{Definition}{Definitions}
\crefname{equation}{Eq.}{Eq.}
\crefname{strategy}{Strategy}{Strategies}
\crefname{observation}{Observation}{Observations}

\Crefname{algocf}{Algorithm}{Algorithms}

\usepackage[
letterpaper,
top=1.2in,
bottom=1.2in,
left=1in,
right=1in]{geometry}
\usepackage{newpxtext} 
\usepackage{textcomp} 
\usepackage[scr=rsfso]{mathalfa}
\usepackage{bm} 

\usepackage{microtype}

\usepackage{footnotebackref}

\renewcommand{\bar}{\overline}
\newcommand{\rank}{\operatorname{rank}}

\newcommand{\NRD}{\operatorname{NRD}}
\newcommand{\op}{\operatorname{op}}
\newcommand{\supp}{\operatorname{supp}}
\newcommand{\AND}{\operatorname{AND}}
\newcommand{\Aut}{\operatorname{Aut}}

\allowdisplaybreaks
\newcommand{\FormatAuthor}[3]{
\begin{tabular}{c}
#1 \\ {\small\texttt{#2}} \\ {\small #3}
\end{tabular}
}

\newcommand{\R}{{\mathbb R}}
\newcommand{\N}{{\mathbb N}}

\newcommand{\eps}{\varepsilon}

\newcommand{\F}{{\mathbb F}}

\newcommand{\E}{{\mathbb E}}
\newcommand{\cE}{{\mathcal E}}
\newcommand{\one}{\mathbf 1}

\newcommand{\1}{\mathbf{1}}

\newcommand{\C}{\mathbb C}
\newcommand{\Bits}{\{0,1\}}
\newcommand{\zo}{\Bits}

\newcommand{\ket}[1]{\lvert #1 \rangle}
\newcommand{\braket}{\ket}

\newcommand{\cH}{\mathcal H}

\newcommand{\cG}{\mathcal G}

\newcommand{\poly}{\mathrm{poly}}

\newcommand{\mper}{\,.}
\newcommand{\mcom}{\,,}

\newcommand{\Id}{\operatorname{Id}}

\newcommand{\seq}{\subseteq}
\newcommand{\spn}{\operatorname{span}}

\newcommand{\cM}{\mathcal M}

\newcommand{\cL}{\mathcal L}

\renewcommand{\emptyset}{\varnothing}
\renewcommand{\geq}{\geqslant}
\renewcommand{\leq}{\leqslant}
\renewcommand{\preceq}{\preccurlyeq}
\renewcommand{\succeq}{\succcurlyeq}
\renewcommand{\epsilon}{\varepsilon}

\newcommand{\ignore}[1]{}
\newcommand{\SPR}{\operatorname{SPR}}
\newcommand{\Span}{\spn}

\begin{document}
\author{
  \begin{tabular}{ccc}
    \FormatAuthor{Arpon Basu}{arpon.basu@princeton.edu}{Princeton University} &
    \FormatAuthor{Joshua Brakensiek\thanks{Supported by NSF Award DMS-2503280}}{josh.brakensiek@berkeley.edu}{UC Berkeley} &
    \FormatAuthor{Aaron Putterman\thanks{Supported in part by the Simons Investigator Awards of Madhu Sudan and Salil Vadhan and AFOSR award FA9550-25-1-0112.}}{aputterman@g.harvard.edu}{Harvard University}
  \end{tabular}
}

\title{Many Hamiltonians Are Sparsifiable}
\maketitle

\vspace{-1em}
\begin{abstract}
 We study the problem of Hamiltonian sparsification: given a parameter $\epsilon \in (0,1)$ and an $n$-qubit Hamiltonian $H$ which is the sum of $r$-local positive semi-definite (PSD) terms $H_1, \dots H_m$, our goal is to compute a \emph{sparse} set $L \subseteq [m]$, along with weights $w: L \rightarrow \R_{\geq 0}$ such that for every state $\braket{\psi} \in \mathbb{C}^{2^n}$, 
 \[
\sum_{i \in L} w(i) \cdot \langle \psi {\vert} H_i {\vert} \psi \rangle \in (1 \pm \eps) \cdot \sum_{i = 1}^m\langle \psi {\vert}  H_i {\vert} \psi \rangle.
 \]

When the set $L$ is significantly smaller than $m$, this reduces the number of terms in the underlying system, while still ensuring that the behavior of the system is essentially unchanged. Perhaps surprisingly, despite the strength of this approximation condition, we show that many Hamiltonians indeed are sparsifiable to a number of terms much smaller than $n^r$, including: (a) Hamiltonians where each term is an $r$-local Pauli string, (b) Hamiltonians where each term is an $r$-local \emph{random} operator of rank $R$, for $R \geq 2^{r-1}+1$, and (c) Hamiltonians where each term is an arbitrary $r$-local operator of rank $\geq 2^r -1$ (a.k.a. Quantum SAT). 

Taken together, our results show that the sparsifiability of Hamiltonians is a robust phenomenon, contrary to prevailing belief (see for instance, Aharonov-Zhou ITCS 2019, QIP 2019). Our results find applications, for instance, to better (semi-)streaming algorithms for quantum Max-Cut, answering a question left open by Kallaugher and Parekh (FOCS 2022). In fact, our results even codify that quantum systems are often \emph{easier to sparsify} than their classical counterparts.
    
 On a technical level, our results rely on an extensive analysis of a quantity that we call the ``non-redundancy'' of a Hamiltonian. Roughly speaking, this measures the largest set of terms where each can be ``uniquely satisfied'': i.e., such that for each term $H_i$ in the set, there is a state $\braket{\psi}$ which gives term $H_i$ non-zero energy, and all other terms $H_{\neq i}$ zero energy. This quantity is inherently related to the sparsifiability of the Hamiltonian, but is often easier to work with, as it is governed entirely by the manner in which the kernels of the Hamiltonian terms interact.
\end{abstract}

\pagenumbering{gobble}

\clearpage
 \microtypesetup{protrusion=false}
 \setcounter{tocdepth}{2}
 \renewcommand{\baselinestretch}{1.0}\normalsize
\tableofcontents
\renewcommand{\baselinestretch}{1.1}\normalsize
  \microtypesetup{protrusion=true}
\thispagestyle{empty}
\setcounter{page}{0}

\clearpage

\pagestyle{plain}
\pagenumbering{arabic}

\section{Introduction}

We study the problem of Hamiltonian sparsification:\footnote{Occasionally, this task is also referred to as \emph{dilution}.} here, we are given a parameter $\eps \in (0,1)$, along with an $n$-qubit Hamiltonian $\mathcal{H}$ which is the sum of $r$-local positive semi-definite (PSD) terms $H_1, \dots H_m$. Our goal is to find a \emph{sparse} subset of terms $L \subseteq [m]$, along with new weights $w: L \rightarrow \R_{\geq 0}$ such that for every state $\braket{\psi} \in \C^{2^n}$, 
    \begin{align}\label{eq:preserveEnergy}
     \sum_{i \in L} w(i) \cdot \langle \psi {\vert} H_i {\vert} \psi \rangle \in (1 \pm \eps) \cdot \sum_{i = 1}^m\langle \psi {\vert}  H_i {\vert} \psi \rangle.
    \end{align}
We denote the Hamiltonian with terms $w(i) \cdot H_i: i \in L$, by $\widetilde{\mathcal{H}}$, and call $\widetilde{\cH}$ an $\eps$-sparsifier of $\cH$. Intuitively, \cref{eq:preserveEnergy} ensures that for every state $\braket{\psi}$, the energy of the original Hamiltonian $\mathcal{H}$ and the sparsified Hamiltonian $\widetilde{\mathcal{H}}$ are approximately the same, even for states $\braket{\psi}$ where $\sum_{i = 1}^m\langle \psi {\vert}  H_i {\vert} \psi \rangle$ is small (or zero).

Ideally, the number of terms in the sparsifier $\widetilde{\mathcal{H}}$ should be significantly smaller than the original number of terms in $\mathcal{H}$; in this way, the sparsifier $\widetilde{\mathcal{H}}$ is much simpler, while still ensuring that the behavior of the system is largely unchanged. Note that \cref{eq:preserveEnergy} is particularly hard to guarantee for the \emph{low-energy} subspaces, and in general these ``low-energy simulations'' have seen a considerable amount of work \cite{csahinouglu2021hamiltonian, zlokapa2024hamiltonian}.

To appreciate the implications of \cref{eq:preserveEnergy}, we can consider the case when $\cH$ is a \emph{frustration-free} Hamiltonian (e.g., \cite{AharonovALV10,MichalakisZ13,SattathMLM16}), meaning that the ground-state of $\cH$ is also in the ground-state of each individual term $H_i$. Then, because each term $H_i$ is PSD, the ground-state of $\cH$ has energy $0$. So, by \cref{eq:preserveEnergy}, any sparsifier of $\cH$  will imply preservation of the ground state and spectral gap of the Hamiltonian $H_i$: indeed, any state $\braket{\psi}$ such that $\sum_{i = 1}^m\langle \psi {\vert}  H_i {\vert} \psi \rangle = 0$ also receives zero energy in the sparsifier, and any state $\braket{\psi}$ such that $\sum_{i = 1}^m\langle \psi {\vert}  H_i {\vert} \psi \rangle = \delta$, still receives $\geq (1 - \eps) \delta$ energy in the sparsifier.

Hamiltonian sparsification is not a new direction of study; indeed, inspired by the success of analogous sparsification methods in the classical world (see for instance, works on graph sparsification \cite{BenczurK96, SpielmanS11, SpielmanT11, BatsonSS14}), several works have sought to understand whether this task of Hamiltonian sparsification is feasible, as it would have wide-ranging impacts in both algorithm design and complexity theory. Perhaps most notably, motivated by potential applications of sparsification to the Quantum PCP conjecture \cite{aharonov2013guest}, the work of Aharonov and Zhou \cite{AZ19} provided a negative result in this direction. Indeed, even for the weaker goal of so-called ``gap simulation,'' which intuitively only requires preserving the energy of ground states of the Hamiltonian along with the spectral gap, \cite{AZ19} shows that there \emph{exist} Hamiltonians $\mathcal{H}$ for which sparsification is not possible. In the time since this work \cite{AZ19}, this result has often been cited as a ``no-go'' theorem, showing that Hamiltonian sparsification has inherent limitations (see, i.e., \cite{Gharibian2019complexityof, Kohler2020TranslationallyIU, KallaugherP22}).

However, a recent and separate line of work has instead studied the ability to sparsify constraint satisfaction problems (CSPs), the natural classical analog of Hamiltonians. In this setting, a CSP $C$ is given by a collection of $n$ variables $x_1, \dots x_n \in \zo$, along with $m$ \emph{constraints} of arity $r$, where each constraint applies a predicate $P_i: \zo^r \rightarrow \zo$ to a subset of $r$ variables from $x_1, \dots x_n$. Whereas Hamiltonian sparsification seeks to preserve the energy on every state $\braket{\psi}$, CSP sparsification instead seeks to preserve, for every assignment to the variables $x_1, \dots x_n$, the number of satisfied constraints.

Despite the tremendous success of sparsification in graphs \cite{BenczurK96, SpielmanS11, SpielmanT11, BatsonSS14}, recent work \cite{FiltserK17, ButtiZ20, ChenKN20, KhannaPS24, KhannaPS25, BrakensiekG25, BrakensiekGJLW25} has shown that the sparsification landscape in CSPs is significantly more intricate; indeed, just as shown in Aharonov and Zhou's work \cite{AZ19}, there \emph{exist} classical CSPs where sparsification is \emph{not possible}. However, the results are not entirely negative: the works of \cite{KoganK15, FiltserK17, ButtiZ20, KhannaPS24, KhannaPS25, BrakensiekG25} show that there are huge classes of CSPs (including many natural classes like $\mathbf{XOR}$, $3$-$\mathbf{SAT}$, and $r$-$\mathbf{SAT}$), for which \emph{significant} sparsification is still possible. Intuitively, these works show that the sparsifiability of CSPs is not absolute, but rather depends inherently on the types of interactions (predicates) that are allowed in the CSP. 

This more general understanding of CSP sparsification motivates a natural question about Hamiltonian sparsification:

\begin{quote}
    \center{\emph{Is Hamiltonian sparsification always impossible? Or are there natural classes of Hamiltonians for which significant sparsification is possible?}}
\end{quote}

In what follows, we give a strongly affirmative answer to the second question. 

\subsection{Our Work}

As our first result, we study the sparsifiability of Hamiltonians specified by \emph{Pauli strings} (we call these \emph{Pauli Hamiltonians}). In this setting, we consider a Hamiltonian $\mathcal{H}$ where each local term is of the form $P_1\otimes\cdots\otimes P_n$, where $P_j\in\{\Id, X, Y, Z\}$ acts on the $j^{\mathrm{th}}$ qubit. When we say that these terms are $r$-local, this is equivalent to requiring that $\#\{j: P_j\neq\Id\}\leq r$. Note that arbitrary Pauli strings define operators which have eigenvalues that are $\pm 1$, thus, in our setting we consider Hamiltonians where each local term is a Pauli string shifted by the identity matrix (to ensure that each term is PSD), i.e., $H_i = P_1\otimes\cdots\otimes P_n + \Id$.\footnote{Our results also apply to Pauli strings of the form $\Id - P_1 \otimes \cdots \otimes P_n$.} Such Hamiltonians naturally arise in the analysis of quantum stabilizer codes, with each Pauli string corresponding to a stabilizer term of the code~\cite{AnshuBN23}. For this class of Hamiltonians, we show the following:

\begin{theorem}[Sparsifying Pauli Hamiltonians]
\label{thm:sparspaulihamIntro}
    Let $\cH$ be an $r$-local Pauli Hamiltonian with $m$ terms. Then, for any $\eps > 0$, there exists an $\eps$-sparsifier $L \subseteq [m]$, $w:L\to\R_{\geq 0}$ of size ${\vert}L{\vert}\leq \widetilde{O}_r(\eps^{-2}n)$, i.e. for any $\braket{\psi}\in\C^{2^n}$, 
 \[
\sum_{i \in L} w(i) \cdot \langle \psi {\vert} H_i {\vert} \psi \rangle \in (1 \pm \eps) \cdot \sum_{i = 1}^m\langle \psi {\vert}  H_i {\vert} \psi \rangle.
 \]
    
    Furthermore, there is a randomized (classical) $\poly(n, m)$ time algorithm to compute such a sparsifier.
\end{theorem}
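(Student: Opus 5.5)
Our plan is to pass from energies of the PSD terms $H_i = \Id + P^{(i)}$ to a classical sparsification problem over a suitable basis, and then invoke a non-redundancy-based sparsification bound. Write $\cH = \sum_i H_i$, and let $\Pi_i = H_i/2$, the projector onto the $+1$-eigenspace of $P^{(i)}$. The sparsifier guarantee of \cref{thm:sparspaulihamIntro} is exactly the PSD inequality $(1-\eps)\cH \preceq \sum_{i\in L} w(i) H_i \preceq (1+\eps)\cH$. So this is a spectral sparsification problem for a sum of PSD matrices, where each matrix has rank $2^{n-1}$ but very special structure.

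The first step is to exploit the group structure of Pauli strings. A product of two Pauli strings is, up to phase, again a Pauli string, so every Pauli string corresponds to a vector in $\F_2^{2n}$. Commutation is governed by the symplectic form. For a state $\braket{\psi}$, the energy is $\langle\psi|H_i|\psi\rangle = 1 + \langle \psi|P^{(i)}|\psi\rangle$.

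We then reduce to the non-redundancy framework developed later in the paper. The key quantity is the largest set of terms each of which can be ``uniquely violated'': for each $i$ in the set, there is a $\braket{\psi}$ with $\langle\psi|H_i|\psi\rangle>0$ and $H_j\braket{\psi}=0$ for all other $j$ in the set. We assume the paper's general theorem that a Hamiltonian family with non-redundancy $\NRD$ admits $\eps$-sparsifiers of size $\widetilde O(\eps^{-2}\NRD)$, up to $r$-dependent factors. Under that assumption, it suffices to show $\NRD = O_r(n)$ for $r$-local Pauli Hamiltonians, and we aim to prove $\NRD \le 2n$.

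To bound $\NRD$, suppose $S$ is a non-redundant set. For each $i\in S$ there is a $\braket{\psi_i}$ in the common kernel of $\{H_j : j\in S\setminus\{i\}\}$ that is not in the kernel of $H_i$. The kernel of $H_j$ is the $-1$-eigenspace of $P^{(j)}$. We claim the Pauli vectors of $S$ must be linearly independent over $\F_2$, modulo a sign bookkeeping. The argument runs as follows.
\begin{itemize}
\item If $P^{(i)}$ equals, up to phase, a product of the $P^{(j)}$ for $j\in S\setminus\{i\}$, consider a state in the joint $-1$-eigenspace of those $P^{(j)}$.
\item For this eigenspace to be nonempty, the relevant $P^{(j)}$ must essentially commute, so the state is an eigenvector of the product with eigenvalue $\pm1$ fixed by the phases.
\item The energy of $H_i$ is therefore determined. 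Since it must be positive for some state, it must equal $2$.
\item The leftover case, where two terms have conflicting determined signs, is handled by noting that the common kernel of the other terms then already forces $H_i$'s energy to be constant.
\end{itemize}
One subtlety is that anticommuting pairs make the joint kernel empty or non-simultaneous. We expect the cleaner route is to show that whenever the joint kernel of $S\setminus\{i\}$ is nonempty, the restriction of $P^{(i)}$ to it is either $\pm\Id$ or traceless. In the $\pm\Id$ case, $P^{(i)}$ lies in the span of the others. Independence in $\F_2^{2n}$ then gives $|S|\le 2n$.

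The main obstacle is this middle step: rigorously handling non-commuting Paulis in the kernel-intersection argument. We would first try to show that a nonempty joint $-1$-eigenspace forces the set $S\setminus\{i\}$ to be pairwise commuting on that subspace, which is a stabilizer-group argument. Finally, for the algorithmic claim we would use importance sampling. Each term is sampled with probability proportional to an upper bound on its ``sensitivity'' $\max_\psi \langle\psi|H_i|\psi\rangle/\langle\psi|\cH|\psi\rangle$, obtained from the non-redundancy-based peeling procedure, which is computable in $\poly(n,m)$ time via Gaussian elimination over $\F_2$. Concentration then follows from a matrix Chernoff bound, with an $r$-dependent loss coming from locality.
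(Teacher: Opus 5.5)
\textbf{There is a genuine gap.} Your outline rests on a step that is not available. The paper proves no theorem of the form ``non-redundancy $\NRD$ implies $\eps$-sparsifiers of size $\widetilde O(\eps^{-2}\NRD)$'' for Hamiltonians. That is precisely the open Question~\ref{question:SPR-NRD}, which is known only for classical (diagonal) predicate matrices.

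The general tool the paper does use, \cref{thm:sparsifyBKLM}, requires quantitative $\alpha$-domination rather than mere redundancy. It also multiplies $N(\alpha)/\alpha$ by $n\log m/\eps^2$, so even $N(\alpha)=O(n)$ would give only $\widetilde O(\eps^{-2}n^2)$.

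Your fallback (importance sampling plus matrix Chernoff, \cref{fact:matrix-bernstein-restated}) loses the same factor of $n$, because the ambient dimension is $2^n$. For the bound $2^n\exp(-\eps^2/3R)$ to be nontrivial you need $R=O(\eps^2/n)$. So term $i$ must be kept with probability about $\min(1, s_i n/\eps^2)$, where $s_i$ is its importance. Take $\Id+Z\otimes Z$ terms on $K_{n/2,n/2}$. Every term has importance exactly $2/n$: the extremal state for an edge $uv$ is the bipartition with $u$ flipped, which gives nonzero energy to exactly $n/2$ terms. So every sampling rate is $1$ and none of the $n^2/4$ terms is removed. This is the same mechanism that makes the paper's nullity-$1$ bound (\cref{thm:Nullity1Sparsifier}) quadratic.

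Your $\NRD$ argument is also incorrect as written. Write $v_j\in\F_2^{2n}$ for the symplectic vector of $P^{(j)}$. The $v_j$ of a non-redundant set need not be linearly independent: $\{Z_1Z_2,\,Z_2Z_3,\,Z_1Z_3\}$ is non-redundant, since on the joint $-1$-eigenspace of any two, the third acts as $+1$. The right count is affine. An anticommuting pair has no common $-1$-eigenvector, because $P\psi=Q\psi=-\psi$ forces $PQ\psi=\psi=-PQ\psi$. Hence a non-redundant set of size at least $3$ is pairwise commuting, and then $|S|\le\dim\spn_{\F_2}\{v_j\}+1\le n+1$.

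Fixing this does not close the gap. The missing idea is the paper's: \emph{restore commutativity} and land in a classical problem that already has near-linear sparsifiers.
\begin{enumerate}
\item Group the terms by Pauli pattern ($O_r(1)$ classes) and split each class into $O_r(\log n)$ $r$-partite pieces by \cref{prop:r-partite-WLOG}.
\item In a piece with pattern $P_1\otimes\cdots\otimes P_r$ and parts $V_1,\dots,V_r$, all terms are diagonal in the product basis that uses an eigenbasis of $P_k$ on every qubit of $V_k$. In that basis each term, divided by $2$, is the $\{0,1\}$-valued $\neg\mathbf{XOR}$ of the eigenvalue bits on its qubits.
\item By \cref{prop:classical-equivalence} this is exactly weighted $\neg\mathbf{XOR}$ sparsification. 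Code sparsification (\cref{cor:negxorsparsifier-unbiased}) solves it with $\widetilde O(\eps^{-2}n)$ terms in randomized polynomial time.
\end{enumerate}
It is that result, not a matrix Chernoff bound, that yields near-linear size.
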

An arbitrary Pauli Hamiltonian can have $\sim n^{r}$ many local terms: one for each possible subset of $r$ out of $n$ qubits. \cref{thm:sparspaulihamIntro} shows that any such system can be significantly sparsified, to a number of terms which is only $\widetilde{O}_r(\eps^{-2}n)$, while still preserving the energy of \emph{every} state $\braket{\psi}$.\footnote{Note that here, $\widetilde{O}_r(\cdot)$ hides poly-logarithmic factors in $(\cdot)$, and $O_r$ hides arbitrary factors in $r$. For our purposes, $r$ is constant and so does not alter the complexity by more than a constant.}

As an aside, it is tempting here to say that \cref{thm:sparspaulihamIntro} implies that \emph{any} Hamiltonian can be sparsified, as one can decompose any $r$-local Hamiltonian term $H_i$ into a weighted sum of $r$-local Pauli strings, which can then be sparsified. However, in general, such a decomposition yields matrices with \emph{negative eigenvalues} in the constituent Pauli strings. To get this into the required PSD form requires shifting \emph{each term} in the decomposition by an identity matrix.

As our second main result however, we show that this sparsification phenomenon extends far beyond just Pauli strings. Indeed, we study the case of \emph{random} Hamiltonians: in this setting, we consider each $H_i$ to be specified by a random PSD matrix $M_i \in \C^{2^r \times 2^r}$ of rank $R$ with $0 < R \leq 2^r$, and a subset $T_i \subseteq [n]$ of qubits, with $H_i = M{\vert}_{T_i} \otimes \Id_{\overline{T_i}}$. As an example, we can take random to mean that $M_i =  \sum_{j = 1}^R v_j v_j^{\dagger}$, where each $v_j \in \C^{\zo^r}$ is a random vector of norm $1$. As we discuss in \cref{sec:sparse-genericLouie}, it turns out that the result below holds for operator matrices $M_i$ sampled from any absolutely continuous distribution; in this sense, the result is quite robust. Note that we \emph{do not} assume that the interaction hypergraph (i.e., the sets $T_i \subseteq [n]$ of which qubits each term operates on) is random; rather, this can be fixed arbitrarily or adversarially.

In this setting, we show that even for these very rank-deficient Hamiltonians, significant sparsification is still possible:

\begin{theorem}[Generic Hamiltonian Sparsification, Informal]\label{thm:genericSparsificationIntro}
    Let $\mathcal{H}$ be a Hamiltonian with $m$ terms $H_1, \dots H_m$, where each term $H_i = (M_i){\vert}_{T_i} \otimes \Id_{\overline{T_i}}$, for $M_i$ a random PSD matrix of rank $R$, where $R$ is a fixed constant $\geq 2^{r-1} + 1$. Then, with high probability over the random matrices $M_i$, for any $\eps > 0$, there is a randomized (classical) algorithm running in time $\mathrm{poly}(n,m)$ which, with probability $1 - 2^{-\Omega(n)}$, computes an $\eps$-sparsifier $L \subseteq [m], w: L \rightarrow \R_{\geq 0}$ of $\cH$ such that ${\vert}L{\vert} = \widetilde{O}_r(\eps^{-2} n)$.
\end{theorem}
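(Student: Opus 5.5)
The proof goes through the non-redundancy framework sketched in the abstract. The first step is a general reduction: any $r$-local PSD Hamiltonian $\cH$ with terms $H_1,\dots,H_m$ has an $\eps$-sparsifier of size $\widetilde{O}_r(\eps^{-2}\cdot \NRD)$, where $\NRD$ is the maximum non-redundancy over sub-instances. Non-redundancy here means the largest set $S$ of terms such that for each $i\in S$ some state $\ket{\psi}$ has $\langle\psi|H_i|\psi\rangle>0$ and $\langle\psi|H_j|\psi\rangle=0$ for all $j\in S\setminus\{i\}$.

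To prove the reduction I would follow the CSP-sparsification template of \cite{BrakensiekG25,KhannaPS25}. Decompose each $H_i$ as a weighted sum of projector-like pieces. Bound the sensitivity (importance) of each term by a chaining/peeling argument: repeatedly remove a maximal non-redundant set, so that the total sensitivity is at most $\widetilde{O}(\NRD)$. Then importance-sample proportionally, and take a union bound over a net of states, which is only needed up to the structure of the kernel subspaces. Since the reduction should be stated earlier for general Hamiltonians, I take it as given and focus on showing that for generic terms of rank $R\ge 2^{r-1}+1$ one has $\NRD=O_r(n)$ (or $\widetilde O_r(n)$), with high probability.

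For the generic bound I would use a dimension/linear-algebra argument. Let $K_i=\ker M_i\subseteq\C^{2^r}$, which has dimension $2^r-R\le 2^{r-1}-1$ and is in generic position. A state has zero energy on term $j$ iff it lies in $K_j\otimes \C^{\overline{T_j}}$. The plan is to show that a non-redundant family must have a sparse interaction hypergraph. The key local lemma: if two terms act on the same set $T$, then for generic kernels $K_1\cap K_2=0$ since $2\dim K<2^r$. More generally, for a non-redundant set $S$ I would find a linear-algebraic witness, namely a polynomial/tensor encoding in which each $i\in S$ gives a vector $v_i$ in a space of dimension $O_r(n)$ and non-redundancy forces linear independence, as with the linear-algebraic bound for XOR/Pauli. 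A concrete attempt is to generalize the Pauli case by vanishing-ideal reasoning. Take $\mathcal V=\bigcap_{j\ne i}(K_j\otimes\Id)$. The condition $R>2^{r-1}$ means each kernel has dimension less than half, so any single qubit's restriction is "forced". Concretely, for a generic kernel of dimension $<2^{r-1}$ the only product-compatible states lie in a structured subspace, which lets me show that the zero-energy subspace of any sub-instance is governed by a graph-like (arity-2 type) structure on qubits. That would give $\NRD\le O_r(n)$, in analogy with the classical result that predicates whose zero set is "small" have linear non-redundancy.

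The main obstacle is this generic-position step. I must show that, with probability one over absolutely continuous choices of $M_i$, every non-redundant family has size $O_r(n)$, uniformly over all adversarial hypergraphs. The difficulty is that genericity holds for each fixed configuration, but there are exponentially many candidate families and states. I would handle this with an algebraic-geometry argument: the "bad" events are Zariski-closed conditions in the entries of the $M_i$, so each bad configuration has measure zero. There are only countably (finitely per $n$) many hypergraph configurations, so the union is still measure zero. Thus it suffices to exhibit for each local pattern one explicit choice of kernels (e.g., built from Pauli/stabilizer-like subspaces of dimension $\le 2^{r-1}-1$) that witnesses the dimension drop, and then use lower semicontinuity of rank to transfer to generic $M_i$. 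Finally, combining $\NRD=O_r(n)$ with the reduction and an efficient sensitivity-estimation routine, which is polynomial time via computing kernels, gives the claimed $\widetilde{O}_r(\eps^{-2}n)$ sparsifier with failure probability $2^{-\Omega(n)}$ from the sampling step.
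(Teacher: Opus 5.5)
Your plan rests on a reduction $\SPR\le\widetilde O_r(\eps^{-2}\NRD)$ for general PSD Hamiltonians, which you take as given. No such result is available: the paper explicitly leaves the relation between $\SPR$ and $\NRD$ open (Question~\ref{question:SPR-NRD}). In the generality you state, the reduction is in fact false.

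To see this, place $k=\lfloor n/r\rfloor$ copies of $M=\Id+P$, with $P$ a rank-one projector, on disjoint blocks of $r$ qubits. Any full-rank non-scalar $M$, e.g.\ a random one of rank $2^r$, behaves the same way. Here $\NRD=1$, because zero energy on a full-rank term forces $\psi=0$. Now compare two product states: one with every block in $\ker P$, and one that keeps the blocks in $\supp(w)$ in $\ker P$ but puts the remaining blocks in the range of $P$. This shows every $\eps$-sparsifier has support at least $\frac{1-3\eps}{1-\eps}\,k=\Omega(n/r)$ for $\eps\le 1/4$.

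The underlying problem is that non-redundancy is qualitative (zero versus nonzero energy). Importance sampling instead needs quantitative bounds on $\langle\psi|H_i|\psi\rangle/\langle\psi|\cH|\psi\rangle$. Peeling off non-redundant sets does not produce such bounds. A union bound over a net of $\C^{2^n}$ is hopeless, since such nets have size $2^{\Omega(2^n)}$. Even a quantitative connectivity bound $N(\alpha)=O(n)$ at $\alpha=\Omega_r(1)$, fed into Theorem~\ref{thm:sparsifyBKLM}, would only give $\widetilde O(\eps^{-2}n^2)$.

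You also miss the local fact that actually drives the theorem. Your lemma treats only two terms on the same support, which is a trivial dimension count. What is needed is that two terms sharing even a \emph{single} qubit satisfy $\ker M_T\cap\ker M_{T'}=0$ (Lemma~\ref{lem:nullSpaceLouie}). On $U=T\cup T'$ with $|U|=2r-1$, the two terms impose $R\cdot 2^{r}$ linear constraints on $\C^{2^{2r-1}}$, which exceeds $2^{2r-1}$ exactly when $R>2^{r-1}$. The paper exhibits explicit shifted basis vectors $v_1,\dots,v_R$ for which these constraints force the two $2^{r-1}\times 2^{r-1}$ slices of $\psi$ to be shift-invariant with vanishing boundary rows and columns, hence zero. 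So the relevant determinant polynomial is nonzero, and genericity holds with probability one. This alone gives $\NRD\le\max(n/r,2)$, since two intersecting terms already force $\psi=0$.

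The paper then bypasses non-redundancy altogether:
\begin{enumerate}
\item The pair operator acts on at most $2r-1$ qubits, so its least eigenvalue is at least some $\gamma_r=\Omega_r(1)$ with good probability, independently of $n$. This is a quantitative statement that your measure-zero argument does not provide.
\item When $m\ge 4n$, a greedy matching yields at least $m/4$ disjoint intersecting pairs. Hence $\lambda_{\min}(\cH)=\Omega_r(m)$, and each term has importance $O_r(1/m)$.
\item Uniform sampling at rate $\Theta_r(n/(\eps^2m))$ with Matrix Chernoff controls all states simultaneously, at the cost of only the $2^n$ dimension factor. This gives $O_r(\eps^{-2}n)$ terms with failure probability $2^{-\Omega(n)}$.
\end{enumerate}
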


In the above, we use $O_r$ to hide large constant factors that depend on the parameter $r$, and $\widetilde{O}$ to hide poly-logarithmic factors in $n$.

\cref{thm:genericSparsificationIntro} shows that for arbitrarily large constants $r$, it is still the case that Hamiltonians with random $r$-local terms of rank $2^{r-1} + 1$ are sparsifiable to size $\widetilde{O}(\eps^{-2}n)$. For comparison, in the classical setting of CSP sparsification, the analog of a random $r$-local operator of rank $2^{r-1} + 1$ would be a random \emph{predicate} $P: \zo^r \rightarrow \zo$ with $2^{r-1} + 1$ satisfying assignments (i.e., ${\vert}P^{-1}(1){\vert} = 2^{r-1} + 1$). In \cref{subsec:separation}, we show that in general these random-predicate CSPs require sparsifiers of size $\Omega(n^{\lceil \log_2(r) - 3\rceil})$ (where the exponent grows, even for constant $r$). Together with \cref{thm:genericSparsificationIntro}, this shows that, contrary to prevailing belief \cite{AZ19}, random Hamiltonians are significantly \emph{more} sparsifiable than their classical random CSP counterparts.

As our final setting, we explore sparsification when each local term $H_i = (M_i){\vert}_{T_i} \otimes \Id_{\overline{T_i}}$, with $M_i$ being a PSD matrix of rank $R \geq 2^{r} - 1$ (equivalently, with nullity $\leq 1$). This problem is often referred to as Quantum SAT in the literature~\cite{Bravyi06,GossetN16,BeaudrapG16}. Here, we show that \emph{any} such collection of Hamiltonian terms is sparsifiable: 

\begin{theorem}[Nullity $1$ Sparsification, Informal (see \cref{thm:Nullity1Sparsifier} for formal statement)]\label{thm:sparsifyNullity1Intro}
    Let $\mathcal{H}$ be a Hamiltonian with $m$ terms $H_1, \dots H_m$, where each term $H_i = (M_i){\vert}_{T_i} \otimes \Id_{\overline{T_i}}$, for $M_i$ being a PSD matrix of rank $R \geq 2^{r} - 1$. Then, for any $\eps > 0$, there is a randomized (classical) algorithm running in time $\mathrm{poly}(n,m)$ which, with probability $1 - 1 / \mathrm{poly}(n)$, computes an $\eps$-sparsifier $L \subseteq [m], w: L \rightarrow \R_{\geq 0}$ of $\cH$ such that ${\vert}L{\vert} = \widetilde{O}(\eps^{-2}n^2)$.
\end{theorem}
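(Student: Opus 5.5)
We plan to reduce the statement to a bound on the \emph{non-redundancy} of the family of terms, as the abstract indicates, and then invoke a generic ``non-redundancy implies sparsification'' result. That result would be proved via importance sampling over a chain of nested subspaces. Concretely, for a subset $S\subseteq[m]$ we call $S$ non-redundant if for every $i\in S$ there is a state $\ket{\psi_i}$ with $\langle\psi_i|H_i|\psi_i\rangle>0$ and $\langle\psi_i|H_j|\psi_i\rangle=0$ for all $j\in S\setminus\{i\}$. Write $\NRD(\cH)$ for the largest size of such an $S$, maximized over all sub-instances.

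The sparsification machinery we expect to use (analogous to the CSP works \cite{KhannaPS24,BrakensiekG25}) assigns each term a sampling probability proportional to an ``importance''. This importance is computed by repeatedly peeling off maximal non-redundant sets. The size of the sparsifier is then $\widetilde{O}(\eps^{-2}\cdot \NRD)$, with a $\log$ factor accounting for the number of distinct ``energy patterns'' that must be union-bounded. So the core task is to show that $\NRD=O(n^2)$ for nullity-$\le 1$ terms.

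For the non-redundancy bound, note that each $H_i$ has kernel spanned by a single vector $\ket{\phi_i}\in(\C^2)^{\otimes T_i}$, tensored with the full space on $\overline{T_i}$. Set $K_i=\ker H_i=\ket{\phi_i}\otimes\C^{2^{n-|T_i|}}$. Non-redundancy of $S$ says that for each $i\in S$,
\[
\bigcap_{j\in S\setminus\{i\}}K_j\not\subseteq K_i.
\]
A direct dimension-counting argument fails because the intersections are not generic. Instead I would look for a polynomial encoding. Over the Segre-like variety of product states $\ket{\psi}=\bigotimes_k\ket{a_k}$, membership in $K_i$ in the relevant directions is controlled by low-degree conditions. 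The plan is to show that the witnesses $\ket{\psi_i}$ can be taken to be product states, or states of bounded structure, in the quantum-SAT setting, using the known product-state results for rank-deficient local projectors \cite{Bravyi06}. Then the quantity $\langle\psi|H_i|\psi\rangle$ becomes a polynomial of bidegree $(1,1)$ in the single-qubit variables $a_k,\bar a_k$ on $T_i$. The key step is to express each such condition via pairwise (degree-$2$) interactions, for instance through contractions against $\ket{\phi_i}$ that are multilinear. Then the standard linear-algebra argument applies: non-redundant polynomials evaluated at their own witness points form a triangular system and are therefore linearly independent. This bounds $|S|$ by the dimension of the span, which we aim to show is $O(n^2)$.

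The main obstacle is exactly this dimension bound. Naively the polynomials live in a space of dimension $\sim n^r$, which would give no sparsification. So I would first try to prove that, for nullity exactly one, the zero sets are determined by at most two qubits at a time. The idea is that a witness violating $H_i$ while lying in all other kernels can be modified to differ from a common kernel element only on two qubits of $T_i$, which yields a charging of each $i\in S$ to a pair of qubits with $O_r(1)$ terms per pair. If that charging fails, the fallback is an inductive argument on $r$ that restricts one qubit to a fixed state, reducing the locality while losing a factor of $n$ per level. In that case one would need to show the recursion terminates at locality $2$. Finally, the algorithmic claim follows because the peeling and importance estimation reduce to rank and kernel-intersection computations over $\C$, which run in $\poly(n,m)$ time. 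The success probability $1-1/\poly(n)$ would come from the Chernoff bound together with a union bound over the patterns.
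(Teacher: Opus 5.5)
\textbf{There is a genuine gap in the sparsification step.} Your plan rests on a generic theorem: that a non-redundancy bound yields an $\widetilde{O}(\eps^{-2}\cdot\NRD)$-size sparsifier, via peeling and a union bound over ``energy patterns''. No such theorem is available for Hamiltonians. Whether $\SPR=\widetilde{O}(\NRD)$ holds for PSD predicate matrices is exactly the open question the paper closes with.

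The classical argument uses that constraints are $\{0,1\}$-valued. There are then finitely many patterns to union bound over, and a constraint is either implied or it is not. For PSD terms both features fail:
\begin{itemize}
\item States form a continuum, so there is no finite set of patterns. Concentration has to come from matrix Chernoff with a bound $Y_i\preceq R\cdot Z$.
\item Non-redundancy is only the $\alpha\to 0$ limit of the connectivity parameter $N(\alpha;\mathcal{A})$, so it is blind to \emph{nearly} implied terms. A family can have small exact $\NRD$ yet contain a large set in which each term has a state giving it energy $1$ and every other term energy at most $\delta$. Every term in such a set has importance close to $1$.
\end{itemize}

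What the proof actually needs is quantitative Loewner-order domination. This is also where the constant hidden in the informal statement comes from: the formal \cref{thm:Nullity1Sparsifier} assumes $(C,2r)$-boundedness, and its bound carries $C^2$. The paper supplies the domination in \cref{lem:dominatingcoverenergy}. Take a dominating cover $D$: a spanning forest $F$ with $|F|\le n$, plus at most one dominating term per qubit, so $|D|\le 2n$. Then every term $T\notin D$ satisfies $M_T\preceq C^2\sum_{(T',M')\in D}M'_{T'}$. This works because only the at most $2r$ terms of $D$ touching $T$ are used, and boundedness lower-bounds the nonzero spectrum of such local sums.

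The rest of the paper's argument then runs as follows:
\begin{itemize}
\item Peel $\Theta(C^2\eps^{-2}n)$ disjoint dominating covers. Every remaining term then has importance $O(\eps^2/n)$.
\item Keep the covers and sample the rest at rate $1/2$; this passes matrix Chernoff.
\item Recurse $O(r\log n)$ times, giving size $\widetilde{O}(\eps^{-2}C^2n^2)$.
\end{itemize}
The $n^2$ is (number of covers)$\times$(cover size), not a quadratic non-redundancy bound. Equivalently, the lemma gives $N(C^{-2};\mathcal{A})\le 2n+1$, and \cref{thm:sparsifyBKLM}, which carries an extra factor of $n$, gives the same size.

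\textbf{Your route to the $\NRD$ bound also fails as sketched.} Witnesses cannot be taken to be product states. A product state lies in $K_j=\spn\{\phi_j\}\otimes\C^{2^{n-r}}$ only if $\phi_j$ is itself a product vector, so for generic (entangled) kernel vectors no product state lies in any $K_j$. The pair-charging step is unsupported. The polynomial-method dimension count starts at $n^r$, and you give no mechanism to bring it down.

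The fact you are missing is \cref{lem:uniquepsi}: if nullity-one terms cover all the qubits, their common kernel is at most one-dimensional. The reason is that the basis strings compatible with all kernel supports are connected by single-term moves, so one amplitude fixes all the others. This gives a linear bound at once:
\begin{itemize}
\item Let $S$ be non-redundant and $F\subseteq S$ a spanning forest of $S$.
\item Every witness for a term $i\in S\setminus F$ lies in $\Psi_F$. Restricted to the qubits touched by $S$, $\Psi_F$ is spanned by a single vector $\tau$.
\item Two distinct such terms $i,i'$ would then need $\langle\tau|H_{i'}|\tau\rangle$ to be both zero and positive. Hence $\NRD\le n+1$.
\end{itemize}
By the previous point, however, even this linear bound does not by itself yield a sparsifier. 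The substantive work is upgrading the one-dimensionality of local ground spaces into the quantitative domination statement above.
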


Cumulatively, our results show that Hamiltonian sparsification \emph{is possible}, and in fact, \emph{the vast majority} of Hamiltonians can be sparsified to a small number of terms. Our perspective yields immediate insights into basic problems. For instance, one frequently studied class of Hamiltonians is those arising from quantum MAX-CUT. Here, the Hamiltonian $\cH$ is $2$-local, with its terms specified by an interaction graph $G = (V, E)$, where ${\vert}V{\vert} = n$ coincides with the qubits. Each term is of the form $H_e = M_e \otimes \Id_{[n] - e}$, where for an edge $e = (u,v)$, $M_e = \frac{1}{2}(\Id{\vert}_u \otimes \Id{\vert}_v - X{\vert}_u \otimes X{\vert}_v - Y{\vert}_u \otimes Y{\vert}_v - Z{\vert}_u \otimes Z{\vert}_v)$ (see \cref{sec:pauli} for more discussion of this notation), and we let $\cH(G)$ denote the Hamiltonian whose terms correspond to the edges of $G$. Given such a Hamiltonian, the goal is to find a state $\braket{\psi}$ which \emph{maximizes} the energy. We denote this maximum energy by $\mathrm{OPT}(G)$. As a simple consequence of our techniques, we are able to show the following:\footnote{In fact, in \cref{sec:spar-qmcsp}, we even show that similar sparsification can be done for arbitrary quantum MAX-CSP problems.}

\begin{corollary}\label{thm:maxcutSparsifierIntro}
    Let $G = (V, E)$ be an instance of quantum MAX-CUT with corresponding Hamiltonian $\cH$. Then, for any parameter $\eps > 0$, there exists a set $L \subseteq E$, ${\vert}L{\vert} \leq O(\eps^{-2}n)$, along with weights $\tilde{w}: L \rightarrow \R_{\geq 0}$ such that, letting $\tilde{G} = (V, L, \tilde{w})$, for any state $\braket{\psi}$ which satisfies 
    \[
    \langle \psi {\vert} \cH(\tilde{G}) {\vert} \psi \rangle \geq (1 - \eps) \cdot \mathrm{OPT}(\tilde{G}),
    \]
    then it must also be the case that 
    \[
    \langle \psi {\vert} \cH(G) {\vert} \psi \rangle \geq (1 - 2\eps) \cdot \mathrm{OPT}(G).
    \]
    Moreover, this sparsifier $L, \tilde{w}$ can be found efficiently with a classical algorithm, in time polynomial in $m, n$.
\end{corollary}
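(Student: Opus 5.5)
The plan is to reduce the corollary to a sparsification statement for the quantum MAX-CUT Hamiltonian itself, and then observe that multiplicative preservation of energy on every state preserves near-optimality. Each term $M_e = \frac12(\Id - X\otimes X - Y\otimes Y - Z\otimes Z)$ equals $2\cdot\ket{\Psi^-}\langle\Psi^-|$, where $\ket{\Psi^-}$ is the singlet state on qubits $u,v$. So every term is a PSD rank-one projector (up to scaling) on two qubits. Moreover, all terms are the same fixed operator placed on different pairs.

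\textbf{Step 1: an $O(\eps^{-2} n)$-size $\eps$-sparsifier.} I would express $\cH(G)$ as a sum of PSD terms whose ``non-redundancy'' is $O(n)$. Write $\langle\psi|H_e|\psi\rangle = \|(\langle\Psi^-|_{uv}\otimes\Id)\ket{\psi}\|^2$, and view each term as a linear map to $\C^{2^{n-2}}$. The sparsifier then follows from the paper's general reduction from non-redundancy to sparsification: importance sampling by sensitivity / effective-resistance-type scores, with a chaining or net argument over states.

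A cleaner route, which gives exactly $O(\eps^{-2}n)$ with no log factors, exploits the fact that all terms are the same operator. Using the swap operator, $M_e = \Id - \mathrm{SWAP}_{uv}$, so
\[
\cH(G) = \sum_{e} w_e(\Id - \mathrm{SWAP}_e).
\]
This is the image of the weighted graph Laplacian $L_G$ under the representation of the symmetric-group algebra: the map $L_G = \sum_e w_e(\Id - \mathrm{transposition}_e)$ in $\R[S_n]$, acting on $(\C^2)^{\otimes n}$. By Schur–Weyl duality, $(\C^2)^{\otimes n}$ decomposes into $S_n$ irreps of two-row shapes. So it suffices to have a sparsifier $\tilde G$ such that $\sum_e \tilde w_e(\Id - \tau_e) \approx_\eps \sum_e w_e(\Id-\tau_e)$ in every two-row irrep.

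\textbf{Main obstacle.} The obstacle is that spectral sparsification of $G$ in the ordinary sense, e.g.\ Batson–Spielman–Srivastava, controls only the standard representation $(n-1,1)$, not the higher two-row irreps. To handle this, I would instead rely on the earlier-established sparsification machinery for Pauli/2-local Hamiltonians (\cref{thm:sparspaulihamIntro}-style, via non-redundancy $O(n)$ for 2-local rank-one terms). The claimed $O(\eps^{-2}n)$ bound suggests using the paper's sharper Quantum MAX-CSP result from \cref{sec:spar-qmcsp}. In either case, I take as given an $\eps$-sparsifier satisfying $\langle\psi|\cH(\tilde G)|\psi\rangle\in(1\pm\eps)\langle\psi|\cH(G)|\psi\rangle$ for all $\ket\psi$, computable in polynomial time.

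\textbf{Step 2: transfer of approximate optimality.} Let $\psi^*$ be optimal for $G$. Then
\[
\mathrm{OPT}(\tilde G)\ge \langle\psi^*|\cH(\tilde G)|\psi^*\rangle\ge(1-\eps)\mathrm{OPT}(G).
\]
For $\psi$ with $\langle\psi|\cH(\tilde G)|\psi\rangle\ge(1-\eps)\mathrm{OPT}(\tilde G)$, we get
\[
\langle\psi|\cH(G)|\psi\rangle\ge \frac{1}{1+\eps}\langle\psi|\cH(\tilde G)|\psi\rangle\ge\frac{(1-\eps)^2}{1+\eps}\mathrm{OPT}(G).
\]
This is $\ge(1-3\eps)\mathrm{OPT}(G)$. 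To obtain exactly $1-2\eps$, I would run Step 1 with $\eps/4$ in place of $\eps$. That gives $(1-\eps)(1-\eps/4)/(1+\eps/4)\ge 1-2\eps$, changing the size only by a constant factor.
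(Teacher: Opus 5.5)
Your Step 2 is correct. The gap is Step 1, which you take as given, and none of the tools you cite provides it. You need a multiplicative $\eps$-sparsifier of the quantum MAX-CUT Hamiltonian of size $O(\eps^{-2}n)$, valid on \emph{every} state including low-energy ones. The paper does not establish this. The sparsifiability of the singlet predicate is posed as an open question in \cref{sec:conclusion}, as is the general relation between $\NRD$ and $\SPR$ (Question~\ref{question:SPR-NRD}).

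Each of your proposed routes fails:
\begin{itemize}
\item \emph{Non-redundancy.} Linear non-redundancy (which \cref{thm:2-qubit} does give here, the singlet being nonsingular) only controls the $\alpha\to 0$ limit of the connectivity parameter. \cref{thm:sparsifyBKLM} needs $N(\alpha;\cdot)/\alpha$ bounded at a fixed $\alpha$.
\item \emph{Pauli machinery.} \cref{thm:sparspauliham} and \cref{cor:sparspauliham-decomp} cover nonnegative combinations of matrices $\Id\pm P_1\otimes P_2$, each of rank $2$. Every such combination therefore has rank at least $2$, whereas your term $2\ket{\Psi^-}\langle\Psi^-\rvert$ has rank one.
\item \emph{\cref{sec:spar-qmcsp}.} This section does not produce a multiplicative sparsifier of $\cH(G)$ at all. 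It is the proof of this very corollary and gives only additive control.
\end{itemize}
Your Schur--Weyl remark correctly explains why ordinary graph sparsification is insufficient, but you then leave that obstacle unresolved.

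The missing idea is that the corollary only concerns near-optimal states, so additive error of order $\eps W$ suffices, where $W=\sum_e w(e)$. That is easy to obtain:
\begin{enumerate}
\item Replace each term by $\hat{M} = M+\Id\succeq\Id$. Then $\hat{\cH}(G)\succeq W\cdot\Id$, so every term has importance $O(w(e)/W)$.
\item Importance sampling with matrix Chernoff, as in \cref{clm:importanceSamplingLouie}, gives a $(1\pm\eps')$-sparsifier of $\hat{\cH}(G)$ with $O((\eps')^{-2}n)$ terms.
\item Rescale so that $\sum_{e\in L}\tilde w(e)=W$. The identity parts of $\hat{\cH}(G)$ and $\hat{\cH}(\tilde G)$ then cancel, and since $\hat{\cH}(G)\preceq 3W\cdot\Id$ you get $\lvert\langle\psi,\cH(\tilde G)\psi\rangle-\langle\psi,\cH(G)\psi\rangle\rvert\le O(\eps' W)$ for every unit $\psi$.
\item Use $\mathrm{OPT}(G)\ge\Omega(W)$ (\cref{prop:maxCutOPT}; a classical maximum-cut basis state already achieves $W/2$). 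Your Step 2 then goes through with $\mathrm{OPT}(\tilde G)\ge\mathrm{OPT}(G)-O(\eps'W)$ and $\langle\psi,\cH(G)\psi\rangle\ge\langle\psi,\cH(\tilde G)\psi\rangle-O(\eps'W)$ in place of the multiplicative bounds, taking $\eps'=c\,\eps$ for a small constant $c$.
\end{enumerate}
This is exactly why the statement promises only preservation of near-optimal states rather than full-spectrum sparsification.
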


Note that these MAX-CUT sparsifiers are \emph{weaker} than full-spectrum simulation that the preceding theorems guarantee, as the only guarantee here is that states near the \emph{maximum energy} are preserved. Nevertheless, as the goal in quantum MAX-CUT is to \emph{maximize} energy, \cref{thm:maxcutSparsifierIntro} suffices as a tool to reduce the number of terms prior to solving the MAX-CUT instance, as any near-optimal optimal state in $\cH(\widetilde{G})$ will also be a near-optimal state in $\cH(G)$. In fact, as we show in \cref{sec:spar-qmcsp}, these sparsifiers can even be constructed in the \emph{streaming} setting while still requiring only $\widetilde{O}(\eps^{-2}n)$ bits of space; in so doing, this answers a question left open by Kallaugher and Parekh \cite{KallaugherP22}, as this shows that with only  $\widetilde{O}(\eps^{-2}n)$ bits of space, one can ``information-theoretically'' solve the quantum MAX-CUT problem to accuracy $(1 + \eps)$ in streams. This complements the result of \cite{KallaugherP22}, which shows that any algorithm with $o(n)$ bits of space can at best achieve an approximation factor of $2 - \eps$. Thus, there is an inherent phase transition in the approximability of quantum MAX-CUT as the space allotted to the algorithm grows from $o(n)$ to $\widetilde{O}(n)$.

\subsection{A Quantum Theory of Non-Redundancy}

Underlying our improved sparsifier constructions is an extensive study of a quantity that we call the \emph{non-redundancy} of Hamiltonians, which generalizes the classical theory of CSP non-redundancy~\cite{BessiereCHKLNQW13,BessiereCK20,Carbonnel22,BrakensiekG25,BrakensiekGJLW25,LagerkvistGE26,BrakensiekGP26}. More formally, given a Hamiltonian $\mathcal{H}$ with PSD terms $H_1, \dots H_m$, we say that $\mathcal{H}$ is \emph{non-redundant} if, for every $i \in [m]$, there exists a state $\braket{\psi}$ such that 
\[
\langle \psi {\vert} H_i {\vert} \psi \rangle \neq 0,
\]
yet for every $j \neq i \in [m]$, 
\[
\langle \psi {\vert} H_j {\vert} \psi \rangle = 0.
\]

It turns out that any non-redundant Hamiltonian $\cH$ is an immediate barrier to sparsification: indeed, consider a sparsifier $\widetilde{\cH}$ (with terms $L \subseteq [m]$, $w: L \rightarrow \R_{\geq 0}$) which is missing some term $H_i$ that was present in $\cH$. Then, by the definition of $\cH$ being non-redundant, there must exist a state $\braket{\psi}$ such that $\langle \psi {\vert} H_i {\vert} \psi \rangle \neq 0$. Since $\langle \psi {\vert} H_j {\vert} \psi \rangle = 0$ for every $j \in [m] \setminus \{i\}$, we have that $\langle \psi {\vert}\cH {\vert} \psi \rangle \neq 0$. However  $\langle \psi{\vert}\widetilde{\cH}{\vert} \psi\rangle = 0$ because $H_i$ is \emph{not} in the sparsifier $\widetilde{\cH}$. Thus, for this state $\braket{\psi}$, the sparsification condition is violated, as 
\[
0 =  \langle \psi {\vert}\widetilde{\cH} {\vert} \psi \rangle \notin (1 \pm \eps) \cdot \langle \psi {\vert} \cH {\vert} \psi \rangle.
\]
In fact, sparsification is obstructed in an even stronger sense here: the above shows that the \emph{space of ground-states of the sparsifier} is strictly larger than the ground-space of the original Hamiltonian $\cH$.

In general, when a Hamiltonian $\mathcal{H}$ is redundant (i.e., not non-redundant), we define its non-redundancy to be the size of the largest \emph{subset} of terms which is non-redundant. 

Given the ample success non-redundancy has had in characterizing the sparsifiability of various classical CSPs (in particular, \cite{BrakensiekG25}), we believe that a strong theory of Hamiltonian non-redundancy is necessary toward the further construction of Hamiltonian sparsifiers.

To understand the non-redundancy of a Hamiltonian, one must reason about the manner in which the the ground states of the various terms of $\cH$ interact. For example, as we elaborate upon in \cref{lem:trivialIntersectionIntro}, the crux of proving \cref{thm:genericSparsificationIntro} is to study the interactions between the kernels of two random $r$-local Hamiltonians which share a single qubit. More precisely, we show that if the rank of these individual Hamiltonian terms is sufficiently large, then the ground states of these two local Hamiltonians are entirely \emph{disjoint}! This disjointness condition is critical for us getting lower bounds on the spectrum of our Hamiltonian, leading to a near-optimal sparsifier--this is discussed in much more detail in \cref{subsec:tech-generic}. Similar observations are used in the proof of \cref{thm:sparsifyNullity1Intro}.

However, our contributions to the study of non-redundancy extend far beyond the technical concerns of these sparsifier constructions. In particular, one of our primary contributions is a more systematic understanding of non-redundancy in \emph{translation invariant} Hamiltonian systems where each $r$-local Hamiltonian is the same PSD predicate matrix $M \in \C^{2^r \times 2^r}$. We let $\NRD(M, n)$ denote the size of the largest non-redundant Hamiltonian $\cH$ all of whose terms are copies of $M$ applied to different $r$-qubit subsets (i.e., each term is of the form $M{\vert}_T \otimes \Id_{[n] - T}$).

This is the natural Hamiltonian analogue to the study of the non-redundancy (or sparsifiability) of classical CSPs. In fact, we show in \cref{prop:classical-equivalence} that the study of non-redundancy in Hamiltonians is \emph{strictly} more general than in the CSP world, as one can represent any classical predicate as a \emph{diagonal} matrix $M \in \C^{2^r \times 2^r}$ with $\{0,1\}$ entries. Thus, any classical result on the sparsifiability or non-redundancy of constraint satisfaction problems~\cite{KoganK15, FiltserK17, JansenP19, ButtiZ20, ChenJP20, LagerkvistW20, ChenKN20, Carbonnel22, KhannaPS24, KhannaPS25, BrakensiekG25, BrakensiekGJLW25} immediately translates to understanding Hamiltonians with \emph{diagonal} operators $M$. 

Obviously, the class of $r$-local Hamiltonians is much richer than the space of diagonal matrices. As such, we require novel tools for studying the non-redundancy of $r$-local Hamiltonians.

As our first contribution, we give a criterion for which sparsification is \emph{impossible}. In particular, if our predicate matrix $M$ can be expressed as the tensor product of rank $1$ matrices, then we can construct $r$-local non-redundant Hamiltonians with $\Omega_r(n^r)$ terms. This result can be viewed as a quantum analogue of known lower bounds for the $\AND_r$ predicate~\cite{FiltserK17,ButtiZ20,Carbonnel22,KhannaPS25} as well as a vast generalization of  ``Example A'' in \cite{AZ19}. See \cref{sec:tensor-lb} for full details. 

\begin{theorem}[Tensor Lower Bound]\label{thm:tensor-intro}
    Let $M = M_1\otimes M_2\otimes \cdots\otimes M_r$ be a predicate matrix where $M_1, \ldots, M_r\in\C^{2\times 2}$ are non-zero rank $1$ PSD matrices. Then $\NRD(M, n) \geq (\lfloor n/r\rfloor)^r\geq\Omega_r(n^r)$.
\end{theorem}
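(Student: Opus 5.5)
Write each $M_j = \lambda_j \ket{a_j}\langle a_j{\vert}$ with $\lambda_j>0$ and $\ket{a_j}\in\C^2$ a unit vector, and let $\ket{a_j^\perp}$ denote a unit vector orthogonal to $\ket{a_j}$. The plan mirrors the classical $\AND_r$ lower bound: partition $[n]$ into $r$ blocks $B_1,\dots,B_r$, each of size $k=\lfloor n/r\rfloor$, discarding leftover qubits. For every tuple $t=(b_1,\dots,b_r)\in B_1\times\cdots\times B_r$, place one term $H_t = M{\vert}_{(b_1,\dots,b_r)}\otimes\Id$, where $M_j$ acts on qubit $b_j$. This gives $k^r$ terms. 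It then suffices to show the resulting Hamiltonian is non-redundant.

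Fix a tuple $t=(b_1,\dots,b_r)$. Define the product state $\ket{\psi_t}$ qubit by qubit. For $q=b_j$, set qubit $q$ to $\ket{a_j}$. For any other $q\in B_j$, set it to $\ket{a_j^\perp}$. Leftover qubits get an arbitrary state. Since every term is a tensor product of rank-one operators and $\ket{\psi_t}$ is a product state, the energy factorizes:
\[
\langle\psi_t{\vert}H_{t'}{\vert}\psi_t\rangle=\prod_{j=1}^r \lambda_j\,{\vert}\langle a_j{\vert}\phi_{b'_j}\rangle{\vert}^2 ,
\]
where $t'=(b'_1,\dots,b'_r)$ and $\ket{\phi_q}$ denotes the state assigned to qubit $q$. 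For $t'=t$, every factor equals $\lambda_j>0$, so the energy is nonzero. For $t'\neq t$, some coordinate satisfies $b'_j\neq b_j$ with $b'_j\in B_j$. That qubit carries $\ket{a_j^\perp}$, which makes that factor, and hence the whole energy, zero. This is exactly the non-redundancy condition, so $\NRD(M,n)\geq k^r=\lfloor n/r\rfloor^r=\Omega_r(n^r)$.

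The only point needing care is that $M_j$ on coordinate $j$ is always applied to a qubit of block $B_j$. This is what allows the qubit's state to be chosen relative to $\ket{a_j}$ specifically, even though the $a_j$ differ across $j$. The block partition guarantees it: each qubit is only ever acted on by the single factor $M_j$ of its own block. The fact that $M$ is a product of rank-one factors is what makes the energy factorize, which is the crux of the argument. The remaining step, checking that the resulting terms are distinct $r$-subsets with an ordering, is immediate.
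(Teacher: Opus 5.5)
Your proposal is correct and follows the paper's proof essentially verbatim. The paper likewise takes the complete $r$-partite hypergraph on blocks of size about $n/r$. For each hyperedge $T$, it uses the product state that places a vector of maximal $M_i$-energy on $t_i$ and a kernel vector of $M_i$ on every other qubit of the $i$-th block. The energy then factorizes and vanishes on every $T'\neq T$.
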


As mentioned above, \cref{thm:tensor-intro} implies immediate, stronger lower bounds for Hamiltonian sparsification, both in the sense of preserving the energy on every state $\braket{\psi}$, and also for the weaker task of only preserving the spectral gap. 

Our second contribution is a systematic understanding of predicate matrices $M$ acting on $2$-qubits (analogous to the arity-$2$ predicate classifications of \cite{FiltserK17,ButtiZ20,BessiereCK20}). In particular, we show that the setting of \cref{thm:tensor-intro} is the \emph{only} situation in which $\NRD(M, n)$ is superlinear!

\begin{theorem}\label{thm:2-qubit-intro}
Let $M \in \C^{4 \times 4}$ be a nonzero predicate matrix. Then $\NRD(M, n) = \Theta(n)$ unless $M = M_1 \otimes M_2$ where unary predicate matrices $M_1, M_2 \in \C^{2\times 2}$ have rank one, in which case $\NRD(M, n) = \Theta(n^2)$.
\end{theorem}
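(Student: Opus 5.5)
The lower bounds come first. If $M = M_1\otimes M_2$ with both factors rank one, \cref{thm:tensor-intro} with $r=2$ gives $\NRD(M,n)\geq \lfloor n/2\rfloor^2$, and trivially $\NRD(M,n)\leq n^2$ since there are at most that many ordered pairs. For any nonzero $M$, we exhibit a non-redundant family of size $\Omega(n)$, and here the plan is to reduce to the diagonal case. Pick a product state $\ket{a}\ket{b}$ with $\langle ab|M|ab\rangle\neq 0$ (one exists because product states span $\C^4$ and $M\neq 0$ is PSD). Then place terms on disjoint pairs $\{2k-1,2k\}$; the aim is a state giving one pair nonzero energy and all others zero. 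That needs a vector in $\ker M$, which exists only if $M$ is not full rank. If $M$ is full rank, all product states have positive energy, so no term can be uniquely satisfied without others being zero, which is impossible. In that case we instead use a matching-free structure: a star, or more simply the fact that $\NRD\geq 1$ always. So the linear lower bound must be argued only when $\ker M$ is nontrivial, and I expect the theorem's $\Theta(n)$ should be read with this caveat, or else handled by checking whether $\ker M$ contains a product or entangled vector. With a kernel vector $\ket{\phi}$ and the pairs using disjoint qubits, the state $\ket{ab}$ on pair $k$ tensored with $\ket{\phi}$ on every other pair uniquely satisfies pair $k$. This gives $\lfloor n/2\rfloor$.

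The main work is the $O(n)$ upper bound when $M$ is not a tensor of two rank-one matrices. The approach is to take a non-redundant family, viewed as a directed graph $G$ on $[n]$ with edge set $E$. Each edge $e$ has a witness state $\ket{\psi_e}$ killed by every other term. The plan is to show $G$ has no large dense substructure, specifically no $K_{2,2}$-type configurations in bounded-size subgraphs, or more usefully bounded girth-like obstructions. Since non-redundancy is inherited by subfamilies and witnesses can be restricted, it suffices to show that some fixed small graph cannot appear non-redundantly. Then a Kővári–Sós–Turán-type or "every dense graph contains the forbidden configuration" argument finishes. Graphs with no even cycle (or no cycle of some fixed length $\ell$ with chords) have $O(n^{1+1/\ell})$ edges. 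To get truly $O(n)$, I would aim for a stronger structural statement: in a non-redundant family every connected component has at most $c\cdot|\text{vertices}|$ edges. I would prove this via a dimension count. Let $V_S=\bigcap_{e\in S}\ker H_e$ for $S\subseteq E$. Non-redundancy says the chain of subspaces $V_{E\setminus e}\supsetneq V_E$ is strict for each $e$, so the vectors $\ket{\psi_e}$ projected appropriately are linearly independent modulo $V_E$. The kernel structure of $M$ then controls this: when $\ker M$ is not of the form $\{\ket{x}:\langle x|(\ket{u}\otimes\ket{v})=0\}^{\perp}$-complement of a product, i.e. $M$ is not rank-one product, I would show the kernel condition on an edge forces a linear relation that "propagates" across a vertex, so that a vertex of high degree cannot have independent witnesses.

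The case analysis I would carry out is by $\rank M\in\{1,2,3,4\}$ and by whether $\ker M$ or $\mathrm{range}\, M$ contains product vectors. For rank $4$ the bound is trivially $1$. For rank $3$ (one kernel vector $\ket{\phi}$), I would split on the Schmidt rank of $\ket{\phi}$. For rank one $M=\ket{w}\langle w|$ with $\ket{w}$ entangled, I would show that two edges sharing a vertex, plus a third edge, create a redundancy. The key technical step, and the main obstacle, is a local lemma: if $M\neq M_1\otimes M_2$ with rank-one factors, then among the edges at any vertex only $O(1)$ can be pairwise independently witnessed, after a suitable orientation/peeling argument, giving $|E|=O(n)$ by degeneracy. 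I would first try proving this lemma by restricting witnesses to the star at a vertex $v$ and using that the reduced operators on $v$ must lie in a subspace of $\C^2$ of dimension at most $2$. That makes the bound on independent directions constant and yields the degeneracy bound.
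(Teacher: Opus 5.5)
Your lower bounds match the paper's: the tensor construction gives the $\Theta(n^2)$ case, and disjoint pairs padded with a kernel vector give $\Omega(n)$. Your caveat is also correct. If $M$ is positive definite, a single term already has trivial ground space, so every family with two or more terms is redundant and $\NRD(M,n)=1$. The paper's $\Omega(n)$ step cites \cref{lem:nalphalowerbound}, which assumes $M$ is not full rank, so the statement should exclude full-rank $M$.

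The genuine gap is the $O(n)$ upper bound. The witnesses are indeed linearly independent modulo $V_E$, but that only gives $|E|\le 2^n$. The local lemma you plan to finish with is false. Take $M=ww^{*}$ with $w=\ket{01}-\ket{10}$, which is entangled and so not a product of rank-one factors, and the star with terms $(v,i)$ for all $i\neq v$. The product state equal to $\ket{0}$ on qubit $i$ and $\ket{1}$ elsewhere restricts to $\ket{11}\perp w$ on every $(v,j)$ with $j\neq i$, and to $\ket{10}$, which has positive energy, on $(v,i)$. So this star is non-redundant, with all $n-1$ edges at $v$ independently witnessed, even though every witness is $\ket{1}$ at $v$. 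The witnesses differ on the leaves, so no dimension count at $v$ bounds the degree. Likewise, two edges at a vertex plus a third edge are not redundant unless the third edge closes a cycle.

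What is missing is the idea that cycles, not degrees, are the obstruction, together with a propagation mechanism that works for cycles of every length. The paper's key lemma is \cref{lem:nonsingular-aut}. Suppose $\Span(M)$ contains a $u$ whose coefficient matrix $(u(ab))_{a,b}$ is invertible. Then for terms $(i,j),(i,k)$ sharing their first coordinate, every common ground state is invariant under swapping qubits $j,k$; this follows by eliminating among the four equations $\langle u,\psi_z\rangle=0$.

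Automorphisms compose. So on a cycle $(i_1,j_1),(i_1,j_2),(i_2,j_2),\dots,(i_k,j_k),(i_k,j_1)$, every ground state of the first $2k-1$ terms is fixed by $(j_1j_k)$. That transposition carries the kernel of the $(i_k,j_k)$ term onto that of the $(i_k,j_1)$ term, so the last term is redundant.

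Now keep only terms directed from red to blue under a random 2-coloring. This retains a quarter of the terms in expectation and forces every cycle into the alternating form above. That matters, since terms $(i,j),(k,i)$ need not yield a swap. The retained subfamily is therefore a forest, and $|\cH|\le 4(n-1)$. Your $K_{2,2}$ idea is the case $k=2$. Forbidding only bounded-length cycles cannot give $O(n)$, because graphs of large girth have superlinearly many edges; composability is what handles all $k$.

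You also need the complementary case. If $\Span(M)$ contains no such $u$, then every vector in it is a product vector, which forces $M=M_1\otimes M_2$ with a rank-deficient factor. Ranks $(1,1)$ are the quadratic exception, and ranks $(2,1)$ or $(1,2)$ admit the same swap lemma by a direct computation. Your split by $\rank M$ and by the Schmidt rank of a kernel vector does not track this dichotomy; for example, every rank-$3$ $M$ already contains such a $u$ in its span.
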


Given a Hamiltonian $\cH$ whose terms are each of the form $M{\vert}_T \otimes \Id_{[n] - T}$, the crucial idea behind \cref{thm:2-qubit-intro} is to study the \emph{automorphism group} of the ground state of $\cH$. In particular, we isolate any two terms $H_a, H_b$ in our Hamiltonian $\cH$ which share a single qubit. Say the first term acts on the first two qubits $\{1, 2\}$ and the second term acts on the first and third qubits $\{1, 3\}$. Roughly speaking, we show that for any state in the common ground of both of these two terms, the state \emph{must be} invariant under swapping the second and third qubits (see \cref{lem:nonsingular-aut}). Crucially, the automorphisms of a quantum state can be composed, so from just a handful of these automorphisms we can deduce a large family of additional ones. As such, we can deduce that whenever the underlying interaction graph of Hamiltonians forms certain (bipartite) cycles, the Hamiltonian itself must be redundant~(see \cref{lem:bipartite-cycle-redundant})! From here, a simple linear bound on the non-redundancy quickly follows. See \cref{sec:2-qubit} for further details.

Our final contribution to the study of NRD is a further understanding of the non-redundancy of generic $r$-local Hamiltonians. In particular, while \cref{thm:genericSparsificationIntro} applies to $M \in \C^{2^r \times 2^r}$ of rank at least $2^{r-1}+1$, we can adapt the techniques of \cref{thm:2-qubit-intro} to improve our understanding of translation-invariant Hamiltonians of slightly \emph{lower rank}.

\begin{theorem}\label{thm:GenericNRDUpperBound-intro}
    Let $M \in \C^{2^r \times 2^r}$ be a random predicate matrix of rank $R \geq 2^{r-1}-1$. Then, with probability $1$ over $M$, $\NRD(M, n) \leq \widetilde{O}(n)$.
\end{theorem}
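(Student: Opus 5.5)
The plan is to adapt the automorphism argument behind \cref{thm:2-qubit-intro} to generic $r$-local predicates. The argument has two halves: a local algebraic statement and a global combinatorial statement.

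\textbf{Local step.} Take two terms $M|_{T}$ and $M|_{T'}$ whose supports overlap. I will show that for generic $M$ of rank $R \geq 2^{r-1}-1$ (nullity $k := 2^r - R \leq 2^{r-1}+1$), any state in the common kernel has a strong structural constraint. The simplest case is $|T\cap T'| = r-1$, with the two terms differing in a single qubit $a\in T\setminus T'$ and $b \in T'\setminus T$.

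I expect to prove the analogue of \cref{lem:nonsingular-aut}: every state in $\ker M|_T\cap \ker M|_{T'}$ is invariant under a fixed operator. This could be the swap of $a$ and $b$, or a fixed automorphism $\sigma_M$ determined by $M$. The approach is a dimension count.

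View $\ker M \subseteq \C^2_a\otimes \C^{2^{r-1}}_S$, with $S$ the shared qubits. A state in the common kernel lies in $(\ker M)_{a,S}\otimes \C^2_b \cap \C^2_a\otimes(\ker M)_{b,S}$. Genericity lets me compute the dimension of this intersection exactly by a transversality argument. The key fact is that a generic $k$-dimensional subspace of $\C^2\otimes\C^{2^{r-1}}$, with $k\le 2^{r-1}+1$, is "almost" the graph of a map. After a generic change of basis the intersection should collapse to the symmetric-type subspace. Since everything is algebraic, it suffices to exhibit one $M$ satisfying this; probability $1$ then follows because the failure locus is a proper Zariski-closed set.

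\textbf{Global step.} Automorphisms compose. So, as in \cref{lem:bipartite-cycle-redundant}, whenever the interaction hypergraph contains a suitable short cycle, I will show that some term is implied by the others. Concretely, suppose terms $T_1,\dots,T_\ell$ form a "cycle" in which consecutive sets differ in one qubit and the composed permutations return $T_1$ to itself. Then any state killed by the other terms is forced into $\ker M|_{T_1}$, which makes the family redundant.

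Hence a non-redundant instance is a hypergraph avoiding these cyclic configurations. I then need an extremal bound: an $r$-uniform hypergraph on $n$ vertices with no such "one-swap cycles" of length $O(\log n)$ has $\widetilde{O}(n)$ edges. This follows the strategy used for classical NRD, via a Moore-bound type argument. Any hypergraph with $\gg n\log n$ edges contains, by iterated degree/link arguments, a sub-configuration whose link graphs have average degree $\ge 2$. Such a configuration has an even cycle of length $O(\log n)$, and this cycle can be lifted to a one-swap cycle. The $\log$ factor accounts for the $\widetilde{O}$.

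\textbf{Main obstacle.} I expect the hard part to be the local step: actually establishing the invariance for nullity as large as $2^{r-1}+1$. For $R\ge 2^{r-1}+1$ the kernels were simply disjoint (as in \cref{lem:trivialIntersectionIntro}). Here the intersection is nontrivial, and I must pin down its structure rather than show it vanishes.

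My first attempt will be an explicit construction of a single witness $M$, for example a kernel spanned by carefully chosen near-symmetric vectors, for which the intersection is exactly the swap-invariant part. Upper semicontinuity of intersection dimension would then transfer the statement to generic $M$. A fallback is to ask only for a weaker consequence that still propagates along cycles, namely a fixed linear relation between the $a$- and $b$-slices.
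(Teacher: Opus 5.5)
There is a genuine gap, and it sits in your global step, which is forced by your choice of local configuration. Your local lemma only treats pairs with $|T\cap T'|=r-1$ (one-swap pairs), and the global step only uses cycles of such pairs. For $r\ge 3$ a hypergraph can have $\Theta(n^{r-1})$ terms and no one-swap pair at all. For example, take parts $V_1,\dots,V_r$, each a copy of $\mathbb{Z}_N$, and all tuples $(t_1,\dots,t_r)$ with $t_1+\cdots+t_r\equiv 0 \pmod N$. Any $r-1$ coordinates determine the last, so no two distinct terms share $r-1$ positions. Hence the extremal statement you plan to prove (no short one-swap cycles implies $\widetilde{O}(n)$ edges, via ``link graphs of average degree $\ge 2$'') is false. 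On such instances your local lemma never applies, so your argument has no way to certify that any term is redundant.

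Incidentally, the local step you flagged as the main obstacle is easy. An antisymmetric vector in the $a,b$ slots has the form $(e_{01}-e_{10})\otimes x$. It lies in the common kernel iff $e_0\otimes x$ and $e_1\otimes x$ both lie in $\ker M$, and for generic $\ker M$ of dimension $2^{r-1}+1$ this forces $x=0$.

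The missing idea is a local lemma for terms sharing a \emph{single} qubit, which is what the paper proves. If $T\cap T'=\{a\}$, then for generic $M$ of rank $2^{r-1}-1$, every state in the common kernel is invariant under the block permutation exchanging $T\setminus\{a\}$ with $T'\setminus\{a\}$ (a product of $r-1$ transpositions). To see this, write a slice of the state as two $2^{r-1}\times 2^{r-1}$ matrices $M_0,M_1$. The two terms impose $AM_0+BM_1=0$ and $AM_0^{\top}+BM_1^{\top}=0$, so the antisymmetric parts satisfy $AK_0+BK_1=0$. For $R=2^{r-1}-1$ this is a square linear map. It is invertible for $A=[\Id_R\ 0]$, $B=[0\ \Id_R]$, hence generically.

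With this lemma the global step needs no cycles and no extremal combinatorics. Fix a spanning forest $F$ with at most $n$ terms; every other term meets some forest term in at least one qubit. Adding such a term either is already implied by the current automorphism group of the ground space, which contradicts non-redundancy, or it strictly enlarges that subgroup of $S_n$, hence at least doubles it. That can happen at most $\log_2 n! = O(n\log n)$ times.
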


The key technical lemma needed to prove \cref{thm:GenericNRDUpperBound-intro} is that two terms which share a single qubit again induce an automorphism in the ground state. However the automorphism which is created is now a composition of $r-1$ transpositions rather than a single one (see \cref{lem:deriveAutomorphismGeneric}). This makes the structure of the automorphism group of the overall ground state a bit more complex, but we can get a sharp bound on the non-redundancy by combining two facts: (1) each introduced automorphism increases the size of the overall automorphism group by a factor of at least 2, and (2) the total size of the automorphism group is at most ${\vert}S_n{\vert} = n!$. Thus, there can be at most $2\log_2(n!) = \widetilde{O}(n)$ non-redundant constraints.

\subsection{Broader Context and Comparison to Related Work.}

We do remark that our setting differs slightly from that of \cite{AZ19}. In their work, Aharonov and Zhou primarily study the weaker problem of \emph{gap-simulation} mentioned above, which requires preserving the ground states along with the spectral gap of the Hamiltonian (and places no stronger requirements on the higher end of the spectrum). \cite{AZ19} studies this task through the lens of \emph{degree-reduction} (reducing the maximum number of terms that any qubit participates in) and \emph{dilution} (reducing the total number of terms). Our setting aligns with this goal of dilution.

Of particular note however, is the fact that the negative results in \cite{AZ19} rely on an inability to increase the \emph{interaction strength} between any two qubits, which is often important for building physical quantum systems. In our stronger setting which asks for full-spectrum simulation, reducing the interaction strength between qubits is not possible, as this inherently reduces the energy on the upper end of the spectrum. This setting where increased interaction strength is permitted is also studied in \cite{AZ19}, where it is shown that \emph{degree reduction} is possible, albeit, by introducing $\mathrm{poly}(n)$ extra auxiliary qubits (a commonly allowed operation, see \cite{CubittM16, zhou2021strongly}). However, for the task we study in this paper of \emph{dilution} (reducing the number of terms in the Hamiltonian), \cite{AZ19} prove no analogous results, instead showing only impossibility results for certain classes of Hamiltonians. Our results hold \emph{without} introducing extra qubits; the only modification to the underlying Hamiltonian is the removal of certain terms, and the re-weighting of others. Indeed, even the underlying operators in the individual terms are unchanged (up to scaling). 

Nevertheless, the problem we study of Hamiltonian sparsification (or sometimes called ``full spectrum simulation'') is still natural in many contexts. Indeed, as discussed above, many problems like quantum MAX-CUT (i.e., see \cite{PiddockM17,GharibianP19, anshu2020beyond, parekh2021application, KallaugherP22, HwangNP0W23,Piddock25}) or in general, the maximization of the energy of a stoquastic Hamiltonian (see \cite{jordan2010quantum}) require sparsifiers that do preserve the upper end of the Hamiltonian's spectrum, while others require the preservation of the lower end of the spectrum \cite{csahinouglu2021hamiltonian, zlokapa2024hamiltonian}. We thus view our full-spectrum sparsification requirement as a natural \emph{algorithmic} question, as it serves as an important preprocessing step for downstream quantum algorithms. Indeed, by reducing the number of distinct terms (i.e, performing sparsification), downstream algorithms can operate more efficiently on Hamiltonians with smaller descriptive complexity and fewer terms. 

At a broader level, our results point to an ability to perform general size reduction in certain classes of Hamiltonians. Such procedures have been widely studied in the context of the Quantum PCP conjecture previously; see, for instance, \cite{arad2010note, hastings2012trivial, aharonov2013guest, AZ19}. However, our results are lacking in two regards: first, as mentioned above, the sparsification can increase the \emph{interaction strengths} of the remaining terms, and second, our sparsification result \emph{does not} hold for every Hamiltonian. Nevertheless, we view it as a promising open question whether our techniques can be repurposed for this setting. In particular, our primary technical insight of studying \emph{Hamiltonian non-redundancy} deals with this same fundamental objective of trying to preserve the ground state of a Hamiltonian while reducing the number of terms (and thus degree) as much as possible.

Finally, we note that the study of quantum non-redundancy is closely related to the recent study of conditions for the sparsifiability of general families of PSD matrices. Indeed, while the study of CSP sparsification (as described above) using the notion of non-redundancy has been very fruitful, the techniques used to prove these results are highly combinatorial, thus preventing their use in more general contexts. However, \cite{BasuKLM26} generalizes the notion of non-redundancy in a more quantitative manner to define the quantity known as \emph{connectivity threshold}, which allows them to prove sparsification results about families of PSD matrices even when there might not be any underlying combinatorial structure. \cite{BasuKLM26} analyzes a simple ``importance sampling sparsification'' algorithm (\'a la \cite{SpielmanS11}) and shows that just like non-redundancy is the appropriate notion of sparsification in CSP sparsification, connectivity thresholds are the appropriate notion of sparsification in \emph{PSD sparsification}. Our results can be interpreted as developing a framework towards rigorously understanding what these ``connectivity thresholds'' look like in Hamiltonians.

\subsection{Technical Overview}

To give the reader a sample of the technical methods used in this paper, we give proof sketches of two of our main results: \cref{thm:sparspaulihamIntro} on sparsifying Pauli Strings and \cref{thm:genericSparsificationIntro} on sparsifying generic Hamiltonians.

\subsubsection{Pauli String Sparsification}

We now give a brief overview of the proof of \cref{thm:sparspaulihamIntro}. In this setting, we are given some local Pauli strings (each term shifted by identity so that the minimum eigenvalue becomes $0$). Before we describe how we prove \cref{thm:sparspaulihamIntro}, we invite the reader to appreciate why Hamiltonian sparsification is in general a harder problem than classical CSP sparsification. When we frame classical CSP sparsification in terms of Hamiltonians, the corresponding Hamiltonians are diagonal (in the computational basis $\{\braket{0} , \braket{1}\}^{\otimes n}$), and in particular it suffices to reason about the energies of the basis vectors alone, which gives the problem a very combinatorial flavor.

However, given two local Hamiltonians $M_T = M\vert_T\otimes \Id_{\bar{T}}$ and $M_{T'}$, where $T, T'$ intersect, $M_T, M_{T'}$ no longer commute, and the usual combinatorial tools fail.

Nevertheless, for the case of Pauli strings, we can \emph{restore commutativity} by partitioning the underlying hypergraph into \emph{partite} pieces. More formally, suppose the non-identity terms in our Pauli string are $P_1, \ldots, P_r$. Further suppose the hypergraph $\cH$ underlying our Hamiltonian was $r$-partite, i.e. the qubits can be partitioned into sets $V_1, \ldots, V_r$ such that $\cH\seq\bigtimes_{i = 1}^rV_i$. Then notice that when the Pauli string $P_1\otimes\cdots\otimes P_r$ (padded appropriately with $\Id$) acts on $\cH$, $P_1$ only acts on qubits from $V_1$, $P_2$ only acts on qubits from $V_2$, and so on. In particular, even if two hyperedges $T, T'\in\cH$ share some qubits, the respective Hamiltonians on $T, T'$ still commute, thanks to the tensor product structure of Pauli strings!

Thus from a non-commuting family of Pauli Hamiltonians, we can extract a commuting family of Hamiltonians by passing on to a partite sub-instance. Now, because the Hamiltonians commute in a partite instance, they are simultaneously diagonalizable. But notice that a diagonal (in some rotated basis) system of Hamiltonians is equivalent to a classical system of CSPs, which means we can use tools from the theory of classical CSP sparsification \cite{KhannaPS24,KhannaPS25,BrakensiekG25} to resolve the sparsifiability of this family of Pauli Hamiltonians! In fact, the eigenvalues of a partite family of PSD Pauli Hamiltonians behave exactly like the assignment values of a corresponding instance of $\mathbf{XOR}$, and using classical CSP sparsification results on $\mathbf{XOR}$ then yields the near-linear sparsifiability of Pauli Hamiltonians as in \cref{thm:sparspaulihamIntro}.

Finally, it remains to show that any hypergraph can be partitioned into (not too many) partite instances. For the technical overview, we will just explain the proof of this for graphs, i.e., we show that any graph can be decomposed into $\widetilde{O}(1)$ many bipartite graphs: Indeed, we know that the MAX-CUT of any graph contains $\geq 1/2$ the edges of the graph, and furthermore, a cut containing $\geq 1/2$ fraction of the edges can be found very easily, for instance, by choosing every vertex to be in the cut at random with probability $1/2$.\footnote{This random algorithm can also be derandomized without too much effort using a simple ``potential function'' argument to keep track of which vertices to include in the cut depending on which side of the cut it sends more edges to} Consequently, peeling off the MAX-CUT reduces the number of edges in the graph by at least a factor of $2$, showing that this process terminates in $O(\log n)\leq\widetilde{O}(1)$ many rounds. This proof sketch can be generalized to hypergraphs, see \cref{prop:r-partite-WLOG} for further details.

\subsubsection{Generic Sparsification}\label{subsec:tech-generic}

We now provide some intuition for \cref{thm:genericSparsificationIntro}, which relies more heavily on our non-redundancy machinery. Recall that in this setting, we are given a Hamiltonian $\mathcal{H}$ with $r$-local terms $H_1, \dots H_m$. Notationally, each $H_i = (M_i){\vert}_{T_i} \otimes \Id_{\overline{T_i}}$, where $M_i$ is a \emph{random} PSD matrix of rank $R \geq 2^{r-1} + 1$. For now, we take random to mean that $\spn(M) = \spn(v_1, \dots v_R)$, where each $v_j \in \C^{2^r}$ and is chosen \emph{at random} among unit norm vectors in $\C^{2^r}$. As we shall see however, the exact manner in which these vectors is chosen is not important, as sparsifiability of these systems proves to be a robust phenomenon. 

As discussed above, our key technique towards understanding the sparsifiability of a Hamiltonian $\mathcal{H}$ is in understanding the maximal \emph{non-redundant sub-instances}. In this direction, it is natural to then consider when, for a state $\braket{\psi}$, it is the case that there are many terms $H_i: i \in [m]$ such that $\langle \psi {\vert} H_i {\vert} \psi \rangle = 0$.

\paragraph{Understanding the Non-Redundancy of Generic Hamiltonians}

In fact, in this setting with random operators $M_i$, it turns out that non-redundancy is relatively well-behaved. The key insight here is that whenever two terms $H_i, H_j$ \emph{intersect} (in the sense that they operate on a shared qubit), this imposes a very powerful constraint on which states $\braket{\psi}$ can be in the kernel of $H_i$ and $H_j$. Formally, we show the following:

\begin{lemma}\label{lem:trivialIntersectionIntro}
    Let $H_i = (M_i){\vert}_{T_i} \otimes \Id_{\overline{T_i}}$ and  $H_j = (M_j){\vert}_{T_j} \otimes \Id_{\overline{T_j}}$, where $T_i, T_j \subseteq [n]$, and $T_i \cap T_j \neq \emptyset$. Then, with probability $1$ over $M_i, M_j$ (as defined above), $\ker(H_i) \cap \ker(H_j) = 0 \in \C^{2^n}$.
\end{lemma}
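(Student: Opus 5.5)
The plan is to prove \cref{lem:trivialIntersectionIntro} in two steps. First, turn the statement into the non-vanishing of a polynomial in the random entries. Second, show that polynomial is not identically zero by writing down one explicit pair of rank-$R$ matrices whose kernels meet trivially. A nonzero polynomial in the real and imaginary parts of the $v_j$'s vanishes only on a Lebesgue-null set. So the statement then holds with probability $1$, and more generally for any absolutely continuous distribution.

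\emph{Reduction to the union of the supports.} Write $S = T_i\cap T_j \neq \emptyset$, $A = T_i\setminus S$, $B = T_j\setminus S$ and $U = T_i\cup T_j$, so that $|U| = 2r - |S|$. Both $H_i$ and $H_j$ act as the identity on $\overline{U}$. Hence
\[
\ker(H_i)\cap\ker(H_j) = \bigl(\ker(H_i|_U)\cap \ker(H_j|_U)\bigr)\otimes \C^{2^{n-|U|}},
\]
and it suffices to work on $\C^{2^{|U|}}$. On $U$ we have $\ker(H_i|_U) = \ker(M_i)\otimes \C^{B}$ and $\ker(H_j|_U) = \C^{A}\otimes\ker(M_j)$, with the tensor factors reordered appropriately. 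The intersection of the kernels is trivial if and only if $\mathrm{range}(H_i|_U) + \mathrm{range}(H_j|_U) = \C^{2^{|U|}}$.

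\emph{Polynomial reformulation.} The range of $H_i|_U$ is spanned by the vectors $v^{(i)}_a\otimes e_b$, where the $v^{(i)}_a$ are the $R$ random vectors defining $M_i$ and $e_b$ runs over the standard basis of $\C^{B}$. The range of $H_j|_U$ is described similarly. Let $W$ be the matrix with all these columns. Then trivial intersection is equivalent to $\rank W = 2^{|U|}$. This in turn holds iff some maximal minor of $W$ is nonzero, or equivalently iff $\sum |\text{minor}|^2 \neq 0$. That sum is a real polynomial in the real and imaginary parts of the $v$'s. It remains to exhibit a single point where this polynomial is nonzero.

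\emph{Explicit witness (the key step).} Set $K = 2^r - R \leq 2^{r-1}-1$ and fix a qubit $q\in S$. Let $K_i$ be any $K$-dimensional subspace of $\C^{A\cup S\setminus\{q\}}\otimes\ket{0}_q$, and let $K_j$ be any $K$-dimensional subspace of $\C^{S\setminus\{q\}\cup B}\otimes \ket{1}_q$. These exist because $K \leq 2^{r-1}$. Choose $M_i$ and $M_j$ to be the projections onto $K_i^{\perp}$ and $K_j^{\perp}$; these have rank $R$, and their ranges are spanned by suitable choices of the $v$'s. Then every vector of $\ker(H_i|_U)$ has qubit $q$ in state $\ket{0}$, and every vector of $\ker(H_j|_U)$ has $q$ in state $\ket{1}$. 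The two kernels are therefore orthogonal subspaces and intersect only in $0$. This gives a point where the polynomial is nonzero, which completes the argument.

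The main obstacle is that the kernels $\ker(M_i)\otimes\C^B$ are highly structured rather than generic subspaces, so a naive dimension count cannot be used directly. The witness above sidesteps this, and the hypothesis $R\geq 2^{r-1}+1$ enters precisely so that a $K$-dimensional kernel fits inside the half of the space where $q$ is fixed.
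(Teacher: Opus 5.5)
Your argument is correct, and it takes a genuinely different route from the paper.

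\textbf{The paper's route.} The paper (\cref{lem:nullSpaceLouie}, \cref{clm:existVectorsLouie}) builds its witness from a \emph{single} matrix $M$ placed on both terms. It first treats $|T\cap T'|=1$, using an explicit ``shifted'' choice of $v_1,\dots,v_R$ and a diagonal-propagation argument on the two slices $M_0,M_1$ of $\psi$. It then handles larger intersections by lifting $\phi$ to a $(2r-1)$-qubit state. Finally, it obtains the two-matrix version (\cref{lem:nullSpaceDifferentMLouie}) by noting that a polynomial which is nonzero on the diagonal $M=M'$ is not the zero polynomial.

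\textbf{Your route.} Your witness uses the independence of $M_i$ and $M_j$ directly. Putting $\ker M_i$ in the $\ket{0}_q$ half and $\ker M_j$ in the $\ket{1}_q$ half makes the two lifted kernels orthogonal. This handles every intersection size at once and removes both the propagation and the lifting steps. Note that it only needs $K\le 2^{r-1}$, i.e.\ $R\ge 2^{r-1}$. So your closing remark slightly misattributes the role of the $+1$: the kernel already fits at $R=2^{r-1}$.

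\textbf{What your route does not give.} Your witness says nothing about the translation-invariant \cref{lem:nullSpaceLouie}, which is what \cref{thm:EfficientRandomSparsifierLouie} relies on. The diagonal $M_i=M_j$ is a null set of the joint parameter space, so nonvanishing of the polynomial off it tells you nothing there.

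This limitation is intrinsic, not a technicality. Take one shared $M$ of rank $2^{r-1}$, with the common qubit in the same slot of both tuples. The constraints read $AM_0+BM_1=0=AM_0^{\top}+BM_1^{\top}$. For invertible $A$ these reduce to $CM_1=M_1C^{\top}$ with $C=-A^{-1}B$. This equation always has a nonzero solution, because $C$ and $C^{\top}$ share eigenvalues. So the shared case really does need the extra dimension that the paper's witness uses and yours never touches.

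For the lemma as stated, with $M_i,M_j$ independent and the probability taken jointly, your proof is complete.
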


That is to say, the above lemma implies that there are \emph{no} states of norm $1$ which are in the kernels of both $H_i$ and $H_j$. With \cref{lem:trivialIntersectionIntro} in hand, we can now bound the size of any non-redundant Hamiltonian of the above form by $n$: indeed, let us suppose for the sake of contradiction that a Hamiltonian of the above form has $\geq n+1$ terms. Then, it must be the case that there is some qubit which is shared between two terms. Call these terms $H_a$ and $H_b$. Now, let us consider any other term $H_c$.

In order for $\cH$ to be a non-redundant Hamiltonian, it must be the case that there exists a state $\braket{\psi}$ such that 
\[
\langle \psi {\vert} H_a {\vert} \psi \rangle = \langle \psi {\vert} H_b {\vert} \psi \rangle = 0,
\]
while $\langle \psi {\vert} H_c{\vert} \psi \rangle > 0$. But, this implies that $\braket{\psi} \in \ker(H_a) \cap \ker(H_b)$, and so by \cref{lem:trivialIntersectionIntro}, $\braket{\psi}$ must be trivial. But then $\langle \psi {\vert} H_c{\vert} \psi \rangle = 0$ as well, which is a contradiction.

Proving \cref{lem:trivialIntersectionIntro} is more subtle; the intuition here is that a constraint of the form $\langle \psi {\vert} H_a {\vert} \psi \rangle = 0$ is imposing $2^{n -r} \cdot R$ many linear constraints on the state $\braket{\psi} \in \C^{2^n}$. The factor of $R$ comes from the rank of the matrix $M_a$, and the factor of $2^{n-r}$ comes from the tensor product with $\Id_{\overline{T_a}}$. Thus, when we have two terms $H_a$ and $H_b$, any state $\braket{\psi}\in \ker(H_a) \cap \ker(H_b)$ must obey a set of $2 \cdot 2^{n -r} \cdot R$ many linear constraints. Proving that there is no non-trivial state $\braket{\psi}$ which satisfies these constraints is then equivalent to showing that the \emph{dimension} spanned by these combined constraints is $2^n$.

Unfortunately, it turns out that if $H_a$ and $H_b$ are \emph{disjoint} (i.e., operating on non-intersecting groups of qubits), then the combined constraints have dimension $< 2^n$ and so there are non-trivial states $\braket{\psi}$ in $ \ker(H_a) \cap \ker(H_b)$. Our argument in \cref{sec:intersectingTermGenericLouie} shows however that, when $H_a$ and $H_b$ \emph{do} share qubits, and satisfy a non-degeneracy condition among their spans which we call ``genericity'' (which holds with probability $1$ for random operators), then in fact the combination of their implied linear constraints \emph{does} span the entire $2^n$ dimensional space. 

We omit the technical analysis here, but our argument relies on carefully casting this span condition as the determinant of a specific $2^n \times 2^n$ sub-matrix of these $2 \cdot 2^{n -r} \cdot R$ many linear constraints. If this determinant is non-zero, then this immediately implies that their span is $2^n$ dimensional. Importantly, we show that when we treat the entries of the matrices $M_a, M_b$ as variables, this determinant polynomial is \emph{not} the zero polynomial: thus, when the entries are chosen at random, the determinant is non-zero with probability $1$.

\paragraph{Building Sparsifiers For Generic Hamiltonians}

\cref{lem:trivialIntersectionIntro} also has immediate applications in building sparsifiers. Indeed, consider the case when we have two terms, $T_a, T_b$ such that $T_a \cap T_b \neq \emptyset$. By \cref{lem:trivialIntersectionIntro}, we know that $\ker(H_a + H_b)$ is trivial, as there is no other non-trivial state which can give both $H_a$ and $H_b$ zero energy. 

Even more importantly, we know that $H_a + H_b$ can be written as a tensor product of some matrix $(M_{a, b}){\vert}_{T_a \cup T_b} \otimes \Id_{[n] - (T_a \cup T_b)}$, where $M_{a, b}$ operates on $\leq 2r-1$ many qubits. Thus, the eigenvalues of $H_a + H_b$, are exactly the eigenvalues of this matrix $M_{a, b}$ (with multiplicity). Here is where we make use of an additional key observation: the matrix $M_{a, b} \in \C^{2^{O(r)} \times 2^{O(r)}}$, where $r$ is a \emph{constant} in our setting. Thus, we have a constant size PSD matrix, with an empty null-space. So, we can then say that all eigenvalues of $M_{a, b}$ are at least $\Omega_{r}(1)$ (we use the subscript of $r$ to denote that there may be (even exponential) dependence on the values in the matrices $M_a, M_b$ or the constant $r$). We formalize this reasoning in \cref{sec:intersectingTermGenericLouie}; the key intuition is that these matrices are sampled from a distribution parameterized only by $r$, thus \emph{independent} of $n$, the number of qubits.

Continuing, suppose we are able to find many ``disjoint pairs of intersecting terms''; call these $H^{(j)}_a, H^{(j)}_b: j \in [C]$, where $H^{(j)}_a$ operates on qubits $T_a^{(j)} \subseteq [m]$ and $H^{(j)}_b$ operates on qubits $T_b^{(j)} \subseteq [m]$ such that $T_a^{(j)} \cap T_b^{(j)} \neq \emptyset$. Then, we know that the smallest eigenvalue of $H$ must scale with the number of these disjoint pairs that we find, and so must be at least $C \cdot \Omega_{ r}(1)$. But, we now can provide a tight bound on the so-called \emph{importance score} of each term $H_i \in H$. Indeed, we see that:
\[
\mathrm{Importance}(H_i) := \max_{\braket{\psi}} \frac{\langle \psi {\vert} H_i {\vert} \psi \rangle}{\langle \psi {\vert} H {\vert} \psi \rangle} \leq \max_{\braket{\psi}} \frac{\langle \psi {\vert} H_i {\vert} \psi \rangle}{\sum_{j \in [C]}\langle \psi {\vert} (H^{(j)}_a+ H^{(j)}_b) {\vert} \psi \rangle} 
\]
\[
\leq \frac{\lambda_{\max}(H_i) \cdot \Vert \psi \Vert_2^2}{\lambda_{\min}(\sum_{j \in [C]}H^{(j)}_a+ H^{(j)}_b) \cdot \Vert \psi \Vert_2^2}  \leq O_{r}\left ( \frac{1}{C} \right ).
\]

The key observation is that once $C$ is large, say $\frac{n}{\eps^2}$, then this provides a very strong bound on the ``importance score'' of all constituent terms of the Hamiltonian. In fact, this bound is sufficiently strong such that we can \emph{directly} invoke Matrix Chernoff to argue that if we randomly sample the terms of $H$ at rate $\approx \frac{n}{\eps^2 C}$, and give the sampled terms weight $\approx \frac{\eps^2 C}{ n}$ (to yield a new, sparser, Hamiltonian $\widetilde{H}$), then it is the case that, with high probability, for every state $\braket{\psi}$,
\[
\langle \psi {\vert} \widetilde{H} {\vert} \psi \rangle \in (1 \pm \eps) \langle \psi {\vert} H {\vert} \psi \rangle.
\]

The argument then concludes by observing that whenever the number of terms $m$ is larger than (say) $10n$, we can indeed find $\geq \frac{m}{4}$ disjoint pairs of intersecting terms. Thus, in the above, $C = \frac{m}{4}$, and so the resulting sample retains only $\approx m \cdot \frac{n}{C \eps^2} \leq O(\eps^{-2} n)$ many terms.

A more careful version of this argument is used to ultimately prove \cref{thm:genericSparsificationIntro}.

\subsubsection*{Outline}

In \cref{sec:prelims}, we formally define Hamiltonian sparsification, set up the notation we use throughout the paper, and prove a number of basic results on the structure of sparsifiers. In \cref{sec:pauli}, we prove \cref{thm:sparspaulihamIntro} on the sparsifiability of Pauli strings. In \cref{sec:sparse-genericLouie}, we prove \cref{thm:genericSparsificationIntro} on the sparsifiability of generic Hamiltonians. In \cref{sec:nullity-1}, we prove \cref{thm:sparsifyNullity1Intro} on sparsifying Hamiltonians where each term has nullity at most one. In \cref{sec:spar-qmcsp}, we show how to sparsify any quantum MAX-CSP (including quantum MAX-CUT), proving \cref{thm:maxcutSparsifierIntro} in the process. In \cref{sec:tensor-lb}, we prove \cref{thm:tensor-intro} on the non-redundancy of Hamiltonians which are tensor products. In \cref{sec:2-qubit}, we prove \cref{thm:2-qubit-intro} on the non-redundancy of arity-$2$ local Hamiltonians. In \cref{sec:nrd-generic}, we prove \cref{thm:GenericNRDUpperBound-intro} on the non-redundancy of random Hamiltonians. In \cref{sec:conclusion}, we give some concluding remarks and open questions.

\section{Preliminaries}\label{sec:prelims}

Throughout the paper we will work with Hamiltonians on $n$ qubits. The ambient Hilbert space of this quantum system is $\bigotimes_{i = 1}^n\C^2\cong\C^{\zo^n}\cong\C^{2^{[n]}}\cong\C^{2^{n}}$. For any $\psi\in\C^{\zo^n}$, we will equivalently treat $\psi$ as a function $\zo^n\to\C$. Note that we do not require the state $\psi$ to be normalized such that $\Vert \psi \Vert_2 = 1$, as the energy will simply scale proportionally with the square of the norm.

\begin{definition}[Restriction of a state]\label{def:restrictedState}
If $S\seq[n]$ is a subset and if $z\in\zo^S$, then we define $\psi_z$ to be a function $\zo^{[n]\setminus S}\to\C$, where for any $y\in\zo^{[n]\setminus S}$, $\psi_z(y):= \psi(z \circ y)$, where $z\circ y\in\zo^S\times\zo^{[n]\setminus S}\cong\zo^{[n]}$ is just the concatenation of $z, y$.
\end{definition}

We define a \emph{predicate matrix} of \emph{arity $r$} to be a Hermitian Positive Semi-Definite (PSD) matrix $M \in \C^{2^r \times 2^r}$. Given a sequence $T \in [n]^r$ of $r$ distinct qubits, we let $M_T:= M\vert_T\otimes\Id_{[n]\setminus T}$. For a state $\psi \in \C^{2^n}$, we define its \emph{energy} with respect to $M_T$ to be $\cE_{T,M}(\psi) = \langle\psi, M_T \psi \rangle$.

Note that 
\begin{align}
\label{eq:etmexpression}
    \cE_{T, M}(\psi) = \sum_{z\in\zo^{[n]\setminus T}}\langle\psi_z, M\psi_z\rangle\mper
\end{align}
Note that $\psi_z$ can be treated as a vector in a $2^r$-dimensional Hilbert space.

We define a Hamiltonian $\cH$ to be a set of pairs $(T, M)$ where $T \in [n]^r$ and $M \in \C^{2^r \times 2^r}$. Given a Hamiltonian $\cH$, we define the energy of state $\psi$ to be $\cE_{\cH}(\psi) := \sum_{(T, M) \in \cH} \cE_{T,M}(\psi)$.  If $\cE_{\cH}(\psi) = 0$, then we say that $\psi$ is in the \emph{ground state} of $\cH$. We let $\Psi_{\cH}$ denote the entire ground state. Observe that
\begin{align}
\Psi_{\cH} := \bigcap_{(T,M) \in \cH} \ker M_T.\label{eq:ground-state}
\end{align}

Given a weight function $w : \cH \to \R_{\geq 0}$, we define the $w$-weighted energy of $\cH$ for all states $\psi \in \C^{2^n}$ to be
\[
    \cE_{\cH, w} := \sum_{(T, M) \in \cH} w(T, M) \cE_{T,M}.
\]

An $\eps$-\emph{sparsifier} of $\cH$ is a weight function $w : \cH \to \R_{\geq 0}$ such that for all states $\psi \in \C^{2^n}$ we have that
\begin{align}
    \cE_{\cH, w}(\psi) \in [1-\eps, 1+\eps] \cdot \cE_{\cH}(\psi).\label{eq:quantum-sparsifier}
\end{align}
Note that \cref{eq:quantum-sparsifier} is equivalent to saying that 
\[(1 - \eps)\sum_{(T, M) \in \cH} M_T\preceq\sum_{(T, M) \in \cH} w(T, M)M_T\preceq(1 + \eps)\sum_{(T, M) \in \cH} M_T\]
where $\preceq$ refers to the Loewner order on Hermitian matrices. \footnote{For two Hermitian matrices $A, B$, we say $A\preceq B$ if $B - A$ is PSD}
\begin{remark}\label{rem:convex}
Note that the set of $\eps$-sparsifiers of $\cH$ form a convex set. In particular, any Hamiltonian can be evolved to any of its sparsifiers in a continuous (in fact, linear) manner.
\end{remark}
We define the \emph{size} of $w$ to be cardinality of $w^{-1}(\R_{> 0}) = \supp(w)$. We denote by $\SPR(\cH, \eps)$ the minimum size of all $\eps$-sparsifiers of $\cH$.

In some situations, we assume that all predicates of $\cH$ are the same matrix $M \in \C^{2^r \times 2^r}$, an assumption known as \emph{translation invariance}. In that case, we call $\cH$ an $M$-Hamiltonian, and simplify notation by having $\cH$ be the set of all $r$-tuples $T \in [n]^r$ for which $M$ is applied. We let $\SPR(M, n, \eps)$ be the maximum value of $\SPR(\cH, \eps)$ among all $M$-Hamiltonians on $n$ qubits.

Since $M\in\C^{2^r\times 2^r}$, any parameter associated to $M$ --- e.g., the largest eigenvalue of $M$ --- can be bounded as a function of $r$ alone. Thus, throughout the paper we will use notations $O_M(\cdot), \Omega_M(\cdot), \Theta_M(\cdot)$ to suppress constants dependent only on $r$ and $M$.

We say that Hamiltonian $\cH$ is \emph{$r$-partite} if there exists a partition of the vertices $[n] = V_1 \cup \cdots \cup V_r$ such that $\cH \subseteq V_1 \times \cdots \times V_r$. In many proofs it is convenient to assume that $\cH$ is $r$-partite. We show this assumption can be made without loss of generality up to $\widetilde{O}_r(1)$ factors in the sparsifier size.

\begin{proposition}\label{prop:r-partite-WLOG}
Let $\cH$ be an $r$-local Hamiltonian on $n$ qubits. Then in deterministic $\poly_r({\vert}\cH{\vert})$ time one can compute a partition $\cH = \bigsqcup_{i\in\mathcal{I}}\cH_i$ such that each $\cH_i$ is $r$-partite, and ${\vert}\mathcal{I}{\vert}\leq O_r(\log n)$. In particular, there exists an $r$-partite Hamiltonian $\cH'\seq\cH$ such that $\SPR(\cH', \eps) \geq \Omega_{r}\left(\frac{\SPR(\cH, \eps)}{\log n}\right)$.
\end{proposition}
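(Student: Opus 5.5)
The plan is to peel off $r$-partite pieces greedily and then handle the ``in particular'' clause with a union bound over the pieces.

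\textbf{Constructing one piece.} Each term $(T,M)\in\cH$ has an ordered tuple $T=(t_1,\dots,t_r)$ of distinct qubits. Assign each qubit $v\in[n]$ an independent uniformly random label $c(v)\in[r]$, and let $V_j=c^{-1}(j)$. Say a term is \emph{good} if $c(t_j)=j$ for every $j$. The good terms lie in $V_1\times\cdots\times V_r$, so they form an $r$-partite sub-Hamiltonian. Because the $t_j$ are distinct, each term is good with probability exactly $r^{-r}$. Hence the expected number of good terms is $r^{-r}$ times the number of remaining terms.

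\textbf{Derandomizing and iterating.} Use the method of conditional expectations: fix the labels $c(1),c(2),\dots,c(n)$ one at a time, each time choosing the label that maximizes the conditional expected number of good terms. That conditional expectation equals a sum over terms of a product of indicators and $1/r$ factors, so it is computable in $\poly_r(|\cH|)$ time. This deterministically yields an $r$-partite $\cH_1\subseteq\cH$ with $|\cH_1|\geq r^{-r}|\cH|$. Remove $\cH_1$ and repeat on the rest. After $k$ rounds at most $(1-r^{-r})^k|\cH|$ terms remain.

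\textbf{Bounding the number of pieces.} There are at most $n^r$ distinct tuples $T$. Collapsing parallel copies of the same $(T,M)$ into one weighted term, we get $|\cH|\leq n^r$ (or $|\cH|\leq n^r$ times the number of distinct predicates, which is a constant in the translation-invariant setting). So the process stops after $k=O(r^r\cdot r\log n)=O_r(\log n)$ rounds. The leftover pieces then partition $\cH$ into $|\mathcal I|=O_r(\log n)$ $r$-partite parts.

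\textbf{The ``in particular'' clause.} Let $w_i$ be an $\eps$-sparsifier of $\cH_i$ of size $\SPR(\cH_i,\eps)$, and let $w=\sum_i w_i$. For every state $\psi$, summing the per-piece guarantees
\[
\cE_{\cH_i,w_i}(\psi)\in[1\pm\eps]\,\cE_{\cH_i}(\psi)
\]
gives $\cE_{\cH,w}(\psi)\in[1\pm\eps]\,\cE_{\cH}(\psi)$, so $w$ is an $\eps$-sparsifier of $\cH$. Therefore
\[
\SPR(\cH,\eps)\leq\sum_i\SPR(\cH_i,\eps)\leq|\mathcal I|\max_i\SPR(\cH_i,\eps).
\]
Taking $\cH'$ to be the maximizing piece gives $\SPR(\cH',\eps)\geq\Omega_r\!\left(\SPR(\cH,\eps)/\log n\right)$.

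\textbf{Main obstacle.} The delicate step is bounding the number of rounds by $O_r(\log n)$ rather than $O_r(\log|\cH|)$. This needs the reduction to at most $n^r$ distinct terms, with duplicates merged, which is harmless for sparsification. If repeated terms with distinct predicates must be kept separate, the bound becomes $O_r(\log|\cH|)$. In that case one would group terms by support $T$ and partition the supports instead, since the $r$-partite property depends only on $T$.
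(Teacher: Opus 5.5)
Your proposal is correct and follows the paper's proof essentially step for step. It uses the same random labeling $[n]\to[r]$ that keeps tuples with $(c(t_1),\dots,c(t_r))=(1,\dots,r)$, the same derandomization by conditional expectations (the paper's potential function with weight $1/r$ for unassigned qubits), the same greedy peeling using $|\cH|\le n^r$ to get $O_r(\log n)$ rounds, and the same combination of per-piece sparsifiers. One small point in your favor: partitioning the distinct supports rather than the terms is a slightly cleaner way to ensure at most $n^r$ objects than the paper's merging of $(T,M_1),(T,M_2)$ into $(T,M_1+M_2)$, because it partitions the original $\cH$ without modifying it.
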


\begin{proof}
Note that if $(T, M_1), (T, M_2) \in \cH$, then can create an equivalent Hamiltonian by replacing both with $(T, M_1 + M_2)$. Thus, we may assume without loss of generality that ${\vert}\cH{\vert} \leq n^r$. We now describe a procedure for constructing this partition $\cH = \bigsqcup_{i\in\mathcal{I}}\cH_i$ such that each $\cH_i$ is $r$-partite. We begin by finding one large $r$-partite subgraph of $\cH$.

\paragraph{Extracting a large $r$-partite subgraph.} Consider a uniformly random function $p : [n] \to [r]$. Given such a function $p$, we can construct an $r$-partite graph $\cH_p$ such that for every $(t_1, \hdots, t_r) \in \cH$, we have that $(t_1, \hdots, t_r) \in \cH_p$ if and only if $(p(t_1), \hdots, p(t_r)) = (1, \hdots, r)$. If our function $p : [n] \to [r]$ is chosen uniformly at random, the probability that $(p(t_1), \hdots, p(t_r)) = (1, \hdots, r)$ is precisely $r^{-r}$. In particular, $\mathbb E_{p}[{\vert}\cH_p{\vert}] = r^{-r} {\vert}\cH{\vert}$. Thus, there exists a function $\widehat{p} : [n] \to [r]$ such that ${\vert}\cH_{\widehat{p}}{\vert} \geq r^{-r}{\vert}\cH{\vert}$. To deterministically construct $\widehat{p}$, consider the following potential function for any partial assignment $p: [n] \to [r] \cup \{\bot\}$ (where $\bot$ denote an undefined output).
\begin{align*}
    \Phi_{(t_1, \hdots, t_r)}(p) &:= \prod_{a \in [r]} \delta_r(a, p(t_a)),\\
    \Phi_{\cH}(p) &:= \sum_{(t_1, \hdots, t_r) \in \cH}  \Phi_{(t_1, \hdots, t_r)}(p),
\end{align*}
where
\[
    \delta_r(a, b) = \begin{cases}
    1 & a = b\\
    \frac{1}{r} & b = \bot\\
    0 & \text{otherwise}.
    \end{cases}
\]
\begin{algorithm}
\caption{Find $p : [n] \to [r]$ such that ${\vert}\cH_p{\vert} \geq r^{-r} {\vert}\cH{\vert}$}\label{alg:find-partition}
Start with $p_0(i) = \bot$ for all $i \in [n]$.\\
\For{$i \in [n]$} {
    \For{$a \in [r]$} {
        Let $p_{i,a}$ be $p_{i-1}$ except $p(i) = a$.
    }
    Assign $p_i$ to be some $p_{i,a}$ which maximizes $\Phi_{\cH}(p_{i,a})$.
}
\Return{$p_n$.}
\end{algorithm}

The algorithm is clearly deterministic and runs in $\poly_r({\vert}\cH{\vert})$ time. To see why ${\vert}\cH_{p_n}{\vert} \geq r^{-r}{\vert}\cH{\vert}$. Observe that $\Phi_{\cH}(p_0) = r^{-r}{\vert}\cH{\vert}$. And $\Phi_{\cH}(p_n) = {\vert}\cH_{p_n}{\vert}$. Thus, it suffices to prove that $\Phi_{\cH}(p_i) \geq \Phi_{\cH}(p_{i-1})$ for all $i \in [n]$. We do this by comparing $\Phi_{(t_1, \hdots, t_r)}(p_{i-1})$ and $\Phi_{(t_1, \hdots, t_r)}(p_{i,a})$ for all $(t_1, \hdots, t_r) \in \cH$ and all $a \in [r]$. We break our analysis into cases.

First, if none of $t_1, \hdots, t_r$ equals $i$, then for all $a \in [r]$ we have that
\[
    \Phi_{(t_1, \hdots, t_r)}(p_{i,a}) = \prod_{b \in [r]} \delta_r(b, p_{i,a}(t_b)) = \prod_{b \in [r]} \delta_r(b, p_{i-1}(t_b)) = \Phi_{(t_1, \hdots, t_r)}(p_{i-1}).
\]
Otherwise, assume that $t_a = i$ for some $a \in [r]$. Then,
\begin{align*}
    \Phi_{(t_1, \hdots, t_r)}(p_{i,a}) = \prod_{b \in [r]} \delta_r(b, p_{i,a}(t_b)) &= \delta_r(a, p_{i,a}(a))\prod_{b \in [r] \setminus \{a\}} \delta_r(b, p_{i-1}(t_b))\\
    &= r\delta_r(a, p_{i-1}(a))\prod_{b \in [r] \setminus \{a\}} \delta_r(b, p_{i-1}(t_b))\\
    &= r\Phi_{(t_1, \hdots, t_r)}(p_{i-1}).
\end{align*}
By similar logic, $\Phi_{(t_1, \hdots, t_r)}(p_{i,b}) = 0$ for all $b \in [r] \setminus \{a\}$. Thus, we have shown for any $(t_1, \hdots, t_r) \in \cH$ we have that
\[
    \frac{1}{r}\sum_{a \in [r]} \Phi_{(t_1, \hdots, t_r)}(p_{i,a}) = \Phi_{(t_1, \hdots, t_r)}(p_{i-1}).
\]
Therefore,
\[
    \Phi_{\cH}(p_i) \geq \frac{1}{r}\sum_{a \in [r]} \Phi_{\cH}(p_{i,a}) = \Phi_{\cH}(p_{i-1}),
\]
as desired. Hence, \cref{alg:find-partition} succeeds in finding $p : [n] \to [r]$ such that ${\vert}\cH_p{\vert} \geq r^{-r} {\vert}\cH{\vert}$ .

\paragraph{Building a partition.} We now describe how to deterministically construct a partition $\cH = \bigsqcup_{i\in\mathcal{I}}\cH_i$ such that each $\cH_i$ is $r$-partite. As a base case, we use \cref{alg:find-partition} to find $\cH_1 \subseteq \cH$ with ${\vert}\cH_1{\vert} \geq r^{-r}{\vert}\cH{\vert}$. Then, given $\cH_1, \hdots, \cH_i$ for some integer $i$, we consider $\cH_{-i} := \cH \setminus (\cH_1 \cup \cdots \cup \cH_i)$ and run \cref{alg:find-partition} of $\cH_{-i}$ to find $\cH_{i+1} \subseteq \cH_{-i}$ such that ${\vert}\cH_{i+1}{\vert} \geq r^{-r}{\vert}\cH_{-i}{\vert}.$ This procedure stops when $\cH = \cH_1 \cup \cdots \cup \cH_{\ell}$ for some positive integer $\ell$ (i.e., $\cH_{-\ell} = \emptyset$). This procedure is entirely deterministic and runs in finite time as every step assigns at least one hyperedge of $\cH$ to the partition that was not assigned before.

We claim more strongly that $\ell \leq O_r(\log n)$. Define $\cH_{-0} := \cH$. For all $i \in [\ell]$, note by design that ${\vert}\cH_{i}{\vert} \geq r^{-r}{\vert}\cH_{-(i-1)}{\vert}$. Thus, ${\vert}\cH_{-i}{\vert} \leq (1-r^{-r}){\vert}\cH_{-(i-1)}$. By combining these inequalities, we have that
\[
    {\vert}\cH_{-\ell}{\vert} \leq (1-r^{-r})^{\ell}{\vert}\cH{\vert} \leq (1-r^{-r})^{\ell} n^r.
\]
In particular, if $\ell \geq \frac{\log (2n^r)}{-\log (1-r^{-r})} = \Theta_r(\log n)$, then $(1-r^{-r})^{\ell} n^r \leq \frac{1}{2}$, so $\cH_{-\ell} = \emptyset$, showing that our procedure terminated in $O_r(\log n)$ steps.

\paragraph{Bounding the sparsifier size.} Since $\cH_1, \hdots, \cH_\ell$ partition $\cH$, we can combine (by taking a disjoint union) any $\eps$-sparsifiers of $\cH_1, \hdots, \cH_\ell$ to obtain an $\eps$-sparsifier of $\cH$. Therefore, we have the inequality.
\[
    \sum_{i=1}^{\ell} \SPR(\cH_i, \eps) \geq \SPR(\cH, \eps).
\]
Therefore, there exists $i \in [\ell]$ for which $\SPR(\cH_i, \eps) \geq \frac{1}{\ell}\SPR(\cH, \eps) \geq \Omega_{r}\left(\frac{\SPR(\cH, \eps)}{\log n}\right)$, as desired.
\end{proof}
The way we will typically use \cref{prop:r-partite-WLOG} is as follows: Given a Hamiltonian sparsification problem supported on a hypergraph $\cH$, we use \cref{prop:r-partite-WLOG} to partition $\cH$ into partite hypergraphs, and assume thereon that our underlying hypergraph is partite. After we have sparsified the partite hypergraphs, we can put them together, incurring only a $O_r(\log n)$ blow-up in the size of our sparsifier, which is negligible in everything we consider in this paper.

\subsection{Non-redundancy}

A crucial notion used in the classical CSP sparsification literature is that of \emph{non-redundancy}. We define a quantum analgoue here. Given a Hamiltonian $\cH$, we say that $\cH$ is \emph{non-redundant} if for all $T,M) \in \cH$, 
we have that
\[
    \Psi_{\cH \setminus (T, M)} \neq \Psi_{\cH}.
\]
In other words, each local Hamiltonian of $\cH$ makes a meaningful contribution to the ground state. Conversely, if $\Psi_{\cH \setminus (T, M)} = \Psi_{\cH}$ from some $(T,M) \in \cH$, then we say that $\cH$ is \emph{redundant}. Given a predicate matrix $M \in \C^{2^r \times 2^r}$, we define $\NRD(M, n)$ to be size of the largest non-redundant $M$-Hamiltonian $\cH$ on $n$-qubits. By adapting the proof of \Cref{prop:r-partite-WLOG}, we can show that without loss of generality that our instance $\cH$ is $r$-partite.

We also use the connectivity parameter notion from PSD sparsification:

\begin{definition}[Connectivity Parameter]
Let $\mathcal{A}= \{A_1, \dots A_m\}$ be a collection of PSD matrices. For any $\alpha \geq 0$, the \emph{connectivity parameter} with strength $\alpha$, denoted $N(\alpha ; \mathcal{A})$ is the smallest integer such that for any $T \subseteq [m]$ with ${\vert}T{\vert} \geq N(\alpha ; \mathcal{A})$, there exists an $i \in T$ satisfying $\alpha A_i \preceq \sum_{j \in T \setminus \{i\}} A_j$. If no such integer exists, we say $N(\alpha ; \mathcal{A}) = m+1$.
\end{definition}

The key fact we know about the connectivity parameter is:

\begin{theorem}[Theorem 1.5 in \cite{BasuKLM26}]\label{thm:sparsifyBKLM}
\label{thm:nstarsparsifier}
    Let $\mathcal{A} = \{A_1, \dots A_m\} \subseteq \C^{2^n \times 2^n}$ be a collection of Hermitian PSD matrices. Then, for any $T \subseteq [n]$ and $\eps \leq 1/2$, there exists a sparse set of weights $w: [m] \rightarrow [0, \infty)$ such that 
    \[
    (1 - \eps)\sum_{i \in [m]} A_i\preceq\sum_{i \in [m]} w(i) A_i \preceq (1 + \eps)\sum_{i \in [m]} A_i
    \]
    and 
    \[
    {\vert}\mathrm{supp}(w){\vert} \leq O(1) \cdot \left ( \min_{\alpha \in (0,1]} \frac{N(\alpha ; \mathcal{A})}{\alpha} \right ) \cdot \left ( \frac{n \log(m)}{\eps^2}\right ).
    \]
\end{theorem}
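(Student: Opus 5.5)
The plan is importance sampling analyzed with matrix Chernoff, with the sampling budget controlled by the connectivity parameter; everything before the last step is standard. Write $A := \sum_{i\in[m]} A_i$ and work on $\operatorname{range}(A)$, so that $A^{+}$ denotes the pseudoinverse. For each $i$, define the importance score
\[
s_i := \max_{\psi \in \operatorname{range}(A)} \frac{\langle \psi, A_i \psi\rangle}{\langle \psi, A \psi\rangle} = \bigl\Vert A^{+/2} A_i A^{+/2}\bigr\Vert_{\op} \in [0,1].
\]
Sample each $i$ independently with probability $p_i := \min\{1, C s_i \log(m)/\eps^2\}$, and if $i$ is kept give it weight $w(i) := 1/p_i$. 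Let $X_i := w(i)\,\mathbf{1}[i \text{ kept}]\, A^{+/2}A_iA^{+/2}$. These are independent PSD matrices with $\sum_i \E X_i = \Pi_{\operatorname{range}(A)}$ and $\Vert X_i\Vert_{\op} \leq \eps^2/(C\log m)$ almost surely. Matrix Chernoff then gives $(1\pm\eps)$-approximation in Loewner order with positive probability, and the expected support is $O(\log(m)/\eps^2)\cdot \sum_i s_i$. So the theorem reduces to the score bound
\[
\sum_{i\in[m]} s_i \;\leq\; O\!\left(\frac{N(\alpha;\mathcal A)}{\alpha}\cdot n\right) \quad\text{for every } \alpha\in(0,1],
\]
and minimizing over $\alpha$ gives the stated size. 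The logarithmic dimension factor that Chernoff would normally contribute is $\log 2^n = n$; I would absorb it here, or replace it by $\log m$ using the intrinsic-dimension version of Chernoff.

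The trace bound is not good enough, because $s_i \leq \operatorname{tr}(A^{+}A_i)$ only yields $\sum_i s_i \leq \rank(A) \leq 2^n$. The point of the connectivity parameter is to replace this $2^n$ by $n\cdot N/\alpha$, so the score bound is where the real work lies. I would prove it through a level-set estimate: for each $\tau \in (0,1]$,
\[
\bigl|\{i : s_i \geq \tau\}\bigr| \leq O\!\left(\frac{N}{\alpha\tau}\cdot n\right),
\]
up to lower-order factors. Summing over dyadic $\tau = 2^{-k}$, together with the trivial bound $\sum_{s_i < 1/m} s_i \leq 1$, then costs at most a $\log m$ factor, which can be absorbed into the final bound.

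To prove the level-set estimate I would first use the combinatorial content of $N(\alpha;\mathcal A)$. Greedily and repeatedly delete from a set $S$ an element $i$ with $\alpha A_i \preceq \sum_{j\in S\setminus\{i\}} A_j$. Since every subset of size at least $N$ contains such an element, this produces an ordering $i_1, i_2, \dots$ of $S$ in which every element except the last $N-1$ is $(1/\alpha)$-dominated by the elements that come after it. In particular any $i$ with a large score must be witnessed by a state $\psi$ that is "almost non-redundant": $A_i$ carries a $\tau$-fraction of the total energy of $\psi$. The step I expect to be the main obstacle is converting this into a count that depends on $n$ rather than on $2^n$. My first attempt would be an iterative subsampling argument, in the style of the CSP non-redundancy proofs. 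Subsample $S$ at rate about $\tau$, apply the domination ordering within the sample, and argue that each high-score element survives as a "top" element with constant probability. This would bound the expected number of top elements by $N/\alpha$ per round. A chaining or net argument over witness states would then account for the remaining factor $n$, using $\log(\text{net size}) = O(n)$ for the unit sphere of $\C^{2^n}$ at constant precision. If that fails, my fallback would be to bound $\sum_i s_i$ directly by a potential-function argument on $\log\det$ restricted to the growing span of dominated elements.
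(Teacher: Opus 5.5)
Your reduction to bounding $\sum_i s_i$ is the right architecture. The paper does not prove this statement; it imports it from \cite{BasuKLM26}, which analyzes the same importance-sampling scheme. The gap is in how you bound $\sum_i s_i$, and in where the factors $n$ and $\log m$ come from. There are three problems.

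(i) Your sampling rate $C s_i\log(m)/\eps^2$ is not justified. With it, the matrix Chernoff failure bound is about $2^{n+1}m^{-C/3}$, which is below $1$ only when $\log m$ is at least of order $n$. The intrinsic-dimension version does not rescue this: the intrinsic dimension of $\Pi_{\operatorname{range}(A)}$ is $\rank(A)$, which is already of order $2^n$ for a single local term $M_T\otimes\Id$. The rate must be $\Theta(s_i n/\eps^2)$, and that is where the theorem's factor $n$ comes from.

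(ii) The step you flag as the crux rests on a false count. A constant-precision net of the unit sphere of $\C^{2^n}$ has size $e^{\Theta(2^n)}$, so its logarithm is $\Theta(2^n)$, not $O(n)$. Paying for that net puts you back at the trivial bound $\sum_i s_i\le\rank(A)$ that you were trying to beat.

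(iii) Even granting a level-set bound of $O(Nn/(\alpha\tau))$, dyadic summation gives $\sum_i s_i=O(Nn\log(m)/\alpha)$. Combined with your $\log m$ rate, this overshoots the claimed size by a factor $\log m$, and there is no spare $\log m$ to absorb it.

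The missing observation is that your own subsampling idea, run at the right rate, gives a level-set bound with no factor $n$ and no union over states.
\begin{itemize}
\item \emph{Setup.} Fix $\tau$. For each $i$ with $s_i\ge\tau$, fix one witness $\psi_i$ with $\langle\psi_i,A\psi_i\rangle=1$ and $\langle\psi_i,A_i\psi_i\rangle=s_i$.
\item \emph{Sample.} Keep each index of $[m]$ independently with probability $p=\alpha\tau/2$ to obtain $R$. The rate must be $\Theta(\alpha\tau)$, not $\tau$.
\item \emph{Undominated set.} Write $A_U:=\sum_{j\in U}A_j$, and let $D$ be the set of $i\in R$ with $\alpha A_i\not\preceq A_{R\setminus\{i\}}$.
\item \emph{$|D|<N$ always.} This, not the peeling order, is what you need. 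If $|D|\ge N$, the definition of $N(\alpha;\mathcal A)$ applied to $D$ gives some $i\in D$ with $\alpha A_i\preceq A_{D\setminus\{i\}}\preceq A_{R\setminus\{i\}}$, a contradiction.
\item \emph{Each high-score element lands in $D$.} We have $\E\langle\psi_i,A_{R\setminus\{i\}}\psi_i\rangle\le p$. By Markov, with probability at least $1/2$ we get $\langle\psi_i,A_{R\setminus\{i\}}\psi_i\rangle<\alpha\tau\le\alpha\langle\psi_i,A_i\psi_i\rangle$. Since $\{i\in R\}$ is independent of $R\setminus\{i\}$, this gives $\Pr[i\in D]\ge p/2$.
\item \emph{Level-set bound.} Comparing with $\E|D|<N$ gives $|\{i:s_i\ge\tau\}|\le 4N/(\alpha\tau)$.
\end{itemize}
Sum this over $\tau=2^{-k}$ for $k\le\log_2 m$, and note $\sum_{s_i<1/m}s_i<1$. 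This gives $\sum_i s_i=O(N\log(m)/\alpha)$. Sampling at rate $\min\{1,Cs_in/\eps^2\}$ then yields exactly $O(\frac{N}{\alpha}\cdot\frac{n\log m}{\eps^2})$. So $n$ comes from Chernoff and $\log m$ comes from the level sets, the reverse of your assignment. Neither the chaining/net step nor the $\log\det$ fallback is needed.
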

\begin{remark}
    Although \cite{BasuKLM26} prove the above theorem only for real symmetric PSD matrices, the proof goes through verbatim for (complex) Hermitian PSD matrices.
\end{remark}

Going forward, we will often refer to the connectivity threshold of a single constraint matrix $M \in \C^{2^r \times 2^r}$:

\begin{definition}
    For a single constraint matrix $M \in \C^{2^r \times 2^r}$, we say that $N(\alpha ; n ; M)$ is the maximum over constraints $T_1, \dots T_m$, $T_i \in [n]^r$ of $N(\alpha; \mathcal{A})$, where $\mathcal{A} = \{M_{T_i}: i \in [m] \}$.
\end{definition}

For any predicate matrix $M$, write $N(\alpha; M, n)$ to denote $N(\alpha; \{M_{T}\}_{T\in[n]^r})$. We'll drop `$n$' from $N(\alpha; M, n)$ when it is clear from context.

\begin{lemma}[$N(\alpha; M, n)$ lower bound]
\label{lem:nalphalowerbound}
    Suppose $M\neq 0$ is not full rank. Then $N(\alpha; M, n)\geq \lfloor n/r\rfloor + 1$ for any $\alpha > 0$. In particular, $\NRD(M, n)\geq\lfloor n/r\rfloor$.
\end{lemma}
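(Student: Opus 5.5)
Take $k = \lfloor n/r\rfloor$ pairwise disjoint $r$-tuples $T_1,\dots,T_k \in [n]^r$, for instance $T_j = ((j-1)r+1,\dots,jr)$, and let $\mathcal{A}=\{M_{T_1},\dots,M_{T_k}\}$. Since $N(\alpha; M, n)$ is at least the connectivity parameter of any subfamily of $\{M_T\}_{T \in [n]^r}$, it suffices to show that for every $\alpha>0$ and every $i$, the inequality $\alpha M_{T_i} \preceq \sum_{j\neq i} M_{T_j}$ fails. This shows that the set $T=\{T_1,\dots,T_k\}$ of size $k$ has no index satisfying the condition, so $N(\alpha;M,n)\geq k+1$. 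The same witness states will also show that $\{T_1,\dots,T_k\}$ is non-redundant, which gives $\NRD(M,n)\geq k$.

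To build the witness, use the hypotheses on $M$. Since $M$ is not full rank, there is a nonzero $u\in\ker M \subseteq \C^{2^r}$. Since $M\neq 0$ and $M$ is PSD, there is $v\in\C^{2^r}$ with $\langle v, Mv\rangle>0$. For a fixed $i$, define the product state
\[
\psi \;=\; v^{(T_i)} \otimes \bigotimes_{j\neq i} u^{(T_j)} \otimes \phi,
\]
where $v^{(T_i)}$ denotes $v$ placed on the qubits of $T_i$ (respecting the ordering of the tuple), similarly for $u^{(T_j)}$, and $\phi$ is any fixed nonzero vector on the remaining $n-kr$ qubits, say $\braket{0}^{\otimes(n-kr)}$.

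Because the blocks are disjoint, each $M_{T_j}$ acts only on its own tensor factor. For $j\neq i$ this gives $\cE_{T_j,M}(\psi) = \langle u,Mu\rangle\cdot\prod(\text{norms}^2)=0$, while $\cE_{T_i,M}(\psi)=\langle v,Mv\rangle\,\|u\|^{2(k-1)}\|\phi\|^2>0$. Hence $\langle\psi, \alpha M_{T_i}\psi\rangle>0=\langle\psi,\sum_{j\ne i}M_{T_j}\psi\rangle$, which violates the Loewner inequality for every $\alpha>0$. For non-redundancy, the same $\psi$ lies in $\Psi_{\cH\setminus(T_i,M)}$ but not in $\ker M_{T_i}$. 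It is therefore not in $\Psi_{\cH}$, and so $\Psi_{\cH\setminus (T_i,M)}\neq\Psi_{\cH}$ for each $i$.

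The only step that needs care is the bookkeeping: the energy of a product state under $M_T$ with $T$ a block should factor as $\langle x, Mx\rangle$ times the squared norms of the other factors. This follows directly from \cref{eq:etmexpression}, since $\psi_z$ is a scalar multiple of the factor living on $T$. No genuine obstacle is expected. The only thing to check is the convention that the qubit ordering within $T_j$ matches how $M\vert_T$ is applied, and placing $u$ and $v$ according to that same ordering makes it consistent.
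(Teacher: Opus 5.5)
Your proof is correct and follows the paper's argument: disjoint blocks $T_1,\dots,T_{\lfloor n/r\rfloor}$, with $v$ placed on $T_i$ and a kernel vector $u$ on every other block. The only minor difference is that you read off non-redundancy directly from the same witness states, whereas the paper deduces $\NRD(M,n)\geq\lfloor n/r\rfloor$ from the identity $\lim_{\alpha\to 0}N(\alpha;M,n)=\NRD(M,n)+1$, which it states without proof; your route is slightly more self-contained.
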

\begin{proof}
    Write $m:= \lfloor n/r\rfloor$, and let $T_1, \ldots, T_m\in[n]^r$ be disjoint sets. Let $u\in\C^{2^r}$ be a unit vector in the kernel of $M$. Since $M\neq 0$, let $v\in\C^{2^r}$ be a unit vector such that $Mv\neq 0$. Now consider the vector $\psi_i\in\C^{2^n}$ given by $\bigotimes_{j < i}u\vert_{T_j}\otimes v\vert_{T_i}\otimes\bigotimes_{j > i}u\vert_{T_j}$. \footnote{if $\lfloor n/r\rfloor < n/r$, we ``pad'' the remaining dimensions in the expression for $\psi_i$ by any unit vector on the remaining qubits} Then 
    \[\langle\psi_i, M_{T_i}\psi_i\rangle = \prod_{j < i}\langle u, \Id u\rangle\cdot\langle v, Mv\rangle\cdot\prod_{j > i}\langle u, \Id u\rangle = \langle v, Mv\rangle > 0\mcom\]
    while a similar expansion of the quadratic form $\langle\psi_i, M_{T_j}\psi_i\rangle$ (for $j\neq i$) witnesses the term $\langle u, Mu\rangle = 0$, and thus $\langle\psi_i, M_{T_j}\psi_i\rangle = 0$. Thus the vectors $\psi_1, \ldots, \psi_m$ witness an instance of $N(\alpha; M)$ for any $\alpha > 0$, as desired. Finally, the assertion about $\NRD(M, n)$ follows since $\lim_{\alpha\to 0}N(\alpha; M, n) = \NRD(M, n) + 1$. 
\end{proof}

\subsection{Comparison to Classical CSP Sparsification.} We say that a predicate matrix $M \in \C^{2^r \times 2^r}$ is \emph{classical} if it is a diagonal matrix with $\{0,1\}$ entries. In this case, $M$ has a corresponding relation $R \subseteq \{0,1\}^r$ such that 
\[
M{\vert}x\rangle = \begin{cases}
{\vert}x\rangle & x \in R\\
0 & \text{otherwise}.
\end{cases}
\]
Given any $r$-uniform hypergraph $\cH$ on vertex set $[n]$, we define a $\eps$-\emph{CSP sparsifier} of $\cH$ with respect to the relation $R$ to be a reweighting $w : \cH \to \R_{\geq 0}$ such that for all assignments $x \in \{0,1\}^r$ we have that
\begin{align}
    \sum_{T \in \cH} w(T) \one[x{\vert}_{T} \in R] \in [1-\eps, 1+\eps] \cdot \sum_{T \in \cH} \one[x{\vert}_{T} \in R]\label{eq:CSP-sparsification}
\end{align}
We now show when $M \in \C^{2^r \times 2^r}$ is classical, Hamiltonian sparsification and the corresponding CSP sparsification problem are equivalent.

\begin{proposition}\label{prop:classical-equivalence}
Let $\cH$ be an $r$-uniform hypergraph on vertex set $[n]$. Let $M \in \C^{2^r \times 2^r}$ be a classical predicate with corresponding relation $R$. Given any map $w : \cH \to \R_{\geq 0}$, the following are equivalent.
\begin{enumerate}[(1)]
\item $w$ is an $\eps$-sparsifier of $\cH$ with respect to $M$.\label{item:quantum-sparsifier}
\item $w$ is an $\eps$-sparsifier of $\cH$ with respect to $R$.\label{item:classical-sparsifier}
\end{enumerate}
\end{proposition}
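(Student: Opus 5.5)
The plan is to show that both conditions reduce to the same family of scalar inequalities: one for each computational basis state, with equality of the quadratic forms on superpositions handled by diagonality. Since $M$ is diagonal with $\{0,1\}$ entries in the computational basis, each term $M_T = M\vert_T\otimes\Id_{[n]\setminus T}$ is diagonal in the basis $\{\ket{x}: x\in\zo^n\}$. Its diagonal entry at $\ket{x}$ is $\one[x\vert_T\in R]$, which follows directly from the defining action of $M$ on basis vectors. (Here the assignments in \cref{eq:CSP-sparsification} should be read as $x\in\zo^n$.)

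\textbf{Direction (1)$\Rightarrow$(2).} Apply \cref{eq:quantum-sparsifier} to the basis state $\psi=\ket{x}$. Then $\cE_{T,M}(\ket{x}) = \one[x\vert_T\in R]$, so the weighted and unweighted energies become exactly the two sides of \cref{eq:CSP-sparsification}.

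\textbf{Direction (2)$\Rightarrow$(1).} Define the diagonal matrices
\[
D_w:=\sum_{T\in\cH} w(T)M_T, \qquad D:=\sum_{T\in\cH}M_T .
\]
Both are diagonal in the computational basis, with entries $d_w(x)=\sum_T w(T)\one[x\vert_T\in R]$ and $d(x)=\sum_T\one[x\vert_T\in R]$. For an arbitrary state $\psi=\sum_x\psi(x)\ket{x}$ we have
\[
\langle\psi,D_w\psi\rangle=\sum_x{\vert}\psi(x){\vert}^2 d_w(x), \qquad \langle\psi,D\psi\rangle=\sum_x{\vert}\psi(x){\vert}^2 d(x),
\]
since cross terms vanish. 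By (2), $d_w(x)\in[1-\eps,1+\eps]\,d(x)$ for every $x$. Multiplying by the nonnegative weights ${\vert}\psi(x){\vert}^2$ and summing gives \cref{eq:quantum-sparsifier}. Equivalently, the Loewner inequalities $(1-\eps)D\preceq D_w\preceq(1+\eps)D$ hold because they hold entrywise for commuting diagonal matrices.

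There is no real obstacle here. The only step needing care is verifying that $M_T$ is diagonal with the stated entries, which holds because a tensor product of diagonal matrices is diagonal and $\ket{x}=\ket{x\vert_T}\otimes\ket{x\vert_{[n]\setminus T}}$ up to reordering of qubits.
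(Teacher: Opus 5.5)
Your proposal is correct and follows essentially the same route as the paper: evaluate both conditions on computational basis states $\ket{x}$ for (1)$\Rightarrow$(2), and for (2)$\Rightarrow$(1) use the diagonality of each $M_T$ to write the energy of an arbitrary $\psi$ as the nonnegative combination $\sum_x |\psi(x)|^2 \cE_{T,M}(\ket{x})$ of basis-state energies. You are also right that the assignments in \cref{eq:CSP-sparsification} should range over $x \in \zo^n$, not $\zo^r$ as printed.
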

\begin{proof}
First we show that \cref{item:quantum-sparsifier} implies \cref{item:classical-sparsifier}. Assume $w$ is an $\eps$-sparsifier of $\cH$ with respect to $M$. Let $\psi$ be a state of the form ${\vert}x\rangle$ for some $x\in \{0,1\}^n$. For every $T \in \cH$, we have that
\[
    \cE_{T,M}(\psi) = \langle \psi, M_{T} \psi\rangle = \langle \psi{\vert}_{T}, M{\vert}_{T} \psi_{T}\rangle = M_{x{\vert}_{T}, x{\vert}_{T}} = \one[x{\vert}_{T} \in R].
\]
Hence,
\[
    \cE_{\cH, w}(\psi) = \sum_{T \in \cH} w(T) \cE_{T,M}(\psi) = \sum_{T \in \cH} w(T)\one[x{\vert}_{T} \in R].
\]
Therefore, the fact that $\cE_{\cH, w}(\psi) \in [1-\eps, 1+\eps] \cdot \cE_{\cH}(\psi)$ (see \cref{eq:quantum-sparsifier}) implies \cref{eq:CSP-sparsification}, as desired.

We now show that \cref{item:classical-sparsifier} implies \cref{item:quantum-sparsifier}. By reversing our steps in the previous case, the fact that $w$ is an $\eps$-sparsifier of $\cH$ with respect to $R$ already implies that $\cE_{\cH, w}(\psi) \in [1-\eps, 1+\eps] \cdot \cE_{\cH}(\psi)$ (see \cref{eq:quantum-sparsifier}) holds for every state $\psi \in \C^{2^n}$ of the form ${\vert}x\rangle$ for some $x \in \zo^n$. Now, for an arbitrary state $\psi \in \C^{2^n}$ and an arbitrary hyperedge $T \in \cH$, observe that since $M$ (and thus $M_T$) is a diagonal matrix, we have that 
\begin{align*}
    \cE_{T,M}(\psi) = \langle \psi, M_{T}\psi\rangle &= \sum_{x \in \{0,1\}^n} {\vert}\psi(x){\vert}^2 (M_T)_{x,x}\\
    &= \sum_{x \in \{0,1\}^n} {\vert}\psi(x){\vert}^2 \one[x{\vert}_{T} \in R]\\
    &= \sum_{x \in \{0,1\}^n} {\vert}\psi(x){\vert}^2 \cE_{T,M}({\vert}x\rangle).
\end{align*}
Therefore, 
\[
\cE_{\cH, w}(\psi) = \sum_{T \in \cH} \cE_{T,M}(\psi) = \sum_{T \in \cH}\sum_{x \in \{0,1\}^n} {\vert}\psi(x){\vert}^2 \cE_{T,M}({\vert}x\rangle) = \sum_{x \in \{0,1\}^n} {\vert}\psi(x){\vert}^2 \cE_{\cH,w}({\vert}x\rangle).
\]
That is, the energy of any $\psi \in \C^{2^n}$ is a positive linear combination of the energies of classical states. Thus, since $w$ is an $\eps$-sparsifier of all classical states, $w$ is an $\eps$-sparsifier of all states, as desired.
\end{proof}

\begin{remark}
One may mildly generalize \cref{prop:classical-equivalence} by defining $M \in \C^{2^r \times 2^r}$ to be classical if $M$ is merely a diagonal matrix (with arbitrary nonnegative weights on the diagonal). In this case, the corresponding version of \cref{prop:classical-equivalence} would say that sparsifying $M$-Hamiltonians is equivalent to sparsifying a \emph{valued} constraint satisfaction problem (VCSP) (see \cite{SchiexFV95}), where different assignments to a classical predicate are given different weights. Some special cases of VCSP sparsification are studied in \cite{BrakensiekGP25}.
\end{remark}

\subsection{Code and XOR Sparsification}

Toward the proof of \cref{thm:sparspaulihamIntro} and its corollaries, we make use of \emph{code sparsification} framework of Khanna, Putterman, and Sudan~\cite{KhannaPS24,KhannaPS25}. By a \emph{code}, we mean an $n$-dimensional linear subspace $C \subseteq \F_2^m$. Given a set of weights $w : [m] \to \R_{\ge 0}$, we say that $w' : [m] \to \R_{\ge 0}$ is an $\eps$-sparsifier of $w$ if for every codeword $c \in C$, we have that
\[
    \sum_{i \in [m]} w'(i) \one[c_i = 1] \in [1 - \eps, 1 + \eps] \cdot \sum_{i \in [m]} w(i) \one[c_i = 1].
\]
\begin{theorem}[\cite{KhannaPS25}]\label{thm:codesparsifier}
For any $n$-dimensional linear code $C \subseteq \F_2^m$ and any set of weights $w : [m] \to \R_{\ge 0}$, there exists an $\eps$-sparsifier $w' : [m] \to \R_{\ge 0}$ with support size at most $\widetilde{O}(\eps^{-2}n)$. Furthermore, $w'$ can be computed efficiently via a randomized algorithm in $\poly(n, m)$ time.
\end{theorem}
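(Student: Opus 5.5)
The plan is to follow the Benczúr--Karger template for cut sparsification, with graph cuts replaced by codewords and vertex count replaced by the dimension $n$ of $C$. First I reduce to unit weights. Rounding weights to powers of $(1+\eps)$ and discarding coordinates of negligible weight (their total contribution to any nonzero codeword is an $\eps$-fraction of that codeword's weight) leaves $O(\log m/\eps)$ weight classes with polynomially bounded ratios. Replicating coordinates within a class gives an unweighted code of the same dimension $n$ on $\poly(m)$ coordinates, so only $\log$ factors are lost.

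\textbf{Step 1: counting bound.} Let $C$ have dimension $n$ and minimum distance $d$. I want to show that at most $m^{O(\alpha n)}$ codewords have weight at most $\alpha d$. Pick a random coordinate set $R$ with $|R| = O(n m/d)$.
\begin{itemize}
\item Each fixed nonzero codeword has weight at least $d$, so it is missed by $R$ with probability $2^{-\Omega(n)}$. A union bound over the $2^n$ codewords, with the constant in $|R|$ chosen large enough, shows restriction to $R$ is injective on $C$ with positive probability.
\item A codeword of weight $\le \alpha d$ has expected restricted weight $O(\alpha n)$, so by Markov at least half of them have restricted weight $O(\alpha n)$ on a good $R$.
\end{itemize}
Hence the number of such codewords is at most $2\binom{|R|}{O(\alpha n)} \le m^{O(\alpha n)}$.

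\textbf{Step 2: uniform sampling.} Sample coordinates at rate $p = \Theta(n\log m/(\eps^2 d))$ with weight $1/p$. Chernoff bounds the failure probability for a codeword of weight $\approx \alpha d$ by $\exp(-\Omega(\eps^2 p\alpha d)) = m^{-\Omega(\alpha n)}$. Step 1 makes this survive a union bound over geometric weight classes $\alpha \in [2^j,2^{j+1})$. So a code with minimum distance $d$ admits a sparsifier with $\widetilde O(mn/(\eps^2 d))$ coordinates. This is only good when $d \gtrsim m$, which forces the next step.

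\textbf{Step 3: strength decomposition.} This is the main obstacle. Define the strength $k_i$ of coordinate $i$ as the largest $k$ such that some coordinate set $S \ni i$ has the projected code $C|_S$ of minimum distance at least $k$. The key claim, analogous to $\sum_e 1/k_e \le n-1$ in graphs, is $\sum_i 1/k_i \le n$. I would prove it by peeling:
\begin{itemize}
\item Take the current minimum-distance codeword $c$ and delete its support from the coordinate set. Each deleted coordinate has strength at most $d=|\supp(c)|$, so it is charged at most $1/d$, and the total charge for this peel is at most $1$.
\item Because $c$ is nonzero and vanishes after projection, projecting away $\supp(c)$ lowers the dimension of the code by at least one.
\end{itemize}
Hence there are at most $n$ peels, and the claim follows.

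\textbf{Assembly.} Sample coordinate $i$ with probability $p_i = \min(1, \Theta(n\log m/(\eps^2 k_i)))$ and weight $1/p_i$. To prove correctness, split the coordinates into strength levels $[2^j,2^{j+1})$. At each level, apply Step 2 to the projected code, whose minimum distance is at least $2^j$. A codeword's weight on each level lower-bounds its total weight, so the per-level errors add to an $O(\eps)$ relative error overall (after rescaling $\eps$ by the number of levels, costing only $\log$ factors).

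\textbf{Size and running time.} The expected support size is $\sum_i p_i = O(n\log m/\eps^2)\cdot\sum_i 1/k_i = \widetilde O(\eps^{-2}n)$. Strengths are hard to compute exactly, but the peeling argument only needs an approximate low-weight codeword at each step. Within $\poly(n,m)$ time I would obtain this by an iterative random-restriction and Gaussian-elimination procedure: repeatedly subsample coordinates, find which coordinates are covered by low-weight codewords of the subsampled code, and peel those. This yields approximate strengths, which suffice for the sampling above.
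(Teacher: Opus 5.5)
Since the paper only imports this theorem from \cite{KhannaPS25}, your argument has to stand on its own. Its final accounting is off by a factor of $n$. With $p_i=\Theta(n\log m/(\eps^2k_i))$ and $\sum_i 1/k_i\le n$, you get $\sum_i p_i=O(n^2\log m/\eps^2)$, not $\widetilde O(\eps^{-2}n)$.

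This cannot be repaired inside your framework. Your Step~1 bound $m^{O(\alpha n)}$ is no better than the trivial count $|C|=2^n$ for $\alpha\ge 1$. So Step~2 is just a union bound over all of $C$, and the factor $n$ in the rate is genuine. Meanwhile $\sum_i 1/k_i\le n$ is tight, as a direct sum of $n$ repetition codes shows.

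In fact no rule that sets $p_i$ as a function of the strength $k_i$ can beat $n^2$. Take $C=A\oplus B$ with $d\gg n$, where:
\begin{itemize}
\item $A$ has dimension $n/2$, length at most $3d$ and minimum distance at least $d$;
\item $B$ is a direct sum of $n/2$ repetition codes of length $d$.
\end{itemize}
Every coordinate has strength between $d$ and $3d$, so the rates are uniform up to constants. The sample must contain at least $n/2$ coordinates of $A$, since otherwise some nonzero $(a,0)$ gets weight $0$. This forces rate $\Omega(n/d)$, and then the $nd/2$ coordinates of $B$ contribute $\Omega(n^2)$ samples. Yet $A$ and $B$ sparsified separately need only $O(n/\eps^2)$ in total. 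Strength cannot tell apart a part of the code where a union bound over all $2^{\dim}$ codewords is unavoidable from a part where a Karger-type count holds.

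What is missing is a decomposition that charges dimension locally. Here is one way to do it.
\begin{itemize}
\item Call $S\subseteq R$ \emph{dense} if the subcode of $C|_R$ supported inside $S$ has dimension at least $|S|/\kappa$.
\item Removing $S$ from $R$ lowers $\dim C|_R$ by exactly that dimension, since that subcode is the kernel of the projection to $R\setminus S$. So repeatedly removing dense sets, and keeping them, costs at most $\kappa n$ coordinates.
\item Afterwards every $j$-dimensional subcode of $C|_R$ has support larger than $\kappa j$, and this property is inherited by shortenings.
\item Hence Karger's contraction (shorten at a uniformly random coordinate of the current support) keeps a fixed codeword of weight at most $\alpha\kappa$ with probability at least $1-\alpha/j$ at dimension $j$. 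This gives at most $n^{O(\alpha)}$ such codewords.
\item With $\kappa=\Theta(\log m/\eps^2)$, rate-$1/2$ sampling of $R$ is then an $\eps$-sparsifier. Recursing for $O(\log m)$ levels, with $\eps$ rescaled by $\log m$, gives $\widetilde O(n/\eps^2)$.
\end{itemize}

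Several smaller steps also need repair:
\begin{itemize}
\item The projection onto a strength level $[2^j,2^{j+1})$ need not have distance $2^j$. A codeword can meet that level once and carry the rest of its weight on stronger coordinates.
\item Discarding light coordinates destroys codewords supported only on them.
\item In your peeling, charging $1/d$ per coordinate requires strength \emph{at least} $d$, not at most $d$. The lower bound is true: take $S$ to be the current coordinate set.
\item Minimum distance is NP-hard even to approximate, so your algorithmic paragraph does not yet give a $\poly(n,m)$ procedure.
\end{itemize}
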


A key application of \cref{thm:codesparsifier} in the problem of $\mathbf{XOR}$-sparsification \cite{KhannaPS24, KhannaPS25}. We first define the $\neg\mathbf{XOR}$ predicate:
\begin{definition}
    For any $r\geq 1$, define the predicate $\neg\mathbf{XOR}:\zo^r\to\zo$ as 
    \[\neg\mathbf{XOR}(x_1, \ldots, x_r):= \begin{cases}
        1 & \text{if }\sum_i x_i\text{ is even, i.e. }\sum_i x_i =_{\F_2}0, \\
        0 & \text{otherwise }
    \end{cases}\mper\]
\end{definition}
The $\mathbf{XOR}$-sparsification theorem proven in \cite{KhannaPS25} in fact sparsifies weighted instances too, without any loss in parameters. The exact statement is given below:

\begin{theorem}[Theorem 1.3 in \cite{KhannaPS25}]
\label{thm:negxorsparsifier}
    Fix any $r\geq 1$, and consider a weighted $r$-uniform hypergraph $\cH = \{T_i\}\seq[n]^r$ for $i = 1 ,\dots m$ and weights $\nu:[m] \rightarrow \R_{\geq 0}$. Then for any $\eps\in(0, 1)$ there exists a $\eps$-sparsifier $w:\cH\to\R_{\geq 0}$ with ${\vert}\supp(w){\vert}\leq \widetilde{O}(\eps^{-2}n)$, computable in randomized $\poly(n^r)$ time, such that for any $x\in\zo^n$, 
    \[\sum_i w(i)\cdot\1(x\text{ satisfies }(\neg\mathbf{XOR})\vert_{T_i})\in (1 \pm \eps)\sum_i\nu(i)\cdot\1(x\text{ satisfies }(\neg\mathbf{XOR})\vert_{T_i})\mper\]
\end{theorem}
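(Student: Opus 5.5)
The plan is to reduce directly to the code sparsification result, \cref{thm:codesparsifier}, by viewing the satisfied $\neg\mathbf{XOR}$ constraints as the support of a codeword of a linear code. The one wrinkle is that the predicate ``$\sum_{j\in T_i} x_j = 0$ over $\F_2$'' makes the indicator of satisfaction an \emph{affine} function of $x$, namely $1+\sum_{j\in T_i}x_j$, not a linear one. I would remove this by homogenizing with one auxiliary variable.

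Concretely, first merge repeated hyperedges by summing their weights $\nu$, so that $m\le n^r$ and the running time $\poly(n,m)$ becomes $\poly(n^r)$. Then define the linear map $G:\F_2^{n+1}\to\F_2^m$ by
\[
G(x,b)_i := b+\sum_{j\in T_i} x_j \pmod 2, \qquad i\in[m].
\]
Its image $C$ is a linear code of dimension at most $n+1$, and every codeword of the form $G(x,1)$ satisfies
\[
G(x,1)_i = 1 \iff x \text{ satisfies } (\neg\mathbf{XOR})\vert_{T_i}.
\]
Next, apply \cref{thm:codesparsifier} to $C$ with the weights $\nu$. This yields a $w$ with $|\supp(w)|\le \widetilde{O}(\eps^{-2}(n+1))=\widetilde{O}(\eps^{-2}n)$, computable in randomized $\poly(n,m)$ time, which preserves the weighted Hamming weight of \emph{every} codeword up to a factor $1\pm\eps$. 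Restricting this guarantee to the codewords $G(x,1)$, $x\in\zo^n$, gives exactly the claimed inequality. The codewords with $b=0$ (the ordinary $\mathbf{XOR}$ constraints) are also preserved; this is harmless, since we only need a subset of the guarantees.

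Since \cref{thm:codesparsifier} is already stated for arbitrary weights $w$, the weighted version costs nothing extra. The only step that needs care is the homogenization. Without the extra coordinate $b$, the family of satisfaction vectors is a coset rather than a subspace, and \cref{thm:codesparsifier} would not apply. I would also check that the dimension bound is $n+1$ rather than $n$, which is absorbed in the $\widetilde{O}$, and that the ordering of the tuples $T_i\in[n]^r$ is irrelevant, because parity is symmetric.
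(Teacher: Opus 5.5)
Your proposal is correct and matches the paper. The paper only cites this statement from \cite{KhannaPS25}, but your homogenization is exactly its reduction in the proof of \cref{cor:negxorsparsifier-unbiased}: append an all-ones column to the incidence matrix, apply \cref{thm:codesparsifier} to the resulting code of dimension at most $n+1$, and evaluate at $y=(x,1)$.
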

\begin{remark}
\label{remark:negxorspars}
    A few remarks are due:
    \begin{enumerate}[(1)]
        \item In \cref{thm:negxorsparsifier} the sparsifier size upper bound \textbf{does not depend} on the weights $\nu$.
        \item Note that \cref{thm:negxorsparsifier} is optimal (in sparsifier size) upto $\widetilde{O}_r(1)$ factors since it is easy to see that any sparsifier must have $\Omega_r(n)$ elements.
    \end{enumerate}
\end{remark}

 We also need a mild extension of \cref{thm:negxorsparsifier} in which we require that the sparsifier is \emph{unbiased} in the sense that $\sum_i w(i) = \sum \nu(i)$. We now show how to do this by applying \cref{thm:codesparsifier}.

 \begin{corollary}\label{cor:negxorsparsifier-unbiased}
    Fix any $r\geq 1$, and consider a weighted $r$-uniform hypergraph $\cH = \{T_i\}\seq[n]^r$ for $i = 1 ,\dots m$ and weights $\nu:[m] \rightarrow \R_{\geq 0}$. Then for any $\eps\in(0, 1)$ there exists a $\eps$-sparsifier $w:\cH\to\R_{\geq 0}$ with ${\vert}\supp(w){\vert}\leq \widetilde{O}(\eps^{-2}n)$, computable in randomized $\poly(n^r)$ time, such that for any $x\in\zo^n$, 
    \[\sum_i w(i)\cdot\1(x\text{ satisfies }(\neg\mathbf{XOR})\vert_{T_i})\in (1 \pm \eps)\sum_i\nu(i)\cdot\1(x\text{ satisfies }(\neg\mathbf{XOR})\vert_{T_i})\mper\]
    Furthermore $\sum_i w(i) = \sum_i \nu(i)$.
 \end{corollary}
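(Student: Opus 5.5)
We reduce to \cref{thm:codesparsifier} by appending to the $\neg\mathbf{XOR}$ instance a single extra ``dummy'' coordinate that every nonzero relevant codeword hits. That coordinate records the total weight, so any sparsifier must preserve $\sum_i \nu(i)$ up to $(1\pm\eps)$; we then correct the residual bias.

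\textbf{Step 1: encode as a code.} For $x \in \zo^n$ and hyperedge $T_i$, the predicate $\neg\mathbf{XOR}$ on $T_i$ is satisfied iff $\sum_{j\in T_i} x_j = 0$. Consider the affine map $x \mapsto (1 + \sum_{j \in T_i} x_j)_{i\in[m]}$. To make it linear, adjoin an extra variable $x_0$ and use the code
\[
C = \Bigl\{ \bigl( (x_0 + \textstyle\sum_{j\in T_i} x_j)_{i \in [m]},\, x_0 \bigr) : (x_0,x) \in \F_2^{n+1} \Bigr\} \subseteq \F_2^{m+1}.
\]
This code has dimension at most $n+1$. Setting $x_0 = 1$, coordinate $i \le m$ equals $1$ exactly when $x$ satisfies $\neg\mathbf{XOR}\vert_{T_i}$. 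Coordinate $m+1$ equals $x_0$.

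\textbf{Step 2: choose the weights.} Give coordinate $i\le m$ weight $\nu(i)$, and give the last coordinate weight $W:=\sum_i \nu(i)$. Apply \cref{thm:codesparsifier} with accuracy $\eps/3$ to get $w'$ of support $\widetilde{O}(\eps^{-2} n)$, where we may also keep coordinate $m+1$ in the support.

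The codeword with $x=0$ and $x_0=1$ is the all-ones vector. Plugging it in shows $\sum_{i\le m} w'(i) + w'(m+1) \in (1\pm\eps/3)\cdot 2W$.

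The codeword with $x_0=0$ and $x=0$ is trivial. So that alone does not isolate $\sum_{i \le m} w'(i)$; we also need information about the dummy weight $w'(m+1)$.

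\textbf{Step 3: fix the dummy coordinate.} Simplest is to not include the dummy coordinate at all. Instead, note that for $x_0=1$, $x=0$ every constraint is satisfied. Hence any $\neg\mathbf{XOR}$ sparsifier $w$ from \cref{thm:negxorsparsifier} already satisfies $\sum_i w(i) \in (1\pm\eps)\sum_i \nu(i)$, because $x=0$ satisfies every $\neg\mathbf{XOR}$ constraint.

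Take such a $w$ at accuracy $\eps/3$ and rescale it to $\tilde w := w\cdot \frac{\sum_i\nu(i)}{\sum_i w(i)}$. The scaling factor lies in $[1/(1+\eps/3),\, 1/(1-\eps/3)]$. Thus $\tilde w$ has the same support, is exactly unbiased, and for every $x$ approximates the satisfied weight within a factor $(1\pm\eps/3)/(1\mp\eps/3) \subseteq 1\pm\eps$ for $\eps<1$. Runtime is preserved.

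So the code detour is unnecessary, and the only real check is this final factor computation. That check is routine: $(1+\eps/3)/(1-\eps/3)\le 1+\eps$ holds for $\eps \le 1$.
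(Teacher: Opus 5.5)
Your Step 3 argument is correct and is essentially the paper's proof. The paper re-derives the $\neg\mathbf{XOR}$ sparsifier from \cref{thm:codesparsifier} using a generator matrix with an extra all-ones column (exactly your Step 1 code without the dummy coordinate), evaluates at $y=(0,\dots,0,1)$ (i.e.\ your $x=0$) to get $\sum_i w(i)\in(1\pm\eps/3)\sum_i\nu(i)$, and rescales; you instead invoke \cref{thm:negxorsparsifier} directly, which is equally valid, so your Steps 1--2 (the dummy coordinate of weight $W$) can simply be dropped, and the degenerate case $\sum_i\nu(i)=0$ is handled by $w\equiv 0$.
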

 \begin{proof}
 We construct a matrix $G \in \F_2^{m \times (n+1)}$ as follows For $i \in [m]$ and $j \in [n]$, we let $G_{i,j} = 1$ if $j \in T_i$ and $0$ otherwise. Furthermore, we let $G_{i,n+1} = 1$ for all $i \in [m]$. Let $C := \{Gx : x \in \F_2^{n+1}\}$ be the linear code generated by $G$. Note that $\dim(C) \le n+1$. We weight the coordinates of $C$ by $\nu_1, \hdots, \nu_m$. By \cref{thm:codesparsifier}, we can construct an $(\eps/3)$-sparsifier $w : [m] \to \R_{\ge 0}$ of size $\widetilde{O}(\eps^{-2}n)$. such that for every $x \in \F_2^{n+1}$ we have that
 \begin{align}
    \sum_{i \in [m]} w(i) \one[(Gx)_i = 1] \in [1 - \eps/3, 1 + \eps/3] \cdot \sum_{i \in [m]} \nu(i) \one[(Gx)_i = 1].\label{eq:3}
 \end{align}
 Returning to our $\neg \mathbf{XOR}$ instance, any assignment $x \in \{0,1\}^n$ can be viewed as a vector $y := (x_1, \hdots, x_n, 1)\in \F_2^{n+1}$. Now for all $i \in [m]$, we have that
 \[
    (Gy)_i = 1 \iff \bigoplus_{j \in T_i} x_j \oplus 1 = 0 \iff x\text{ satisfies } (\neg \mathbf{XOR})\vert_{T_i}.
 \]
 Thus, \cref{eq:3} implies $w$ is an $\eps/3$-sparsifier of our $\neg \mathbf{XOR}$ instance.

 To make this sparsifier unbiased, consider $y = (0, \hdots, 0, 1) \in \F_2^{n+1}$ and note $Gy = (1, \hdots, 1)$. Thus, (\ref{eq:3}) implies that
 \[
    \sum_{i \in [m]} w(i) \in [1 - \eps / 3, 1 + \eps / 3]  \cdot \sum_{i \in [m]} \nu(i).
 \]
 Now for all $i \in [m]$ define $w'(i) := w_i \cdot \frac{\sum_{i \in [m]} \nu(i)}{\sum_{i \in [m]} w(i)} \in [1-\eps/3, 1+\eps/3]$. We have that $\sum_{i=1}^m w'(i) = \sum_{i=1}^m \nu(i)$, and for all $y \in \F_2^{n+1}$, we have that
 \begin{align*}
\sum_{i \in [m]} w'(i) \one[(Gx)_i = 1] &= \frac{\sum_{i \in [m]} \nu_i}{\sum_{i \in [m]} w(i)} \cdot \sum_{i \in [m]} w_i \one[(Gx)_i = 1]\\
&\in [1-\eps/3, 1+\eps/3] \cdot [1 - \eps/3, 1 + \eps/3] \cdot \sum_{i \in [m]} \nu_i \one[(Gx)_i = 1]\\
&\subseteq [1-\eps, 1+\eps]\cdot \sum_{i \in [m]} \nu_i \one[(Gx)_i = 1],
 \end{align*}
 as desired. Therefore, the weights $w'(1), \hdots, w'(m)$ form our unbiased $\eps$-sparsifier of our $\neg \mathbf{XOR}$ instance.
 \end{proof}

\subsection{Concentration Inequalities.}

\begin{fact}[Matrix Bernstein Inequality (\cite{Tro15})]
\label{fact:matrix-bernstein}
Let $Y_1, \ldots, Y_m\in\C^{N\times N}$ be independently random mean-zero Hermitian matrices such that $\max_{i\in[m]}{\Vert}Y_i{\Vert}_{\op}\leq R$ with probability $1$ for some $R > 0$. Write $\sigma:= \left\Vert \E\sum_{i\in[m]}Y_i^2\right\Vert_{\op}^{1/2}$, $Y = \sum_{i\in[m]}Y_i$. Then ${\Vert}Y{\Vert}_{\op}\leq O(R\log N + \sigma\sqrt{\log N})$ with probability $\geq 1 - 1/N^{10}$.
\end{fact}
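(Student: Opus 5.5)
This is the standard matrix Bernstein inequality of Tropp, and I would prove it by the matrix Laplace transform method. First reduce to a one-sided bound on $\lambda_{\max}(Y)$: since $\Vert Y\Vert_{\op} = \max(\lambda_{\max}(Y), \lambda_{\max}(-Y))$ and the $-Y_i$ satisfy the same hypotheses, it suffices to bound $\Pr[\lambda_{\max}(Y)\geq t]$ and take a union bound, which costs a factor of $2$.

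For any $\theta>0$, Markov's inequality together with $e^{\theta\lambda_{\max}(Y)} = \lambda_{\max}(e^{\theta Y}) \leq \operatorname{tr} e^{\theta Y}$ gives
\[
\Pr[\lambda_{\max}(Y)\geq t]\leq e^{-\theta t}\,\E\operatorname{tr} e^{\theta Y}.
\]
The key step is to handle the expectation of the trace exponential of a sum of non-commuting independent matrices. I would invoke Lieb's concavity theorem: the map $A\mapsto \operatorname{tr}\exp(H+\log A)$ is concave on positive definite $A$. Applying Jensen's inequality one summand at a time, conditioning on the others, yields the subadditivity of matrix cumulant generating functions:
\[
\E\operatorname{tr} e^{\theta Y}\leq \operatorname{tr}\exp\Bigl(\sum_i \log \E e^{\theta Y_i}\Bigr).
\]
I expect this step to be the main obstacle, since it is the only genuinely non-commutative ingredient. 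In a self-contained writeup I would cite Lieb's theorem rather than reprove it.

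Next, bound each matrix moment generating function. Because $\Vert Y_i\Vert_{\op}\leq R$ and $\E Y_i=0$, the scalar inequality $e^{x}\leq 1+x+\frac{x^2/2}{1-{\vert}x{\vert}/3}$ for ${\vert}x{\vert}<3$, transferred to matrices via the spectral mapping theorem, gives
\[
\E e^{\theta Y_i}\preceq \Id + g(\theta)\,\E Y_i^2, \qquad g(\theta) = \frac{\theta^2/2}{1-R\theta/3} \quad (0<\theta<3/R).
\]
Operator monotonicity of $\log$ together with $\log(\Id+A)\preceq A$ then gives $\log \E e^{\theta Y_i}\preceq g(\theta)\E Y_i^2$. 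Monotonicity of $\operatorname{tr}\exp$ under the Loewner order lets me sum these bounds and obtain
\[
\E\operatorname{tr} e^{\theta Y}\leq N\exp\bigl(g(\theta)\sigma^2\bigr),
\]
where I have bounded the trace by $N$ times the largest eigenvalue.

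Finally, optimize over the free parameter. Choosing $\theta = t/(\sigma^2+Rt/3)$ gives the tail bound
\[
\Pr[\Vert Y\Vert_{\op}\geq t]\leq 2N\exp\left(\frac{-t^2/2}{\sigma^2+Rt/3}\right).
\]
Setting $t = C(R\log N+\sigma\sqrt{\log N})$ for a large enough absolute constant $C$ makes the exponent at least $11\log N$: in the regime where $\sigma^2$ dominates the denominator the exponent is about $t^2/\sigma^2 \gtrsim C^2\log N$, and where $Rt$ dominates it is about $t/R\gtrsim C\log N$. The resulting probability bound is $2N\cdot N^{-11}\leq N^{-10}$ for $N\geq 2$, which is the stated claim.
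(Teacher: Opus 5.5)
Your argument is correct: it is the standard matrix Laplace transform proof (Lieb's concavity theorem, subadditivity of matrix cumulant generating functions, then optimizing $\theta$) from \cite{Tro15}, which the paper simply cites without proof. There is one arithmetic slip to fix: $2N\cdot N^{-11} = 2N^{-10}$, which is not $\leq N^{-10}$, so take $C$ large enough that the exponent is at least $12\log N$, giving $2N^{-11}\leq N^{-10}$ for $N\geq 2$; also, the choice $\theta = t/(\sigma^2+Rt/3)$ requires $\sigma>0$, but $\sigma=0$ forces every $Y_i=0$ almost surely, so that case is trivial.
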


\begin{fact}[Matrix Chernoff]
\label{fact:matrix-bernstein-restated}
Let $Y_1, \ldots, Y_m\in\C^{N\times N}$ be independent, random Hermitian matrices. Let $Y = \sum_{i = 1}^m Y_i$ and let $Z = \E[Y]$. Suppose that $Y_i \preceq R \cdot Z$ with probability $1$, where $R$ is an arbitrary scalar, for every $1 \leq i \leq m$. Then, for all $\eps \in [0,1]$,
\[
\Pr\left [\sum_{i = 1}^m Y_i \preceq (1 - \eps) Z \right] \leq N \cdot \exp(-\eps^2 / 2R)
\]
and 
\[
\Pr\left [\sum_{i = 1}^m Y_i \succeq (1 + \eps) Z \right] \leq N \cdot \exp(-\eps^2 / 3R).
\]
\end{fact}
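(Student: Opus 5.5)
The plan is to reduce this to the standard (normalized) matrix Chernoff bound of Tropp~\cite{Tro15} by a whitening change of variables. Throughout I will use the nonnegativity that is implicit in every application in this paper, namely $Y_i \succeq 0$ almost surely; the bound is false without it. I will also read the event ``$\sum_i Y_i \preceq (1-\eps)Z$'' as ``$\sum_i Y_i \not\succeq (1-\eps)Z$'', and the upper event as ``$\sum_i Y_i \not\preceq (1+\eps)Z$''. This is the form needed for sparsification.

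\textbf{Step 1 (restrict to the range of $Z$).} Let $V=\mathrm{range}(Z)$ and $d=\dim V\le N$. Since $0\preceq Y_i\preceq R\,Z$, every $v\in\ker Z$ satisfies $\langle v,Y_iv\rangle=0$. Because $Y_i$ is PSD, this forces $Y_iv=0$, so $\ker Z\subseteq \ker Y_i$. Hence each $Y_i$, and therefore $Y$, is supported on $V$. Both Loewner events are then equivalent to the same events for the compressions to $V$, where $Z$ is invertible.

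\textbf{Step 2 (whitening).} On $V$, define $X_i := Z^{-1/2}Y_iZ^{-1/2}$. Congruence by the invertible matrix $Z^{-1/2}$ preserves the Loewner order. It follows that the $X_i$ are independent, satisfy $0\preceq X_i\preceq R\,\Id_V$, and have $\E\sum_i X_i=\Id_V$. Moreover, $\sum_i Y_i\succeq(1-\eps)Z$ if and only if $\lambda_{\min}(\sum_i X_i)\ge 1-\eps$, and similarly $\sum_i Y_i\preceq(1+\eps)Z$ if and only if $\lambda_{\max}(\sum_i X_i)\le 1+\eps$.

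\textbf{Step 3 (apply Tropp's Chernoff).} Apply Tropp's matrix Chernoff inequality with $\mu_{\min}=\mu_{\max}=1$ and uniform bound $R$. This gives
\[
\Pr\Big[\lambda_{\min}\big(\textstyle\sum_i X_i\big)\le 1-\eps\Big]\le d\Big(\tfrac{e^{-\eps}}{(1-\eps)^{1-\eps}}\Big)^{1/R},
\qquad
\Pr\Big[\lambda_{\max}\big(\textstyle\sum_i X_i\big)\ge 1+\eps\Big]\le d\Big(\tfrac{e^{\eps}}{(1+\eps)^{1+\eps}}\Big)^{1/R}.
\]
The scalar inequalities $-\eps-(1-\eps)\log(1-\eps)\le-\eps^2/2$ and $\eps-(1+\eps)\log(1+\eps)\le-\eps^2/3$, valid for $\eps\in[0,1]$, then yield the stated tails, with $d\le N$.

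\textbf{Main obstacle.} The only nontrivial ingredient is Tropp's bound itself. Its proof goes through the matrix Laplace transform method, using Lieb's concavity theorem for subadditivity of the matrix cumulant generating function, together with the convexity estimate $e^{\theta X}\preceq \Id+\frac{e^{\theta R}-1}{R}X$ for $0\preceq X\preceq R\Id$. I would cite this rather than reprove it. The remaining care is in Step~1: justifying that the range of each $Y_i$ lies inside that of $Z$. This is exactly where $Y_i\succeq 0$ is used.
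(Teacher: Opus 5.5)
Your proof is correct and is the standard derivation. The paper states this as a known fact without proof (it is implicitly Tropp's matrix Chernoff bound~\cite{Tro15}), and your reduction — restricting to $\mathrm{range}(Z)$, whitening, then applying Tropp with $\mu_{\min}=\mu_{\max}=1$ and $L=R$, followed by the two scalar estimates — is exactly how the fact follows.

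You are also right on two points about the printed statement. It tacitly needs $Y_i \succeq 0$, which holds in every application in the paper. And its events should be read as failures of $(1-\eps)Z \preceq \sum_i Y_i$ and of $\sum_i Y_i \preceq (1+\eps)Z$, which is the form the paper's sparsification arguments actually use.
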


\subsection{Continuous Distributions and Boundedness Assumptions}

We will often make use of the following fact about evaluating non-zero polynomials:

\begin{claim}[Corollary 10 in \cite{GunningR65}]
\label{clm:zeroEvaluation}
    Let $f:\C^k\to\C$ be a multivariate polynomial on $k$ variables which is \textbf{not} identically zero. Let $Z(f):= \{(x_1, \ldots, x_k)\in\C^k: f(x_1, \ldots, x_k) = 0\}$. Then $Z(f)$ has zero Lebesgue measure in $\C^k$. In particular, for any absolutely continuous distribution $\mu$ on $\C^k$, 
    \[\Pr_{(x_1, \ldots, x_k)\sim\mu}(f(x_1, \ldots, x_k) = 0) = 0\mper\]
\end{claim}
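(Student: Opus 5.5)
The statement to prove is \cref{clm:zeroEvaluation}: the zero set $Z(f)$ of a nonzero polynomial $f:\C^k\to\C$ has Lebesgue measure zero. The plan is induction on the number of variables $k$, then Fubini/Tonelli. Identify $\C^k\cong\R^{2k}$, so Lebesgue measure is the product of the one-dimensional complex (i.e.\ planar) measures. The base case $k=1$ is immediate: a nonzero univariate polynomial of degree $d$ has at most $d$ roots, and a finite set has planar measure zero.

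For the inductive step, write
\[
f(x_1,\ldots,x_k)=\sum_{j=0}^{d} g_j(x_1,\ldots,x_{k-1})\,x_k^j,
\]
where $g_d\not\equiv 0$. Split $Z(f)$ into two pieces.

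\begin{itemize}
\item \textbf{First piece.} The points with $g_d(x_1,\ldots,x_{k-1})=0$ lie in $Z(g_d)\times\C$. By induction $Z(g_d)$ is null in $\C^{k-1}$, so this product is null in $\C^k$ by Fubini.
\item \textbf{Second piece.} For each fixed $(x_1,\ldots,x_{k-1})\notin Z(g_d)$, the slice $\{x_k: f=0\}$ is the root set of a nonzero univariate polynomial in $x_k$. It therefore has at most $d$ points, hence measure zero.
\end{itemize}

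Tonelli applied to the indicator of $Z(f)$ then shows that $Z(f)$ has measure zero. $Z(f)$ is closed, hence Borel, so measurability is not an issue.

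The probabilistic consequence follows directly. If $\mu$ is absolutely continuous with respect to Lebesgue measure, then $\mu(Z(f))=0$, since $Z(f)$ is a Lebesgue-null set.

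The only point needing care is bookkeeping. We must check that the measurability and Fubini hypotheses hold, and that "not identically zero" as a polynomial coincides with "not identically zero" as a function on $\C^k$; the latter holds by the same induction. None of this is a real obstacle.
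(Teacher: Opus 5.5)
Your argument is correct, and it takes a different route from the paper, which gives no proof of its own and simply cites Gunning--Rossi. That reference comes from several complex variables, where the corresponding fact is that the zero set of any holomorphic function not identically zero on a connected open set is Lebesgue-null.

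You instead give the standard self-contained proof:
\begin{enumerate}
\item Induct on $k$.
\item Split off the piece $Z(f)\cap (Z(g_d)\times\C)$, where the leading coefficient in $x_k$ vanishes. It is null by the inductive hypothesis and Fubini.
\item Every remaining slice in $x_k$ has at most $d$ points, so Tonelli gives measure zero.
\end{enumerate}

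The comparison is as follows. The cited route is more general: it covers arbitrary holomorphic functions, whose one-variable slices are merely discrete rather than of size at most $d$. The price is reliance on an external reference. Your route needs nothing beyond the finiteness of roots of a nonzero univariate polynomial. It also runs verbatim over $\R^{2k}$ for polynomials in the real and imaginary parts of the variables. This handles the case where conjugated entries of the vectors $v_j$ appear in the determinant polynomials to which the paper applies the claim.

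One simplification: your closing concern about ``identically zero as a polynomial versus as a function'' is unnecessary. Your induction only ever uses that some coefficient is nonzero. A polynomial that is nonzero as a function trivially has a nonzero coefficient.
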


We also need to make some analytical assumptions. Fix any $r\geq 2$. Throughout the paper we will be dealing with arbitrarily large families of PSD matrices $\{M_i\}_{i\in[m]}\subset\C^{2^r\times 2^r}$, and we will need them to satisfy some ``boundedness'' conditions, which we state below. Before we state our conditions, we denote, for any non-zero Hermitian PSD matrix $A$, by $\lambda_{\min,\neq 0}(A)$ the minimum non-zero eigenvalue of $A$.

Our first boundedness criterion restricts the maximum eigenvalue of our collection of PSD matrices.

\begin{definition}[$C$-boundedness]
\label{def:OnlyCbounded}
    Let $\cM:= \{M_i\}_{i\in[m]}\subset\C^{2^r\times 2^r}$ be a collection of non-zero PSD matrices, where $r\geq 2$. For some $C \in(0, \infty)$ we say $\cM$ is $C$-bounded if $\sup_{M\in\cM}{\Vert}M{\Vert}_{\operatorname{op}} = \sup_{M\in\cM}\lambda_{\max}(M)\leq C$.
\end{definition}

Again, we can see that there are absolutely continuous distributions of PSD matrices which are $C$ bounded, for $C \leq O_r(1)$.

\begin{example}\label{example:BasicCollectionOfRandomMatrices}
    Consider the collection of matrices $\cM$ which contains all matrices $M$ which can be written as \[
    M = \sum_{i =1}^{2^r} v_i v_i^{\dagger},
    \]
    where each $v_i \in \C^{2^r}$ is a vector with norm $\leq 1$.
\end{example}

Our second criterion also restricts some minimum nonzero eigenvalues.

\begin{definition}[$(C, \ell_0)$-boundedness]
\label{def:Cbounded}
    Let $\cM:= \{M_i\}_{i\in[m]}\subset\C^{2^r\times 2^r}$ be a family of non-zero PSD matrices, where $r\geq 2$. For some $C \in(0, \infty), \ell_0\in\N$ we say $\cM$ is $(C, \ell_0)$-bounded if:
    \begin{enumerate}[(1)]
        \item $\sup_{M\in\cM}{\Vert}M{\Vert}_{\operatorname{op}} = \sup_{M\in\cM}\lambda_{\max}(M)\leq C$, and, 
        \item Fix any non-empty $\cM'\subset\cM$ with ${\vert}\cM'{\vert}:= \ell\leq \ell_0$. Let $\cH$ be any $r$-uniform hypergraph with $\ell$ hyperedges, and write $V(\cH):= \bigcup_{T\in\cH}T$. Let $\pi:\cM'\to\cH$ be any bijection. Then 
        \[\lambda_{\min,\neq 0}\left(\sum_{M\in\cM'}M\vert_{\pi(M)}\otimes\Id_{V(\cH)\setminus\pi(M)}\right)\geq \frac{1}{C}\mper\]
    \end{enumerate}
\end{definition}
\begin{example}
\label{example:singleMbounded}
To see that this definition makes sense, consider the following example of a bounded matrix family: Fix any non-zero PSD $M\in\C^{2^r\times 2^r}$, and let $M_i = M$ for all $i\in\N$. Also fix any $\ell_0:= \ell_0(r)$. Then there exists a constant $C = C(M, \ell_0)$ such $\cM:= \{M_i\}_{i\in\N}$ is $(C, \ell_0)$-bounded: Indeed, there are finitely many $r$-uniform hypergraph configurations with $\leq\ell_0$ hyperedges, and since a finite set is compact, we can choose $C > 0$ so that $\cM$ is $(C, \ell_0)$-bounded. Note that $C$ is a function of $r, \ell_0, M$. 
\end{example}

\section{Sparsifying Pauli Hamiltonians}\label{sec:pauli}

\subsection{General Framework}

We begin with some definitions and notation:
\begin{definition}[Pauli Matrices]
    Define the Pauli matrices \[X = \begin{bmatrix}
    0 & 1\\
    1 & 0
\end{bmatrix}, Y = \begin{bmatrix}
    0 & -i\\
    i & 0
\end{bmatrix}, Z = \begin{bmatrix}
    1 & 0\\
    0 & -1
\end{bmatrix}\in\C^{2\times 2}.\]
\end{definition}
 The Pauli matrices are unitary, and their eigenvalues are $\pm 1$. For $P\in\{X, Y, Z\}$, fix $\{u^{(-1)}_P, u^{(1)}_P\}\seq\C^2$ to be an orthonormal eigenbasis for $P$, with $Pu^{(\lambda)}_P = \lambda u^{(\lambda)}_P$ for all $\lambda\in\{-1, 1\}$.

We now take tensor products of Pauli operators to build the so-called \emph{Pauli strings}:
\begin{definition}[Pauli Strings]
    A unitary operator $\operatorname{U}$ acting on $\C^{\zo^{[n]}}$ of the form $P_1\otimes\cdots\otimes P_n$, where $P_i\in\{\Id, X, Y, Z\}$ acts on the $i^{\mathrm{th}}$ qubit, is called a \emph{Pauli string}. Furthermore, if $\#\{i: P_i\neq\Id\}\leq r$, then we call $\operatorname{U}$ a $r$-local operator.
\end{definition}
Note that if $\operatorname{U}$ is a Pauli string, then all eigenvalues of $\operatorname{U}$ are $\pm 1$. Furthermore, if $\operatorname{U}$ is a $r$-local Pauli string, then $\operatorname{U} = W\vert_T\otimes\Id_{[n]\setminus T}$ for some $T\in[n]^r$. To fit Pauli strings into our framework of PSD matrices, we simply add the identity matrix to the entire system. This makes all the eigenvalues non-negative, and the resulting system PSD. Thus, we give the following definition:
\begin{definition}[Pauli PSD Matrices]
Let $P_1, \ldots, P_r\in\{X, Y, Z, -X, -Y, -Z\}$ be Pauli matrices (and their negations). We call the matrix $M:= P_1\otimes\cdots\otimes P_r + \Id\in\C^{2^r\times 2^r}$ a \emph{Pauli PSD matrix}.  Equivalently, one could constrain $P_1, \ldots, P_r\in\{X, Y, Z\}$ and specify that $\Id \pm P_1\otimes\cdots\otimes P_r$ is a Pauli PSD matrix.
\end{definition}

We also define \emph{Pauli Hamiltonians}:
\begin{definition}[Pauli Hamiltonians]
    If we have a Hamiltonian $\cH = \{(T_i, M_i)\}_{i\in [m]}$ where each $M_i$ is a Pauli PSD matrix, then we call $\cH$ a \emph{Pauli Hamiltonian}. In general, we can also have a \emph{weighted Pauli Hamiltonian} $\cH = (\{(T_i, M_i)\}, \nu)$, where $\nu$ is a function mapping each index $i$ to a weight in $\R_{\geq 0}$. Note that a weighted Pauli Hamiltonian is naturally associated to the PSD matrix $\sum_i\nu(i)\cdot(M_i)\vert_{T_i}\otimes\Id_{\bar{T_i}}$. 
\end{definition}

\begin{theorem}[Sparsifying Pauli Hamiltonians]
\label{thm:sparspauliham}
    Let $\cH$ be a $r$-local weighted Pauli Hamiltonian with $m$ terms on $n$ qubits. Then there exists a $\eps$-sparsifier $w:[m]\to\R_{\geq 0}$ of size $\leq \widetilde{O}_r(\eps^{-2}n)$, i.e. for any $\psi\in\C^{\zo^n}$, 
    \[\sum_iw(i)\cE_{T_i, M_i}(\psi)\in(1\pm\eps)\sum_i\nu(i)\cE_{T_i, M_i}(\psi)\mper\]
    
    Furthermore, $\sum_i w(i) = \sum_i \nu(i)$ and there exists a randomized classical $\poly(m, n, O_r(1))$-time algorithm to compute such a sparsifier.
\end{theorem}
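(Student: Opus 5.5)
We first reduce to the $r$-partite case via \cref{prop:r-partite-WLOG}. Its procedure (\cref{alg:find-partition}, applied repeatedly) partitions the terms of $\cH$ into $\ell = O_r(\log n)$ sub-Hamiltonians $\cH_1,\dots,\cH_\ell$, each $r$-partite. The weights $\nu$ play no role in this step. Since Loewner-order approximations add, it suffices to build an unbiased $\eps$-sparsifier of size $\widetilde{O}_r(\eps^{-2}n)$ for each $\cH_j$ and take the disjoint union. The total size is then $O_r(\log n)\cdot\widetilde{O}_r(\eps^{-2}n)$, and unbiasedness is preserved under the union. We may also assume every term acts on exactly $r$ qubits. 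Terms with fewer non-identity factors can be padded or handled separately by the same argument with smaller $r$; alternatively we refine the partition further by the pattern of which Pauli type sits at which position.

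Within a single $r$-partite piece with parts $V_1,\dots,V_r$, we further split the terms by their \emph{type}: the tuple $(P_1,\dots,P_r)\in\{X,Y,Z\}^r$ together with the sign. Here $P_a$ is the Pauli acting on the qubit from $V_a$. There are at most $2\cdot 3^r=O_r(1)$ types, so this costs only another $O_r(1)$ factor.

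Now fix one type. Each qubit $q\in V_a$ is acted on only by $P_a$, so all terms of this class are simultaneously diagonal in the product basis $\bigotimes_{q} u^{(\lambda_q)}_{P_{a(q)}}$, where $a(q)$ is the part containing $q$. On the basis vector indexed by $x\in\zo^n$ (with $\lambda_q=(-1)^{x_q}$), a term $\Id\pm P_1\otimes\cdots\otimes P_r$ on $T$ has eigenvalue $1\pm(-1)^{\sum_{q\in T}x_q}$. This equals $2\cdot\one[x \text{ satisfies } \neg\mathbf{XOR}\vert_T]$ for the $+$ sign, and $2\cdot\one[x\text{ satisfies }\mathbf{XOR}\vert_T]$ for the $-$ sign. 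The $\mathbf{XOR}$ case is handled identically through the code construction of \cref{cor:negxorsparsifier-unbiased} by dropping the all-ones column, or equivalently by flipping one qubit's basis labels in each part.

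We then apply \cref{cor:negxorsparsifier-unbiased} to get unbiased weights $w$ of size $\widetilde{O}(\eps^{-2}n)$ that sparsify every classical assignment $x$. Exactly as in the second half of \cref{prop:classical-equivalence}, for any $\psi$, expanding in this common eigenbasis writes $\cE(\psi)$ as the nonnegative combination $\sum_x |\langle x|\psi\rangle|^2\cE(|x\rangle)$, for both the original and the reweighted Hamiltonian. Hence the classical guarantee lifts to all states. Summing over types and partite pieces gives the theorem, with running time polynomial by the constructive versions of both ingredients.

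The main subtlety is the diagonalization step, and in particular that commutativity genuinely holds. Two terms sharing a qubit must apply the same Pauli to it. This is exactly what the partite-plus-type refinement guarantees: a qubit in $V_a$ always receives $P_a$. So both the reduction and the basis choice depend on this bookkeeping, and I expect that to be the only point requiring care.
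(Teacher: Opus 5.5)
Your proof is correct and is essentially the paper's argument. You split into $O_r(\log n)$ $r$-partite pieces via \cref{prop:r-partite-WLOG} and into $O_r(1)$ Pauli types. You then diagonalize each class in a common product eigenbasis, where every term becomes $2\cdot\one[\neg\mathbf{XOR}]$ on its hyperedge. Finally you apply \cref{cor:negxorsparsifier-unbiased} and lift to all states exactly as in \cref{prop:classical-equivalence}.

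The differences are cosmetic. You split by partite structure before type rather than after. You also treat the negative sign explicitly as an $\mathbf{XOR}$ instance, where the correct relabeling is to flip the basis labels of all qubits in a single part, not one qubit per part. The paper instead absorbs the sign into the eigenbasis labeling of $-P$.
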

\begin{proof}
    Firstly, note that there are $3^r$ choices for a Pauli PSD matrix in $\C^{2^r\times 2^r}$. Thus we can split the hypergraph underlying $\cH$ into $k\leq 3^r \leq O_r(1)$ hypergraphs $\cH_1, \ldots, \cH_k$ corresponding to the different choices for a Pauli PSD matrix. Note that we can take the union of $\eps$-sparsifiers of each of these sub-hamiltonians to get an $\eps$-sparsifier for $\cH$, and thus $\SPR(\cH, \eps)\leq\sum_j\SPR(\cH_j, \eps)$. 

    Thus fix a Pauli PSD matrix $W = \Id + P_1\otimes\cdots\otimes P_r$ and consider the $W$-Hamiltonian $\cH_W\seq\cH$, i.e. $\cH_W$ is the collection of all hyperedges in $\cH$ which are being acted upon by $W$. Note that it suffices to prove the claim for $\cH_W$, i.e. show that $\SPR(\cH_W, \eps)\leq \widetilde{O}_r(\eps^{-2}n)$, and such a sparsifier can be found efficiently.

    By \cref{prop:r-partite-WLOG}, we can compute (in $\poly(n^r)$ time) a partition $\cH_W = \bigsqcup_{a\in\mathcal{I}}\cH_{W, a}$ such that ${\vert}\mathcal{I}{\vert}\leq O_r(\log n)$, and $\cH_{W, a}$ is $r$-partite. If we show that $\SPR(\cH_{W, a}, \eps)\leq\widetilde{O}_r(\eps^{-2}n)$, then we obtain $\SPR(\cH_W, \eps)\leq {\vert}\mathcal{I}{\vert}\cdot \widetilde{O}_r(\eps^{-2}n) \leq \widetilde{O}_r(\eps^{-2}n)$, and we're done.

    Thus for the remaining proof assume we have a $r$-partite $W$-Hamiltonian $\cH$ on $n$ qubits,\footnote{if some $r$-partite subhypergraphs have fewer than $n$ qubits that are acted non-trivally upon, we can simply shrink our space of qubits} where $W = P_1\otimes\cdots\otimes P_r$. Let $V_1, \ldots, V_r$ be the parts of $\cH$, i.e. $\cH\seq\bigtimes_{k = 1}^rV_k$. Thus for any $T\in\cH$, we can write $T = (t_1, \ldots, t_r)$, where $t_k\in V_k$ for all $k\in[r]$. 

    Now, consider the orthonormal basis\footnote{note that if $\mathcal{B}_1 = \{u_1, \ldots, u_k\}, \mathcal{B}_2 = \{v_1, \ldots, v_\ell\}$ are two orthonormal bases of vector spaces $V_1, V_2$, then $\mathcal{B}_1\otimes \mathcal{B}_2 = \{u_i\otimes v_j\}_{i\in[k], j\in[\ell]}$ is an orthonormal basis for $V_1\otimes V_2$} $\mathcal{B}_W$ of $\C^{\zo^{[n]}}$ given by 
    \[\mathcal{B}_W:= \left\{\bigotimes_{k = 1}^r\bigotimes_{v\in V_k}u^{(b_v)}_{P_k}\right\}_{b\in\{-1, 1\}^{[n]}}\mper\]

    We claim that $\mathcal{B}_W$ diagonalizes $W_T = W\vert_T\otimes\Id_{\bar{T}}$ for any $T = (t_1, \ldots, t_r)\in\cH$. Firstly, note that it suffices to show that $\mathcal{B}_W$ diagonalizes $\operatorname{U}_T:= (\bigotimes_{k = 1}^rP_k)\otimes\Id_{\bar{T}}$, where $P_k$ acts on $t_k$, since $W_T = \operatorname{U}_T + \Id$. But note that since $P_k$ acts on $t_k$, and since every vector in $\mathcal{B}_W$ has either $u^{(-1)}_{P_k}$ or $u^{(1)}_{P_k}$ in the $t_k^{\mathrm{th}}$ position in the tensor product, we obtain that every vector in $\mathcal{B}_W$ is an eigenvector of $W_T$ for any $T\in\cH$. Furthermore, for any $b\in\{\pm 1\}^{[n]}$, we have 
    \[\operatorname{U}_T\cdot\bigotimes_{k = 1}^r\bigotimes_{v\in V_k}u^{(b_v)}_{P_k} = \left(\prod_{k = 1}^rb_{t_k}\right)\cdot\bigotimes_{k = 1}^r\bigotimes_{v\in V_k}u^{(b_v)}_{P_k}\implies W_T\cdot\bigotimes_{k = 1}^r\bigotimes_{v\in V_k}u^{(b_v)}_{P_k} = \left(1 + \prod_{k = 1}^rb_{t_k}\right)\cdot\bigotimes_{k = 1}^r\bigotimes_{v\in V_k}u^{(b_v)}_{P_k}\mper\]
    Writing $b_v = (-1)^{\tau_v}$ (for $\tau_v\in\F_2$), we obtain 
    \[W_T\cdot\bigotimes_{k = 1}^r\bigotimes_{v\in V_k}u^{(b_v)}_{P_k} = 2\cdot\left(1 - \bigoplus_{k = 1}^r\tau_{t_k}\right)\cdot\bigotimes_{k = 1}^r\bigotimes_{v\in V_k}u^{(b_v)}_{P_k}\mcom\]
    where $\oplus$ is taken over the field $\F_2$. Consequently, in the basis $\mathcal{B}_W$, $W_T/2$ is a diagonal matrix whose diagonal describes the truth-table of $\neg\mathbf{XOR}(\tau_{t_1}, \ldots, \tau_{t_r})$! Since quadratic forms are invariant under an orthonormal change of basis (from $(\braket{0}  + \braket{1})^{\otimes n}$ to $\mathcal{B}_W$), sparsifying the (weighted) $W$-Hamiltonian $\cH$ reduces to sparsifying the (weighted) classical CSP $\neg\mathbf{XOR}$ on the same hypergraph by \cref{prop:classical-equivalence}. At this point, we can conclude by \cref{cor:negxorsparsifier-unbiased}.
\end{proof}
Note that by \cref{remark:negxorspars}, the bound on the sparsifier size in \cref{thm:sparspauliham} does \textbf{not} depend on the weights in $\cH$.

\subsection{Sparsifying Pauli String Decompositions}

We can also prove a somewhat stronger version of \cref{thm:sparspauliham}. To motivate this extension, let $M \in \C^{2^r\times 2^r}$ be a PSD predicate matrix which can be be written as a sum $M := \eta(1) M_1 + \cdots + \eta(k) M_k$, where each of $M_1, \hdots, M_k$ is a Pauli PSD matrix. Now consider an $M$-Hamiltonian $\cH = \{(T_i)\}_{i = 1}^m$, with weights $\nu: [m] \rightarrow \R_{\geq 0}$. Given our Pauli decomposition of $M$, we can decompose $\cH$ as $\cH_1 \cup \cdots \cup \cH_k$, where $\cH_j := (\{(T_i, M_j\}, \eta(j)\nu)$. Applying \cref{thm:sparspauliham}, we can compute sparsifiers $w_1, \hdots, w_k$ of $\cH_1, \hdots, \cH_k$ which each has size $\widetilde{O}_r(\eps^{-2}n)$. Since $k\leq O_r(1)$, the aggregate size of these sparsifiers is also $\widetilde{O}_r(\eps^{-2}n)$. \emph{A priori}, the resulting sparsifier is not a true $\eps$-sparsifier of $\cH$ as each term $M_{T_i}$ is decomposed into $\sum_{j=1}^k w_j(i) (M_j)_{T_i}$, which may not be a scalar multiple of $M_{T_i}$. However, careful inspection of the proof of \cref{thm:sparspauliham} shows that these $k$ different $\cH_j$'s reduce to sparsifying the \emph{same} classical $\neg \mathbf{XOR}$ instances! In other words, the $\eps$-sparsifier weights for any one of these $\cH_i$'s works as an $\eps$-sparsifier for any of the others and thus of $\cH$ itself! We formalize this observation in the following corollary.

\begin{corollary}\label{cor:sparspauliham-decomp}
Let $M \in \C^{2^r\times 2^r}$ be a PSD predicate matrix which can be be written as a sum $M := \eta(1) M_1 + \cdots + \eta(k) M_k$, where each of $M_1, \hdots, M_k$ is a Pauli PSD matrix. Given an $M$-Hamiltonian $\cH = \{(T_i\}_{i \in [m]}$ with weights $\nu: [m] \rightarrow \R_{\geq 0}$, one can in randomized $\poly(n^r)$-time construct an $\eps$-sparsifier $\cH' = \{(T_i)\}_{i \in [m]}$ with weights $w: [m] \rightarrow \R_{\geq 0}$ with support size at most $\widetilde{O}_r(\eps^{-2}n)$. And furthermore,  $\sum_i w(i) = \sum_i \nu(i)$.
\end{corollary}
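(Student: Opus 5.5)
The plan is to reuse the reduction in the proof of \cref{thm:sparspauliham}, with one change: a \emph{single} classical code sparsifier is computed for the underlying weighted hypergraph, and then shown to work simultaneously for every Pauli summand $M_j$. I assume, as the statement implicitly does, that $\eta(1),\dots,\eta(k)\geq 0$. Note that $k\leq 2\cdot 3^r=O_r(1)$.

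First, apply \cref{prop:r-partite-WLOG} to the hypergraph $\{T_i\}$ to partition it into $O_r(\log n)$ $r$-partite pieces. This partition depends only on the hyperedges, not on the matrices. Sparsifiers of the pieces can be unioned, and unbiasedness is additive over pieces. So it suffices to handle one $r$-partite piece with parts $V_1,\dots,V_r$ and loose only a $O_r(\log n)$ factor in size.

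On such a piece, build the code $C=\{Gx: x\in\F_2^{n+1}\}$ as in the proof of \cref{cor:negxorsparsifier-unbiased}: row $i$ is the indicator of $T_i$ together with a $1$ in column $n+1$. Weight coordinate $i$ by $\nu(i)$, apply \cref{thm:codesparsifier} with accuracy $\eps/3$, and rescale to obtain unbiased weights $w$ exactly as in that corollary. The result has support $\widetilde{O}(\eps^{-2}n)$. Because every codeword of $C$ is preserved, $w$ is simultaneously an $\eps$-sparsifier of the weighted $\neg\mathbf{XOR}$ instance, using $y=(x,1)$, and of the weighted $\mathbf{XOR}$ instance, using $y=(x,0)$, on this hypergraph.

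Now fix $j$ and write $M_j=\Id+s_jP^{(j)}_1\otimes\cdots\otimes P^{(j)}_r$ with $s_j\in\{\pm1\}$ and $P^{(j)}_k\in\{X,Y,Z\}$. Use the basis $\mathcal{B}_{M_j}$ from the proof of \cref{thm:sparspauliham}, built from the eigenvectors of $P^{(j)}_k$ on the qubits of $V_k$. In this basis each $(M_j)_{T_i}$ is diagonal, with entry $1+s_j\prod_k b_{t_k}$. This equals $2\cdot\neg\mathbf{XOR}(\tau_{T_i})$ if $s_j=1$ and $2\cdot\mathbf{XOR}(\tau_{T_i})$ if $s_j=-1$. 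The basis depends on $j$, but the classical instance does not: it is the same hypergraph with the same weights $\nu$. By \cref{prop:classical-equivalence}, applied in the rotated basis, which is a unitary change of coordinates, this gives
\[
(1-\eps)\sum_i\nu(i)(M_j)_{T_i}\preceq\sum_i w(i)(M_j)_{T_i}\preceq(1+\eps)\sum_i\nu(i)(M_j)_{T_i}.
\]
Multiply this by $\eta(j)\geq 0$ and sum over $j$. Using $M_{T_i}=\sum_j\eta(j)(M_j)_{T_i}$, this yields exactly the Loewner sandwich for $M$, so $w$ is an $\eps$-sparsifier of $\cH$ with $\sum_iw(i)=\sum_i\nu(i)$.

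The only real subtlety is the point just made: sparsifiers of the $k$ instances $\cH_j$ computed separately would reweight the summands of a single term differently, so the result would not be a reweighting of the $M_{T_i}$. It is therefore essential that one weight vector works for all $j$. This holds because the diagonalized classical instance is independent of which Pauli letters appear, and the sign $s_j$ is absorbed by the code formulation covering both $\mathbf{XOR}$ and $\neg\mathbf{XOR}$. The running time is $\poly(n^r)$ from \cref{prop:r-partite-WLOG} and \cref{thm:codesparsifier}.
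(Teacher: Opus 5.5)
Your proof is correct and follows the paper's argument. Both proofs partition the hypergraph into $r$-partite pieces and compute a single classical sparsifier of the common weighted hypergraph. Both then observe that in each basis $\mathcal{B}_{M_j}$ every Pauli summand diagonalizes to that same classical instance, and sum the resulting Loewner sandwiches with the nonnegative coefficients $\eta(j)$.

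The differences are cosmetic. You sparsify with weights $\nu$ directly, whereas the paper sparsifies with $\eta(1)\nu$ and divides by $\eta(1)$. You handle $s_j=-1$ by noting the code also preserves $\mathbf{XOR}$, whereas the paper absorbs signs into the eigenbasis labeling. Your explicit assumption $\eta(j)\geq 0$ is one the paper's proof also needs implicitly.
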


As many details of the proof of \cref{cor:sparspauliham-decomp} are identical to \cref{thm:sparspauliham}, we only describe the novel aspects of the proof.

\begin{proof}
Using \cref{prop:r-partite-WLOG}, we may assume without loss of generality that our instance $\cH$ is $r$-bipartite. That is, the $n$-qubits have a partition $V_1, \hdots, V_r$ such that $T_i \subseteq V_1 \times \cdots \times V_r$ for all $i \in [m]$.

Since $M = \eta(1) M_1 + \cdots + \eta(k) M_k$, we may construct weighted Pauli Hamiltonians $\cH_j := (\{(T_i, M_j\}_{i \in [m]}, \eta(j)\nu)$ such that $\cE_{\cH} = \sum_{j=1}^k \cE_{\cH_j}$. By the proof of \cref{thm:sparspauliham}, since each $\cH_j$ is bipartite, an $\eps$-sparsifier of $\cH_j$ is equivalent to an $\eps$-sparsifier of the classical CSP $\cH'_j := \{T_i\}_{i \in [m]}$ of $\neg \mathbf{XOR}$ with weights $\eta(j)\nu : [m] \to \R_{\ge 0}$. Let $w^{(1)} : [m] \to \R_{\ge 0}$ be an $\eps$-sparsifier of $\cH'_1$ (which is also an $\eps$-sparsifier of $\cH_1$). Now observe for any $j \in [k]$, we have that $\cH'_j$ is $\cH'_1$ scaled by $\eta(j)/\eta(1)$. Thus, $\frac{\eta(j)}{\eta(1)} \cdot w^{(1)}$ is an $\eps$-sparsifier of $\cH'_j$ and $\cH_j$. Now define $w := \frac{w^{(1)}}{\eta(1)}$. We claim that $w$ is an $\eps$-sparsifier of $\cH$. To see why, for any state $\psi \in \C^{\zo^n}$ we have that
\begin{align*}
    \sum_{i=1}^m w(i) \cE_{T_i,M}(\psi) &= \sum_{i=1}^m\sum_{j=1}^k w(i) \eta(j)\cE_{T_i, M_j}(\psi)\\
    &= \sum_{j=1}^k \sum_{i=1}^m \frac{\eta(j) w^{(1)}(i)}{\eta(i)} \cE_{T_i, M_j}(\psi)\\
    &\in [1-\eps, 1+\eps] \cdot \sum_{j=1}^k \sum_{i=1}^m \eta(j) \nu(i) \cE_{T_i, M_j}(\psi)\\
    &\subseteq [1-\eps, 1+\eps] \cdot \sum_{i=1}^m \nu(i) \cE_{T_i, M}(\psi),
\end{align*}
as desired. Finally, observe that by \cref{thm:sparspauliham}, we have that $\sum_{i=1}^m w^{(1)}(i) = \sum_{i=1}^m \eta(1) \nu(i).$ Thus, $\sum_{i=1}^m w(i) = \sum_{i=1}^m \frac{w^{(1)}(i)}{\eta(1)} = \sum_{i=1}^m \nu(i)$, as desired.
\end{proof}

\section{Sparsifying Random Hamiltonians}\label{sec:sparse-genericLouie}

In this section, we will study the sparsifiability of Hamiltonians using \emph{random} predicate matrices $M \in \C^{2^r \times 2^r}$. 

\begin{remark}[Distribution assumptions]\label{rmk:assumptionsDist}
    Throughout this section, we will consider matrices $M$ which have rank $R \geq 2^{r-1}+1$, and whose span can be written as $\mathrm{span}(M) = \mathrm{span}(v_1, \dots v_{R})$, where the vectors $v_i \in \C^{2^r}$ are chosen at \emph{random} from a distribution with absolutely continuous joint support, and uniform boundedness (see \cref{def:OnlyCbounded}, \cref{clm:zeroEvaluation}). Furthermore, we suppose that this distribution is fixed \emph{only as a function of} $r$. We then let the number $n$ of qubits grow \emph{independently} of this distribution over predicate matrices.
\end{remark}

With this notation in place, our main theorem is the following:

\begin{theorem}\label{thm:EfficientRandomSparsifierLouie}
    Let $M \in \C^{2^r \times 2^r}$ be a random predicate matrix of rank $R \geq 2^{r-1} + 1$ drawn from a distribution in accordance with \cref{rmk:assumptionsDist} and let $\mathcal{H} = \{(T_i, M):{i \in [m]}\}$ be an arbitrary Hamiltonian over $n$ variables and $m$ terms. Then, with probabilty $1-o_r(1)$ over the choice of $M$, for any $\eps > 0$, there is a randomized algorithm running in time $\mathrm{poly}(n,m)$ which computes a weight function $w: \mathcal{H} \rightarrow \R_{\geq 0}$ such that $w$ is an $\eps$-sparsifier of $\mathcal{H}$ and the size of $w$ is bounded by $\widetilde{O}_r(\eps^{-2}n)$ with probability $1 - 2^{-\Omega(n)}$.
\end{theorem}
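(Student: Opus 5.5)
The plan follows the outline in the technical overview, in three steps.

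\textbf{Step 1: the pairwise kernel lemma.} First I prove the formal version of \cref{lem:trivialIntersectionIntro}. Suppose $T, T' \in [n]^r$ share at least one qubit. Then, with probability $1$ over $M$ (and over the independent copy, if the two terms use different matrices), $\ker M_T \cap \ker M_{T'} = 0$.

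\begin{itemize}
\item Reduce to the $\leq 2r-1$ qubits in $T \cup T'$. Both operators act as the identity outside $T \cup T'$, so it suffices to show that $M|_T \otimes \Id + M|_{T'} \otimes \Id$ is nonsingular on $\C^{2^{|T \cup T'|}}$.
\item Rephrase nonsingularity as a spanning condition. Writing $\spn(M) = \spn(v_1, \dots, v_R)$, the condition says that the vectors $v_j|_T \otimes e_z$ and $v_j|_{T'} \otimes e_{z'}$, with $e_z, e_{z'}$ standard basis vectors on the complementary qubits, span the whole space.
\item Express the condition by a determinant. The spanning condition holds iff some $2^{|T\cup T'|}\times 2^{|T\cup T'|}$ minor of the matrix of these vectors has nonzero determinant, and this determinant is a polynomial in the entries of the $v_j$.
\item Apply \cref{clm:zeroEvaluation}. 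Once the polynomial is known not to vanish identically, it is nonzero almost surely, and only finitely many overlap patterns $(|T\cap T'|$ and relative positions) occur, each a function of $r$ alone.
\end{itemize}

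Showing the polynomial is not identically zero is the main obstacle. I will exhibit an explicit specialization of the $v_j$ for which the span is everything. For a single shared qubit, I would take $v_j$ to include all basis vectors $e_x$ with $x_1 = 0$ (these are $2^{r-1}$ vectors) plus one extra vector with nonzero weight on $x_1 = 1$. The term on $T$ then covers every configuration with the shared qubit equal to $0$, and I need the term on $T'$ to cover the configurations with the shared qubit equal to $1$.

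A purely classical (diagonal) choice will not suffice, because classical kernels intersect. So the specialization must be rotated: for example, take $T'$'s vectors to be the $e_x$ with $x_1=1$ together with a superposition. I expect this to need an inductive or block-triangular determinant computation, and the hypothesis $R \geq 2^{r-1}+1$ is exactly what should make the counting work. The case of larger overlaps should follow from the single-qubit case, either by monotonicity or by a similar explicit choice.

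\textbf{Step 2: quantitative eigenvalue bound.} Since the distribution depends only on $r$, a positive determinant gives a quantitative bound with probability $1-o_r(1)$. Pick a threshold $\delta_r > 0$ so that, over all finitely many overlap patterns, $\lambda_{\min}(M_T+M_{T'}) \geq \delta_r$ except with probability $o_r(1)$. This works because $\Pr[\lambda_{\min} < \delta]\to 0$ as $\delta\to 0$, given that the minimum eigenvalue is almost surely positive.

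\textbf{Step 3: the sparsification argument.}
\begin{itemize}
\item If $m \leq 10n$, output $\cH$ itself.
\item Otherwise, greedily extract vertex-disjoint pairs of intersecting terms. Each pair consumes at most $2r-1$ qubits and at most $O_r(\text{degree})$ conflicting terms, so I would rather argue directly: as long as more than $n$ terms remain, some two of them share a qubit.
\item A cleaner route is to set up the importance bound per term. Bound $\langle\psi,M_{T_i}\psi\rangle$ against a packing of pairs, and use \cref{thm:sparsifyBKLM} via the connectivity parameter. Take any set $S$ of $|S|\geq n+2$ terms and any $i \in S$. The remaining $\geq n+1$ terms contain an intersecting pair, so their sum has $\lambda_{\min}\geq \delta_r$. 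This gives $\frac{\delta_r}{C} M_{T_i}\preceq \sum_{j\in S\setminus i}M_{T_j}$ with $C$ the bound on $\lambda_{\max}(M)$.
\item Hence $N(\alpha;\mathcal{A})\leq n+2$ for $\alpha=\delta_r/C=\Omega_r(1)$. \cref{thm:sparsifyBKLM} then yields size $O_r(n)\cdot n\log m/\eps^2$, which is $n^2$, too big.
\item So instead I use the disjoint-pairs packing with matrix Chernoff (\cref{fact:matrix-bernstein-restated}). Partition the terms into $O_r(1)$ groups by the greedy colouring so that each group contains $\Omega(m/n)$-many $\ldots$ More concretely: find $C\geq m/4$ pairs whose sum $\succeq \delta_r\cdot(\text{number of pairs covering each qubit})$. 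Then every term has importance $O_r(n/m)$, and sampling at rate $\Theta(\eps^{-2} n\log n/m)$ with reweighting succeeds with probability $1-2^{-\Omega(n)}$. Running the construction on the $O_r(\log n)$ partite pieces from \cref{prop:r-partite-WLOG} gives the final size $\widetilde O_r(\eps^{-2}n)$.
\end{itemize}
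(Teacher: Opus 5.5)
Your three-step plan follows the paper's structure: a determinant polynomial with an explicit specialization, then an eigenvalue floor depending only on $r$, then intersecting pairs feeding importance sampling. However, the two steps that carry the proof are not established.

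\textbf{Step 1: no working specialization.} Exhibiting values of $v_1,\dots,v_R$ at which the determinant is nonzero is the entire content of the kernel lemma, and you do not supply one. The candidate you describe fails. Take $v_1,\dots,v_{2^{r-1}}$ to be the $e_x$ with $x_1=0$, plus one extra vector whose $x_1=1$ half is $\beta\neq 0$. Then the constraints from $T$ and $T'$ only force the $x_1=0$ block of $\phi$ to vanish. Writing $M_1(a,b)=\phi(1\circ a\circ b)$, the remaining constraints are $\beta^{\top}M_1=0$ and $M_1\beta=0$, which leave a kernel of dimension $(2^{r-1}-1)^2>0$.

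Your proposed repair, giving $T'$ different vectors, is not available. The theorem concerns one shared $M$, so within each overlap pattern both terms must be tested with the same $v_j$. This includes the pattern where the shared qubit occupies the same coordinate of both terms, which is exactly what an $r$-partite instance produces. Nonvanishing of the polynomial for independent $M,M'$ does not imply nonvanishing on the diagonal $M=M'$; only the converse holds.

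The missing idea is the paper's shifted pairing of the two halves: $v_1=(e_0,0)$, $v_i=(e_{i-1},e_{i-2})$ for $2\le i\le R-1$, and $v_R=(0,e_{R-2})$. This yields $M_1(a,b)=-M_0(a+1,b)=-M_0(a,b+1)$, so $M_0$ is constant on anti-diagonals. The boundary conditions $M_0(0,\cdot)=M_0(\cdot,0)=0$ and $M_1(R-2,\cdot)=M_1(\cdot,R-2)=0$ then force $M_0=M_1=0$. Larger overlaps are reduced to this case by lifting $\phi$ to $2r-1$ qubits (setting entries to zero unless the duplicated shared bits agree), not by an unspecified monotonicity.

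\textbf{Step 3: the importance bound is off by a factor of $n$.} As written, Step 3 does not give $\widetilde{O}_r(\eps^{-2}n)$. You conclude importance $O_r(n/m)$. Matrix Chernoff on $2^n$-dimensional matrices requires the sampling rate to be at least the importance bound times $\Theta(n/\eps^2)$. So your bound forces a rate of about $n^2/(\eps^2 m)$ and a size of about $n^2/\eps^2$, and your stated rate $\eps^{-2}n\log n/m$ is inconsistent with it.

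The phrase ``number of pairs covering each qubit'' suggests you think an intersecting pair only gives positivity locally. The observation you need is that $M_T+M_{T'}=\hat{M}{\vert}_U\otimes\Id_{[n]\setminus U}\succeq\delta_r\Id$ on all of $\C^{2^n}$. Hence term-disjoint (not vertex-disjoint) pairs suffice. Your own greedy remark (while more than $n/r$ terms remain, two of them intersect) gives at least $m/4$ term-disjoint pairs when $m\geq 4n$. Therefore $\cH\succeq\frac{m}{4}\delta_r\Id$, and every term has importance at most $\lambda_{\max}(M)/\lambda_{\min}(\cH)=O_r(1/m)$. Sampling at rate $\Theta_r(n/(\eps^2 m))$ via \cref{clm:importanceSamplingLouie} then gives size $O_r(\eps^{-2}n)$ per partite piece.
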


\begin{remark}
Note that we use $O_r(\cdot)$ to denote that the leading constant of the sparsifier size has a constant which depends on the arity $r$ (in particular, on the exact non-zero eigenvalues of the predicate matrix $M$). Importantly, the arity $r$ and the predicate matrix $M$ are fixed \emph{independently} of the number of qubits $n$, and so as the number of qubits $n$ increases, the size of the sparsifier scales only linearly in $n$ (and the error parameter $\eps$).
\end{remark}

We can also prove a version of \cref{thm:EfficientRandomSparsifierLouie} which holds when the individual predicate matrices are \emph{different}. Formally, we show:

\begin{theorem}\label{thm:EfficientRandomSparsifierDifferentMLouie}
    Let $\mathcal{H} = \{(T_i, M_i):{i \in [m]}\}$ be an arbitrary Hamiltonian over $n$ variables and $m$ terms, where each predicate matrix $M_i \in \C^{2^r \times 2^r}$ is random of rank $R \geq 2^{r-1}+1$, independently drawn from a distribution in accordance with \cref{rmk:assumptionsDist}.
    
    Then, for any $\eps > 0$, there is a randomized algorithm running in time $\mathrm{poly}(n,m)$ which computes a weight function $w: \mathcal{H} \rightarrow \R_{\geq 0}$ such that $w$ is an $\eps$-sparsifier of $\mathcal{H}$ and the size of $w$ is bounded by $\widetilde{O}_r(\eps^{-2}n)$  with probability $1 - 2^{-\Omega(n)}$ over the randomness of the $M_i$'s and the algorithm.
\end{theorem}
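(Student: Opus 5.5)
We follow the blueprint of \cref{subsec:tech-generic}, now with independent matrices $M_i$. The first ingredient is a pairwise version of \cref{lem:trivialIntersectionIntro}: if $T_a\cap T_b\neq\emptyset$ and $M_a,M_b$ are drawn independently as in \cref{rmk:assumptionsDist} with rank $\geq 2^{r-1}+1$, then almost surely $\ker (M_a)_{T_a}\cap\ker (M_b)_{T_b}=0$. Both terms act on $U=T_a\cup T_b$, with $|U|\leq 2r-1$, so it suffices to work in $\C^{2^{|U|}}$. The condition is that the stacked constraint matrix (rows spanning $\spn(M_a)\otimes\C^{2^{|U\setminus T_a|}}$ and $\spn(M_b)\otimes\C^{2^{|U\setminus T_b|}}$) has full column rank $2^{|U|}$. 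This is a polynomial condition in the entries of the $v_j$'s, namely the nonvanishing of some maximal minor. By \cref{clm:zeroEvaluation} it holds almost surely once we exhibit a single choice of spans making the minor nonzero. For that witness we would take coordinate subspaces. On the shared qubit $q$, let $\spn(M_a)$ contain all basis vectors with $x_q=0$ ($2^{r-1}$ of them) plus one more, and let $\spn(M_b)$ contain all basis vectors with $x_q=1$ plus one more. Then every basis vector of $\C^{2^{|U|}}$ lies in one of the two lifted subspaces. This argument needs only $R\geq 2^{r-1}$, so the stated rank leaves room. Since only finitely many overlap patterns exist for fixed $r$, this holds simultaneously for all pairs almost surely. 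Independence of $M_a$ and $M_b$ also covers $a\neq b$ with distinct matrices.

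The second ingredient is a quantitative version. The ground-space bound alone gives $\lambda_{\min}\big((M_a)_{T_a}+(M_b)_{T_b}\big)>0$, but we need a bound $\geq c_r>0$ holding for all but a small fraction of pairs. We define $\lambda(M_a,M_b,\text{pattern})$ as the smallest eigenvalue of the $2^{|U|}$-dimensional sum. It is almost surely positive and its law depends only on $r$. So for a threshold $\delta_r$ we get $\Pr[\lambda<\delta_r]\leq\eta$ for any desired constant $\eta$. With $C$-boundedness we also have $\lambda_{\max}(M_i)\leq C$.

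The third step is the construction. We greedily find a large family of vertex-disjoint pairs of intersecting hyperedges. A maximal matching of the overlap graph on terms leaves unmatched terms that pairwise share no qubit, so at most $n/r$ of them remain. If $m\geq 10n$, this gives $\Omega(m)$ disjoint pairs. Here is the main obstacle: independence across pairs is only guaranteed for disjoint pairs, but the matching is chosen adversarially relative to the $M_i$'s. Our first attempt is to fix the matching before looking at the $M_i$'s. The matching depends only on the hypergraph, which is fixed. The indicators $\mathbf 1[\lambda_{\text{pair}}\geq\delta_r]$ over disjoint pairs are then independent, so a Chernoff bound shows that at least half of the $\Omega(m)$ pairs are good, except with probability $2^{-\Omega(m)}\leq 2^{-\Omega(n)}$. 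The good pairs act on disjoint qubit sets, so they form a sum of commuting-in-support terms with $\lambda_{\min}\geq\delta_r$ each. Since the good pairs are disjoint, $\sum_{\text{good}}H\succeq\delta_r\Id$ by tensor structure. Every term then has importance at most $C/(\delta_r\cdot\Omega(m))$.

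The final step is to sample each term with probability $p=\min(1,\Theta(C\, n\log n/(\delta_r\eps^2 m)))$, reweighting by $1/p$. \cref{fact:matrix-bernstein-restated} with $N=2^n$ and $R=O_r(1/(pm))$ gives failure probability $2^n e^{-\Omega(\eps^2 pm)}\leq 2^{-\Omega(n)}$, leaving $\widetilde O_r(\eps^{-2}n)$ terms. If $m<10n$ we keep everything. To reach failure probability $2^{-\Omega(n)}$ uniformly, we can first run this only when $m$ is much larger than $n$, and otherwise repeat the halving in $O(\log(m/n))$ rounds, sparsifying the sample again. The Loewner errors compose multiplicatively, so we run each round with error $\eps/\log m$, which only costs polylogarithmic factors.
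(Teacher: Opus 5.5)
Your argument is correct in substance and follows the paper's pipeline:
\begin{itemize}
\item an almost-sure trivial common kernel for intersecting terms, via a nonvanishing maximal minor;
\item a per-pair eigenvalue floor holding with constant probability;
\item $\Omega(m)$ term-disjoint intersecting pairs from a maximal matching;
\item a Chernoff bound over these pairs, which are independent because the matching depends only on the hypergraph (the paper uses this implicitly);
\item importance $O_r(1/m)$ for every term;
\item uniform sampling with \cref{fact:matrix-bernstein-restated}.
\end{itemize}

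The real difference is the witness for the determinant. The paper reuses its shared-matrix ``shift'' vectors ($(v_i)_{(0,\cdot)}=e_{i-1}$, $(v_i)_{(1,\cdot)}=e_{i-2}$). It reduces larger overlaps to a single shared qubit by a lifting argument. It then passes to distinct matrices by noting that the two-matrix polynomial restricts to the nonzero one-matrix polynomial on the diagonal $M=M'$. You instead use the independence of $M_a,M_b$ to give them complementary coordinate subspaces on the shared qubit. This is simpler and handles every overlap pattern (and every slot position of the shared qubit) uniformly, with no lifting. It also shows that rank $2^{r-1}$ already suffices for independent matrices. What it does not give is the shared-matrix \cref{lem:nullSpaceLouie}. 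There this witness is unavailable, and rank $2^{r-1}$ genuinely fails: with the shared qubit in the same slot, the symmetric pairs $(Q_0,Q_1)$ with $AQ_0+BQ_1=0$, as in \cref{sec:nrd-generic}, form a space of dimension at least $2^{r-1}(2^{r-1}+1)-R\,2^{r-1}>0$.

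One step needs repair. The good pairs do not act on disjoint qubit sets. The matching is on terms, and $\Omega(m)$ pairs with $m\gg n$ must overlap heavily in qubits. In any case, the bound $\sum_{\mathrm{good}}\succeq\delta_r\Id$ you write is too weak to give importance $O_r(1/m)$. What you actually need, and what the paper uses, is only two facts. Each good pair satisfies $H_{a_j}+H_{b_j}\succeq\delta_r\Id$ on the full space, and the pairs use distinct terms. Together these give $\cH\succeq\sum_{\mathrm{good}}(H_{a_j}+H_{b_j})\succeq(\#\mathrm{good})\,\delta_r\Id=\Omega_r(m)\,\Id$, with no qubit-disjointness involved.

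Finally, the multi-round halving in your last paragraph is unnecessary. When $m\geq 10n$, a single round at rate $\Theta_r(n\log n/(\eps^2 m))$ already fails with probability $2^{-\Omega(n)}$, and for $m<10n$ you keep everything.
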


\begin{remark}
    Going forward, when we describe matrices as being ``random'', we mean exactly that they are drawn from a distribution as described in \cref{rmk:assumptionsDist}. For brevity of the statements, we omit this qualifier.
\end{remark}

Additionally, going forward, we will use \cref{prop:r-partite-WLOG} to assume that the underlying Hamiltonian is $r$-partite WLOG, which leads to extra logarithmic factors in the sparsifier size.

\subsection{Analyzing Intersecting Terms}\label{sec:intersectingTermGenericLouie}

Towards proving \cref{thm:EfficientRandomSparsifierLouie}, our first key insight is to look at the structure of the underlying terms in the Hamiltonian. In particular, we focus our attention on any terms $T_i \subseteq [n]$ and $T_j \subseteq [n]$ such that $T_i \cap T_j \neq \emptyset$. For such terms, we have the following lemma:

\begin{lemma}\label{lem:nullSpaceLouie}
    Let $M \in \C^{2^r \times 2^r}$ be a random predicate matrix of rank $R \geq 2^{r-1} + 1$, and let $T, T' \subseteq [n]$, ${\vert}T{\vert} = r$ denote two local constraints. Suppose that $T \cap T' \neq \emptyset$. Then, with probability $1$ over $M$, for any state $\psi \in \C^{2^n}$ such that $\cE_{T, M}(\psi) = \cE_{T', M}(\psi) = 0$, it must be the case that for any $z \in \zo^n$ that $\psi_z = 0$.
\end{lemma}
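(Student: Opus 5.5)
The plan is to reduce the claim to a finite-dimensional linear-algebra statement on the qubits $U := T \cup T'$, write it as the non-vanishing of a polynomial in the entries of $M$, and exhibit one explicit $M$ at which that polynomial is nonzero. \cref{clm:zeroEvaluation} then gives the probability-$1$ statement.

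\emph{Localization.} By \cref{eq:etmexpression}, restrict $\psi$ to each assignment $z \in \zo^{[n]\setminus U}$. Then $\cE_{T,M}(\psi) = \cE_{T',M}(\psi) = 0$ holds if and only if every restriction $\psi_z \in \C^{\zo^U}$ lies in $\ker(M\vert_T \otimes \Id_{U\setminus T}) \cap \ker(M\vert_{T'} \otimes \Id_{U \setminus T'})$. So it suffices to show that this intersection, inside $\C^{2^{|U|}}$, is $0$, where $|U| \le 2r-1$. Write $\spn(M) = \spn(v_1, \dots, v_R)$. A vector lies in $\ker(M\vert_T\otimes \Id)$ if and only if it is orthogonal to every $v_i\vert_T \otimes e_y$ with $y \in \zo^{U\setminus T}$, and similarly for $T'$. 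Hence the intersection is trivial if and only if the $R\cdot 2^{|U|-r}\cdot 2$ vectors $\{v_i\vert_T\otimes e_y\} \cup \{v_i\vert_{T'}\otimes e_{y'}\}$ span $\C^{2^{|U|}}$. This is equivalent to some $2^{|U|}\times 2^{|U|}$ minor of the matrix with these vectors as columns being nonzero. The sum of all $|\det|^2$ over such minors is a polynomial in the real and imaginary coordinates of $v_1,\dots,v_R$. The dimension count is favourable: $2R\cdot 2^{|U|-r} > 2^{|U|}$ exactly because $R > 2^{r-1}$.

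\emph{Witness.} By \cref{clm:zeroEvaluation} and the absolute-continuity assumption of \cref{rmk:assumptionsDist}, it remains to show this polynomial is not identically zero, that is, to give one $R$-dimensional subspace $S\subseteq \C^{2^r}$ achieving full span. Since $\spn(S\vert_T\otimes\cdot)$ only grows with $S$, we may take $R = 2^{r-1}+1$. Let $q \in T\cap T'$ be a shared qubit, appearing at coordinate $a$ of $T$ and coordinate $b$ of $T'$. If $a=b$, take $S = \spn\{e_x : x_a = 0\} + \spn\{e_{x^*}\}$ for some $x^*$ with $x^*_a=1$. Then $S\vert_T\otimes\C^{\zo^{U\setminus T}}$ contains every basis state of $U$ with $q=0$. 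Modulo those states, $S\vert_{T'}\otimes\cdot$ must generate the $q=1$ slice, and here the extra vector has to help: I would instead take $S$ as a generic perturbation, e.g.\ $S=\spn\{e_x + \lambda_x e_{\bar x} : x_a=0\}\oplus\spn\{w\}$ with $\bar x$ equal to $x$ with coordinate $a$ flipped. Then I would check by a block-triangular determinant computation, ordering basis states of $U$ by the value of $q$, that the combined system has full rank. If $a\neq b$, I would first handle the case where $T$ and $T'$ share only $q$, where the structure is a tensor product $\C^{2^{r-1}}\otimes\C^2\otimes\C^{2^{r-1}}$. There I would choose $S$ adapted simultaneously to coordinates $a$ and $b$, for instance spanned by $e_x$ with $x_a=0,x_b=0$, by $e_x + e_{x\oplus e_a\oplus e_b}$-type vectors, and one generic vector. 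The case of larger intersections can plausibly be reduced to this one by further restricting the shared qubits other than $q$ to fixed basis values and arguing slice by slice.

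\emph{Main obstacle.} The main obstacle is constructing this explicit witness. The same $M$ must serve both terms, and $q$ may sit at different positions $a \neq b$ in the two tuples, so the naive witness (spanning the whole $q=0$ slice through $T$ and the whole $q=1$ slice through $T'$) would need rank about $3\cdot 2^{r-2}$, above the available $2^{r-1}+1$. My first attempt is to exploit the single extra dimension beyond $2^{r-1}$ together with a triangular ordering of the basis of $\C^{\zo^U}$ by the value on $q$, so that the determinant factors into a product of visibly nonzero blocks. Failing that, I would prove non-vanishing of the polynomial by induction on $r$, restricting the coordinates of $T\setminus\{q\}$ one at a time.
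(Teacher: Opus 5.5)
\textbf{Gap 1: the witness is never constructed.} Your architecture is the paper's: localize to $U=T\cup T'$, encode triviality of the joint kernel as non-vanishing of a minor polynomial, and invoke \cref{clm:zeroEvaluation}. But the only non-routine step, exhibiting one choice of $v_1,\dots,v_R$ for which the constraint system on $\C^{2^{|U|}}$ has full rank, is never carried out. Your first candidate fails, and the later ones are only things you ``would check''.

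The paper puts the shared qubit in the first coordinate of both terms. This is automatic in the $r$-partite setting the section works in, so your $a\neq b$ branch is not needed there. It then takes the shift vectors $v_1=e_{(0,0)}$, $v_i=e_{(0,i-1)}+e_{(1,i-2)}$ for $2\leq i\leq R-1$, and $v_R=e_{(1,R-2)}$. Write $\phi(c,u,y)=M_c(u,y)$, with $u,y$ the private bits of $T,T'$. The two constraint families give $M_1(u,y)=-M_0(u+1,y)=-M_0(u,y+1)$, together with vanishing boundary rows and columns, and this Hankel structure forces $M_0=M_1=0$.

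Your own second candidate also works, by a two-line check you should include. Take distinct real $\lambda_u$ and $v_u=e_{(0,u)}+\lambda_u e_{(1,u)}$. The $T$- and $T'$-constraints then read $\phi(0,u,y)=-\lambda_u\phi(1,u,y)=-\lambda_y\phi(1,u,y)$, so every entry with $u\neq y$ vanishes. The $T$-constraints from a real $w$ then reduce to $(w(1,y)-\lambda_y w(0,y))\,\phi(1,y,y)=0$. This kills the diagonal whenever $w(1,y)\neq\lambda_y w(0,y)$ for all $y$, giving a valid and somewhat simpler alternative to the paper's witness.

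\textbf{Gap 2: the larger-intersection step cannot be repaired.} Fixing the other shared qubits to basis values does not preserve membership in $\ker M$, since $M$ acts on them, so the slice-by-slice reduction is invalid. Worse, the statement itself is false there at rank $R=2^{r-1}+1$, so your favourable dimension count is misleading.

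Suppose $T,T'$ share $\ell\geq 2$ qubits in the same coordinates, and let $\pi$ swap the private qubits of $T$ and $T'$ coordinatewise. Then $U_\pi M_T U_\pi^{\dagger}=M_{T'}$, so every $\pi$-invariant vector of $\ker M_T$ lies in $\ker M_{T'}$. Set $k=r-\ell$ and count dimensions inside $\C^{2^{|U|}}$, whose dimension is $2^{\ell+2k}$:
\begin{itemize}
\item $\ker(M\vert_T\otimes\Id_{U\setminus T})$ has dimension $(2^{r-1}-1)2^k$;
\item the $\pi$-invariant subspace has dimension $2^{\ell+k-1}(2^k+1)$.
\end{itemize}
So the two subspaces meet in dimension at least $2^k(2^{\ell-1}-1)>0$, for every $M$ of this rank. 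For example, $r=3$, $T=(1,2,3)$, $T'=(1,2,4)$ gives a $6$-dimensional and a $12$-dimensional subspace of $\C^{16}$. The case $T=T'$ is an even simpler counterexample.

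The paper's lifting argument for this case fails for the same reason as your slicing. The $S$-energy of the lifted $\widehat\phi$ equals $\sum_z\sum_c\langle\Pi_c\phi_z,M\Pi_c\phi_z\rangle$, where $\Pi_c$ fixes the duplicated shared qubits to $c$. This drops the cross terms of $\langle\phi_z,M\phi_z\rangle$ and need not vanish when $\phi_z\in\ker M$. What is actually provable, and what you should aim for, is the case $|T\cap T'|=1$.
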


\begin{remark}\label{rmk:genericLouie}
    Note that when a predicate matrix $M$ satisfies the above condition, we say that $M$ is \emph{generic}.
\end{remark}

\begin{proof}
We suppose WLOG that $R = 2^{r-1}+1$, as we shall see below that having higher rank introduces \emph{more constraints} on the kernel, which only makes the proof simpler.

    Indeed, let us consider two terms $T$ and $T'$ such that $T \cap T' \neq \emptyset$. We let $U = T \cup T' \subseteq [n]$ denote the union of these two terms, and observe that ${\vert}U{\vert} \leq 2r -1$. For now, we focus on the case where ${\vert}U{\vert} = 2r-1$ (so $T$ and $T'$ intersect in exactly one qubit).

    Now, let $\psi \in \C^{2^n}$ be any state such that $\cE_{T, M}(\psi) = \cE_{T', M}(\psi) = 0$. In particular, this means that for any choice of $y \in \zo^{[n] - U}$, it must be that 
    \[
    \langle \psi_y, M_{T} \otimes \Id_{U - T} \psi_y \rangle = \langle \psi_y, M_{T'} \otimes \Id_{U - T'} \psi_y \rangle = 0,
    \]
    where $\psi_y \in \C^{2^{{\vert}U{\vert}}}$ is the resulting ``sub-vector'' of $\psi$ which is achieved by keeping only those indices of $\psi$ which equal $y$ in the positions $[n] - U$ (see \cref{def:restrictedState}). We will show that, in fact, for every $y' \in \zo^{U}$ that $\psi_{y \circ y'} = 0$, thereby showing that every entry of $\psi$ is $0$. 

    To see this, recall that $\mathrm{span}(M) = \mathrm{span}(v_1, \dots v_{R})$, where each $v_R \in \C^{2^r}$ is chosen at \emph{random}. Thus, a single constraint of the form
    \[
    \langle \psi_y, M_{T} \otimes \Id_{U - T} \psi_y \rangle = 0
    \]
    is imposing several constraints; indeed, for each choice of $\widehat{y} \in \zo^{U - T}$, it must be the case that $\langle \psi_{y \circ \widehat{y}}, M\psi_{y \circ \widehat{y}} \rangle = 0$, which in turn means that $\langle \psi_{y \circ \widehat{y}}, v_j \rangle = 0$, for every $j \in [R]$. We denote this set of constraints by $A_T \in \C^{(2^{{\vert}U{\vert} - T} \cdot R) \times 2^{{\vert}U{\vert}}}$. The fact that all these linear constraints must equal $0$ implies that $A_T \psi_y = 0$.

    Likewise, we derive analogous constraints involving $T'$, and denote their constraint matrix by $A_{T'} \in\C^{(2^{{\vert}U{\vert} - T} \cdot R) \times 2^{{\vert}U{\vert}}}$. Thus, our set of constraints is given by $A = \begin{bmatrix}
        A_T \\
        A_{T'}
    \end{bmatrix} \in \C^{2 \cdot(2^{{\vert}U{\vert} - T} \cdot R) \times 2^{{\vert}U{\vert}}}$
    
    Recall that $\psi_y \in \C^{2^{{\vert}U{\vert}}}$, while the number of linear equations that we have is \[
    2 \cdot R \cdot 2^{{\vert}U{\vert}-r} \geq (2^{r-1}+1) \cdot 2^{{\vert}U{\vert} - r + 1} > 2^{{\vert}U{\vert}}.
    \]
    Our goal now will be to show that, with high probability over the randomness of the vectors $v_1, \dots v_R$, that this collection of constraints $A$ has rank $2^{{\vert}U{\vert}}$. To do this, we first show that \emph{there exists} a choice of $v_1, \dots v_R$ such that $A$ has rank $2^{{\vert}U{\vert}}$:

    \begin{claim}\label{clm:existVectorsLouie}
    Let $T, T'$ be such that ${\vert}T \cap T'{\vert} = 1$. Then, letting $A \in \C^{2 \cdot(2^{{\vert}U{\vert} - T} \cdot R) \times 2^{{\vert}U{\vert}}}$ denote the constraint matrix above, there exists a choice of $v_1, \dots v_R$ such that $A$ has rank $2^{{\vert}U{\vert}}$.
    \end{claim}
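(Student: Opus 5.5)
The plan is to construct the vectors explicitly. The claim is equivalent to the statement that the only $\phi \in \C^{2^{|U|}}$ with $A\phi = 0$ is $\phi = 0$. Let $K := \spn(v_1,\dots,v_R)^{\perp} = \ker M \subseteq \C^{2^r}$. Then $A_T\phi = 0$ says exactly that $\phi \in K\vert_T \otimes \C^{2^{U - T}}$, and similarly for $T'$. So it suffices to exhibit one subspace $K$ of dimension $2^r - R = 2^{r-1}-1$ such that
\[
\left(K\vert_T \otimes \C^{2^{U-T}}\right) \cap \left(K\vert_{T'} \otimes \C^{2^{U-T'}}\right) = 0,
\]
and then take $v_1,\dots,v_R$ to be any basis of $K^\perp$.

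\emph{Setting up coordinates.} Write $T\cap T' = \{c\}$, $T = \{c\}\cup A$ and $T' = \{c\}\cup B$, with $|A| = |B| = r-1$. Since we work with $r$-partite instances (via \cref{prop:r-partite-WLOG}), the shared qubit $c$ sits in the same tuple position in both $T$ and $T'$. After reordering we may take this to be the first position, so $\C^{2^r} = \C^2_c \otimes \C^{2^{r-1}}$ in both cases. Decompose any $\phi \in \C^{2^{|U|}}$ as $\phi = \ket{0}_c\otimes\phi_0 + \ket{1}_c\otimes\phi_1$, where $\phi_0,\phi_1 \in \C^{2^A}\otimes\C^{2^B}$. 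We regard each $\phi_b$ as a $2^{r-1}\times 2^{r-1}$ matrix $X_b$ with rows indexed by $A$ and columns by $B$.

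\emph{The construction.} Set $d := 2^{r-1}$ and let $L = \mathrm{diag}(\lambda_1,\dots,\lambda_d)$ have distinct entries. Let $W \subseteq \C^{d}$ be the hyperplane orthogonal to the all-ones vector, so $\dim W = d-1$ and $W$ contains no standard basis vector $e_k$ (here $d\ge 2$ since $r\ge 2$). Define
\[
K := \{\ket{0}\otimes a + \ket{1}\otimes La : a \in W\},
\]
which has dimension $d-1 = 2^{r-1}-1$, as required.

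\emph{Checking the intersection is trivial.} Membership in $K\vert_T\otimes\C^{2^B}$ means $X_0$ has all columns in $W$ and $X_1 = LX_0$. Membership in $K\vert_{T'}\otimes\C^{2^A}$ means $X_0$ has all rows in $W$ and $X_1 = X_0L^{\top}$. Together these give $LX_0 = X_0L$. Entrywise this reads $(\lambda_k-\lambda_l)(X_0)_{kl}=0$, so $X_0$ is diagonal. Then column $k$ of $X_0$ is $(X_0)_{kk}e_k$, and since $e_k\notin W$ this forces $(X_0)_{kk} = 0$. Hence $X_0 = 0$, then $X_1 = LX_0 = 0$, so $\phi = 0$. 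Therefore $A$ has full column rank $2^{|U|}$ for this choice of $v_1,\dots,v_R$.

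\emph{Role in the lemma and main obstacle.} The claim only supplies a single good point. To finish the lemma, one picks a $2^{|U|}\times 2^{|U|}$ minor of $A$ that is nonzero at this point and views its determinant as a polynomial in the entries of the $v_j$. This polynomial is not identically zero, so by \cref{clm:zeroEvaluation} it is nonzero with probability $1$ under the random choice. The step I expect to need the most care is the position issue. A commuting-type construction like this one relies on $c$ occupying the same position in $T$ and $T'$, which the $r$-partite reduction provides. If it were not available, one would instead choose $K$ invariant under permuting positions, for instance by symmetrizing the construction over the position of the entangled qubit.
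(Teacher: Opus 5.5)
Your proof is correct and follows the same strategy as the paper. Both arguments exhibit one explicit $\ker M$, reshape $\phi$ into the two $2^{r-1}\times 2^{r-1}$ blocks indexed by the shared qubit, and combine the column constraints from $T$ with the row constraints from $T'$ into an intertwining relation that, together with hyperplane conditions, forces $\phi=0$.

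The only difference is the witness. In your notation, the paper's shift-structured $v_i$ give $\ker M=\{\ket{0}\otimes a-\ket{1}\otimes Na : a(0)=0\}$, where $N$ is the nilpotent shift. This yields a Hankel relation $M_0(a+1,b)=M_0(a,b+1)$, which the paper kills by propagating zero boundary rows and columns. Your diagonal $L$ with distinct eigenvalues over $\mathbf{1}^{\perp}$ makes the conclusion immediate. Your explicit appeal to $r$-partiteness also justifies the paper's unstated assumption that the shared qubit sits in the same position in $T$ and $T'$.
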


    \begin{proof}[Proof of \cref{clm:existVectorsLouie}.]
    For ease of notation, we assume WLOG that $T = \{1, 2, 3, \dots r\}$ and that $T' = \{1, r+1, r+2 , \dots 2r-1\}$. To define our vectors $v_i: i \in [R]$, we index the entries of $v_i$ as a tuple $(a, u) \in \zo^r$ where $a \in \zo$ and $u \in \zo^{r-1}$, using $(v_i)_{(a, \cdot)} \in \C^{2^{r-1}}$ to refer to the vector which is achieved by considering only indices which start with symbol $a$. Likewise, for a value of $u \in \zo^{r-1}$ we let $e_u \in \C^{2^{r-1}}$ denote the standard basis vector which has a $1$ only in coordinate $u$. Equivalently, for an integer $\ell \in 2^{r-1},$ we use $e_{\ell}$ to refer to $e_{u}$ where $u$ is the binary representation of $\ell$.
    
    Under this notation we have:
    \begin{enumerate}[(1)]
        \item $(v_1)_{(0, \cdot)} = e_{0}$ and $(v_1)_{(1, \cdot)} = 0$.
        \item For $i \in \{2, 3, \dots, R-1\}$, $(v_i)_{(0, \cdot)} = e_{i-1}$ and $(v_i)_{(1, \cdot)} = e_{i-2}$.
        \item $(v_R)_{(0, \cdot)} = 0$ and $(v_R)_{1, \cdot} = e_{R-2}$
    \end{enumerate}

    To show that $A$ has full rank, we will show that $\ker(A) = 0$. So, for any vector $\psi \in \C^{2^{{\vert}U{\vert}}}$, we construct two matrices $M_0, M_1 \in \C^{2^{r-1} \times 2^{r-1}}$. The entries of $M_0, M_1$ are indexed by bit strings of length $r-1$, where $M_0(a, b) = \psi_{0 \circ a \circ b}$ and $M_1({a, b}) = \psi_{1 \circ a \circ b}$. For the vectors $v_i: i \in [R]$, and the term $T$, the requirement that for each choice of $\widehat{y} \in \zo^{U - T}$ $\langle \psi_{\widehat{y}}, v_i\rangle = 0$ is equivalent to enforcing that for each column $b \in \zo^{r-1}$:
    \begin{enumerate}[(1)]
        \item $M_0(0, b) = 0$. 
        \item $M_0(\ell, b) + M_1(\ell-1, b) = 0$ for $\ell \in \{1, 2, \dots R - 2\}$. Equivalently, $M_0(\ell, b) = - M_1(\ell-1, b)$.
        \item $M_1(R-2, b) = 0$.
    \end{enumerate}

    Likewise, the constraints from the term $T'$ imply that, for every row $a \in \zo^{r-1}$ that:
\begin{enumerate}[(1)]
        \item $M_0(a, 0) = 0$. 
        \item $M_0(a, \ell) + M_1(a, \ell-1) = 0$ for $\ell \in \{1, 2, \dots R - 2\}$. Equivalently, $M_0(a, \ell) = - M_1(a, \ell-1)$.
        \item $M_1(a, R-2) = 0$.
    \end{enumerate}

Importantly, together the above relations imply that, for any $a \in \{0, 1, \dots 2^{r-1}-1 \}$ and $b \in \{0, 1, \dots 2^{r-1}-1\}$ that $M_1(a, b) = -M_0(a+1, b)$ and that $M_1(a, b) = -M_0(a, b+1)$. Together, this means that $M_0(a, b+1) = M_0(a+1, b)$, i.e., that the diagonals of $M_0$ must all be equal. 

Finally, we then see that the conditions above imply that, in $M_0$, all entries of the form $M(a, 0) = M(0, b) = 0$, and likewise in $M_1$ that $M_1(a, R-2) = M_1(R-2, b) = 0$ (the right and bottom sides of the matrix $M_1$). This, together with the propagation condition, then imply that $M_0 = M_1 = 0$.

Because the only possible choice of $\psi$ such that $M_0, M_1$ satisfy the conditions imposed by the constraints $\langle \psi_{\widehat{y}}, v_i\rangle = 0$ is $\psi = 0^{2^{{\vert}U{\vert}}}$, it must be the case that the constraint matrix $A$ is rank $2^{{\vert}U{\vert}}$ (otherwise it would have a non-trivial nullspace). This concludes the claim. 
    \end{proof}

From \cref{clm:existVectorsLouie}, we know that there exists a collection of vectors $v_1, \dots v_R$ such that the constraint matrix $A$ has rank $2^{{\vert}U{\vert}}$. In particular, for this choice of vectors $v_1, \dots v_R$, there must be some $2^{{\vert}U{\vert}} \times 2^{{\vert}U{\vert}}$ minor of $A$ which has non-zero determinant. We denote this minor by $A_{L_1 \times \left [2^{{\vert}U{\vert}} \right]}$, where $L_1 \subseteq [2\cdot R \cdot 2^{{\vert}U{\vert}-r}]$, ${\vert}L_1{\vert} = 2^{{\vert}U{\vert}}$.

Now, we can observe that $\det(A_{L_1 \times \left [2^{{\vert}U{\vert}} \right]})$ is a polynomial in the entries of the vectors $v_1, \dots v_R$ (since each entry in $A$ is simply an entry from one of these vectors). The above shows that, as a polynomial, 
\[
\det(A_{L_1 \times \left [2^{{\vert}U{\vert}} \right]}) \neq 0.
\]
Now, we are free to invoke \cref{clm:zeroEvaluation}, as each entry of the vectors $v_i$ is sampled uniformly at random from a continuous distribution. This implies that with probability $1$ over the vectors $v_1, \dots v_R$ that $\det(A_{L_1 \times \left [2^{{\vert}U{\vert}} \right]}) \neq 0$, and so $\rank(A) = 2^{{\vert}U{\vert}}$, and $\ker(A) = 0$. Thus, every entry of $\psi$ must be $0$.

Finally, it remains to consider the case when ${\vert}T \cap T'{\vert} > 1$. We let $U = T \cup T'$, we let $\ell = T \cap T'$, and we assume WLOG that $T = \{1, 2, 3, \dots \ell, \ell +1, \dots r\}$ and $T' = \{1, 2, \dots \ell, r+\ell, \dots 2r-1 \}$. 

In this case, as before, we know that for any choice of $y \in \zo^{[n] - U}$, our requirement that $\psi$ gives $0$ energy enforces that
    \begin{align}\label{eq:zeroEnergyLouie}
    \langle \psi_y, M_{T} \otimes \Id_{U - T} \psi_y \rangle = \langle \psi_y, M_{T'} \otimes \Id_{U - T'} \psi_y \rangle = 0,
    \end{align}
where $\psi_y \in \C^{2^{{\vert}U{\vert}}}$. For clarity, we let $\phi = \psi_y$.

Now, suppose for the sake of contradiction that there existed a non-zero vector $\phi \in \C^{2^{{\vert}U{\vert}}}$ such that \cref{eq:zeroEnergyLouie} holds. We now lift $\phi$ to a new vector $\widehat{\phi} \in \C^{2^{2r-1}}$, such that, for any entry of $\widehat{\phi}$, indexed by $b \in \zo^{2r-1}$, we have:
\[
\widehat{\phi}_b = \begin{cases}
    0 \text{ if } \exists i \in \{2, 3, 4, \dots \ell\}: b_i \neq b_{i + r -1}. \\
    \phi_{b_{1:r} \circ b_{r+\ell:2r-1}}.
\end{cases}
\]

At the same time, we consider a different Hamiltonian, with constraints $S = \{1, 2, 3, \dots r\}$ and $S' = \{1, r+1, r+2, \dots 2r-1\}$. The key observation is that, by construction, 
\[
\langle \phi, M_{T} \otimes \Id_{U - T} \phi \rangle = \langle \widehat{\phi}, M_S \otimes \Id_{[2r-1] - S} \widehat{\phi}\rangle
\]
and 
\[
\langle \phi, M_{T'} \otimes \Id_{U - T'} \phi \rangle = \langle \widehat{\phi}, M_{S'} \otimes \Id_{[2r-1] - S'} \widehat{\phi}\rangle.
\]

But, for these terms $S, S'$, we see that ${\vert}S \cap S'{\vert} = 1$, and thus the only state $\widehat{\phi}$ which gives both $S$ and $S'$ zero energy is when every entry of $\widehat{\phi}$ is $0$. But, this then implies that every entry of $\phi$ itself is also $0$. This concludes the claim. 
\end{proof}

Note that by repeating the above argument, we can also prove a version of \cref{lem:nullSpaceLouie} where $T$ and $T'$ have \emph{different} random operators $M, M'$. The same argument goes through as we can re-use the proof from \cref{lem:nullSpaceLouie} to show that when $M = M'$, the determinant polynomial is non-zero (and thus, when $M, M'$ are distinct, the determinant polynomial is still non-zero). Thus, we formally get the following corollary:

\begin{corollary}\label{lem:nullSpaceDifferentMLouie}
    Let $M, M'\in \C^{2^r \times 2^r}$ be random predicate matrices of rank $R \geq 2^{r-1} + 1$, and let $T, T' \subseteq [n]$, ${\vert}T{\vert} = r$ denote two local constraints. Suppose that $T \cap T' \neq \emptyset$. Then, with probability $1$ over $M, M'$, for any state $\psi \in \C^{2^n}$ such that $\cE_{T, M}(\psi) = \cE_{T', M}(\psi) = 0$, it must be the case that for any $z \in \zo^n$ that $\psi_z = 0$.
\end{corollary}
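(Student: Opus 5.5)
We will follow the proof of \cref{lem:nullSpaceLouie} essentially verbatim; the only change is that the two constraint blocks now come from two independent families of random vectors. Write $\spn(M) = \spn(v_1,\dots,v_R)$ and $\spn(M') = \spn(v'_1,\dots,v'_R)$, and let the statement's second energy condition be read as $\cE_{T',M'}(\psi)=0$. As before we reduce to $R = 2^{r-1}+1$, since larger rank only adds rows to the constraint matrix. For each $y \in \zo^{[n]\setminus U}$, with $U = T\cup T'$, the zero-energy conditions force $A\psi_y = 0$. Here
\[
A = \begin{bmatrix} A_T \\ A'_{T'} \end{bmatrix},
\]
where $A_T$ is built from the $v_i$ exactly as in the lemma and $A'_{T'}$ is built in the same way from the $v'_i$. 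It therefore suffices to show $\rank(A) = 2^{{\vert}U{\vert}}$ with probability $1$.

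\textbf{Case ${\vert}T\cap T'{\vert}=1$.} We treat $\det(A_{L\times[2^{{\vert}U{\vert}}]})$, for a suitable row set $L$, as a polynomial in the $2\cdot R\cdot 2^r$ independent coordinates of $(v_1,\dots,v_R,v'_1,\dots,v'_R)$. To see that it is not the zero polynomial, we specialize $v'_i := v_i$, where the $v_i$ are the explicit vectors from \cref{clm:existVectorsLouie}. Under this specialization $A$ coincides with the matrix from that claim, which has full rank. So some $2^{{\vert}U{\vert}}\times 2^{{\vert}U{\vert}}$ minor, with row set $L_1$, is nonzero at this point. Hence the polynomial $\det(A_{L_1\times[2^{{\vert}U{\vert}}]})$ is nonzero on the larger variable space. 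The joint distribution of the $v_i$ and $v'_i$ is a product of absolutely continuous distributions, so it is absolutely continuous. \cref{clm:zeroEvaluation} then gives that the determinant is nonzero almost surely, and therefore $\ker A = 0$.

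\textbf{Case ${\vert}T\cap T'{\vert}>1$.} We reuse the lifting $\phi\mapsto\widehat{\phi}$ from the lemma, taking $S=\{1,\dots,r\}$ and $S'=\{1,r+1,\dots,2r-1\}$. The identity
\[
\langle \phi, M_T\otimes\Id\,\phi\rangle = \langle\widehat{\phi}, M_S\otimes\Id\,\widehat{\phi}\rangle
\]
uses only the structure of the lift and not which matrix is applied, so it holds for $M$ on $T\to S$ and for $M'$ on $T'\to S'$ separately. The single-intersection case applied to $(S,M),(S',M')$ then forces $\widehat{\phi}=0$, hence $\phi=0$.

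\textbf{Expected obstacle.} The main point needing care is confirming that the same nonzero minor certifies non-vanishing after the variables are decoupled. This holds because a polynomial that is nonzero on the diagonal subspace $v'=v$ is nonzero as a polynomial. There are only finitely many intersection patterns of $T,T'$ up to relabeling, so a union of measure-zero events finishes the proof.
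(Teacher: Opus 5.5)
Your case ${\vert}T\cap T'{\vert}=1$ is correct and matches the paper's argument. A maximal minor of the stacked constraint matrix is a polynomial in the independent coordinates of $(v,v')$; it is nonzero at $v'=v$ with $v$ taken from \cref{clm:existVectorsLouie}; and \cref{clm:zeroEvaluation} finishes.

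The gap is the case ${\vert}T\cap T'{\vert}=\ell>1$: the lift identity you quote is false for general $M$. The lift is the isometry $V:\phi\mapsto\widehat\phi$ that copies qubits $2,\dots,\ell$ onto qubits $r+1,\dots,r+\ell-1$. An entrywise computation gives
\[
V^{\dagger}\left(M_S\otimes\Id\right)V=\Bigl(\sum_{c\in\zo^{\ell-1}}P_c\,M\,P_c\Bigr)\otimes\Id,
\]
where $P_c$ projects qubits $2,\dots,\ell$ of $T$ onto $\ket{c}$. This equals $M\otimes\Id$ only if $M$ is block diagonal in the computational basis of those qubits.

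Consequently zero energy is not transported. Take $0\neq k\in\ker M$ and $\phi=k\otimes\ket{d}$. Then $\cE_{T,M}(\phi)=0$, but $\cE_{S,M}(\widehat\phi)=\sum_c\langle P_ck,MP_ck\rangle$, and this is positive almost surely when $R<2^r$. The reason is that some $P_ck\neq 0$ (as $\sum_cP_ck=k$), while $\ker M$ (dimension $\le 2^{r-1}-1$) meets each coordinate subspace $\operatorname{im}P_c$ (dimension $2^{r-\ell+1}\le 2^{r-1}$) only in $0$. The paper's proof of \cref{lem:nullSpaceLouie} uses the same lift, so ``repeating the argument'' does not cover this case either.

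For independent $M,M'$ there is a short repair that treats every overlap pattern uniformly. This includes $T'=T$, where no specialization with $v'=v$ can work, since the two constraint blocks then coincide.

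\begin{itemize}
\item Fix a shared qubit $i_0\in T\cap T'$.
\item Specialize $v_1,\dots,v_R$ to distinct standard basis vectors of $\C^{\zo^{T}}$ that include all $2^{r-1}$ vectors $e_x$ with $x_{i_0}=1$.
\item Specialize $v'_1,\dots,v'_R$ to distinct standard basis vectors of $\C^{\zo^{T'}}$ that include all $e_x$ with $x_{i_0}=0$. Both choices need only $R\ge 2^{r-1}$.
\item The $T$-constraints then force every entry of $\psi_y$ with $i_0$-bit $1$ to vanish, and the $T'$-constraints force every entry with $i_0$-bit $0$ to vanish.
\end{itemize}

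So the constraint matrix has full column rank at this point, some maximal minor is a nonzero polynomial, and \cref{clm:zeroEvaluation} applies as in your first case.

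This makes both the lift and \cref{clm:existVectorsLouie} unnecessary. It also does not care which tensor slot $i_0$ occupies in each term, whereas the identification $v'_i:=v_i$ with that claim's matrix tacitly assumes the slots agree. It does use $\spn(M)\neq\spn(M')$ essentially, so it does not repair the shared-$M$ lemma for $\ell>1$.
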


Now, as a consequence of \cref{lem:nullSpaceLouie}, we also have the following corollary:

\begin{corollary}\label{cor:IntersectingEnergyLouie}
    Let $M \in \C^{2^r \times 2^r}$ be a random predicate matrix of rank $R \geq 2^{r-1} + 1$, and let $T, T', T'' \subseteq [n]$, ${\vert}T{\vert} = {\vert}T'{\vert} = r$ denote two $r$-local constraints. Suppose that $T \cap T' \neq \emptyset$. Then, with probability $1 - o_r(1)$ over $M$,
    \[
    \lambda_{\min}(M{\vert}_{T} \otimes \Id_{[n] - T} + M{\vert}_{T'} \otimes \Id_{[n] - T'}) \geq \Omega_{r}(1),
    \]
    and 
    \[
    \lambda_{\max}(M{\vert}_{T''} \otimes \Id_{[n] - T''})\leq O_r(1).
    \]
\end{corollary}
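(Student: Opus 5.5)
Both inequalities reduce to statements about fixed-size matrices that depend only on $r$ and $M$, never on $n$. The upper bound is the easy half. Since $M{\vert}_{T''}\otimes\Id_{[n]-T''}$ is unitarily equivalent to $M\otimes\Id_{2^{n-r}}$, its largest eigenvalue equals $\lambda_{\max}(M)$. Under \cref{rmk:assumptionsDist}, $M$ is drawn from a $C$-bounded family (\cref{def:OnlyCbounded}) with $C=O_r(1)$, for instance $M=\sum_{j} v_jv_j^\dagger$ with $\Vert v_j\Vert\le 1$, which gives $\lambda_{\max}(M)\le 2^r$. So the second inequality holds with probability $1$.

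For the lower bound, set $U=T\cup T'$, so $\vert U\vert\le 2r-1$. Then
\[
M{\vert}_{T}\otimes\Id_{[n]-T}+M{\vert}_{T'}\otimes\Id_{[n]-T'}=N_U\otimes\Id_{[n]-U},\qquad N_U:=M{\vert}_T\otimes\Id_{U-T}+M{\vert}_{T'}\otimes\Id_{U-T'}\in\C^{2^{\vert U\vert}\times 2^{\vert U\vert}},
\]
and the two sides have the same spectrum up to multiplicity. It therefore suffices to bound $\lambda_{\min}(N_U)$ from below. \cref{lem:nullSpaceLouie} gives $\ker(N_U)=0$ with probability $1$: a kernel vector of $N_U$ is a state with zero energy on both terms, and the lemma forces every such state to vanish. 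Up to relabeling qubits, $N_U$ depends only on $M$ and on the overlap pattern of $T,T'$, of which there are at most $r$ choices. Define
\[
f(M):=\min_{\text{patterns}}\lambda_{\min}(N_U),
\]
a random variable depending only on $M$ and hence only on the $r$-dependent distribution, with $f(M)>0$ almost surely.

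The main obstacle is converting ``$f(M)>0$ almost surely'' into a deterministic constant $c_r$ with $\Pr[f(M)\ge c_r]\ge 1-o_r(1)$. The argument is continuity of measure: the events $\{f(M)\ge 1/k\}$ increase in $k$ to an event of probability $1$. Hence for any target failure probability $\delta_r$ there is a $k_r$ with $\Pr[f(M)<1/k_r]\le\delta_r$. Since the distribution is fixed as a function of $r$ alone, $k_r$ depends only on $r$, and setting $c_r:=1/k_r$ yields $\Omega_r(1)$ with probability $1-o_r(1)$.

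An alternative route to a quantitative bound is to write $\lambda_{\min}(N_U)\ge\det(N_U)/\lambda_{\max}(N_U)^{2^{\vert U\vert}-1}$. Here $\det(N_U)$ is a nonzero polynomial in the entries of the $v_j$, by \cref{clm:zeroEvaluation} together with the determinant argument of \cref{lem:nullSpaceLouie}, and the denominator is at most $(2\cdot 2^r)^{2^{2r-1}}$. A lower-tail bound on the nonzero polynomial $\det(N_U)$ then gives the same conclusion, but only if the distribution has bounded density. For that reason the soft continuity-of-measure argument is the one I would present, with the quantitative version mentioned as a remark.
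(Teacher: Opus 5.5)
Your proof is correct and follows the paper's argument. You reduce to the fixed-size operator on $U=T\cup T'$, get a trivial kernel almost surely from \cref{lem:nullSpaceLouie}, and use that the distribution of $M$ depends only on $r$; your continuity-of-measure step and the minimum over overlap patterns simply make explicit what the paper asserts in one line. One small correction: $M$ is applied to ordered tuples, so an overlap pattern must record which positions of $T$ and $T'$ coincide, giving finitely many patterns depending only on $r$ rather than at most $r$; this does not affect the argument.
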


\begin{proof}
To see the second point above, we can simply observe that 
\[
\lambda_{\max}(M{\vert}_{T''} \otimes \Id_{[n] - T''}) = \lambda_{\max}(M)\leq O_r(1),
\]
by our assumption on boundedness (see \cref{def:OnlyCbounded}, \cref{rmk:assumptionsDist}).

For the first point, we let $U = T \cup T'$. We observe that we can view $M{\vert}_{T} \otimes \Id_{[n] - T} + M{\vert}_{T'} \otimes \Id_{[n] - T'}$ as a local operator on $U = T \cup T'$. Thus, there is a matrix $\hat{M} \in \C^{2^{{\vert}U{\vert}} \times 2^{{\vert}U{\vert}}}$ such that 
\[
M{\vert}_{T} \otimes \Id_{[n] - T} + M{\vert}_{T'} \otimes \Id_{[n] - T'} = \hat{M}_U \otimes \Id_{[n] - U}.
\]
Indeed, in this formulation, we see that $\hat{M} = M{\vert}_T \otimes \Id_{U - T} + M{\vert}_{T'} \otimes \Id_{U - T'}$. Thus the eigenvalues of $\hat{M}$ depend only on the entries of $M$, which in turn depends only on the distribution from which $M$ is sampled and the parameter $r$. Likewise, via \cref{lem:nullSpaceLouie}, we know that all eigenvalues of $\hat{M} = M{\vert}_T \otimes \Id_{U - T} + M{\vert}_{T'} \otimes \Id_{U - T'}$ are positive. Thus, we can see that, with probability (say) $1-o_r(1)$,
\[
    \lambda_{\min}(M{\vert}_{T} \otimes \Id_{[n] - T} + M{\vert}_{T'} \otimes \Id_{[n] - T'}) \geq \Omega_{r}(1),
    \]
as this matrix $\hat{M}$ is drawn from a distribution whose parameters depend only on $r$. Importantly, as discussed in \cref{rmk:assumptionsDist}, the parameters of this distribution are \emph{independent} of $n$.
\end{proof}

Invoking the same argument (but instead using \cref{lem:nullSpaceDifferentMLouie}), we can also show the following:

\begin{corollary}\label{cor:IntersectingEnergyDifferentMLouie}
    Let $M, M', M'' \in \C^{2^r \times 2^r}$ be random predicate matrices of rank $R \geq 2^{r-1} + 1$, and let $T, T', T'' \subseteq [n]$, ${\vert}T{\vert} = {\vert}T'{\vert} = r$ denote three $r$-local constraints, using $M, M', M''$ respectively. Then:
    \begin{enumerate}
        \item With probability $1$ over $M''$,    \[
    \lambda_{\max}(M''{\vert}_{T''} \otimes \Id_{[n] - T''})\leq O_{r}(1).
    \]
    \item If $T \cap T' \neq \emptyset$, then with probability $\geq 1/2$ over $M, M'$,
    \[
    \lambda_{\min}(M{\vert}_{T} \otimes \Id_{[n] - T} + M'{\vert}_{T'} \otimes \Id_{[n] - T'}) \geq \Omega_{r}(1).
    \]
    \end{enumerate}
\end{corollary}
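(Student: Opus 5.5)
The plan is to mirror the proof of \cref{cor:IntersectingEnergyLouie}, replacing \cref{lem:nullSpaceLouie} by \cref{lem:nullSpaceDifferentMLouie}, and to make the ``probability $\geq 1/2$'' quantitative through a quantile argument on a distribution that does not depend on $n$.

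For item 1, observe that $M''{\vert}_{T''} \otimes \Id_{[n]-T''}$ is unitarily a block-diagonal copy of $M''$ with multiplicity $2^{n-r}$. Hence its largest eigenvalue is exactly $\lambda_{\max}(M'')$. By the uniform boundedness assumption (\cref{def:OnlyCbounded}, \cref{rmk:assumptionsDist}), this is at most $C = O_r(1)$ almost surely.

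For item 2, set $U = T \cup T'$, so ${\vert}U{\vert} \leq 2r-1$. The sum equals $\hat{M}{\vert}_U \otimes \Id_{[n]-U}$, where $\hat{M} = M{\vert}_T\otimes \Id_{U-T} + M'{\vert}_{T'}\otimes\Id_{U-T'}$. Consequently $\lambda_{\min}$ of the $n$-qubit operator equals $\lambda_{\min}(\hat{M})$. By \cref{lem:nullSpaceDifferentMLouie}, with probability $1$ over $(M,M')$ the kernel of $\hat{M}$ is trivial, so $X := \lambda_{\min}(\hat{M}) > 0$ almost surely.

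The law of $X$ depends only on the distribution of $(M,M')$, which is a function of $r$ alone, and on the intersection pattern of $T, T'$. Up to relabeling qubits, the intersection pattern is determined by ${\vert}T\cap T'{\vert} \in [r]$ together with the positions of the shared qubits inside the ordered tuples $T, T'$. There are therefore only $O_r(1)$ possible patterns, and all are independent of $n$.

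For each pattern, $X$ is an almost surely positive random variable. Continuity of measure then gives $\Pr[X \geq \delta] \to 1$ as $\delta \to 0^+$. Choose $\delta_{\text{pattern}} > 0$ to be a median-type quantile, so that $\Pr[X \geq \delta_{\text{pattern}}] \geq 1/2$. Taking $\delta_r$ to be the minimum over the finitely many patterns yields $\Pr[\lambda_{\min} \geq \delta_r] \geq 1/2$ for every admissible $T, T'$ and every $n$, with $\delta_r = \Omega_r(1)$.

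The main subtlety is this uniformity. The constant must not depend on $n$ or on the specific qubits, and this is exactly what the reduction to the local matrix $\hat{M}$ and the finiteness of patterns provide. Note also that the threshold $1/2$ is arbitrary: any constant probability works by choosing a smaller quantile.
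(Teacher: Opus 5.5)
Your proposal is correct and follows the paper's proof. Item 1 comes from the uniform boundedness assumption; for item 2 you reduce to the local operator $\hat{M}$ on $U = T \cup T'$, get almost-sure positive definiteness from \cref{lem:nullSpaceDifferentMLouie}, and extract an $n$-independent constant that holds with probability $1/2$. The only difference is that you make the uniformity step rigorous (finitely many intersection patterns, spectrum invariant under qubit relabeling, and a continuity-of-measure quantile), whereas the paper just asserts it on the grounds that the distribution's statistics depend only on $r$.
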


\begin{proof}
    The first point follows from our assumption on the boundedness of our matrices (see \cref{def:OnlyCbounded}).

    For the second point, we can again observe that, letting $U = T \cup T'$, we can write 
    \[
    M{\vert}_{T} \otimes \Id_{[n] - T} + M'{\vert}_{T'} \otimes \Id_{[n] - T'} = \hat{M}{\vert}_U \otimes \Id_{[n] - U}.
    \]
    Here, $U \in \C^{2^{O(r)} \times 2^{O(r)}}$, and $U$'s entries depend only on the entries of $M', M$. From \cref{lem:nullSpaceDifferentMLouie}, we know that with probability $1$ over $M, M'$ that $\lambda_{\min}(M{\vert}_{T} \otimes \Id_{[n] - T} + M'{\vert}_{T'} \otimes \Id_{[n] - T'}) > 0$. 
    
    Importantly, because the distribution from which $M, M'$ are drawn is only parameterized by $r$ (see \cref{rmk:assumptionsDist}), the statistics of this distribution also depend only on $r$. Thus, we can then say that there exists some constant $\gamma_r \geq \Omega_r(1)$ such that with probability $1/2$, 
    \[
    \lambda_{\min}(M{\vert}_{T} \otimes \Id_{[n] - T} + M'{\vert}_{T'} \otimes \Id_{[n] - T'}) = \lambda_{\min}(\hat{M}) \geq \gamma_r \geq \Omega_{r}(1).
    \]
\end{proof}

In the above, we use $\Omega_r(\cdot)$ and $O_r(\cdot)$ to hide (potentially large) constant dependences on a function of $r$. Note that it may be slightly counter-intuitive in \cref{cor:IntersectingEnergyDifferentMLouie} that we only use a weak property that with probability $1/2$ the smallest eigenvalue of the sum of terms is $\Omega_r(1)$. As we shall we see later, this is because when the number of terms increases (scaling with $n$, the number of qubits, independent of $r$), it is possible that certain ``low probability'' tail events may occur which could yield eigenvalues that scale with $n$ as opposed to $r$. \cref{cor:IntersectingEnergyDifferentMLouie} provides a very basic, weak guarantee which nevertheless still suffices for our sparsifiers by using parameters of the distribution of predicate matrices that are \emph{independent} of $n$.

\subsection{Building Sparsifiers}

\paragraph{Sparsifiers for the Shared Predicate Case}

With \cref{cor:IntersectingEnergyLouie} now established, we proceed to our algorithm for building sparsifiers. Roughly speaking, the intuition is that once there are sufficiently many terms which operate on \emph{intersecting} sets of qubits, then this naturally provides a stronger lower bound on the minimum eigenvalue of the entire Hamiltonian $\cH$. Towards proving this intuition formally, we first make the following claim:

\begin{claim}\label{clm:disjointIntersectingTermsLouie}
    Let $\cH$ be a Hamiltonian with terms $H_i = (M, T_i): i \in [m]$. Provided $m \geq 4n$, then there must exist $\geq \frac{m}{4}$ disjoint pairs of terms $(a_j, b_j): j \in [\frac{m}{2}], a_j, b_j \in [m]$ such that $T_{a_j} \cap T_{b_j} \neq \emptyset$.
\end{claim}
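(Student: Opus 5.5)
The plan is a greedy matching argument on the terms, using the qubits as the resource that forces collisions. I read the statement as asking for at least $m/4$ pairwise disjoint pairs $(a_j,b_j)$: distinct pairs use distinct term indices, and within each pair $T_{a_j}\cap T_{b_j}\neq\emptyset$.

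The procedure keeps a pool $P\subseteq[m]$ of unused term indices, initially $P=[m]$. While $|P|>n$, the pigeonhole principle applies. Each $T_i$ is a nonempty subset of $[n]$, so pick one qubit $q_i\in T_i$ for every $i\in P$. Since $|P|>n$, two indices $a\neq b$ in $P$ have $q_a=q_b$, hence $T_a\cap T_b\neq\emptyset$. Record $(a,b)$ as a new pair and remove both indices from $P$.

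Counting the pairs comes next. Each round removes exactly two indices, and rounds continue as long as $|P|\geq n+1$. So the number of pairs found is at least $\lfloor (m-n)/2\rfloor$. Using $m\geq 4n$ gives $n\leq m/4$, so $(m-n)/2\geq 3m/8$. Because $m/8\geq n/2\geq 1/2$, the rounding still leaves at least $m/4$ pairs. The pairs are disjoint as index pairs by construction, and each intersects by the pigeonhole choice.

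There is no real obstacle; the one point to check is that the slack absorbs the floor. I would also remark that the resulting matching is what feeds the importance-score bound. Each pair contributes $\lambda_{\min}\geq\Omega_r(1)$ by \cref{cor:IntersectingEnergyLouie}, and pairs that share qubits but not term indices are still summed as PSD operators. So disjointness of the index pairs, rather than of their qubit supports, is all that is needed.
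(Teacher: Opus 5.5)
Your proof is correct and is essentially the paper's argument: greedily remove intersecting pairs of term indices for as long as a pigeonhole count guarantees that one exists. The only difference is that you keep going while $|P|>n$, using one representative qubit per term, whereas the paper uses degree counting (each term covers $r$ qubits) to keep going while more than $n/r$ terms remain. Under $m\geq 4n$ either threshold yields $m/4$ pairs, and your treatment of the rounding is more careful than the paper's.
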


\begin{proof}
    Indeed, consider the following basic algorithm: let $S = [m]$ denote the starting set of all terms in the Hamiltonian, and let $j = 1$. Now, while there exists a pair of disjoint terms  $(a_j, b_j), a_j, b_j \in [m]$  such that $T_{a_j} \cap T_{b_j} \neq \emptyset$, we remove this pair of terms (i.e., $S \leftarrow S \setminus \{a_j, b_j\}$), and increment $j$ by $1$.

    We claim that in this process, the final value of $j$ will be at least $\frac{m}{2} +1$ (meaning that at least $\frac{m}{2}$ many pairs of intersecting terms are identified). Indeed, to see why, suppose that ${\vert}S{\vert} > \frac{n}{r}$. Then, we can see that the total sum of the degrees of the underlying qubits is $> \frac{n}{r} \cdot r = n$. Thus, there must be some qubit whose degree is $\geq 2$, which means that there must be some two terms that intersect.

    Thus, in essence, while ${\vert}S{\vert} > \frac{n}{r}$, we can find a pair of intersecting terms. Thus, we can identify 
    \[
    \geq \frac{m - \frac{n}{r}}{2} \geq \frac{m}{4}
    \]
    many intersecting terms in the Hamiltonian.
\end{proof}

With the above claim, we now make the following definition of \emph{importance scores}:

\begin{definition}\label{def:importanceScoreLouie}
    Let $\cH$ be a Hamiltonian with terms $H_i = (M_i, T_i): i \in [m]$. For each term $i \in [m]$, we define the \emph{importance score} of $H_i$ as:
    \[
    \mathrm{Importance}(H_i) = \max_{\psi \in \C^{2^n}} \frac{\langle \psi {\vert} H_i {\vert} \psi \rangle}{\langle \psi {\vert} \cH{\vert} \psi \rangle}.
    \]
\end{definition}

Importantly, we have the following claim regarding importance scores:

\begin{claim}\label{clm:importanceSamplingLouie}
    Let $\cH$ be a Hamiltonian with $r$-local terms $H_i = (M_i, T_i): i \in [m]$ on $n$ qubits, and let $\eps \in (0,1)$. Suppose that for every term $i \in [m]$, 
    \[
    \mathrm{Importance}(H_i) \leq p. 
    \]
    Let $C = \frac{100 n}{\eps^2}$. Now, let $L \subseteq [m]$ be a random sample of terms, where $i$ is sampled in $L$ independently with probability $p \cdot C$. Then, with probability $1 - 2^{- \Omega(n)}$ over the sample $L$, it is the case that, for every state $\psi \in \C^{2^n}$ that
    \[
    \sum_{i \in L} \frac{1}{p \cdot C} \cdot \langle \psi {\vert} H_i{\vert} \psi \rangle \in (1 \pm \eps) \cdot \langle \psi {\vert} \cH{\vert} \psi \rangle.
    \]
    Furthermore, with probability $1 - 2^{- \Omega(n)}$, the number of sampled terms is bounded by $O(\eps^{-2} \cdot m \cdot p \cdot n).$
\end{claim}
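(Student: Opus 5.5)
The plan is to apply the Matrix Chernoff bound (\cref{fact:matrix-bernstein-restated}) in a whitened form. Write $H := \sum_{i\in[m]} H_i$ as a $2^n\times 2^n$ PSD matrix and restrict to $\mathcal{V} := \mathrm{range}(H)$. Every $H_i$ is supported inside $\mathcal{V}$: $\ker H = \bigcap_i \ker H_i$ because each $H_i \succeq 0$. So it suffices to prove the Loewner inequalities for $H^{-1/2}(\cdot)H^{-1/2}$ on $\mathcal{V}$, with $H^{-1/2}$ the pseudo-inverse square root. The importance bound $\mathrm{Importance}(H_i)\le p$ says exactly that $\langle\psi|H_i|\psi\rangle \le p\langle\psi|H|\psi\rangle$ for all $\psi$. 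This is $H_i \preceq pH$, or equivalently $H^{-1/2}H_iH^{-1/2} \preceq p\,\Pi_{\mathcal{V}}$. If $pC \ge 1$, every term is kept with weight $1$ and there is nothing to prove, so assume $pC<1$.

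Define independent random matrices $Y_i := \frac{\mathbf{1}[i\in L]}{pC} H_i$. Then $\E\sum_i Y_i = H =: Z$. Each $Y_i \preceq \frac{1}{pC}H_i \preceq \frac{1}{C} Z$ holds with probability $1$, so \cref{fact:matrix-bernstein-restated} applies with $R = 1/C = \eps^2/(100n)$ and ambient dimension $N = 2^n$ (or $\dim \mathcal{V}\le 2^n$). It gives failure probability at most
\[
2\cdot 2^n \exp\!\left(-\frac{\eps^2}{3R}\right) = 2\cdot 2^n e^{-100n/3} = 2^{-\Omega(n)}
\]
for each of the two events $\sum_i Y_i \not\succeq (1-\eps)Z$ and $\sum_i Y_i\not\preceq(1+\eps)Z$. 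The statement of \cref{fact:matrix-bernstein-restated} is phrased with ``$\preceq(1-\eps)Z$''; the intended reading is the failure of the Loewner sandwich. Off the bad event, $(1-\eps)H \preceq \sum_{i\in L}\frac{1}{pC}H_i \preceq (1+\eps)H$, which is precisely the claimed energy guarantee for every $\psi$.

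For the size bound, $|L|$ is a sum of $m$ independent Bernoullis with mean $\mu = mpC = 100\,m p n/\eps^2$. A scalar Chernoff bound gives $\Pr[|L| \ge 2\mu] \le e^{-\mu/3}$. To see that this is $2^{-\Omega(n)}$, note that $\mu\ge 100n/\eps^2 \cdot mp$ and $mp \ge 1$. The latter holds because summing the pointwise bounds $H_i\preceq pH$ over $i$ gives $H \preceq mpH$, forcing $mp\ge1$ whenever $H\ne 0$. Hence $\mu = \Omega(n)$, and $|L| = O(\eps^{-2} m p n)$ with probability $1-2^{-\Omega(n)}$. A union bound over the two events finishes the proof.

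The only delicate step is handling a possibly singular $H$: the Chernoff fact needs $Y_i \preceq RZ$, and $Z$ may have a kernel. The containment $\ker H\subseteq \ker H_i$ resolves this, since both sides vanish on $\ker H$. The remaining work is bookkeeping of constants, so that $e^{-\eps^2 C/3}$ beats the $2^n$ dimension factor; this is exactly why $C$ carries the factor $n$.
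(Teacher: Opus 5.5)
Your proposal is correct and follows essentially the same route as the paper. Both apply Matrix Chernoff to $Y_i = \mathbf{1}[i\in L]\,H_i/(pC)$, using $Y_i \preceq \frac{1}{C}Z$ (i.e.\ $R = 1/C = \eps^2/(100n)$), and then a scalar Chernoff bound for $|L|$. Beyond that, you soundly fill in details the paper leaves implicit: the possibly singular $Z$, the degenerate case $pC\ge 1$, and the observation $mp\ge 1$, which makes the expected sample size $\Omega(n)$ so that the size bound genuinely holds with probability $1-2^{-\Omega(n)}$.
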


\begin{proof}
    We use Matrix Chernoff to prove this (see \cref{fact:matrix-bernstein-restated}). For each $i \in [m]$, we let 
    \[
    Y_i = \begin{cases}
        \frac{1}{p \cdot C} \cdot H_i \text{ w.p. } p \cdot C \\
        0 \quad \quad\text{ otherwise}.
    \end{cases}
    \]

    Letting $Z = \E[\sum_{i = 1}^m Y_i]$, we can then immediately see that $Z = \sum_{i = 1}^m H_i = \cH$. Likewise, because 
\[
    \mathrm{Importance}(H_i) \leq p,
    \]
    we then also know that for every $\psi \in \C^{2^n}$ that
    \[
    \langle \psi {\vert} H_i {\vert} \psi \rangle \leq p \cdot \langle \psi {\vert} \cH{\vert} \psi \rangle = p \cdot \langle \psi {\vert} Z{\vert} \psi \rangle.
    \]
    Because $Y_i \preceq \frac{1}{p \cdot C} \cdot H_i$, we then obtain that 
    \[
    Y_i \preceq \frac{1}{p \cdot C} \cdot p \cdot Z = \frac{1}{C} \cdot Z.
    \]

    Thus, by using the statement of \cref{fact:matrix-bernstein-restated}, we see that with probability $1 - 2 \cdot 2^n \cdot 2^{- \eps / 3C} = 1-2^{- \Omega(n)}$ by our choice of $C$, that 
    \[
    (1 - \eps) Z \preceq \sum_{i = 1}^m Y_i \preceq (1 + \eps) Z.
    \]

    Finally, we can see that $\sum_{i = 1}^m Y_i$ has the same distribution as $\sum_{i \in L} \frac{1}{p \cdot C} \cdot \langle \psi {\vert} H_i{\vert} \psi \rangle$, by construction. 

    Thus, we obtain that with probability $1 - 2^{- \Omega(m)}$ over the sample $L$ that 
        \[
    \sum_{i \in L} \frac{1}{p \cdot C} \cdot \langle \psi {\vert} H_i{\vert} \psi \rangle \in (1 \pm \eps) \cdot \langle \psi {\vert} \cH{\vert} \psi \rangle,
    \]
    as we desire.

    The bound on the number of sampled terms follows from a simple Chernoff bound.
\end{proof}

Now, we use \cref{clm:disjointIntersectingTermsLouie} and \cref{cor:IntersectingEnergyLouie} to show the following:

\begin{claim}\label{clm:lowerBoundHEigenLouie}
    Let $\cH = (M, T_i): i \in [m]$ denote an $r$-local Hamiltonian with $m$ terms over $n$ qubits, and random predicate matrix $M \in \C^{2^r \times 2^r}$. Suppose $m \geq 4n$ of rank $R \geq 2^{r-1} + 1$. Then, with probability $1 - o_r(1)$ over $M$, it is the case that \[
    \lambda_{\min}(\cH) \geq \frac{m}{4} \cdot \Omega_r(1).
    \]
\end{claim}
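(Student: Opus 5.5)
Since each $M_{T_i}$ is PSD, we can drop terms without increasing the Hamiltonian, so it suffices to lower-bound the minimum eigenvalue of a sub-sum built from disjoint intersecting pairs. Because $m \geq 4n$, \cref{clm:disjointIntersectingTermsLouie} gives at least $m/4$ pairs $(a_j, b_j)$, pairwise disjoint as index pairs, with $T_{a_j} \cap T_{b_j} \neq \emptyset$. The terms of distinct pairs are distinct, and each term appears in at most one pair. Hence
\[
\cH \;\succeq\; \sum_{j} \left( M_{T_{a_j}} + M_{T_{b_j}} \right),
\]
since the omitted terms are PSD.

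Next we bound each pair's contribution. Every pair intersects in at least one qubit, so \cref{cor:IntersectingEnergyLouie} applies: $\lambda_{\min}(M_{T_{a_j}} + M_{T_{b_j}}) \geq \Omega_r(1)$. Hence $M_{T_{a_j}} + M_{T_{b_j}} \succeq \gamma_r \Id$ for some $\gamma_r = \Omega_r(1)$. Summing over at least $m/4$ pairs gives $\cH \succeq \frac{m}{4}\gamma_r \Id$, which is the claim.

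The main obstacle is making the probability $1-o_r(1)$ hold simultaneously for all pairs, since we cannot afford a union bound over $m$ pairs. The resolution is that $M$ is a single shared matrix, and the lower bound depends only on the intersection pattern of the pair. Up to relabeling qubits, the local operator $\hat M = M|_T \otimes \Id_{U-T} + M|_{T'} \otimes \Id_{U-T'}$ on $U = T \cup T'$ depends only on $|T \cap T'| \in \{1, \dots, r-1\}$ and on the positions within $T$ and $T'$ of the shared qubits. So there are only $O_r(1)$ distinct configurations.

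Define $\gamma_r$ as the minimum over these finitely many configurations of the threshold from \cref{cor:IntersectingEnergyLouie}. A union bound over $O_r(1)$ events, each holding with probability $1-o_r(1)$, still succeeds with probability $1-o_r(1)$. By \cref{lem:nullSpaceLouie}, each $\hat M$ is nonsingular almost surely, so each configuration's minimum eigenvalue is a positive random variable whose distribution depends only on $r$. On this single event, every pair in $\cH$ satisfies the bound at once, and the bound is independent of $n$ and $m$.
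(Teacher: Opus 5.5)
Your argument is correct and is the same as the paper's: extract at least $m/4$ index-disjoint intersecting pairs via \cref{clm:disjointIntersectingTermsLouie}, lower-bound $\lambda_{\min}$ of each pair's sum by \cref{cor:IntersectingEnergyLouie}, and sum, dropping the remaining terms since they are PSD. Your extra observation fills in a point the paper's proof passes over when it applies the per-pair probability bound to all $m/4$ pairs at once: because $M$ is shared, each pair's operator is, up to relabeling qubits, one of only $O_r(1)$ configurations, so a single event over $M$ covers every pair.
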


\begin{proof}
    Indeed, by \cref{clm:disjointIntersectingTermsLouie}, we know that we can find $\geq m/4$ disjoint pairs of terms $H_{a_j} = (M, T_{a_j}), H_{b_j}= (M, T_{b_j})$ such that $T_{a_j} \cap T_{b_j} \neq \emptyset$. Then, by  \cref{cor:IntersectingEnergyLouie}, we know that with probability $0.99$,
    \[
    \lambda_{\min}(H_{a_j} + H_{b_j}) \geq \Omega_r(1).
    \]

    Thus, summing over these disjoint pairs, we see that 
    \[
    \lambda_{\min}(\cH) \geq \sum_{j \in [m/4]}\lambda_{\min}(H_{a_j} + H_{b_j}) = \frac{m}{4} \cdot \Omega_r(1) \geq \Omega_r(m).
    \]
\end{proof}

Finally, by using \cref{clm:lowerBoundHEigenLouie}, we can now bound the importance of every individual Hamiltonian term:

\begin{claim}\label{clm:boundImportanceLouie}
    Let $\cH = (M, T_i): i \in [m]$ denote an $r$-local Hamiltonian with $m$ terms over $n$ qubits, and random predicate matrix $M \in \C^{2^r \times 2^r}$ of rank $R \geq 2^{r-1} + 1$. Suppose $m \geq 4n$. Then, with probability $1 - o_r(1)$ over $M$, for every $i \in [m]$, it is the case that 
    \[
    \mathrm{Importance}(H_i) \leq \frac{1}{\Omega_r(m)}.
    \]
\end{claim}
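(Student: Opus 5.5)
The bound will follow by combining the spectral lower bound of \cref{clm:lowerBoundHEigenLouie} with the uniform spectral upper bound on a single term from \cref{cor:IntersectingEnergyLouie}. We condition on the single event over $M$, of probability $1 - o_r(1)$, in which \cref{clm:lowerBoundHEigenLouie} holds, namely $\lambda_{\min}(\cH) \geq \frac{m}{4}\cdot \Omega_r(1)$. Since all terms share the same predicate $M$, this is one event over $M$ and needs no union bound over $i \in [m]$. Here $m \geq 4n$ is exactly the hypothesis needed to invoke that claim.

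Fix any $i \in [m]$ and any nonzero $\psi \in \C^{2^n}$. Because $H_i = M\vert_{T_i}\otimes \Id_{[n]\setminus T_i}$ has the same spectrum as $M$ up to multiplicity, we get
\[
\langle \psi {\vert} H_i {\vert} \psi\rangle \leq \lambda_{\max}(M)\,\Vert\psi\Vert_2^2 \leq O_r(1)\,\Vert\psi\Vert_2^2,
\]
using the $C$-boundedness assumption (\cref{def:OnlyCbounded}, \cref{rmk:assumptionsDist}) as in the second part of \cref{cor:IntersectingEnergyLouie}. On the other hand,
\[
\langle \psi {\vert} \cH {\vert} \psi\rangle \geq \lambda_{\min}(\cH)\,\Vert\psi\Vert_2^2 \geq \Omega_r(m)\,\Vert\psi\Vert_2^2.
\]
In particular, the denominator in \cref{def:importanceScoreLouie} never vanishes for $\psi \neq 0$, so the importance score is well defined. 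Dividing the two bounds and taking the maximum over $\psi$ yields $\mathrm{Importance}(H_i) \leq O_r(1)/\Omega_r(m) = 1/\Omega_r(m)$, uniformly in $i$.

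The only delicate point is the probabilistic bookkeeping: the constants hidden in $\Omega_r(1)$ must depend only on $r$ and the distribution of $M$, not on $n$ or $m$. This holds because the lower bound comes from finitely many local configurations of two intersecting terms. Also, the probability over $M$ is spent only once, inside \cref{clm:lowerBoundHEigenLouie}. Since the predicate is shared, each intersecting pair's two-term operator $\hat M$ is determined by $M$ and the intersection pattern, and there are only $O_r(1)$ such patterns. So the good event is really a finite intersection of events, each depending only on $M$ and $r$. I would remark on this to justify that the $1-o_r(1)$ bound does not degrade with $m$. After that, the argument is the two-line Rayleigh quotient comparison above.
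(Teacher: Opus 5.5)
Your proposal is correct and is essentially the paper's proof: bound the Rayleigh quotient by $\lambda_{\max}(H_i)/\lambda_{\min}(\cH)$, using the boundedness part of \cref{cor:IntersectingEnergyLouie} for the numerator and \cref{clm:lowerBoundHEigenLouie} (which needs $m \geq 4n$) for the denominator. Your extra bookkeeping remark is also a useful clarification. With a shared predicate there are only $O_r(1)$ intersection patterns, so a single event over $M$ controls every disjoint pair simultaneously. The paper's proof of \cref{clm:lowerBoundHEigenLouie} leaves this implicit when it applies a per-pair probability bound to all $m/4$ pairs at once.
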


\begin{proof}
    Fix any individual term $H_i$. We see that 
    \[
    \mathrm{Importance}(H_i) = \max_{\psi \in \C^{2^n}} \frac{\langle \psi {\vert} H_i {\vert} \psi \rangle}{\langle \psi {\vert} \cH{\vert} \psi \rangle} \leq \frac{\lambda_{\max}(H_i)}{\lambda_{\min}(\cH)} \leq \frac{1}{\Omega_r(m)},
    \]
    where the final inequality uses \cref{clm:lowerBoundHEigenLouie} and \cref{cor:IntersectingEnergyLouie}.
\end{proof}

With this established, we are now ready to prove \cref{thm:EfficientRandomSparsifierLouie}:

\begin{proof}[Proof of \cref{thm:EfficientRandomSparsifierLouie}.]
    Indeed, let $\eps$ be given, and suppose that $M$ is generic, as required by \cref{lem:nullSpaceLouie} (which occurs with probability $0.99$ over $M$). Further, let us suppose that $m \geq 4n$.
    
    Then, via \cref{clm:boundImportanceLouie}, we see that for every term $H_i$ in $\cH$ that 
    \[
    \mathrm{Importance}(H_i) \leq \frac{1}{\Omega_r(m)} = p.
    \]
    Now, we can directly invoke \cref{clm:importanceSamplingLouie}; this shows that with probability $1 - 2^{- \Omega(n)}$, we can randomly sample the terms if $\cH$ at rate $p \cdot C$, assigning weight $1 / p \cdot C$ to those sampled terms. Let us use $L \subseteq [m]$ to denote these sampled terms, and let $\widetilde{\cH}$ denote the Hamiltonian with terms from $L$, each weighted by $1 / p \cdot C$. 

    \cref{clm:importanceSamplingLouie} implies that, for every state $\psi \in \C^{2^n}$ that
    \[
    \langle \psi {\vert} \widetilde{\cH}{\vert} \psi \rangle \in (1\pm\eps) \cdot \langle \psi {\vert} \cH{\vert} \psi \rangle.
    \]

    Simultaneously, \cref{clm:importanceSamplingLouie} implies that 
    \[
    {\vert}L{\vert}\leq O(\eps^{-2} \cdot m \cdot p \cdot n)\leq O_M(\eps^{-2}n).
    \]
    Importantly, this quantity is \emph{independent} of the number of qubits in our system, and so the sparsifier size scales only with $n$ and $\eps^2$.

    Finally, recall that our invocation of \cref{prop:r-partite-WLOG} introduces logarithmic factors of $n$ to the sparsifier size, thus yielding our theorem.
\end{proof}

\paragraph{Sparsifiers for the Different Predicate Case}

We can prove \cref{thm:EfficientRandomSparsifierDifferentMLouie} in an identical manner:

\begin{claim}\label{clm:lowerBoundHEigenDifferentLouie}
    Let $\cH = (M_i, T_i): i \in [m]$ denote an $r$-local Hamiltonian with $m$ terms over $n$ qubits, and random predicate matrices $M_i \in \C^{2^r \times 2^r}$ of rank $R \geq 2^{r-1} + 1$. Suppose $m \geq 4n$. Then, with probability $1 - 2^{- \Omega(n)}$ over the random predicate matrices $M_i \in \C^{2^r \times 2^r}$, it is the case that \[
    \lambda_{\min}(\cH) \geq \frac{m}{4} \cdot \Omega_r(1).
    \]
\end{claim}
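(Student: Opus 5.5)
We follow the proof of \cref{clm:lowerBoundHEigenLouie}. The new issue is that each pair of intersecting terms now gives a good spectral lower bound only with constant probability, so we need a concentration step over independent pairs.

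First, apply \cref{clm:disjointIntersectingTermsLouie}. It only uses the hypergraph $\{T_i\}$ and not the matrices, so it gives $k \geq m/4$ disjoint pairs $(a_j, b_j)$ with $T_{a_j} \cap T_{b_j} \neq \emptyset$. These pairs are fixed before we look at the randomness of the $M_i$. The pairs are disjoint as sets of term indices, so the pairs of matrices $(M_{a_j}, M_{b_j})$ for $j \in [k]$ are mutually independent.

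Next, for each $j$ let $X_j$ be the indicator of the event
\[
\lambda_{\min}\left(M_{a_j}{\vert}_{T_{a_j}} \otimes \Id_{[n] - T_{a_j}} + M_{b_j}{\vert}_{T_{b_j}} \otimes \Id_{[n] - T_{b_j}}\right) \geq \gamma_r .
\]
By \cref{cor:IntersectingEnergyDifferentMLouie}, $\Pr[X_j = 1] \geq 1/2$. Here $\gamma_r = \Omega_r(1)$ is a single constant depending only on $r$. If needed, we take the minimum over the finitely many intersection patterns $|T \cap T'| \in \{1, \dots, r\}$; up to relabeling qubits the distribution of $\hat{M}$ depends only on that pattern. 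The $X_j$ are independent. A Chernoff bound then gives $\sum_j X_j \geq k/4$ with probability $1 - 2^{-\Omega(k)} \geq 1 - 2^{-\Omega(m)} \geq 1 - 2^{-\Omega(n)}$, using $m \geq 4n$.

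Finally, every term is PSD, so we may discard all terms except the good pairs:
\[
\cH \succeq \sum_{j : X_j = 1} \left(H_{a_j} + H_{b_j}\right) \succeq \gamma_r \cdot \frac{k}{4} \cdot \Id \succeq \frac{m}{16} \cdot \Omega_r(1) \cdot \Id .
\]
The constant $1/16$ is absorbed into $\Omega_r(1)$, which gives $\lambda_{\min}(\cH) \geq \frac{m}{4}\cdot\Omega_r(1)$.

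The main obstacle is making the constant $\gamma_r$ uniform. We need it to hold simultaneously for every pair, with no dependence on $n$ or on the positions of $T, T'$. This is exactly what \cref{cor:IntersectingEnergyDifferentMLouie} provides through the fact that the distribution is fixed as a function of $r$ (\cref{rmk:assumptionsDist}). The independence needed for the Chernoff step must also be checked: it holds because the pairs are determined by the hypergraph alone and use disjoint sets of independently drawn matrices.
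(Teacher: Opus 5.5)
Your proof is correct and is essentially the paper's argument. You extract $\geq m/4$ disjoint intersecting pairs via \cref{clm:disjointIntersectingTermsLouie}, use \cref{cor:IntersectingEnergyDifferentMLouie} to make each pair good with probability $\geq 1/2$, apply a Chernoff bound over the independent pairs, and discard the remaining PSD terms. You are also more careful than the paper about independence and about making $\gamma_r$ uniform; the finite family to minimize over is really the set of position-matchings between the ordered tuples $T$ and $T'$ rather than just $|T \cap T'|$, but finiteness is all you use.
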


\begin{proof}
    Indeed, by \cref{clm:disjointIntersectingTermsLouie}, we know that we can find $\geq m/4$ disjoint pairs of terms $H_{a_j} = (M, T_{a_j}), H_{b_j}= (M, T_{b_j})$ such that $T_{a_j} \cap T_{b_j} \neq \emptyset$. Then, by  \cref{cor:IntersectingEnergyDifferentMLouie}, we know that with probability $1/2$ over each pair,
    \[
    \lambda_{\min}(H_{a_j} + H_{b_j}) \geq \Omega_r(1).
    \]

    Thus, summing over these disjoint pairs and taking a Chernoff bound, we see that with probability $1 - 2^{- \Omega(n)}$, at least a $1/4$ fraction of these pairs have $\lambda_{\min}(H_{a_j} + H_{b_j}) \geq \Omega_r(1)$. Thus, with probability $1 - 2^{- \Omega(n)}$,
    \[
    \lambda_{\min}(\cH) \geq \sum_{j \in [m/4]}\lambda_{\min}(H_{a_j} + H_{b_j}) = \frac{m}{4} \cdot \Omega_r(1) \geq \Omega_r(m).
    \]
\end{proof}

With \cref{clm:lowerBoundHEigenDifferentLouie}, we can now bound the importance of every individual Hamiltonian term:

\begin{claim}\label{clm:boundImportanceDifferentLouie}
    Let $\cH = (M, T_i): i \in [m]$ denote an $r$-local Hamiltonian with $m$ terms over $n$ qubits, and random predicate matrix $M \in \C^{2^r \times 2^r}$ of rank $R \geq 2^{r-1}+1$. Suppose $m \geq 4n$. Then, with probability $1 - 2^{- \Omega(n)}$, for every $i \in [m]$, it is the case that 
    \[
    \mathrm{Importance}(H_i) \leq \frac{1}{\Omega_r(m)}.
    \]
\end{claim}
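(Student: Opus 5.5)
The plan is to mirror the proof of \cref{clm:boundImportanceLouie}, with \cref{clm:lowerBoundHEigenDifferentLouie} replacing \cref{clm:lowerBoundHEigenLouie}. (The statement writes a single $M$, but in context the terms carry independent random $M_i$, so I treat that case.)

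Fix any term $H_i = (M_i)_{T_i}$. For every $\psi \neq 0$ we bound the numerator by the largest eigenvalue of $H_i$ and the denominator by the smallest eigenvalue of $\cH$:
\[
\frac{\langle \psi {\vert} H_i {\vert} \psi \rangle}{\langle \psi {\vert} \cH {\vert} \psi \rangle} \leq \frac{\lambda_{\max}(H_i)\,\Vert \psi\Vert_2^2}{\lambda_{\min}(\cH)\,\Vert \psi\Vert_2^2} = \frac{\lambda_{\max}(M_i)}{\lambda_{\min}(\cH)}.
\]
Taking the maximum over $\psi$ bounds $\mathrm{Importance}(H_i)$ by the right-hand side.

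For the numerator, item 1 of \cref{cor:IntersectingEnergyDifferentMLouie} applies. By the $C$-boundedness assumption in \cref{rmk:assumptionsDist} and \cref{def:OnlyCbounded}, every $M_i$ satisfies $\lambda_{\max}(M_i) \leq O_r(1)$ with probability $1$. This holds deterministically and simultaneously for all $i$, so no union bound is needed.

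For the denominator, since $m \geq 4n$, \cref{clm:lowerBoundHEigenDifferentLouie} gives $\lambda_{\min}(\cH) \geq \Omega_r(m)$ with probability $1 - 2^{-\Omega(n)}$ over the $M_i$.

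The key point is that this is one global event about $\cH$, not one event per term. So on this single event, the bound $\mathrm{Importance}(H_i) \leq O_r(1)/\Omega_r(m) = 1/\Omega_r(m)$ holds for all $i \in [m]$ at once. The claimed probability $1 - 2^{-\Omega(n)}$ then follows with no loss from a union bound over $m$ terms.

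The only real content is already in \cref{clm:lowerBoundHEigenDifferentLouie}. There, the disjoint intersecting pairs from \cref{clm:disjointIntersectingTermsLouie} are each good independently with probability $\geq 1/2$ by item 2 of \cref{cor:IntersectingEnergyDifferentMLouie}, because disjoint pairs use disjoint sets of independent $M_i$. A Chernoff bound over the $\geq m/4 \geq n$ pairs then gives a constant fraction of good pairs with failure probability $2^{-\Omega(n)}$. Since every term is PSD, $\cH$ is at least the sum of the good pair Hamiltonians in Loewner order, which yields $\lambda_{\min}(\cH) \geq \Omega_r(m)$.

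I expect no further obstacle here. The one thing to check is that the constants hidden in $\Omega_r$ depend only on the distribution and $r$, and not on $n$, which holds by \cref{rmk:assumptionsDist}.
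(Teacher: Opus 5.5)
Your proposal is correct and follows the paper's proof: you bound $\mathrm{Importance}(H_i)$ by $\lambda_{\max}(H_i)/\lambda_{\min}(\cH)$, control the numerator with the boundedness assumption (item 1 of \cref{cor:IntersectingEnergyDifferentMLouie}), and control the denominator with \cref{clm:lowerBoundHEigenDifferentLouie}. Your two additions are accurate and only make explicit what the paper leaves implicit: the single $M$ in the statement must be read as independent $M_i$, since that is what the cited claim needs for probability $1-2^{-\Omega(n)}$, and the bound holds for all $i$ on one global event, so no union bound over terms is needed.
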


\begin{proof}
    Fix any individual term $H_i$. We see that 
    \[
    \mathrm{Importance}(H_i) = \max_{\psi \in \C^{2^n}} \frac{\langle \psi {\vert} H_i {\vert} \psi \rangle}{\langle \psi {\vert} \cH{\vert} \psi \rangle} \leq \frac{\lambda_{\max}(H_i)}{\lambda_{\min}(\cH)} \leq \frac{1}{\Omega_r(m)},
    \]
    where the final inequality uses \cref{clm:lowerBoundHEigenDifferentLouie} and \cref{cor:IntersectingEnergyDifferentMLouie}.
\end{proof}

With this established, we are now ready to prove \cref{thm:EfficientRandomSparsifierDifferentMLouie}:

\begin{proof}[Proof of \cref{thm:EfficientRandomSparsifierDifferentMLouie}.]
    Indeed, let $\eps$ be given, and suppose that $M$ is generic, as required by \cref{lem:nullSpaceLouie} (which occurs with probability $1$ over $M$). Further, let us suppose that $m \geq 4n$.
    
    Then, via \cref{clm:boundImportanceLouie}, we see that, with probability $1 - 2^{- \Omega(n)}$, for every term $H_i$ in $\cH$ that 
    \[
    \mathrm{Importance}(H_i) \leq \frac{1}{\Omega_r(m)} := p.
    \]
    Now, we can directly invoke \cref{clm:importanceSamplingLouie}; this shows that with probability $1 - 2^{- \Omega(n)}$, we can randomly sample the terms if $\cH$ at rate $p \cdot C$, assigning weight $1 / p \cdot C$ to those sampled terms. Let us use $L \subseteq [m]$ to denote these sampled terms, and let $\widetilde{\cH}$ denote the Hamiltonian with terms from $L$, each weighted by $1 / p \cdot C$. 

    \cref{clm:importanceSamplingLouie} implies that, for every state $\psi \in \C^{2^n}$ that
    \[
    \langle \psi {\vert} \widetilde{\cH}{\vert} \psi \rangle \in (1\pm\eps) \cdot \langle \psi {\vert} \cH{\vert} \psi \rangle.
    \]

    Simultaneously, \cref{clm:importanceSamplingLouie} implies that 
    \[
    {\vert}L{\vert}\leq O(\eps^{-2} \cdot m \cdot p \cdot n)\leq O_r(\eps^{-2} n).
    \]
    Importantly, this quantity is \emph{independent} of the number of qubits in our system, and so the sparsifier size scales only with $n$ and $\eps^2$.

    Finally, recall that our invocation of \cref{prop:r-partite-WLOG} introduces logarithmic factors of $n$ to the sparsifier size, thus yielding our theorem.
\end{proof}

\subsection{Separations Between Sparsifiability of Classical and Quantum Systems}\label{subsec:separation}

So far, we have showed that random arity-$r$ Hamiltonians of rank at least $2^{r-1}+1$ have near-quadratic size sparsifiers. Now we show that this result is in sharp contrast to the classical setting--the case in which our $M \in \C^{2^r \times 2^r}$ is a diagonal matrix with $\{0,1\}$-entries  (see \cref{prop:classical-equivalence}). In particular, we show the following result.

\begin{theorem}\label{thm:random-classical-LB}
Let $R \subseteq \{0,1\}^r$ be a uniformly random predicate with cardinality $k \in [\alpha, 1-\alpha] \cdot 2^r$ where $\alpha \in (0,1/2)$ are constants independent of $r$. For any sufficiently large positive integer $n$ and any $\eps > 0$, with probability $1-o(1)$ (where $o(1)$ hids a function of $r$ approaching zero), we have that
\[
    \SPR(R, n, \eps) \geq \Omega_r(n^{\lceil \log_2 r - \log_2 \log_2(1/\alpha) - 2\rceil}).
\]
\end{theorem}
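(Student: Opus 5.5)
The plan is to lower bound $\SPR(R,n,\eps)$ by non-redundancy and to produce non-redundant instances from an $\AND_k$-type structure hidden inside $R$. First, any non-redundant instance is a barrier to sparsification: by the argument in the introduction, for classical predicates via \cref{prop:classical-equivalence}, every $\eps$-sparsifier must keep every term of a non-redundant sub-instance. Hence $\SPR(R,n,\eps)\ge \NRD(R,n)$. It therefore suffices to show that, with probability $1-o(1)$ over $R$, $\NRD(R,n)\ge\Omega_r(n^{k})$ with $k=\lceil \log_2 r-\log_2\log_2(1/\alpha)-2\rceil$.

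The classical tool is the known lower bound $\NRD(\AND_k,n)=\Omega(n^k)$, the classical analogue of \cref{thm:tensor-intro}: take a $k$-partite hypergraph $V_1\times\cdots\times V_k$ with $|V_i|=n/k$. For a hyperedge $e$, the assignment that sets the vertices of $e$ to $1$ and all other vertices to $0$ satisfies $\AND$ on $e$ only. To transfer this to $R$, I will look for a restriction (a pp-definition) of $R$ that behaves like $\AND_k$ on a subcube. Concretely, I want coordinates $S\subseteq[r]$ with $|S|=k$, with the remaining $r-k$ coordinates split into groups that are each fixed to a constant pattern, such that:
\begin{itemize}
\item the restriction to $S$ contains $1^S$;
\item it excludes all of $\{0,1\}^S\setminus\{1^S\}$, or at least all strings of weight at most $k-1$ that arise in the construction.
\end{itemize}
The cleanest version finds $a\in\{0,1\}^{r-k}$ such that $R\cap(\{0,1\}^S\times\{a\})=\{1^S\times a\}$, up to flipping bits. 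The instance then puts $k$ large parts on the coordinates in $S$ and hardwires the other $r-k$ coordinates to a constant number of fixed qubits set to $a$. Strictly speaking, a fixed vertex needs both values $0$ and $1$, so I will use two global vertices $z_0,z_1$ forced to $0$ and $1$ in the witnessing assignments. Each term is then $(e,\text{constants})$, and the $\AND_k$ witnesses carry over directly, giving $\Omega_r(n^k)$ non-redundant terms.

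The probabilistic step is to show such an $(S,a)$ exists with high probability. Fix $S$ and $a$. The subcube $\{0,1\}^S\times\{a\}$ has $2^k$ points. The probability that it meets $R$ in exactly one prescribed point is roughly $\alpha'(1-\alpha')^{2^k-1}$, where $\alpha'=k/2^r$ is the density of $R$, so the probability is at least about $\alpha^{2^k}$. The number of pairs $(S,a)$ with disjoint supports is at least $2^{r-k}$. I will choose a family of nearly independent subcubes, for example disjoint subcubes all sharing one fixed $S$ and different $a$. There are $2^{r-k}$ of these, and the sampling-without-replacement correlations are mild. The expected number of successes is then $2^{r-k}\alpha^{2^k}$. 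This tends to infinity, and a second-moment bound shows success with probability $1-o(1)$, when $2^k\log_2(1/\alpha)\le (r-k)-\omega(1)$. That condition is roughly $k\le \log_2 r-\log_2\log_2(1/\alpha)-2$ for large $r$, which matches the stated exponent; the $-2$ absorbs the slack.

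I expect the main obstacle to be the concentration step under the fixed-cardinality (hypergeometric) model. I would handle it by coupling to an i.i.d. Bernoulli$(k/2^r)$ model conditioned on an event of probability $\Omega(2^{-r/2})$, or by a direct negative-association argument, together with verifying that the subcubes with different $a$ are disjoint.
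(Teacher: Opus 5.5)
Your proposal is correct and is essentially the paper's proof: the paper also looks for a subcube $\{b\}\times\{0,1\}^c$ meeting $R$ in exactly one point (a projection of $R$ onto $\AND_c$) among the $2^{r-c}$ disjoint choices of $b$, which are independent under the i.i.d.\ Bernoulli$(k/2^r)$ model, and transfers to the fixed-cardinality model by conditioning on $|X|=k$, an event of probability at least $2^{-(r+1)/2}$; the only difference is that you unpack \cref{prop:literal-reduction} and \cref{prop:AND-lb} into an explicit non-redundant instance. Two small fixes (writing $c$ for your exponent, since $k$ already denotes $|R|$): first, each term must consist of $r$ distinct qubits, so hardwire the constant coordinates with $r-c$ distinct fixed qubits rather than two global vertices; second, the conditioning route needs $2^{r-c}\alpha^{2^c}$ to beat the $2^{(r+1)/2}$ conditioning loss, i.e.\ to be $\gg r$ rather than merely tend to infinity, which your choice of $c$ gives with room to spare since then $2^{r-c}\alpha^{2^c}\geq 2^{r/2+1}/r$.
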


\begin{remark}
If $k = 2^{r-1}+1$ like in \cref{thm:EfficientRandomSparsifierLouie}, then $\alpha$ can be taken to be $1/2-o(1)$, so the lower bound is at least $\Omega_r(n^{\lceil \log_2 r - 3\rceil})$. Thus, random classical predicates of a fixed cardinality are often much more difficult to sparsify than random Hamiltonians of a fixed rank.
\end{remark}

To build up to \Cref{thm:random-classical-LB}, we need a key tool in the study of classical CSP sparsification and related questions which is the use of rich families of \emph{gadget reductions} to connect the sparsifiability of some predicate with the sparsifiability of other predicates~\cite{FiltserK17,ButtiZ20,ChenJP20,LagerkvistW20,BessiereCK20,Carbonnel22,KhannaPS24,KhannaPS25,BrakensiekG25,BrakensiekGJLW25}. In particular, we use the ``projections'' framework for Boolean CSPs~\cite{FiltserK17,KhannaPS25}.

We give a summary of the projections framework for Boolean CSPs. Given a pool of Boolean variables $x_1, \hdots, x_c$, we define a \emph{literal} to be one of the following:
\begin{itemize}
\item A constant $\ell \in \{0,1\}$.
\item A variable $\ell \in \{x_1, \hdots, x_c\}$.
\item A negated variable $\ell \in \{\bar{x}_1, \hdots, \bar{x}_c\}$.
\end{itemize}
We let $\cL_c = \{0,1,x_1, \hdots, x_c, \bar{x}_1, \hdots, \bar{x}_c\}$ denote this list of literals. 

Given a relation $R \in \{0,1\}^r$ and a sequence of literals $\ell_1, \hdots, \ell_r \in \cL_c$ (possibly with repetition), we define the \emph{projection} of $R$ with respect to $\ell_1, \hdots, \ell_r$ to be a relation $R(\ell_1, \hdots, \ell_r) \in \{0,1\}^c$ such that $(x_1, \hdots, x_c) \in R(\ell_1, \hdots, \ell_r)$ if and only if the evaluation of $\ell_1, \hdots, \ell_r$ with respect to $(x_1, \hdots, x_c)$ lies in $R$.

\begin{remark}\label{rem:example-R}
As a concrete example, consider $r = 3, c = 2$ and $R = \{001,101,111\}$. Then, we have that
\begin{align*}
R(x_1, x_2, 1) &= \{00, 10, 11\},\\
R(x_1, \bar{x}_1, x_2) &= \{11\}.
\end{align*}
\end{remark}

A crucially property of projections is that they can only decrease the sparsifier size.
\begin{proposition}[Implicit in \cite{KhannaPS25}, see also \cite{Carbonnel22}]\label{prop:literal-reduction}
Fix $r \geq c \geq 1$. For any $R \in \{0,1\}^r$ and any $\ell_1, \hdots, \ell_r \in \cL_c$, we have for all $n \geq 1$ and $\eps > 0$ that
\[
    \SPR(R, n, \eps) \geq \Omega_{r}(\SPR(R(\ell_1, \hdots, \ell_r), n, \eps)).
\]
\end{proposition}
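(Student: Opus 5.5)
The plan is a reduction: turn any instance of the projected relation $R' := R(\ell_1,\hdots,\ell_r)$ on $n$ variables into an instance of $R$ on $O_r(n)$ variables. A sparsifier of the $R$-instance should then pull back to a sparsifier of the $R'$-instance, with the size blown up by at most a factor depending on $r$. Since constant variance in $n$ is absorbed into the $\Omega_r$ (we take $\SPR(R,n,\eps)$ to be monotone in $n$ and only lose constant factors when $n$ is replaced by $O_r(n)$), it suffices to exhibit such a construction for an $R'$-instance $\cH'$ that witnesses $\SPR(R',n,\eps)$.

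\emph{Construction.} Let $\cH' = \{S_1,\hdots,S_m\}$ with $S_i = (s_{i,1},\hdots,s_{i,c}) \in [n]^c$. For each original variable $y_v$ we create two new variables, $y_v$ and its intended negation $\bar y_v$. We also add two auxiliary blocks of variables $Z_0$ and $Z_1$, each of size $\lceil n/r\rceil$ or so, intended to be constantly $0$ and constantly $1$.

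Each constraint $S_i$ becomes the $R$-constraint on the tuple $(t_1,\hdots,t_r)$, where:
\begin{itemize}
\item $t_a = y_{s_{i,j}}$ if $\ell_a = x_j$;
\item $t_a = \bar y_{s_{i,j}}$ if $\ell_a = \bar x_j$;
\item $t_a$ is a constant variable in $Z_0$ or $Z_1$ if $\ell_a$ is a constant.
\end{itemize}

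Repeated literals would create a tuple with repeated variables, which the framework does not permit, since each $T \in [n]^r$ has distinct entries. To avoid this, we take $r$ copies of each variable class, $y_v^{(a)}$ and $\bar y_v^{(a)}$ for $a \in [r]$, and use copy $a$ in position $a$. This keeps the variable count at $O_r(n)$.

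\emph{Assignment correspondence.} An assignment $x\in\{0,1\}^n$ to $\cH'$ lifts to a canonical assignment of the new instance: every copy of $y_v$ gets $x_v$, every copy of $\bar y_v$ gets $1-x_v$, and the constant blocks are set to their intended values. Under this lift, the $i$-th $R$-constraint is satisfied if and only if $x$ satisfies $R'$ on $S_i$. Hence any $\eps$-sparsifier $w$ of the $R$-instance, restricted to canonical assignments, is an $\eps$-sparsifier of $\cH'$ with the same support. That gives $\SPR(R,O_r(n),\eps)\geq \SPR(R',n,\eps)$, which rescales to the claim.

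\emph{Main obstacle.} The subtle point is that the $R$-instance is a single instance in which every constraint receives a distinct tuple, while several $S_i$ may map to the same tuple. This happens only when two constraints differ just in coordinates $j$ that do not appear among the literals. There are at most $O_r(1)$ such collisions per image only if we are careful, so I would first preprocess $\cH'$: variables $x_j$ not used by any literal make $R'$ independent of them, and I would handle this case separately.

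\emph{Handling collisions.} My first attempt is a weighted merge. Duplicate tuples are merged into one constraint whose weight is the multiplicity. Because the sparsifiers in question may be weighted, and the lower-bound instances for $\SPR(R',\cdot)$ can be taken with unused coordinates collapsed, merging loses only the factor $2^{O(r)}$ from the number of preimage patterns.

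If the weighted form is unavailable, I would instead realize the multiplicity by using distinct constant-variable choices from $Z_0$ and $Z_1$. Each position holding a constant can draw on $\Theta(n)$ variables, which gives enough distinct tuples.

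I expect this bookkeeping, ensuring injectivity and distinctness of variables within each tuple while keeping $O_r(n)$ variables, to be the only genuinely delicate step. The rest is the direct correspondence of satisfied-constraint counts.
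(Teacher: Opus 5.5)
Your construction is sound when every variable $x_j$, $j\in[c]$, occurs in some literal $\ell_a$. In that case the position-indexed copies $y^{(a)}_v,\bar y^{(a)}_v$, plus a fixed constant variable per constant position, give an injective map $S_i\mapsto T_i$ into tuples of distinct variables on $O_r(n)$ variables. Pulling an $\eps$-sparsifier back along canonical assignments then preserves its support size. This case covers every use in the paper: in the proof of \cref{thm:random-classical-LB} the projections are $X(b_1,\dots,b_{r-c},\ell_1,\dots,\ell_c)$ with $\ell_j\in\{x_j,\bar x_j\}$.

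The gap is the remaining case, where a nonempty set $U\subseteq[c]$ of variables occurs in no literal. There the fibers of $S_i\mapsto T_i$ are not of size $O_r(1)$; they can have size up to $\Theta(n^{|U|})$, and neither of your remedies repairs this.
\begin{itemize}
\item Merging duplicates produces a \emph{weighted} $R$-instance. But $\SPR(R,n,\eps)$ is a maximum over unweighted instances (sets of tuples), so the merged instance's sparsifier size is not bounded by $\SPR(R,n,\eps)$. That the \emph{output} sparsifier is weighted is beside the point, and the number of ``preimage patterns'' is not the relevant quantity. You also do not get to choose the instance witnessing $\SPR(R',n,\eps)$ with its unused coordinates collapsed.
\item The constant-block encoding needs, for each $j\in U$, a distinct constant position with its own block of about $n$ variables; shared blocks of size $n/r$ do not even handle $|U|=1$. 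It is unavailable when there are fewer than $|U|$ constant literals. For example, with $r=c=2$ and $\ell_1=\ell_2=x_1$, your tuple $(y^{(1)}_{s_{i,1}},y^{(2)}_{s_{i,1}})$ is determined by $s_{i,1}$ alone.
\end{itemize}
In such cases you would have to bound the weighted sparsification of the relation induced on the used variables by the unweighted $\SPR(R,n,\eps)$. The obvious tool for that (binary expansion of the multiplicities, sparsifying each bit level separately) loses a factor $O(|U|\log n)$, not $O_r(1)$.

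To close this, either allow weighted instances in the definition of $\SPR$, or state the result only for literal sequences in which every $x_j$ occurs. With weighted instances, merging duplicates is lossless and your argument goes through as written.

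Separately, the final ``rescaling'' from $O_r(n)$ variables to $n$ variables is assumed in your write-up, not a consequence of monotonicity, and it needs an argument. For $K=O_r(1)$, split the $Kn$ variables of an $R$-instance into $Kr$ blocks of size about $n/r$, and group the constraints by the set of at most $r$ blocks they meet. Each group is an instance on at most $n$ variables, and there are at most $(Kr+1)^r$ groups. The union of $\eps$-sparsifiers of the groups is an $\eps$-sparsifier of the whole instance, so $\SPR(R,Kn,\eps)\le (Kr+1)^r\,\SPR(R,n,\eps)$.
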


To benefit from considering these projections, we also use the following lower bounds on the sparsifier size of the predicate $\AND_c := \{1^c\}.$
\begin{proposition}[\cite{KhannaPS25}, see also \cite{Carbonnel22,ChenKN20,LagerkvistW20}]\label{prop:AND-lb}
For any $c \geq 1, n \geq 1$, and $\eps > 0$, we have that
\[
    \SPR(\AND_c, n, \eps) \geq \binom{n}{c}\geq \Omega_{c}(n^c).
\]
\end{proposition}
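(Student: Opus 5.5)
The plan is a direct non-redundancy argument: exhibit a single $\AND_c$ instance in which every constraint is the unique satisfied constraint under some assignment, so that no term can be dropped. This is the classical counterpart of the reasoning behind \cref{lem:nalphalowerbound} and the discussion of non-redundancy in the introduction.

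First, I would take the hypergraph $\cH$ on vertex set $[n]$ consisting of all $\binom{n}{c}$ subsets $S \subseteq [n]$ with ${\vert}S{\vert} = c$. Each subset is listed once, as the tuple of its elements in increasing order, and has unit weight. For a fixed $S$, I would consider the assignment $x = \one_S \in \zo^n$, which equals $1$ exactly on $S$. Then $x{\vert}_S = 1^c \in \AND_c$. For every other $T \in \cH$ with $T \neq S$, we have ${\vert}T{\vert} = {\vert}S{\vert}$, so $T$ contains some coordinate outside $S$; there $x$ is $0$, hence $x{\vert}_T \notin \AND_c$. Consequently, under $x = \one_S$ the original instance has exactly one satisfied constraint, namely $S$.

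Next, let $w$ be any $\eps$-sparsifier of $\cH$ in the sense of \cref{eq:CSP-sparsification}. If $w(S) = 0$ for some $S$, then the assignment $\one_S$ gives sparsified value $0$ but original value $1$. Since $0 \notin [1-\eps, 1+\eps]\cdot 1$ for any $\eps < 1$, this is a contradiction. Hence $\supp(w) = \cH$, which gives $\SPR(\AND_c, n, \eps) \geq \SPR(\cH,\eps) = \binom{n}{c}$. If one prefers to phrase this in Hamiltonian language, \cref{prop:classical-equivalence} transfers the argument verbatim to the diagonal predicate matrix.

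Finally, for $n \geq c$ the standard estimate $\binom{n}{c} \geq (n/c)^c$ gives $\Omega_c(n^c)$. For $n < c$ the bound $\binom{n}{c} = 0$ holds trivially. There is no real obstacle here. The only point needing care is the convention that $\eps < 1$ (implicit throughout the paper), and, if ordered tuples in $[n]^c$ are used, the choice of one ordering per subset so that the instance contains no duplicate constraints.
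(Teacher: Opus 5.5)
Your proof is correct, and it is the standard non-redundancy argument behind this cited result. The paper does not prove the proposition itself. Its Hamiltonian analogue, \cref{claim:tensor-NRD-construction} and \cref{claim:tensor-SPR}, runs the same unique-witness mechanism: with $\tau_i=\ket{1}$ and $\pi_i=\ket{0}$, the state $\psi_T$ there is exactly your assignment $\one_S$.

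The one difference is the hypergraph. The paper restricts to an $r$-partite hypergraph, which needs no symmetry of the tensor factors but only yields $(\lfloor n/r\rfloor)^r$. You use the complete $c$-uniform hypergraph instead. This is legitimate because every coordinate of $\AND_c$ carries the same unary factor, and it is what delivers the stated $\binom{n}{c}$.

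Your caveat that $\eps<1$ is needed is also right. For $\eps\geq 1$ the zero weighting is trivially an $\eps$-sparsifier, so the statement as written ("for any $\eps>0$") implicitly assumes $\eps<1$.
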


We note that a Hamiltonian analogue of \cref{prop:AND-lb} is proved in \cref{sec:tensor-lb}.

\begin{remark}
Continuing from \cref{rem:example-R} the example of $R = \{001,101,111\}$, the projection $R(x_1, \bar{x}_1, x_2) = \{11\}$ proves that $\SPR(R, n, \eps) \geq \Omega(n^2)$, which can be shown to be tight up to a factor of $\widetilde{O}(\eps^{-2})$ using the polynomial method of \cite{KhannaPS25}.
\end{remark}

We also need the following fact about the binomial distribution.
\begin{theorem}[e.g., \cite{Ash1990}]\label{claim:binomial-equal}
Let $p = k/n$ and consider the binomial distribution $B(n, p)$. The probability that $m$ is sampled from $B(n,p)$ is at least
\[
    \frac{1}{\sqrt{8n \cdot \frac{m}{n} \cdot (1 - \frac{m}{n})}} \cdot \exp\left(-m \log\frac{m}{k} + (n-m) \log \frac{n-m}{n-k}\right).
\]
\end{theorem}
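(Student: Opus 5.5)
The plan is to write the probability as $\binom{n}{m}p^m(1-p)^{n-m}$, lower bound the binomial coefficient with an explicit Stirling estimate, and recognise what remains as a relative-entropy exponential. Write $q = m/n$ and $p = k/n$. I treat the interior case $1 \le m \le n-1$. At $m\in\{0,n\}$ the prefactor $1/\sqrt{8n q(1-q)}$ is undefined, so those endpoints are excluded (or handled separately by reading the bound as $\Pr = p^n$ or $(1-p)^n$ directly).

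\textbf{Step 1: Stirling.} I would use Robbins' form of Stirling's formula:
\[
\sqrt{2\pi N}\,(N/e)^N e^{1/(12N+1)} \le N! \le \sqrt{2\pi N}\,(N/e)^N e^{1/(12N)}.
\]
Applying the lower bound to $n!$ and the upper bounds to $m!$ and $(n-m)!$, the factors of $e^{-N}$ cancel, giving
\[
\binom{n}{m} \ge \sqrt{\frac{n}{2\pi m(n-m)}}\cdot\frac{n^n}{m^m(n-m)^{n-m}}\cdot e^{\frac{1}{12n+1}-\frac{1}{12m}-\frac{1}{12(n-m)}}.
\]
Since $\frac{n}{m(n-m)} = \frac{1}{nq(1-q)}$, it suffices to show that $\frac{1}{\sqrt{2\pi}}\, e^{\frac{1}{12n+1}-\frac{1}{12m}-\frac{1}{12(n-m)}} \ge \frac{1}{\sqrt 8}$. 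Equivalently, the exponent must be at least $-\tfrac12\ln(\pi/4) \approx -0.121$.

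\textbf{Step 2: the constant (main obstacle).} This constant check is the only delicate point, because the worst case $m = n-m = 1$ has exponent $1/25 - 1/6 \approx -0.127$, which just misses. So I would verify the tiny cases by hand. For $n=2$, $m=1$ we have $\binom{2}{1} = 2$, and the target is $\frac{1}{\sqrt{8\cdot 2\cdot\frac14}}\cdot\frac{2^2}{1\cdot 1} = 2$, which is exact equality. For every other interior pair, $\frac{1}{12m}+\frac{1}{12(n-m)} \le \frac{1}{12}+\frac{1}{24} = \frac18 < 0.121 + \frac{1}{12n+1}$. This inequality holds because $\frac18 = 0.125$, while $0.121 + \frac{1}{37} \approx 0.148$ when $n=3$, and as $n$ grows the left side stays at most $\frac{1}{12}+\frac{1}{12(n-1)}$, which tends to $\frac{1}{12}$. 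A short case split on $\min(m,n-m)\in\{1, \ge 2\}$ finishes the check.

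\textbf{Step 3: combine.} I multiply by $p^m(1-p)^{n-m} = (k/n)^m\bigl((n-k)/n\bigr)^{n-m}$. The powers of $n$ cancel against $n^n$, giving
\[
\frac{n^n}{m^m(n-m)^{n-m}}\,p^m(1-p)^{n-m} = \Bigl(\frac{k}{m}\Bigr)^m\Bigl(\frac{n-k}{n-m}\Bigr)^{n-m} = \exp\Bigl(-m\log\frac{m}{k} - (n-m)\log\frac{n-m}{n-k}\Bigr),
\]
which is $e^{-n D(q\,\|\,p)}$. This is the intended exponent. The second logarithm must carry a minus sign (the relative entropy is nonnegative), so I would read the ``$+$'' in the displayed statement as a sign typo and prove the bound with $-(n-m)\log\frac{n-m}{n-k}$. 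As displayed, with the ``$+$'', the bound can exceed $1$ (for example when $k = n-1$ and $m$ is close to $n$), so it cannot be a valid probability bound.
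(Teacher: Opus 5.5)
Your proof is correct and takes the standard route. The paper gives no proof of its own and simply cites Ash, where the bound comes from this same Stirling estimate $\binom{n}{m}\ge (8nq(1-q))^{-1/2}e^{nH(q)}$, followed by multiplying by $p^m(1-p)^{n-m}$ to obtain $e^{-nD(q\|p)}$. You are also right that the ``$+$'' in front of $(n-m)\log\frac{n-m}{n-k}$ must be a ``$-$''. For instance, $k=n-1$, $m=n-2$ gives a right-hand side tending to $e$. The typo is harmless in the paper, which only applies the bound with $m=k$, where the exponent is $0$.

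One step in Step~2 should be restated. The displayed chain $\frac{1}{12m}+\frac{1}{12(n-m)}\le\frac18<\frac12\ln\frac4\pi+\frac{1}{12n+1}$ is false for $n\ge 20$, because there $\frac18-\frac12\ln\frac4\pi\approx 0.0042>\frac{1}{12n+1}$. The uniform bound $\frac18$ is simply too weak.

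What actually closes the check is the $n$-dependent bound $\frac{1}{12m}+\frac{1}{12(n-m)}\le\frac1{12}+\frac{1}{12(n-1)}$, valid for all $1\le m\le n-1$ by convexity in $m$. After that you only need
\[
\frac{1}{12(n-1)}-\frac{1}{12n+1}=\frac{13}{12(n-1)(12n+1)}\le\frac{13}{888}\quad\text{for } n\ge 3,
\]
which is comfortably below $\frac12\ln\frac4\pi-\frac1{12}\approx 0.037$. Your case split on $\min(m,n-m)$ supplies this, so present the argument that way and drop the $\frac18$ step.

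Also, the threshold for the exponent is $\frac12\ln\frac{\pi}{4}\approx-0.121$; you wrote $-\frac12\ln\frac{\pi}{4}$.
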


We now prove \cref{thm:random-classical-LB}.

\begin{proof}[Proof of \cref{thm:random-classical-LB}]
Let $c$ be a positive integer to be determined. Let $p = k/2^r$, and let $X$ be the distribution over relations $\{0,1\}^r$ such that each element is included independently with probability $p$. By \cref{claim:binomial-equal}, the probability that $X$ has cardinality exactly $k$ is at least
\begin{align}
    \Pr[{\vert}X{\vert}=k] \geq \frac{1}{\sqrt{2^{r+3} \cdot \frac{k}{2^r} \cdot (1 - \frac{k}{2^r})}} \cdot \exp(0) \geq \frac{1}{2^{(r+1)/2}}.\label{eq:k-lb}
\end{align}
If we condition on $X$ having cardinality exactly $k$, then we recover the distribution $R$ of uniformly random relations of $\{0,1\}^r$ of arity $k$. In particular, if we prove that $\NRD(X, n, \eps) > \Omega_r(n^c)$ with probability at least $1-q$, then 
\begin{align}
    \Pr[\NRD(R, n, \eps) > \Omega_r(n^c)] &= \Pr[\NRD(X, n, \eps) > \Omega_r(n^c) {\vert} {\vert}X{\vert} = k]\nonumber\\
    &= \frac{\Pr[\NRD(X, n, \eps) > \Omega_r(n^c) \wedge {\vert}X{\vert} = k]}{\Pr[{\vert}X{\vert} = k]}\nonumber\\
    &\geq \frac{\Pr[\NRD(X, n, \eps) > \Omega_r(n^c) + \Pr[{\vert}X{\vert} = k] - 1}{\Pr[{\vert}X{\vert}=k]}\nonumber\\
    &= 1 - \frac{q}{\Pr[{\vert}X{\vert}=k]}\nonumber\\
    &\geq 1 - 2^{(r+1)/2}q,\label{eq:R-lb}
\end{align}
where the last line uses \cref{eq:k-lb}. Thus, we now focus on giving a lower bound for $\Pr[\NRD(X, n, \eps) > \Omega_r(n^c)]$. By \cref{prop:literal-reduction} and \cref{prop:AND-lb}, it suffices to lower bound the probability that $X$ has a projection to $\AND_c$. In fact, we restrict ourselves to a special family of projections.

For every $b \in \{0,1\}^{r-c}$ and any choice of literals $\ell_1 \in \{x_1, \bar{x}_1\}, \hdots, \ell_c \in \{x_c, \bar{x}_c\}$, consider the projection $X(b_1, \hdots, b_{r-c}, \ell_1, \hdots, \ell_c)$. The probability that $X(b_1, \hdots, b_{r-c}, \ell_1, \hdots, \ell_c) = \AND_c$ is exactly $p(1-p)^{2^c-1}$ as the elements of $X(b_1, \hdots, b_{r-c}, \ell_1, \hdots, \ell_c) \subseteq \{0,1\}^c$ are each sampled with probability $p$. Observe that for any choice of $(\ell'_1, \hdots, \ell'_c) \in \{x_1, \bar{x}_1\} \times \cdots \times \{x_c, \bar{x}_c\} \setminus \{(\ell_1, \hdots, \ell_c)\}$, the events $X(b_1, \hdots, b_{r-c}, \ell_1, \hdots, \ell_c) = \AND_c$ and $X(b_1, \hdots, b_{r-c}, \ell'_1, \hdots, \ell'_c) = \AND_c$ cannot occur simultaneously. Thus, for a fixed choice of $b$, the probability that $X(b_1, \hdots, b_{r-c}, \ell_1, \hdots, \ell_c) = \AND_c$ for some choice of $(\ell_1, \hdots, \ell_c) \in \{x_1, \bar{x}_1\} \times \cdots \times \{x_c, \bar{x}_c\}$ is precisely $2^cp(1-p)^{2^c-1}$. For ease of notation, we call this event $E(b)$.

Now, consider an enumeration $b_1, \hdots, b_{2^{r-c}}$ of the elements of $\{0,1\}^{r-c}$. We can see that the events $E(b_1), \hdots, E(b_{2^{r-c}})$ are independent, as the projections of $X$ considered by these events always consider disjoint elements of $X$.  Therefore, the probability that at least one of $E(b_1), \hdots, E(b_{2^{r-c}})$ occurs is exactly
\begin{align}
    1 - (1 - 2^cp(1-p)^{2^c-1})^{2^{r-c}}.\label{eq:Eb}
\end{align}
By \cref{prop:literal-reduction} and \cref{prop:AND-lb}, the quantity $\Pr[\NRD(X, n, \eps) > \Omega_r(n^c)] = 1-q$ is at least this probability from \cref{eq:Eb}. Therefore,
\[
    q \leq (1 - 2^cp(1-p)^{2^c-1})^{2^{r-c}}
\]
Therefore, by \cref{eq:R-lb}, to prove our theorem it suffices to prove that $2^{(r+1)/2}(1 - 2^cp(1-p)^{2^c-1})^{2^{r-c}} = o(1)$ when $p = \frac{k}{2^r} \in [\alpha,  1-\alpha]$. In particular note that
\begin{align*}
    1 -\Pr[\NRD(R, n, \eps) > \Omega_r(n^c)] &\leq 2^{(r+1)/2}(1 - 2^cp(1-p)^{2^c-1})^{2^{r-c}}\\
    &\leq 2^{(r+1)/2}(1 - 2^c\alpha^{2^c})^{2^{r-c}}\\
    &\leq 2^{(r+1)/2}\exp(-2^{r-c} \cdot 2^c\alpha^{2^c})\\
    &\leq 2^{(r+1)/2}\exp(-2^{r}\alpha^{2^c}).
\end{align*}
If $c = \rceil \log_2 r - \log_2 \log_2 (1/\alpha) - 2\rceil \leq  \log_2 r - \log_2 \log_2 (1/\alpha) - 1$, then
\[
\alpha^{2^c} \leq \alpha^{2^{\log_2 r - \log_2 \log_2 (1/\alpha) - 1}} = \alpha^{\frac{r}{2\log_2(1/\alpha)}} = \alpha^{\frac{r\log_{\alpha}(1/2)}{2}} = 2^{-r/2}.
\]
In such case, we have that $1 -\Pr[\NRD(R, n, \eps) \leq \Omega_r(n^c)] \leq 2^{(r+1)/2} \exp(-2 ^{r/2}) = o(1)$. Thus, $\NRD(R, n, \eps) \geq \Omega_r(n^{\lceil \log_2 r - \log_2 \log_2 (1/\alpha) - 2\rceil})$ with high probability, as desired.
\end{proof}

\section{Sparsifying Nullity $1$ Predicates}\label{sec:nullity-1}
Let $\cH = \{(T_i, M_i)\}_{i\in [m]}$ be a Hamiltonian where $\operatorname{nullity}(M_i) \leq 1$ for all $i$, i.e. $M_i\in\C^{2^r\times 2^r}$ are rank $\geq 2^r - 1$ PSD matrices. Let $\cM:= \{M_i\}_{i\in[m]}\subset\C^{2^r\times 2^r}$ be the family of operators associated to $\cH$. Throughout \cref{sec:nullity-1}, we'll assume that $\cM$ is $(C, 2r)$-bounded for some $C > 0$ (see \cref{def:Cbounded}). \footnote{Note that by \cref{example:singleMbounded}, if all $M_i$ are the same matrix $M$, then we can take $C$ to be a function of $r, M$ alone. In particular, since $M$ can be seen as a (matrix-valued) function of $r$ alone, we can take $C\leq O_r(1)$. Thus $C$ should be seen as a ``constant'' in whatever follows.} In this section, we prove the following theorem about the sparsifiability of nullity $1$ Hamiltonians:
\begin{theorem}\label{thm:Nullity1Sparsifier}
    Let $\mathcal{H} = \{(T_i, M_i):{i \in [m]}\}$ be an arbitrary nullity $1$ Hamiltonian over $n$ variables and $m$ terms. Suppose $\cM:= \{M_i\}_{i\in[m]}\subset\C^{2^r\times 2^r}$ is $(C, 2r)$-bounded. Then for any $\eps > 0$, there is a randomized algorithm running in time $\mathrm{poly}(n,m)$ which computes a weight function $w: \mathcal{H} \rightarrow \R_{\geq 0}$ such that $w$ is an $\eps$-sparsifier of $\mathcal{H}$ and the size of $w$ is bounded by $\widetilde{O}_r(\eps^{-2} C^2n^2)$ with probability $1 - 1 / \mathrm{poly}(n)$.
\end{theorem}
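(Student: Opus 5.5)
The plan is to go through \cref{thm:sparsifyBKLM}. Since the bound there is $\min_\alpha N(\alpha;\mathcal A)/\alpha$ times $\widetilde O(\eps^{-2} n)$, it suffices to show that for some $\alpha = \Omega(1/C^2)$ (up to $O_r(1)$ factors) the connectivity parameter of $\mathcal A = \{(M_i)_{T_i}\}$ is at most $O_r(n)$. Efficiency would come from the importance-sampling proof of that theorem (as in \cref{clm:importanceSamplingLouie}), with importance scores computed or bounded from the same structure. By \cref{prop:r-partite-WLOG} we may assume $\cH$ is $r$-partite, at the cost of a $\log n$ factor.

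The structural input is the analogue of \cref{lem:nullSpaceLouie} for nullity $1$. Write $\ker M_i = \spn(u_i)$. For two terms $T,T'$ with $U = T\cup T'$, the ground space of $M_T + M'_{T'}$ on $U$ is
\[
(u\otimes \C^{2^{U\setminus T}}) \cap (u'\otimes \C^{2^{U\setminus T'}}).
\]
This intersection can be nonzero, for instance when both are product states. So I will not try to prove triviality. Instead I would argue the following. Take any set $S$ of terms with $|S| \ge c_r n$ such that no term is $\alpha$-dominated by the rest. Then there is a small sub-configuration $S_0\subseteq S$ of at most $2r$ terms, all sharing qubits, such that some $i\in S_0$ satisfies $\ker(\sum_{j\in S_0\setminus i} (M_j)_{T_j}) \subseteq \ker (M_i)_{T_i}$ on $V(S_0)$.

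Once such local redundancy is found, $(C,2r)$-boundedness converts it into a quantitative statement. On $V(S_0)$, the operator $\sum_{j\neq i} M_j$ has smallest nonzero eigenvalue at least $1/C$, and $\|M_i\|\le C$. Together these give $(M_i)_{T_i} \preceq C^2 \sum_{j\in S_0\setminus i}(M_j)_{T_j}$, and tensoring with identity preserves this. That yields $\alpha = C^{-2}$ and $N(\alpha)\le O_r(n)$, so the support size is $\widetilde O_r(\eps^{-2}C^2 n\cdot n)$. The extra factor of $n$ relative to the other theorems comes from the $n$ in \cref{thm:sparsifyBKLM}, which matches the stated $n^2$.

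The main obstacle is the combinatorial and linear-algebraic lemma producing local redundancy. My first attempt would be a degree argument. With $\ge c_r n$ terms, there is a qubit $q$ of degree $>2r$, or more usefully a pair of terms whose restricted kernels interact. For two terms sharing qubit $q$, the common ground space on $U$ is spanned by vectors of the form $u\otimes(\cdot)$ that also lie in $u'\otimes(\cdot)$. Expanding $u$ along $q$ as $|0\rangle a_0+|1\rangle a_1$ shows the intersection forces a rank-one-type factorization across $q$. Adding a third term through $q$ should then force the third term's energy to be determined by the first two, giving the needed containment.

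If this local lemma only holds up to an $O_r(1)$-size cluster rather than a pair, $\ell_0 = 2r$ still covers it. If it fails outright, the fallback is to bound $\NRD$ by $O(n)$ through a dimension-counting argument instead. In that argument, each non-redundant nullity-$1$ term cuts the ground space's "local rank profile" by one unit per qubit. The same boundedness-based conversion to $\alpha$ would then apply.
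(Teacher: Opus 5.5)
\textbf{There is a genuine gap: the lemma that is supposed to produce local redundancy is not proved, and the mechanism you sketch for it is false.}

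Your conversion step is correct. Kernel containment on a configuration of at most $2r$ terms, combined with $(C,2r)$-boundedness, gives $C^{-2}$-domination; this is the same conversion the paper uses in \cref{lem:dominatingcoverenergy}. Your observation that a nonzero joint kernel forces each kernel vector to factor across the overlap is also fine. The trouble is the structural lemma itself, which is the real content of the theorem.

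High degree at a qubit gives nothing. Take $M=\Id-|0^r\rangle\langle 0^r|$ and a star of terms $T_i$ that pairwise meet only in a qubit $q$. The state equal to $|0\rangle$ everywhere except $|1\rangle$ on one non-central qubit of $T_i$ has zero energy on every term except $T_i$. So the star is non-redundant however large $\deg(q)$ is. In particular, no third term through $q$ is determined by two others. A term can only become redundant when \emph{all} of its qubits are already pinned down by other terms.

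Your fallback does not rescue this. An $O(n)$ bound on $\NRD$ only gives global kernel containment. Its quantitative strength is the smallest nonzero eigenvalue of a sum of possibly $\Theta(n)$ terms, which $(C,2r)$-boundedness does not control. Since $N(\alpha)\to\NRD+1$ only as $\alpha\to 0$, you need the redundancy to be witnessed by $O_r(1)$ terms.

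\textbf{The missing ingredient is \cref{lem:uniquepsi}.} If nullity-$1$ terms cover all qubits, their common ground space has dimension at most $1$. The reason is that any nonzero ground state is supported on all of $G_{\cH}$, and its value at one point determines it along Hamming-distance-decreasing paths.

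So a spanning forest $F$ with at most $n$ terms already cuts the ground space down to at most one state $\tau$. When $\tau$ exists, any further term does one of two things:
\begin{itemize}
\item it vanishes on $\tau$, and is then redundant relative to the at most $r$ forest terms covering its own qubits; or
\item it kills $\tau$, and the paper keeps at most one such term per qubit.
\end{itemize}
The paper's dominating cover (a spanning forest plus at most one $F$-dominating term per qubit, of size at most $2n$) encodes exactly this. \cref{lem:dominatingcoverenergy} then shows every other term is $C^{-2}$-dominated by at most $2r$ cover terms. With that lemma, your bound $N(C^{-2})=O(n)$ and the existential $\widetilde O(\eps^{-2}C^2n^2)$ bound via \cref{thm:sparsifyBKLM} would go through.

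\textbf{The algorithmic claim needs more than you give.} The cited theorem is existential, and importance scores on $\C^{2^n}$ are not directly computable. The paper instead:
\begin{itemize}
\item peels off $100C^2\eps^{-2}n$ disjoint dominating covers, so that every remaining term has importance at most $\eps^2/(100n)$;
\item keeps a rate-$1/2$ sample of the remaining terms;
\item recurses $O(r\log n)$ times.
\end{itemize}
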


\subsection{Sparsifiers for Hamiltonians of Full-Rank Predicates}\label{subsec:full-rank}
As a warm-up to \cref{thm:Nullity1Sparsifier},  we first consider the case in which each matrix has full rank. Suppose $M\in\C^{2^r\times 2^r}$ is a full-rank matrix. Then there exist constants $0 < \lambda\leq \Lambda$ such that $\Lambda\cdot\Id\succeq M\succeq\lambda\cdot\Id$. In particular, for any hyperedge $T\in\cH$, $\Lambda\cdot\Id\succeq M_T\succeq\lambda\cdot\Id$. 

Note that this immediately implies that $N(\alpha; M, n)\leq 2$ for $\alpha:= \lambda/\Lambda\geq\Omega_M(1)$: Indeed, for any $T, T'\in\cH$, we have $\alpha M_T\preceq M_{T'}$. Consequently, \cref{thm:nstarsparsifier} immediately implies that for any $\eps\in(0, 1)$ there exists a $\eps$-sparsifier $w:\cH\to\R_{\geq 0}$ with ${\vert}\supp(w){\vert}\leq O_M(\eps^{-2}n\log n)$.

However, it is possible to construct a $\eps$-sparsifier more directly, just using the ``well-conditioned'' property, along with a simple Matrix Bernstein bound, to get a sparsifier size of $O_M(\eps^{-2}n)$ for any $M$-Hamiltonian $\cH$. 
\begin{theorem}
    If $M\in\C^{2^r\times 2^r}$ is a full-rank matrix, then for any $M$-Hamiltonian $\cH$ on $n$ qubits, there exists a sparsifier $w:\cH\to\R_{\geq 0}$ with ${\vert}\supp(w){\vert}\leq O_M(\eps^{-2}n)$. Furthermore, sampling hyperedges randomly yields such a $\eps$-sparsifier with high probability, i.e. if we define $w(T):= \1(T\in\cG)/p$, where $p:= \Theta_M(n/(\eps^2{\vert}\cH{\vert}))$, and every hyperedge in $\cH$ is retained in $\cG$ with probability $p$ independently, then $w$ is an $\eps$-sparsifier of size $\leq O(\eps^{-2}n)$ with probability $\geq 1 - 1/\poly(n)$.
\end{theorem}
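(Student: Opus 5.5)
The plan is to use uniform sampling and control it with the Matrix Chernoff bound (\cref{fact:matrix-bernstein-restated}). The full-rank assumption ensures that every term has tiny importance, namely $O_M(1/{\vert}\cH{\vert})$. Write $m := {\vert}\cH{\vert}$ and $A := \sum_{T\in\cH} M_T$. From $\lambda\Id\preceq M_T\preceq\Lambda\Id$ for every $T$, summing gives $A\succeq m\lambda\,\Id$. Hence, for each $T$,
\[
M_T\preceq \Lambda\Id \preceq \frac{\Lambda}{\lambda m}A = \frac{1}{\alpha m}A, \qquad \alpha:=\lambda/\Lambda = \Omega_M(1).
\]
This is precisely the statement $\mathrm{Importance}(M_T)\leq 1/(\alpha m)$ in the sense of \cref{def:importanceScoreLouie}.

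Now sample each hyperedge independently into $\cG$ with probability $p$, and set $Y_T := \1(T\in\cG)M_T/p$. Then $\E\sum_T Y_T = A$, and almost surely
\[
Y_T\preceq M_T/p\preceq A/(p\alpha m).
\]
Matrix Chernoff therefore applies with $R = 1/(p\alpha m)$ and $N = 2^n$. The failure probability is at most $2\cdot 2^n\exp(-\eps^2 p\alpha m/3)$. Choosing $p = \Theta_M(n/(\eps^2 m))$ with a large enough constant (say $p\alpha m \geq 30 n/\eps^2$) makes this $2^{-\Omega(n)}$, which is certainly at most $1/\poly(n)$. If the required $p$ exceeds $1$, then $m = O_M(\eps^{-2}n)$ already, and we keep every term with weight $1$.

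It remains to bound the size. ${\vert}\cG{\vert}$ is a sum of independent Bernoulli variables with mean $pm = \Theta_M(\eps^{-2}n)$, so a scalar Chernoff bound gives ${\vert}\cG{\vert}\leq 2pm = O_M(\eps^{-2}n)$ with probability $1-e^{-\Omega(n)}$. A union bound over the two failure events finishes the proof. On the good event, $(1-\eps)A\preceq\sum_T w(T)M_T\preceq(1+\eps)A$, and this is exactly the sparsifier condition \cref{eq:quantum-sparsifier}.

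I do not expect any serious obstacle. The only points needing care are bookkeeping: getting the constant in $p$ large enough that the $2^n$ dimension factor is beaten, which is why $p\propto n$ appears, and handling the degenerate case $p>1$. Note that, unlike \cref{thm:nstarsparsifier}, this argument involves no union over terms, so no $\log m$ factor appears.
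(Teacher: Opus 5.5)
Your proposal is correct and is essentially the paper's argument. Both use uniform sampling at rate $\Theta_M(n/(\eps^2 m))$, driven by the full-rank bound $M_T\preceq \frac{\Lambda}{\lambda m}A$, a matrix concentration bound in dimension $2^n$, and a scalar Chernoff bound on the support size. The only difference is the concentration tool: you apply the multiplicative Matrix Chernoff bound (\cref{fact:matrix-bernstein-restated}) directly to $Y_T\preceq A/(p\alpha m)$, exactly as in \cref{clm:importanceSamplingLouie}, whereas the paper whitens by $A^{-1/2}$ and uses Matrix Bernstein with an explicit variance bound. Your version also handles the case $p>1$ explicitly and yields the stronger success probability $1-2^{-\Omega(n)}$.
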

\begin{proof}
    Write $p:= cn/(\eps^2{\vert}\cH{\vert})$ for some large enough constant $c > 0$, and define $w(T):= \1(T\in\cG)/p$. Write $A = \sum_{T\in\cH}M_T, A':= \sum_{T\in\cH}w(T)M_T$, and note that $\E[A'] = A$. $w$ is an $\eps$-sparsifier only if $(1 - \eps)A\preceq A'\preceq(1 + \eps)A$. But note that 
    \[(1 - \eps)A\preceq A'\preceq(1 + \eps)A\iff -\eps A\preceq A' - A\preceq \eps A\iff \] 
    \[-\eps\Id\preceq\sum_{T\in\cH}\left(\frac{\1(T\in\cG)}{p} - 1\right)A^{-1/2}M_TA^{-1/2}\preceq\eps\Id\]  
    \begin{align}
    \label{eq:lev-spars}
        \iff\left\Vert\sum_{T\in\cH}\left(\frac{\1(T\in\cG)}{p} - 1\right)A^{-1/2}M_TA^{-1/2}\right\Vert_{\operatorname{op}}\leq\eps\mper
    \end{align}
    We'll prove \cref{eq:lev-spars} using \cref{fact:matrix-bernstein}. Towards that, note that if $\lambda > 0$ is the smallest eigenvalue of $M$, then $M\succeq\lambda\Id$, and thus $M_T\succeq\lambda\Id$, and thus $A\succeq\lambda\cdot{\vert}\cH{\vert}\cdot\Id$. Similarly, if $\Lambda$ is the largest eigenvalue of $M$, then $M_T\preceq \Lambda\Id$. Consequently, ${\Vert}A^{-1/2}M_TA^{-1/2}{\Vert}_{\op}\leq(\Lambda/\lambda)\cdot (1/{\vert}\cH{\vert})\leq O_M(1/{\vert}\cH{\vert})$. Thus if we write $Y_T = \left(\frac{\1(T\in\cG)}{p} - 1\right)A^{-1/2}M_TA^{-1/2}$ for all $T\in\cH$, then $\max_{T\in\cH}{\Vert}Y_T{\Vert}_{\op}\leq O(\eps^2/n)$ with probability $1$ when $c$ is chosen sufficiently large w.r.t. $M$. Furthermore, $\E[Y_T^2] = \left(\frac{\1(T\in\cG)}{p} - 1\right)(A^{-1/2}M_TA^{-1/2})^2\preceq (A^{-1/2}M_TA^{-1/2})^2/p\preceq O(\eps^2/n)\cdot A^{-1/2}M_TA^{-1/2}$, where the last inequality follows on plugging the value of $p$. Consequently, $\sum_{T\in\cH}\E[Y_T^2]\preceq O(\eps^2/n)\cdot \sum_{T\in\cH}A^{-1/2}M_TA^{-1/2}\preceq O(\eps^2/n)\cdot\Id$. Invoking \cref{fact:matrix-bernstein} now proves the result, and a simple Chernoff bound yields an upper bound on the size of the sparsifier.
\end{proof}
\begin{remark}
\label{rem:different-null0-spars}
    Note that the proof goes through verbatim even if the matrices $M$ acting on different tuples were different, provided they were all full-rank. In that case, if our family of local operators is $\cM\subset\C^{2^r\times 2^r}$, we get an $\eps$-sparsifier of size
    \[O\left(\frac{\sup_{M\in\cM}\lambda_{\max}(M)}{\inf_{M\in\cM}\lambda_{\min}(M)}\cdot\frac{n}{\eps^2}\right)\mper\]
\end{remark}

\subsection{The Nullity $1$ Sparsifier}

We now return to proving \cref{thm:Nullity1Sparsifier}. Since $\cM$ is $(C, 2r)$-bounded, note that $\frac{\sup_{M\in\cM}\lambda_{\max}(M)}{\inf_{M\in\cM}\lambda_{\min}(M)}\leq C^2$, and thus if we write $\cH':= \{(T_i, M_i): M_i\text{ is full-rank}\}$, then \cref{rem:different-null0-spars} allows us to obtain a $O(\eps^{-2} C^2n)$-sized $\eps$-sparsifier for $\cH'$. Since this is lesser than the bound we're shooting for in \cref{thm:Nullity1Sparsifier} anyways, WLOG we assume every matrix in $\cM$ has nullity exactly $1$. \footnote{since after we have sparsified the ``exactly-nullity-$1$'' sub-instance of $\cH$, we can take a union with the full-rank sparsifier without hurting our asymptotics}

Write $\ker(M_i) = \spn\{\phi_i\}$, and let $S_i:= \supp(\phi_i)\subset\{0, 1\}^r$ be the support of $\phi_i$. We call $M_i$ the ``operator acting on $T_i$'', and $\phi_i$ the ``kernel vector associated to $T_i$''.

Recall that we denote the ground state of the Hamiltonian $\cH$ by $\Psi_\cH := \bigcap_{T_i\in\cH}\ker((M_i)_{T_i})$, where $(M_i)_{T_i} = M_i\vert_{T_i}\otimes\Id_{\bar{T_i}}$.

We first prove a statement about the ground states of such nullity $1$ Hamiltonians. We assume that $\cH$ has $n$ qubits, and WLOG assume $\bigcup_{T\in\cH} T = [n]$.
\begin{lemma}
\label{lem:uniquepsi}
    $\dim_\C(\Psi_\cH)\leq 1$, i.e. the dimension of $\Psi_\cH$ as a complex vector space is $\leq 1$.
\end{lemma}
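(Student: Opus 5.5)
The plan is to prove a stronger statement by induction on the number of terms. Enumerate the terms $T_1, \ldots, T_m$ in any order and put $U_k := T_1\cup\cdots\cup T_k$, with $U_0=\emptyset$. For each $k$ the ground space of the first $k$ terms has the form
\[
\Psi_{\{T_1,\ldots,T_k\}} = K_k \otimes \C^{\zo^{[n]\setminus U_k}}
\]
for a subspace $K_k\subseteq \C^{\zo^{U_k}}$. This holds because every $(M_i)_{T_i}$ acts as the identity outside $U_k$. I will show by induction that $\dim K_k\leq 1$. Since $U_m=[n]$ by our WLOG assumption, this gives $\dim_\C \Psi_\cH = \dim K_m\leq 1$. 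For the base case, $K_0=\C$, since it lives on the empty set of qubits.

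For the inductive step, assume $K_{k-1}\subseteq \spn\{\kappa\}$, and write $T:=T_k$ with kernel vector $\phi:=\phi_k$. Put $A:=T\setminus U_{k-1}$ (the new qubits) and $B:=U_{k-1}\setminus T$. Restricted to $U_k$, every vector of $K_k$ satisfies two conditions. Since it lies in $K_{k-1}\otimes\C^{\zo^{A}}$, it has the form $\kappa\otimes a$ with $a\in\C^{\zo^A}$. Since it lies in $\ker(M\vert_T\otimes \Id)$, which equals $\phi\otimes \C^{\zo^{B}}$ because $\ker M=\spn\{\phi\}$, it also has the form $\phi\otimes b$. So $K_k$ is isomorphic to the space of $a$ with $\kappa\otimes a\in \phi\otimes\C^{\zo^B}$. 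If $A=\emptyset$ then $a$ is a scalar and we are done. If $K_{k-1}=0$ then $K_k=0$.

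The key step is a flattening (Schmidt rank) argument across the cut $A \mid (U_k\setminus A)$. Viewed as a matrix with rows indexed by $\zo^{A}$, the tensor $\kappa\otimes a$ has rank at most one, with column space $\spn\{a\}$. Now view $\phi\in\C^{\zo^{T\cap U_{k-1}}}\otimes \C^{\zo^{A}}$ and let $s$ be its Schmidt rank across this cut. Then $\phi\otimes b$ has flattening rank exactly $s$ whenever $b\neq 0$, and its column space is the fixed space $\mathrm{Im}_A(\phi)$ spanned by the $A$-side Schmidt vectors of $\phi$.

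Consequently, a nonzero solution exists only if $s=1$, that is, $\phi=\alpha\otimes\beta$ with $\beta\in\C^{\zo^A}$. In that case every solution satisfies $a\in\spn\{\beta\}$. Hence $\dim K_k\leq 1$, which completes the induction.

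I do not expect a serious obstacle. The one point needing care is the identity $\ker(M_T)\vert_{U_k}=\phi\otimes\C^{\zo^B}$. This uses that nullity is exactly $1$ (the full-rank case gives kernel $0$ and is trivial) and that the tensor-factor bookkeeping is done correctly when $T$ overlaps $U_{k-1}$ only partially. The flattening-rank claim itself is standard: for $\phi=\sum_{j\leq s}\alpha_j\otimes\beta_j$ with each of $\{\alpha_j\}$, $\{\beta_j\}$ linearly independent, the tensors $\alpha_j\otimes b$ are linearly independent for any $b\neq 0$.
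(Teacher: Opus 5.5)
Your proof is correct, and it takes a genuinely different route from the paper's.

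\textbf{The paper's route.} The paper works in the computational basis. It sets $G_\cH=\{x\in\zo^n : x\vert_{T_i}\in\supp(\phi_i)\text{ for all } i\}$ and observes that any nonzero $\psi\in\Psi_\cH$ is supported inside $G_\cH$. It then runs a Hamming-distance-decreasing replacement argument. Given a term $T$ on which $x^{(0)}$ and $x^{(1)}$ differ, it overwrites $x^{(0)}\vert_T$ by $x^{(1)}\vert_T$. This uses that $\psi_{x^{(0)}\vert_{[n]\setminus T}}$ is parallel to $\phi$, so $\psi$ stays nonzero at the new point. The conclusion is that the graph on $G_\cH$ joining strings that differ only inside a single term is connected. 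Moreover, $\psi$ is determined everywhere by its value at one point, through products of ratios of kernel-vector entries along a path.

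\textbf{Your route.} You add terms one at a time and bound the ground space by a Schmidt-rank (flattening) argument. The cut separates the newly introduced qubits from the rest.

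\textbf{What each buys.} The paper's argument gives an explicit description of the ground state when it exists: its support is exactly $G_\cH$, and its amplitudes are path products. This is close in spirit to classical CSP reasoning. Your argument has three advantages:
\begin{enumerate}[(1)]
\item It is basis-free and avoids the connectivity bookkeeping, in particular the need to keep $\psi\neq 0$ at each propagated point.
\item It handles full-rank terms with no extra work.
\item It exposes a structural fact the paper's proof does not make visible. A term that brings in new qubits is compatible with a nonzero ground state only if its kernel vector factors as a product across the old/new cut. This also gives an incremental way to compute the ground state.
\end{enumerate}
Both proofs use $\bigcup_{T\in\cH}T=[n]$ at the analogous point. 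You use it to identify $K_m$ with $\Psi_\cH$; the paper uses it to find a term on which two distinct strings differ.
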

\begin{proof}
    Define 
    \[G_{\cH}:= \{x\in\zo^n: x\vert_{T_i}\in S_i\text{ for all }T_i\in\cH\}\mper\]
    Now suppose $\dim_\C(\Psi_\cH)\geq 1$ (otherwise we're already done), and let $\psi:\zo^n\to\C$ be a non-zero vector in $\Psi_\cH$. Since $\psi\in\Psi_\cH$, $\psi\in\ker((M_i)_{T_i})$ for all $i$, and thus by \cref{eq:etmexpression} we have $\langle\psi_z, M_i\psi_z\rangle = 0$ for all $z\in\zo^{[n]\setminus T_i}, (T_i, M_i)\in\cH$. But $\langle\psi_z, M_i\psi_z\rangle = 0\iff \psi_z\in\ker(M_i)\iff\psi_z\in\spn\{\phi_i\}$.

    Consequently, if for some $x\in\zo^n$ there exists $(T_i, M_i)\in\cH$ such that $x\vert_{T_i}\notin\supp(\phi_i)$, then $\psi(x) = 0$. Consequently, $\supp(\psi):= \{x\in\zo^n:\psi(x)\neq 0\}\seq G_{\cH}$, and thus if ${\vert}G_{\cH}{\vert}\leq 1$, then we're done.

    Now consider the following graph on $G_{\cH}$: For any $x, x'\in G_{\cH}$ for which there exists $T\in\cH$ such that $x\vert_{[n]\setminus T} = x'\vert_{[n]\setminus T}$, we connect $x, x'$ with an edge. 
    
    Now, let $x^{(0)}\in G_{\cH}$ be such that $\psi(x^{(0)})\neq 0$, and let $x^{(1)}\in G_{\cH}\setminus\{x^{(0)}\}$ be any other element. Since $\bigcup_{T\in\cH}T = [n]$, there exists $T\in\cH$ such that $x^{(0)}\vert_{T}\neq x^{(1)}\vert_T$. Fix $S:= \supp\{\phi\}$ where $\phi$ is the kernel vector associated to $T$.
    
    Define $x'$ as $x'\vert_T:= x^{(1)}\vert_T, x'\vert_{[n]\setminus T} := x^{(0)}\vert_{[n]\setminus T}$, and note that $\Delta(x', x^{(1)}) < \Delta(x^{(0)}, x^{(1)})$, where $\Delta(\ast, \ast)$ denotes Hamming distance.

    Now, since $x^{(1)}\in G_{\cH}$, $x'\vert_T = x^{(1)}\vert_T\in S$. Furthermore, since $\psi\in\ker(M_T)$, $\psi_{x^{(0)}\vert_{[n]\setminus T}}\in\C^{2^r}$ is parallel to $\phi$. Consequently, 
    \[\psi(x') = \psi_{x'\vert_{[n]\setminus T}}(x'\vert_{T}) = \psi_{x^{(0)}\vert_{[n]\setminus T}}(x^{(1)}\vert_{T}) = \frac{\phi(x^{(1)}\vert_{T})}{\phi(x^{(0)}\vert_{T})}\cdot\psi_{x^{(0)}\vert_{[n]\setminus T}}(x^{(0)}\vert_{T}) = \frac{\phi(x^{(1)}\vert_{T})}{\phi(x^{(0)}\vert_{T})}\cdot\psi(x^{(0)})\neq 0\mper\]
    Since $\psi(x')\neq 0$, $x'\in G_{\cH}$. Consequently, for any $x^{(0)}\neq x^{(1)}\in G_{\cH}$, there exists $x'\in G_{\cH}$ such that $\Delta(x', x^{(1)}) < \Delta(x^{(0)}, x^{(1)})$. Consequently, $G_{\cH}$ is connected, since Hamming distance can't decrease indefinitely.

    Finally, we prove the uniqueness of $\psi$ (upto scaling), which implies $\dim(\Psi_\cH) = 1$: Indeed, consider any $x\in G_{\cH}\setminus\{x^{(0)}\}$. Then there exists a path $x^{(0)}, x^{(1)}, \ldots, x^{(\ell)}, x^{(\ell + 1)}:= x$ in $G_{\cH}$, where all points in this path are distinct. Let $T^{(i)}\in\cH$ be the tuple associated to the edge $x^{(i)}, x^{(i + 1)}$, i.e. $x^{(i)}\vert_{[n]\setminus T^{(i)}} = x^{(i + 1)}\vert_{[n]\setminus T^{(i)}}$. Let $\phi^{(i)}$ be the kernel vector associated to $T^{(i)}$.
    
    Then note that 
    \[\psi(x) = \psi(x^{(0)})\cdot\prod_{i = 0}^\ell\frac{\phi^{(i)}(x^{(i + 1)}\vert_{T^{(i)}})}{\phi^{(i)}(x^{(i)}\vert_{T^{(i)}})}\mcom\]
    i.e.\ the value of $\psi(x^{(0)})$ uniquely determines $\psi$ on $G_{\cH}$, and thus uniquely determines $\psi$, as desired.
\end{proof}

Using \cref{lem:uniquepsi} above, we now give an algorithm (\cref{alg:buildnullity1Sparsifier}) to construct a sparsifier for nullity-$1$ Hamiltonians. 

Before that, we need to set up some notations and definitions though. We first define the notion of a spanning forest of $\mathcal{H}$:

\begin{definition}\label{def:spanningForest}
    Let $\mathcal{H} = \{T_i:{i \in [m]}\}$ be a hypergraph over $n$ qubits. We say that a subset $F \subseteq [m]$ of the terms of $\mathcal{H}$ is a spanning forest if, for every qubit $v \in [n]$, if there is a term $T_i: i \in [m]$ such that $v \in T_i$, then there is also a term $T_j: j \in F$ such that $v \in T_j$.
\end{definition}

We now also make the following basic observation:

\begin{proposition}\label{prop:buildSpanningForest}
    Let $\mathcal{H} = \{T_i:{i \in [m]}\}$ be a hypergraph over $n$ qubits. Then, $\mathcal{H}$ has a spanning forest $F \subseteq [m]$ with ${\vert}F{\vert} \leq n$.
\end{proposition}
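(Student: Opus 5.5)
We build $F$ greedily, processing the terms of $\mathcal{H}$ one at a time in order $i = 1, \dots, m$. Start with $F = \emptyset$ and a set $C = \emptyset \subseteq [n]$ of \emph{covered} qubits. When we reach $T_i$, we add $i$ to $F$ if $T_i \not\subseteq C$, and then update $C \leftarrow C \cup T_i$. If $T_i \subseteq C$, we skip it. The procedure clearly runs in $\poly(n,m)$ time.

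\textbf{Covering property.} At termination $C = \bigcup_{i \in [m]} T_i$, since every $T_i$ is merged into $C$ when it is processed. Moreover, $C$ only ever grows by adding $T_i$ for $i \in F$, because skipped terms add nothing new. Hence $\bigcup_{j \in F} T_j = C$. So every qubit $v$ lying in some term $T_i$ also lies in some $T_j$ with $j \in F$, which is exactly \cref{def:spanningForest}.

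\textbf{Size bound.} Each index added to $F$ strictly enlarges $C$, by at least one new qubit. Since $C \subseteq [n]$, at most $n$ indices can ever be added, so ${\vert}F{\vert} \leq n$.

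No step here is a real obstacle. The one point to state explicitly is that skipping a term $T_i \subseteq C$ loses no coverage, which holds because each of its qubits is already covered by an earlier term of $F$.
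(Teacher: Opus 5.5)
Your proof is correct and matches the paper's argument: a single greedy pass that adds a term exactly when it contains a not-yet-covered qubit, with ${\vert}F{\vert} \leq n$ because each added term covers at least one new qubit. The invariant $\bigcup_{j \in F} T_j = C$ states explicitly the covering property that the paper leaves implicit.
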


\begin{proof}
    Consider a simple greedy algorithm: we initialize $F = \emptyset$, and iterate through the terms $\{(T_i, M):{i \in [m]}\}$. For each term $T_i$, we check if there is some qubit $v \in [n]$ in $T_i$ which is not yet participating in any term in $\{(T_j, M):{j \in F}\}$. If so, then we add index $i$ to $F$.

    We can observe that this can only happen $n$ times, as each qubit must only be covered a single time.
\end{proof}

We also need the following definitions:
\begin{definition}[Clauses containing a variable]
    Let $\cH$ be a hypergraph on a vertex set $V$. For any $v\in V$, write $\cH_v:= \{T\in\cH:v\in T\}$.
\end{definition}

\begin{definition}
    Let $\cH = \{(T_i, M_i)\}_{i\in [m]}$ be a Hamiltonian on $[n]$, and let $\cH'\seq\cH$ be a subset of $\cH$. We say $(T, M)\in\cH\setminus\cH'$ \emph{dominates} $\cH'$ if there exists $\psi\in\C^{\zo^n}$ such that $\cE_{T, M}(\psi) > 0 = \sum_{(T', M')\in\cH'}\cE_{T', M'}(\psi)$.
\end{definition}

If $\cH = \{(T_i, M_i)\}_{i\in [m]}$ is a Hamiltonian, we will abuse notation and let $\cH_v$ also be a Hamiltonian, corresponding to the clauses containing $v$.

\begin{algorithm}
\caption{ExtractDominatingCover$(\mathcal{H} = \{(T_i, M_i)\}_{i\in [m]})$}\label{alg:extractDominatingCover}
Let $D$ be a spanning forest of $\mathcal{H}$, extracted as in \cref{prop:buildSpanningForest}\;
Let $V = \bigcup_{T\in F}T = \bigcup_{T\in\mathcal{H}}T$\;

\For{$v\in V$}{
    \If{$\exists (T, M)\in\mathcal{H}_v\setminus F$ such that $(T, M)$ dominates $F$}{
        $F\leftarrow F\cup\{(T, M)\}$ \tcp*[r]{If multiple such $(T, M)$ exist, add only one}
    }
}
\Return{$F$.}
\end{algorithm}
For any $\cH$, we call (any) output of \cref{alg:extractDominatingCover} a \emph{Dominating Cover} of $\cH$.

Note that in \cref{alg:extractDominatingCover} since the \texttt{For} loop can go on for at most $n$ steps, and since a spanning forest has $\leq n$ elements (by \cref{prop:buildSpanningForest}), we obtain that a dominating cover has $\leq 2n\leq O(n)$ many elements.

Let $\cH = \{(T_i, M_i)\}_{i\in [m]}$ be a nullity $1$ Hamiltonian on $[n]$.  

\begin{lemma}
\label{lem:dominatingcoverenergy}
    Let $\cH = \{(T_i, M_i)\}_{i\in [m]}$ be a nullity $1$ Hamiltonian on $[n]$, and let $D$ be a dominating cover of $\cH$. Suppose $\cM:= \{M_i\}\subset\C^{2^r\times 2^r}$ is $(C, 2r)$-bounded. Then for any $(T, M)\in\cH\setminus D$, $M_T\preceq C^2\cdot\sum_{(T', M')\in D}M'_{T'}$.
\end{lemma}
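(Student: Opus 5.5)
The plan is to compare $M_T$ with a \emph{local} piece of $D$ consisting of at most $2r$ terms, and then use the two halves of $(C,2r)$-boundedness (\cref{def:Cbounded}). The basic inequality is as follows. Suppose $D'\subseteq D$ has $|D'|\le 2r$, and suppose that on $U := V(D')\cup T$ the local kernel of $\sum_{(T',M')\in D'} M'\vert_{T'}\otimes \Id_{U\setminus T'}$ is contained in $\ker(M\vert_T\otimes \Id_{U\setminus T})$. Then condition (2) gives
\[
\sum_{D'} M'_{T'} \succeq \tfrac{1}{C}\,\Pi,
\]
where $\Pi$ is the projector onto the complement of that kernel, tensored with $\Id$ on the remaining qubits. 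Condition (1) gives $M_T \preceq C\,\Pi$, because $M_T$ vanishes on $\ker\Pi$ and has norm at most $C$. Combining these yields $M_T\preceq C^2\sum_{D'}M'_{T'}\preceq C^2\sum_D M'_{T'}$. So everything reduces to choosing $D'$ and verifying the kernel inclusion.

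\textbf{Choice of $D'$.} For each qubit $v\in T$, the spanning forest inside $D$ (\cref{prop:buildSpanningForest}) contains a term covering $v$; I put one such term into $D'$. In addition, if \cref{alg:extractDominatingCover} added a term $\hat T_v$ at iteration $v$, I put $\hat T_v$ into $D'$ as well. This gives $|D'|\le 2r$, as required.

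\textbf{Case 1: some $v\in T$ received an added term $\hat T_v$.} I would argue that $\hat T_v$, together with the forest terms around it, has trivial local kernel. By \cref{lem:uniquepsi}, applied to the nullity-$1$ sub-Hamiltonian formed by the forest terms of $D'$ on their own union, the local ground space has dimension at most $1$. The term $\hat T_v$ was added because it dominates $F$. I would use this to show that $\hat T_v$ has nonzero energy on the unique candidate ground state, so the local kernel of $D'$ is $0$. The inclusion then holds trivially.

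\textbf{Case 2: no term was added at any $v\in T$.} Then $(T,M)$ does not dominate $F$ at that iteration, and domination is monotone, since enlarging $F$ only shrinks $\Psi_F$. Hence $\Psi_D\subseteq\ker M_T$.

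\textbf{Main obstacle.} The hard step is localizing these global statements, i.e.\ passing from domination relative to all of $D$ (or $F$) to a kernel statement about the $\le 2r$ terms of $D'$ on $U$. The global ground state of $D$ may be $0$ or may be supported on far-away structure. My first approach would be to use the explicit path/product description of ground states from the proof of \cref{lem:uniquepsi}. Suppose a local kernel vector $\phi$ of $D'$ had nonzero $M_T$-energy. I would try to extend $\phi$ to a global witness $\psi$ by tensoring it with suitable kernel vectors on the forest components disjoint from $U$, and show that $\psi$ lies in $\Psi_F$ while $M_T$ gives it positive energy. That would contradict the non-domination in Case 2, and would play the analogous role in Case 1. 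If this extension fails because forest components overlap $U$ only partially, I would instead enlarge $D'$ to include every forest term meeting $T$; this is still at most $2r$ terms up to constants, and I would adjust $\ell_0$ accordingly.
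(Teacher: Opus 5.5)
Your route is the paper's own: the paper takes exactly your $D'$ (its $E$), the same case split, \cref{lem:uniquepsi} plus domination in one case and non-domination in the other, and the same boundedness sandwich. It even asserts the stronger inequality $M_T\preceq C^2\sum_{D'}M'_{T'}$. The step you flag as the main obstacle is a genuine gap. It cannot be closed by localizing more carefully, because the kernel inclusion you need is false in both of your cases.

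\textbf{Your Case 2 fails.} Take $r=2$, $n=4$, and let $P_s$ be the projector onto $\ket{s}$. Put $\Id-P_{00}$ on $(1,2)$ and $(3,4)$; these two terms form the greedy spanning forest $F_0$. Put $\Id-P_{11}$ on $(1,3)$ and on $T=(2,4)$. All matrices are diagonal $0/1$, so one may take $C=1$. Then $\Psi_{F_0}=\spn\{\ket{0000}\}$. Processing qubit $1$ first adds $(1,3)$, after which $\Psi_F=0$ and nothing else is added. So $T\notin D$ and no term was added at a vertex of $T$. Your $D'=\{(1,2),(3,4)\}$ annihilates $\ket{0000}$, yet $\cE_{T,\Id-P_{11}}(\ket{0000})=1$. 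Hence $M_T\preceq c\sum_{D'}M'_{T'}$ fails for every $c$, although the lemma itself holds here, since $\sum_D M'_{T'}\succeq\Id$. The global fact $\Psi_D\subseteq\ker M_T$ that you derive is vacuous because $\Psi_D=0$, so there is no global witness to extend a local kernel vector to.

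\textbf{Your Case 1 fails similarly.} The forest terms in $D'$ need not cover $\hat T_v\setminus T$, so \cref{lem:uniquepsi} says nothing on $U$. For example, take forest terms $(1,3),(2,4),(5,6)$ with $\Id-P_{00}$, added term $(1,5)$ with $\Id-P_{01}$ (chosen at qubit $1$), and $T=(1,2)$ with $\Id-P_{11}$. Then $\ket{0}_1\ket{0}_2\ket{0}_3\ket{0}_4\ket{1}_5$ lies in the local kernel of $D'$ and has $M_T$-energy $1$.

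\textbf{Your fallback does not help.} A qubit can lie in unboundedly many forest terms (the greedy forest can be a star). Also, $\ell_0=2r$ is a hypothesis of the lemma, not something you may adjust.

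\textbf{What is missing} is a split on whether $\Psi_D$ vanishes, rather than on where the added term sits. Since $F_0$ covers every qubit, $\dim\Psi_{F_0}\le 1$. Any addition therefore makes $\Psi_F=0$, so at most one term is ever added.
\begin{itemize}
\item \emph{Case $\Psi_D\neq 0$.} Then $D=F_0$, $\Psi_D=\spn\{\psi^*\}$, and $(T,M)$ was tested at a vertex of $T$ without dominating, so $\cE_{T,M}(\psi^*)=0$. Let $\spn\{\tau\}$ be the local kernel of the forest part of $D'$ on $W=V(D')\supseteq T$. Every restriction $\psi^*_z$ with $z\in\zo^{[n]\setminus W}$ lies in this span, so $\psi^*=\tau\otimes\omega$ with $\omega\neq 0$. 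This forces $\tau$ into the local kernel of $M_T$, which is exactly the localization you wanted, and your sandwich finishes the case.
\item \emph{Case $\Psi_D=0$.} Do not look near $T$ at all. For a nullity-one term, $\ker (M_i)_{T_i}=\phi_i\otimes\C^{\zo^{[n]\setminus T_i}}$. Suppose every two intersecting terms of $D$ had a common nonzero kernel vector on $T_i\cup T_j$. Then each $\phi_i$ would factor across $T_i\cap T_j$ versus $T_i\setminus T_j$ for every intersecting $T_j$, hence over the atoms of the hypergraph inside $T_i$. The atom factors would agree up to scalars across terms, and their tensor product would be a nonzero element of $\Psi_D$. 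So some intersecting pair in $D$ has trivial joint kernel. $(C,2r)$-boundedness with $\ell=2$ then gives $\sum_D M'_{T'}\succeq \frac1C\Id$, and hence $M_T\preceq C\,\Id\preceq C^2\sum_D M'_{T'}$.
\end{itemize}
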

\begin{proof}
    By construction (see \cref{alg:extractDominatingCover}), we can write $D = F\cup \bigcup_{v\in[n]}D^{(v)}$, where $F$ is a spanning forest of $\cH$, and $D^{(v)}$ is a set of size $\leq 1$, which, if non-empty, contains a hyperedge $T_v$ containing $v$.

    Now, for every $t\in T$, choose a hyperedge $T^{(t)}\in F$ containing $t$. Such a hyperedge exists because $F$ is a spanning forest. Finally, write $E:= \{T^{(t)}: t\in T\}\cup\bigcup_{t\in T} D^{(t)}$. Note that ${\vert}E{\vert}\leq 2r$, and thus ${\vert}V_E{\vert}\leq 2r^2$, where $V_E := \bigcup_{T\in E}T$. 

    We shall prove the stronger claim that $M_T\preceq C^2\cdot\sum_{(T', M')\in E}M'_{T'} =: A_E$. WLOG we can restrict our attention to the qubits in $V_E$. Note that $M_T, M'_{T'}$ act as $\Id$ on the qubits in $[n]\setminus V_E$, and thus it suffices to prove $\cE_{T, M}(\psi)\leq C^2\cdot\sum_{(T', M')\in E}\cE_{T', M'}(\psi)$ only for unit vectors $\psi\in\ker(M_T)^\perp\cap\C^{\zo^{V_E}}$. 

    We now make cases: 
    
    \parhead{${\vert}\bigcup_{t\in T}D^{(t)}{\vert} = 0$} In this case, for all $(T, M)\in\cH\setminus F$, and for any $\psi\in\C^{\zo^{V_E}}$, if $\cE_{T, M}(\psi) > 0$, then $\sum_{(T', M')\in F}\cE_{T', M'}(\psi) > 0$. Consequently, $\sum_{(T', M')\in F}M'_{T'}$ is a strictly positive operator on $\ker(M_T)^\perp\cap\C^{\zo^{V_E}}$, and thus by $(C, 2r)$-boundedness, for any unit vector $\psi\in\ker(M_T)^\perp\cap\C^{\zo^{V_E}}$, $\sum_{(T', M')\in F}\cE_{T', M'}(\psi)\geq 1/C$. On the other hand, also by $(C, 2r)$-boundedness, for any unit vector $\psi\in\ker(M_T)^\perp\cap\C^{\zo^{V_E}}$, $\cE_{T, M}(\psi)\leq\lambda_{\max}(M)\leq C$, and we're done.

    \parhead{${\vert}\bigcup_{t\in T}D^{(t)}{\vert} \geq 1$} In this case $D^{(t)} = \{R^{(t)}\}$ for some $t\in T$. We claim that $A_E$ is invertible: Indeed, if $\sum_{(T', M')\in F}M'_{T'}$ is invertible, then we're already done. Otherwise, by \cref{lem:uniquepsi}, $(A_E)\vert_{\C^{\zo^{V_E}}}$ has a dimension $1$ null-space, which we denote as $\spn\{\tau\}$. Equivalently, if $\sum_{(T', M')\in F}\cE_{T', M'}(\psi) = 0$, then $\psi\in\spn\{\tau\}$. \footnote{since $\ker(A_E)\seq\ker\left(\sum_{(T', M')\in F}M'_{T'}\right)$. By \cref{lem:uniquepsi} we have $\dim\ker\left(\sum_{(T', M')\in F}M'_{T'}\right)\leq 1$, but we also know that $\dim\ker(A_E) = 1$, and thus $\dim\ker\left(\sum_{(T', M')\in F}M'_{T'}\right) = 1\implies \ker\left(\sum_{(T', M')\in F}M'_{T'}\right) = \ker(A_E)$} On the other hand, since $R^{(t)}$ dominates $F$ (by construction), there exists a unit vector $\nu\in\C^{\zo^{V_E}}$ such that $\cE_{R^{(t)}, M}(\nu) > 0 = \sum_{(T', M')\in F}\cE_{T', M'}(\nu)$, where $M$ is the local operator associated to $R^{(t)}$. But $0 = \sum_{(T', M')\in F}\cE_{T', M'}(\nu)\implies\nu\in\spn\{\tau\} = \ker(A_E)\seq\ker(M_T)\implies\nu\in\ker(M_T)$, leading to a contradiction.
    
    Thus $A_E$ is invertible, and thus by $(C, 2r)$-boundedness, for any unit vector $\psi\in\ker(M_T)^\perp\cap\C^{\zo^{V_E}}$, $\langle\psi, A\psi\rangle\geq 1/C$. At this point, we're done as in the previous case.\qedhere
\end{proof}

With \cref{lem:dominatingcoverenergy} now established, we proceed to our algorithm for building sparsifiers. Roughly speaking, the intuition is that we will recover ``dominating covers''. If we denote such a dominating cover by $D$, the key insight is that, for every other term $T \in \mathcal{H}$, it must be the case (by \cref{lem:dominatingcoverenergy}), that an $1/C^2$ fraction of $T$'s energy is contributed by $D$.

We then repeat this process, peeling off more spanning forests of $\mathcal{H}$, roughly $O_r(\eps^{-2}C^2n)$ many times. At this point, if we denote the collection of spanning forests by $D_1, \dots D_k$, we will be able to guarantee that, for any $\psi \in \C^{2^n}$ and any $T \in \mathcal{H} - D_1 - \dots - D_k$ that $\cE_{D_1 \cup \dots \cup D_k}(\psi) \geq \eps^{-2}n \cdot \cE_{T}(\psi)$. Morally, this means that the ``importance'' of $T$ is quite small, and we will then leverage this to show that random sampling produces a sparsifier of the Hamiltonian $\mathcal{H}$.

With this, we are now ready to present our sparsification algorithm. 

\begin{algorithm}
\caption{Sparsify$(\mathcal{H} = \{(T_i, M_i):{i \in [m]}\}, n, \eps)$, $\cM:= \{M_i\}$ is $(C, 2r)$-bounded}\label{alg:buildnullity1Sparsifier}
\If{$m \leq 100\eps^{-2} C^2 \cdot n^2$}{
 \Return{$w: [m] \rightarrow \{1\}$.}
}
\Else{
$\mathrm{DominatingCovers} = \emptyset$. \\
\For{$j \in [100C^2 \cdot \eps^{-2} n]$}{
Let $D_j$ be a dominating cover of $\{(T_i, M_i):{i \in [m] - \bigcup_{D \in \mathrm{DominatingCovers}} D}\}$, as per \cref{alg:extractDominatingCover}. \\
$\mathrm{DominatingCovers} \leftarrow \mathrm{DominatingCovers} \cup \{D_j\}$. \\
}
Let $S$ be a uniformly random sample (at rate $1/2$) of $[m] - \bigcup_{D \in \mathrm{DominatingCovers}} D$. \label{line:samplenull1}\\
Let $w: \bigcup_{D \in \mathrm{DominatingCovers}} D \rightarrow \{1\}$. \\
Let $w' = 2 \cdot \mathrm{Sparsify}(\{(T_i, M_i):{i \in S}\}, n, \eps)$. \\
\Return{$w \sqcup w'$.}
}
\end{algorithm}

\subsection{Bounding Sparsifier Size}

We first bound the size of the sparsifier returned by \cref{alg:buildnullity1Sparsifier}.

To do this, we make the following basic claim about the number of rounds of sampling that are performed in \cref{alg:buildnullity1Sparsifier}:

\begin{claim}\label{clm:boundNumberRoundsnull1}
    Let $\mathcal{H} = \{(T_i, M_i):{i \in [m]}\}$ be a Hamiltonian over $n$ qubits. Then, when \cref{alg:buildnullity1Sparsifier} is called on $\mathcal{H}$ with parameter $\eps$, the number of recursive calls to $\mathrm{Sparsify}$ is bounded by $O(r \log(n))$.
\end{claim}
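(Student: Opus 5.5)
We bound the recursion depth by tracking how the number of terms changes from one call of \cref{alg:buildnullity1Sparsifier} to the next. The algorithm recurses only when the current instance has $m > 100\eps^{-2}C^2 n^2$ terms; if $m$ is at most this threshold it returns immediately. So it suffices to show that the number of terms shrinks geometrically (with high probability) along the recursion, and then to compare the starting size with the threshold.

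\textbf{Step 1 (one level).} Consider a call on an instance with $m$ terms that does recurse. The peeled dominating covers $D_1,\dots,D_k$, with $k = 100C^2\eps^{-2}n$, each have at most $2n$ elements, as noted after \cref{alg:extractDominatingCover}. Hence at least $m - 200C^2\eps^{-2}n^2 \geq m/2$ terms remain. The set $S$ is a rate-$1/2$ sample of these remaining terms, so $\E|S| \leq m/2$. Since $m > 100\eps^{-2}C^2n^2 \geq n^2$, a Chernoff bound gives $|S| \leq \tfrac{3}{4}m$ except with probability $\exp(-\Omega(n^2))$. Deterministically we also have $|S| \leq m$, so the recursion never increases the instance size.

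\textbf{Step 2 (initial size).} As in the proof of \cref{prop:r-partite-WLOG}, we may merge parallel terms on the same tuple $T$ by replacing $(T,M_1),(T,M_2)$ with $(T,M_1+M_2)$. After this, $m \leq n^r$. I should check that the merged operator keeps nullity at most $1$ and stays within the boundedness assumptions. If this is problematic, I would instead state the bound as $O(\log m)$, which equals $O(r\log n)$ whenever $m \leq \poly(n^r)$, as in the paper's setting.

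\textbf{Step 3 (counting).} With Steps 1 and 2, after $j$ recursive levels the instance has at most $(3/4)^j n^r$ terms. Once this drops below $100\eps^{-2}C^2n^2$ the recursion stops. This happens after at most $\log_{4/3}(n^r) = O(r\log n)$ levels.

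\textbf{Failure probability.} A union bound over these $O(r\log n)$ levels keeps the total failure probability at $O(r\log n)\cdot e^{-\Omega(n^2)}$. The claim can therefore be stated either with high probability, or deterministically for the expected depth.

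\textbf{Main obstacle.} The delicate point is the slack in Step 1: the peeled covers remove only $O(C^2\eps^{-2}n^2)$ terms, which is at most half the current instance. That is exactly why the threshold is $100\eps^{-2}C^2n^2$ rather than something linear in $n$. The Chernoff step needs the threshold to be at least polynomially large in $n$, which it is.
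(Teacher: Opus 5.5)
Your argument is correct, but it takes a different route from the paper.

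\textbf{Comparison with the paper.} The paper never tracks the instance size. A fixed term reaches the $j$-th recursive call only by surviving $j$ independent rate-$1/2$ samples, so this happens with probability at most $2^{-j}$. After $20r\log n$ levels, a union bound over the at most $n^r$ terms shows the sample is empty except with probability $n^{-19r}$. That argument needs no concentration and never uses the threshold $100\eps^{-2}C^2n^2$.

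Your per-level Chernoff argument does use that the threshold is at least $n^2$. In return it gives a much smaller failure probability: $O(r\log n)\,e^{-\Omega(n^2)}$ rather than $n^{-\Omega(r)}$. This is because the instance only has to shrink below the threshold, not become empty.

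Both proofs rest equally on $m \le n^r$, which the paper simply asserts. The claim is about running the algorithm on the given $\cH$, so merging parallel terms is not really available. Your fallback bound $O(\log m)$, or an explicit assumption of distinct tuples, is the accurate way to state it.

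\textbf{A fix in Step 1.} The inequality $m - 200C^2\eps^{-2}n^2 \ge m/2$ is false when $m$ is just above $100\eps^{-2}C^2n^2$, since the covers may remove everything. Your ``main obstacle'' remark, which builds on it, is therefore misplaced. You also do not need it. With $N \le m$ terms remaining, $|S|\sim\mathrm{Bin}(N,1/2)$, and Hoeffding gives
\[
\Pr\left[|S| > 3m/4\right] \le \exp\left(-2(m/4)^2/N\right) \le e^{-m/8} \le e^{-n^2/8}.
\]
Peeling off covers only helps this bound.

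\textbf{Expected depth.} An expected-depth version needs a worst-case bound on the depth in the failure event. For example, each recursing call peels a nonempty spanning forest, so the depth is at most $m$.
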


\begin{proof}
    First, we can observe that initially $m \leq n^r$, as each term is only of arity $r$. Now, \cref{alg:buildnullity1Sparsifier} operates recursively, where in each recursive round, some set of terms is explicitly removed (in $\mathrm{DominatingCovers}$), and the remaining terms in $[m] - \bigcup_{D \in \mathrm{DominatingCovers}} D$ are sampled at rate $1/2$.

    Now, if we let $S^{(j)}$ denote the argument to the $j^{\mathrm{th}}$ recursive call to $\mathrm{Sparsify}$, the probability that a fixed term $T$ is in $S^{(j)}$ is bounded by $1 / 2^j$ (as any such term must have been sampled in the $j$ rounds leading to this recursive call). 
    
    Thus, after $20r \log(n)$ rounds, the probability that any term $T$ is in the recursive call to sparsify is bounded by $1 / n^{20r}$. Thus, by a simple union bound over all of the terms, we see that 
    \[
    \Pr[{\vert}S^{(2r \log(n))}{\vert} > 0] \leq n^{-19r},
    \]
    and thus the procedure terminates within $O_r(\log(n))$ iterative rounds.
\end{proof}

\begin{claim}\label{clm:sparsifierSizenull1}
    Let $\mathcal{H} = \{(T_i, M_i):{i \in [m]}\}$ be a Hamiltonian over $n$ qubits, and let $w: [m] \rightarrow \R_{\geq 0}$ be the weight function returned by \cref{alg:buildnullity1Sparsifier}, when called on $\mathcal{H}$ with parameter $\eps$. Then, the size of $w$ is bounded by $O_r(\eps^{-2} C^2n^2 \log(n))$.
\end{claim}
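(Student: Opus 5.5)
The plan is to charge the support of $w$ to the individual recursive calls of \cref{alg:buildnullity1Sparsifier}, and then multiply a per-call bound by the bound on the number of calls from \cref{clm:boundNumberRoundsnull1}.

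\emph{Support of $w$.} Unrolling the recursion, the returned weight function is a disjoint union $w^{(0)} \sqcup 2w^{(1)} \sqcup 4 w^{(2)} \sqcup \cdots$. Each $w^{(j)}$ is produced at recursion depth $j$, and it is one of two things:
\begin{itemize}
\item the all-ones weighting on the dominating covers extracted at that depth; or
\item at the final depth, the all-ones weighting on the whole remaining instance, since the base case has been reached.
\end{itemize}
Rescaling by a power of $2$ does not change the support. So ${\vert}\supp(w){\vert}$ is at most the sum, over all depths, of the number of terms kept at that depth.

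\emph{Non-base calls.} At each non-base call the loop runs $100C^2\eps^{-2}n$ times. Each iteration adds a dominating cover $D_j$ produced by \cref{alg:extractDominatingCover}. By the remark after that algorithm, ${\vert}D_j{\vert}\leq 2n$: the spanning forest has at most $n$ terms by \cref{prop:buildSpanningForest}, and at most one extra term is added per vertex. This holds for every sub-instance, because the sub-instances still live on at most $n$ qubits. Hence such a call contributes at most $200C^2\eps^{-2}n^2$ terms.

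\emph{Base call.} The base case is triggered only when the current number of terms is at most $100\eps^{-2}C^2n^2$. It therefore contributes at most that many terms.

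\emph{Number of calls.} By \cref{clm:boundNumberRoundsnull1}, the number of recursive calls is $O(r\log n)$; this holds with high probability over the sampling in Line~\ref{line:samplenull1}. Summing the per-call bound over all calls gives
\[
{\vert}\supp(w){\vert}\leq O(r\log n)\cdot 200C^2\eps^{-2}n^2 = O_r(\eps^{-2}C^2n^2\log n).
\]

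\emph{Main obstacle.} The step needing the most care is the depth bound. The recursion can also stop earlier, when the instance drops below the threshold, and that only helps. The main point is that \cref{clm:boundNumberRoundsnull1} is probabilistic. The size bound should therefore be stated as holding with probability $1-n^{-\Omega(r)}$. Alternatively, one can observe that after $O(r\log n)$ halvings the expected instance size is below $1$. One should also check that the per-call bound never depends on the current $m$, which is true because it depends only on $n$.
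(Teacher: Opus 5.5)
Your proposal is correct and follows essentially the same route as the paper. The paper also bounds each call's contribution by (number of dominating covers $\times$ $O(n)$ per cover) $= O(\eps^{-2}C^2n^2)$ and multiplies by the $O(r\log n)$ recursion depth from \cref{clm:boundNumberRoundsnull1}. Your explicit accounting for the base call, and your note that the depth bound (and hence the size bound) holds only with high probability, are small refinements that the paper's proof glosses over.
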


\begin{proof}
    By \cref{clm:boundNumberRoundsnull1}, we see that the number of recursive calls made in \cref{alg:buildnullity1Sparsifier} is bounded by $O_r(\log(n))$.

    In each such recursive call, the support size of the sparsifier increases by exactly $\bigcup_{F \in \mathrm{DominatingCovers}} F$. We know that any dominating cover has size $\leq O(n)$, and we store $\leq 100\eps^{-2}C^2 \cdot n$ many such dominating covers. Thus, in each recursive call, the size of the sparsifier increases by $O(\eps^{-2}C^2n^2)$, and taking the union over all of the recursive calls, the total size of the sparsifier is bounded by $O_r(\eps^{-2}C^2n^2 \log(n))$.
\end{proof}

\subsubsection{Sparsifier Correctness}

In this section, we show that \cref{alg:buildnullity1Sparsifier} actually produces \emph{good sparsifiers}. Towards this end, we first prove the following:

\begin{claim}\label{clm:oneRoundSparsifynull1}
    Let $\mathcal{H} = \{(T_i, M_i):{i \in [m]}\}$ be a nullity $1$ Hamiltonian over $n$ qubits. Suppose $\cM:= \{M_i\}\subset\C^{2^r\times 2^r}$ is $(C, 2r)$-bounded. Then, letting $\mathrm{DominatingCovers}$ denote a disjoint collection of $100\eps^{-2}C^2 \cdot n$ many spanning forests, letting $S$ be a uniformly random sample (at rate $1/2$) of $[m] - \bigcup_{D \in \mathrm{DominatingCovers}} F$, letting $w: \bigcup_{D \in \mathrm{DominatingCovers}} D \rightarrow \{1\}$ and $w': S \rightarrow \{2\}$,
    then with probability $1 - 2^{-n}$ over the sample $S$, it is the case that $w \sqcup w'$ is an $\eps$-sparsifier of $\mathcal{H}$.
\end{claim}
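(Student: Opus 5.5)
The plan is to apply Matrix Chernoff (\cref{fact:matrix-bernstein-restated}) to the terms outside the dominating covers, with the deterministic part supplying enough energy that every sampled term has small relative importance. Write $D_1,\dots,D_k$ for the disjoint dominating covers, $k = 100\eps^{-2}C^2 n$, and $A_D := \sum_{j\le k}\sum_{(T',M')\in D_j} M'_{T'}$. Let $\mathcal{R} := [m]\setminus \bigcup_j D_j$ be the remaining terms, and set $A_{\mathcal{R}} := \sum_{i\in\mathcal{R}} (M_i)_{T_i}$. Then $A = A_D + A_{\mathcal{R}}$ is the matrix of $\cH$, and the output matrix is $A_D + \sum_{i\in\mathcal{R}} 2\cdot\1(i\in S)(M_i)_{T_i}$. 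Its expectation is $A$, and the deterministic part $A_D$ has zero error.

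The first step is an importance bound for each remaining term. Fix $(T,M)$ with index in $\mathcal{R}$. When $D_j$ was extracted, $(T,M)$ was still present in the Hamiltonian from which $D_j$ was built. So \cref{lem:dominatingcoverenergy}, applied to that intermediate Hamiltonian, gives $M_T \preceq C^2 \sum_{(T',M')\in D_j} M'_{T'}$ for every $j$; this needs $(C,2r)$-boundedness only for the subfamily, which is inherited. Summing over $j$ gives $k\, M_T \preceq C^2 A_D \preceq C^2 A$, that is,
\[
M_T \preceq \frac{C^2}{k} A = \frac{\eps^2}{100 n} A .
\]

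Next I would set up Chernoff. Let $Y_0 := A_D$, a deterministic term, and $Y_i := 2\cdot\1(i\in S)(M_i)_{T_i}$ for $i\in\mathcal{R}$. These are independent PSD matrices with $\E[\sum Y] = A = Z$. For each $i$ we have $Y_i \preceq 2(M_i)_{T_i} \preceq \frac{\eps^2}{50n} Z$. The deterministic term $Y_0$ is not bounded by $R\cdot Z$ with $R$ small, so I would instead apply Chernoff only to the random part, restricted to the range of $A$. Concretely, I would conjugate by $A^{+1/2}$, using that all kernels are contained in $\ker A$. Alternatively, I would use the standard variant where the deviation is measured against $Z$ rather than against $\E[\sum_{i\in\mathcal R} Y_i]$. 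Writing $X := \sum_{i\in\mathcal{R}}(Y_i - (M_i)_{T_i})$, I would normalize to $\tilde X = A^{+/2} X A^{+/2}$, whose summands have operator norm at most $R := \eps^2/(50n)$ and variance $\sum \E \tilde Y_i^2 \preceq R\cdot \Id$. Matrix Bernstein then gives $\|\tilde X\|_{\op} \le \eps$ with failure probability $2^n\exp(-\Omega(\eps^2/R)) = 2^n e^{-\Omega(n)}$. Choosing the constant $100$ large enough makes this at most $2^{-n}$. Finally, $\|\tilde X\|\le\eps$ is equivalent to $-\eps A \preceq X\preceq \eps A$, which is exactly the sparsifier condition.

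I expect the main obstacle to be this tail-bound bookkeeping. The stated \cref{fact:matrix-bernstein} only gives failure probability $1/N^{10}$, with $N = 2^n$, rather than a $2^{-n}$-type exponential tail with an explicit dependence on $R$. I would therefore use the exponential-tail form of Matrix Bernstein/Chernoff, namely $\Pr[\|\tilde X\|>\eps]\le 2N\exp(-\eps^2/(2\sigma^2+2R\eps/3))$, which is the standard form underlying both facts. A second, lighter point is to justify invoking \cref{lem:dominatingcoverenergy} on the shrinking intermediate Hamiltonians. This works because the lemma applies to any nullity-$1$ Hamiltonian containing $(T,M)$ outside its dominating cover.
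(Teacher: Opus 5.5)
Your proposal is correct and is essentially the paper's argument: both apply \cref{lem:dominatingcoverenergy} to each sequentially extracted cover to get $(M_i)_{T_i}\preceq \frac{\eps^2}{100n}\sum_{j\in[m]}(M_j)_{T_j}$ for every remaining term, and then use matrix concentration on the rate-$1/2$ sample. The only difference is how the deterministic cover terms are handled: the paper splits each into $100\eps^{-2}C^2 n$ identical deterministic copies $\frac{\eps^2}{100C^2 n}(M_i)_{T_i}$, so that the stated Matrix Chernoff bound (\cref{fact:matrix-bernstein-restated}, whose hypothesis needs every summand $\preceq RZ$) applies verbatim, whereas you center, normalize by $A^{+1/2}$, and use an explicit-tail Bernstein bound on the random part only; this is equally valid, but note that the fact you actually need is $\ker A\subseteq\ker (M_i)_{T_i}$ for every $i$, not the reverse inclusion you wrote.
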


\begin{proof}
    First, for each term $(T_i, M_i): i \in \bigcup_{D \in \mathrm{DominatingCovers}} D$, we create multiple random variables, which we denote by $Y_{i, 1}, \dots Y_{i, 100\eps^{-2}C^2 \cdot n}$, where each $Y_{i,j} = \frac{\eps^2 \cdot (M_i)_{T_i} \otimes \Id_{\overline{T_i}}}{100C^2 \cdot n }$ with probability $1$.

    Otherwise, for the terms $(T_i, M_i): i \in [m] - \bigcup_{D \in \mathrm{DominatingCovers}} D$, we define the random variables $Y_i = 2 \cdot (M_i)_{T_i} \otimes \Id_{\overline{T_i}}$ with probability $1/2$, and otherwise is $0$.

    We can observe that 
    \[
    \E\left [\sum_{i \in \bigcup_{D \in \mathrm{DominatingCovers}} D}\sum_{j \in [100\eps^{-2} C^2 \cdot n]}Y_{i,j} + \sum_{i \in [m] - \bigcup_{D \in \mathrm{DominatingCovers}} D} Y_i \right ] = \sum_{i \in [m]} (M_i)_{T_i} \otimes \Id_{\overline{T_i}} = Z.
    \]
    
    Next, we now bound the \emph{importance} of each of our random variables:
    \begin{enumerate}[(1)]
        \item For each $Y_{i,j}$ for $i \in \bigcup_{D \in \mathrm{DominatingCovers}} D$ and $j \in [100\eps^{-2} C^2 \cdot n]$, we see that \[
        Y_{i,j} = \frac{\eps^2 \cdot (M_i)_{T_i} \otimes \Id_{\overline{T_i}}}{100C^2 \cdot n } \preceq \frac{\eps^2}{100C^2 \cdot n} \cdot Z,
        \]
        as $Z$ contains the term $ (M_i)_{T_i} \otimes \Id_{\overline{T_i}}$.
        \item Next, we consider the terms $Y_i$ for $i \in [m] - \bigcup_{D \in \mathrm{DominatingCovers}} D$. Here, we see that 
        \[
        Z \succeq \sum_{D \in \mathrm{DominatingCovers}} \sum_{i \in D} (M_i)_{T_i} \otimes \Id_{\overline{T_i}}.
        \]
        Importantly, because each $D$ is a dominating cover, this means that for each term $T_i: i \in [m] - \bigcup_{D \in \mathrm{DominatingCovers}} D$, by \cref{lem:dominatingcoverenergy}, 
        \begin{align}\label{eq:dominatenull1}
        (M_i)_{T_i} \otimes \Id_{\overline    T_i} \preceq O\left(C^2\cdot\sum_{(M_j, T_j)\in D}(M_j)_{T_j} \otimes \Id_{\overline {T_j}}\right).
        \end{align}
        Thus, we see that 
        \[
        Z \succeq \sum_{D \in \mathrm{DominatingCovers}} \sum_{(M_j, T_j)\in D}(M_j)_{T_j} \otimes \Id_{\overline {T_j}} \succeq {\vert}\mathrm{DominatingCovers}{\vert} \cdot \frac{1}{C^2} \cdot  (M_i)_{T_i} \otimes \Id_{\overline{T_i}}\] 
        \[= \frac{100n}{2\eps^2} \cdot (2(M_i)_{T_i} \otimes \Id_{\overline{T_i}}) = \frac{100n}{\eps^2} \cdot Y_i.
        \]
    \end{enumerate}

In particular, in the language of \cref{fact:matrix-bernstein-restated}, we can set the constant $R = \frac{\eps^2}{100n}$. Thus, \cref{fact:matrix-bernstein-restated} guarantees that with probability $1 - 2 \cdot (2^n \cdot \exp(-100n\eps^2 / 3\eps^2)) \geq 1 - 2^{-n}$ that 
\begin{align}\label{eq:PSDSandwichnull1}
(1 - \eps) Z \preceq \sum_{i \in \bigcup_{D \in \mathrm{DominatingCovers}} D}\sum_{j \in [100\eps^{-2} C^2 \cdot n]}Y_{i,j} + \sum_{i \in [m] - \bigcup_{D \in \mathrm{DominatingCovers}} D} Y_i \preceq (1 + \eps) Z.
\end{align}

In particular, for $\psi \in \C^{2^n}$, we can see that $\langle \psi, Z \psi \rangle = \cE_{\mathcal{H}}(\psi)$.

Next, we can see that our sampling procedure over the terms $Y_i$ is exactly equivalent to giving weight $1$ to all terms in $\bigcup_{D \in \mathrm{DominatingCovers}} D$, and sampling the remaining terms at rate $1/2$ and giving weight $2$ to those that are sampled. 

Thus, letting $S$ be a uniformly random sample (at rate $1/2$) of $[m] - \bigcup_{D \in \mathrm{DominatingCovers}} D$, letting $w: \bigcup_{D \in \mathrm{DominatingCovers}} D \rightarrow \{1\}$ and $w': S \rightarrow \{2\}$,
    then with probability $\geq 1 - 2^{-n}$ over the sample $S$, it is the case that $w \sqcup w'$ is an $\eps$-sparsifier of $\mathcal{H}$, as \cref{eq:PSDSandwichnull1} implies that 
    \[
    \cE_{\mathcal{H},w }(\psi) \in [1\pm\eps] \cE_{\mathcal{H}}(\psi)\mcom
    \]
    as desired.
\end{proof}

The preceding claim shows that \emph{in one single round of sampling}, the accuracy of the sparsifier is within the tolerable $(1 \pm \eps)$ regime. However, we then repeat this procedure, sparsifying part of the sparsifier. 

Thus, we get the following claim:

\begin{claim}\label{clm:accuracynull1}
    Let $\mathcal{H} = \{(T_i, M_i):{i \in [m]}\}$ be a nullity $1$ Hamiltonian over $n$ qubits. Then, letting $w$ denote the returned sparsifier from \cref{alg:buildnullity1Sparsifier}, it is the case that, for every $\psi \in \C^{2^n}$
    \[
    \cE_{\mathcal{H}, w}(\psi) \in (1 \pm \eps)^{100 r \log(n)} \cdot \cE_{\mathcal{H}}(\psi).
    \]
\end{claim}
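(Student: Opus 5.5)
The plan is to induct on the recursion depth of \cref{alg:buildnullity1Sparsifier}, using \cref{clm:oneRoundSparsifynull1} as the inductive step and \cref{clm:boundNumberRoundsnull1} to bound the number of steps. Let $\cH^{(0)} := \cH$, and let $\cH^{(j)}$ be the weighted Hamiltonian passed to the $j$-th recursive call: it consists of the sampled set $S^{(j)}$, with every weight equal to $2^j$. Write $w^{(j)}$ for the weight function returned by the call on $\cH^{(j)}$. We will show by downward induction on $j$ that, for every $\psi$,
\[
\cE_{\cH^{(j)}, w^{(j)}}(\psi) \in (1\pm\eps)^{J-j}\,\cE_{\cH^{(j)}}(\psi),
\]
where $J$ is the total number of recursive calls. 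By \cref{clm:boundNumberRoundsnull1} (with high probability), $J \leq O(r\log n) \leq 100r\log n$. The base case is the final call, where $m \leq 100\eps^{-2}C^2n^2$. There all weights are kept unchanged, so the energy is preserved exactly.

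For the inductive step, consider the call on $\cH^{(j)}$. It keeps the dominating covers $D$ with their current weight and returns $w^{(j)} = w_D \sqcup 2w^{(j+1)}$. By \cref{clm:oneRoundSparsifynull1}, the intermediate weighting $w_D \sqcup 2\cdot\one_S$ satisfies
\[
\cE_{D}(\psi)+2\,\cE_{S}(\psi) \in (1\pm\eps)\,\cE_{\cH^{(j)}}(\psi).
\]
This claim is stated for unit weights, but it applies to the uniformly weighted $\cH^{(j)}$ because scaling all terms by $2^j$ commutes with the Loewner inequalities. It holds with probability $1-2^{-n}$. The induction hypothesis, applied to $\cH^{(j+1)} = 2\cdot S$, gives
\[
2\,\cE_{S,w^{(j+1)}} \in (1\pm\eps)^{J-j-1}\cdot 2\,\cE_S.
\]
Every term involved is PSD, so the energies are nonnegative. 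Hence
\[
\cE_D + 2\,\cE_{S,w^{(j+1)}} \in \bigl[(1-\eps)^{J-j-1},\,(1+\eps)^{J-j-1}\bigr]\cdot\bigl(\cE_D + 2\,\cE_S\bigr),
\]
because $\cE_D$ is multiplied by a factor of $1$, which lies in this interval. Chaining this with the one-round bound gives exponent $J-j$. Taking $j=0$ yields the stated bound with exponent at most $100r\log n$. A union bound over the $O(r\log n)$ rounds, each failing with probability at most $2^{-n}$, together with the failure probability in \cref{clm:boundNumberRoundsnull1}, gives overall success probability $1-1/\poly(n)$.

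The main point to check is that \cref{clm:oneRoundSparsifynull1} genuinely applies at every level of the recursion. It needs three things:
\begin{enumerate}[(1)]
\item The operator family must remain $(C,2r)$-bounded. This holds because the recursion only takes subsets of the same matrices, and the uniform factor $2^j$ is absorbed by scaling.
\item Each later dominating cover must still dominate the remaining terms, which is needed for \cref{lem:dominatingcoverenergy}. The subtlety is that $D_j$ is a dominating cover of the residual Hamiltonian, not of $\cH^{(j)}$. The relevant inequality, however, only concerns terms outside all the covers, and each such term lies in the residual Hamiltonian that every $D_j$ was built against. So the bound $M_T \preceq C^2\sum_{D_j} M'_{T'}$ holds for each $j$, and summing over $j$ gives the factor $|\mathrm{DominatingCovers}|/C^2$ used in \cref{clm:oneRoundSparsifynull1}.
\item The $(1\pm\eps)$ guarantees must hold for all $\psi$ at once. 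This is automatic, since they are Loewner-order statements.
\end{enumerate}
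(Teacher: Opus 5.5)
Your proof is correct and follows essentially the same route as the paper: induct over the $O(r\log n)$ recursion levels, using \cref{clm:oneRoundSparsifynull1} as the one-level step, nonnegativity of the energies to absorb the dominating-cover terms (which carry factor $1$), and a union bound over rounds together with \cref{clm:boundNumberRoundsnull1}. The differences are cosmetic. The paper inducts top-down, comparing $w^{(\ell)}$ with $w^{(\ell-1)}$ by replacing the deepest sample $S^{(\ell-1)}$ with its one-round sparsifier, whereas you induct bottom-up from the leaf call; your explicit check that each later cover $D_j$ still dominates every surviving term is a detail the paper leaves implicit.
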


\begin{proof}
    We consider the execution of \cref{alg:buildnullity1Sparsifier}. We let $D^{(1)}$ denote the collection of terms in the dominating covers in round $1$, and let $S^{(1)}$ denote the resulting sample of the remaining terms in round $1$. In general, we will consider this across all iterations of the above procedure, using $D^{(\ell)}$ and $S^{(\ell)}$ to denote the recovered forests and sampled terms from round $\ell$.

    Likewise, we let $w^{(\ell)}$ denote the weight function which gives weight $2^j$ to terms in $D^{(j)}$ for $j \in [\ell]$, and weight $2^{\ell+1}$ to terms in $S^{(\ell)}$ (and weight $0$ to all other terms). We also let $\widehat{w}^{(\ell)}$ denote the weight function which gives weight $1$ to every term in $F^{(\ell)}$ and weight $2$ to every term in $S^{(\ell)}$ (and weight $0$ to every other term).

    If we let $\mathbf{1}[A]$ for $A \subseteq [m]$ denote the weight function which gives weight $1$ to terms with index in $A$, then we can interpret \cref{clm:oneRoundSparsifynull1} as saying that (with probability $1 - 2^{-n}$), for every $\psi \in \C^{2^n}$,
    \[
     \cE_{\mathcal{H}, \widehat{w}^{(\ell)}}(\psi)\in (1 \pm \eps) \cE_{\mathcal{H}, \mathbf{1}[S^{(\ell-1)}]}(\psi) .
    \]

    Now, we claim that, by induction, 
    \[
    \cE_{\mathcal{H}, w^{(\ell)}}(\psi) \in (1 \pm \eps)^{\ell}\cE_{\mathcal{H}}(\psi).
    \]
    Indeed, the base case follows trivially via \cref{clm:oneRoundSparsifynull1}. For the inductive step, let us suppose the hypothesis holds for $\ell-1$. Then, we know that 
    \[
    \cE_{\mathcal{H}, w^{(\ell-1)}}(\psi) \in (1 \pm \eps)^{\ell-1}\cE_{\mathcal{H}}(\psi).
    \]
    We can now write 
    \begin{align}\label{eq:expandEnergynull1}
    \cE_{\mathcal{H}, w^{(\ell)}}(\psi) = \sum_{j = 1}^{\ell-1} 2^j \cdot \cE_{\mathcal{H}, \mathbf{1}[F^{(j)}]}(\psi) + \left (2^{\ell} \cdot \cE_{\mathcal{H}, \mathbf{1}[F^{(\ell)}]}(\psi) + 2^{\ell+1} \cdot \cE_{\mathcal{H}, \mathbf{1}[S^{(\ell)}]}(\psi) \right ).
    \end{align}
    By \cref{clm:oneRoundSparsifynull1}, we know that 
    \begin{align}\label{eq:singleTermSubnull1}
    \left (2^{\ell} \cdot \cE_{\mathcal{H}, \mathbf{1}[F^{(\ell)}]}(\psi) + 2^{\ell+1} \cdot \cE_{\mathcal{H}, \mathbf{1}[S^{(\ell)}]}(\psi) \right ) = 2^{\ell} \cdot \cE_{\mathcal{H}, \widehat{w}^{(\ell}}(\psi) \in 2^{\ell} \cdot (1 \pm \eps) \cdot \cE_{\mathcal{H}, \mathbf{1}[S^{(\ell-1)}]}(\psi).
    \end{align}
    Plugging \cref{eq:singleTermSubnull1} into \cref{eq:expandEnergynull1}, we see that 
    \[
    \cE_{\mathcal{H}, w^{(\ell)}}(\psi) \in \sum_{j = 1}^{\ell-1} 2^j \cdot \cE_{\mathcal{H}, \mathbf{1}[F^{(j)}]}(\psi) + 2^{\ell} \cdot (1 \pm \eps) \cdot \cE_{\mathcal{H}, \mathbf{1}[S^{(\ell-1)}]}(\psi)
    \]
    \[
    \in (1 \pm \eps) \cdot \left ( \sum_{j = 1}^{\ell-1} 2^j \cdot \cE_{\mathcal{H}, \mathbf{1}[F^{(j)}]}(\psi) + 2^{\ell} \cdot \cE_{\mathcal{H}, \mathbf{1}[S^{(\ell-1)}]}(\psi) \right ) \in (1\pm\eps) \cdot \cE_{\mathcal{H}, w^{(\ell-1)}}(\psi) \in (1\pm \eps)^{\ell} \cdot \cE_{\mathcal{H}}(\psi),
    \]
    as we desire.

    Plugging in \cref{clm:boundNumberRoundsnull1} (and taking a union bound over all of the failure probabilities of the sparsification rounds), we see that with probability $1 - \frac{1}{n^{10}}$, that for every $\psi \in \C^{2^n}$,
    \[
    \cE_{\mathcal{H}, w}(\psi) \in (1 \pm \eps)^{100 r \log(n)} \cdot \cE_{\mathcal{H}}(\psi),
    \]
    as we desire.
\end{proof}

With these claims in place, we can now prove our main theorem.

\begin{proof}[Proof of \cref{thm:Nullity1Sparsifier}]
    The algorithm for computing the sparsifier is exactly \cref{alg:buildnullity1Sparsifier}. By \cref{alg:extractDominatingCover} and \cref{clm:boundNumberRoundsnull1}, we can see that \cref{alg:buildnullity1Sparsifier} runs in time $\mathrm{poly}(m, n)$.

    Now, to get a $(1 \pm \eps)$ sparsifier, we simply invoke \cref{alg:buildnullity1Sparsifier} with accuracy parameter $\frac{\eps'}{200 r \log(n)}$. From \cref{clm:accuracynull1}, this then ensures that with high probability, the resulting weight function does constitute a $(1 \pm \eps)$ sparsifier of $\mathcal{H}$. Lastly, the size bound follows from \cref{clm:sparsifierSizenull1} invoked with $\eps' = \eps / 200r \log(n)$. This concludes the theorem. 
\end{proof}

\section{Sparsifying Quantum MAX-CSPs}\label{sec:spar-qmcsp}

\subsection{Sparsifying Quantum MAX-CUT}

Recall that an instance of quantum MAX-CUT is specified by a \emph{graph} $G = (V, E, w)$ on $n$ vertices (qubits), where each edge $e = (u,v) \in E$ corresponds with a Hamiltonian term operating on qubits $u, v$. In quantum MAX-CUT, each Hamiltonian term is of the form $M{\vert}_{u,v} \otimes \Id_{[n] - \{u,v\}}$, where 
\begin{align}\label{eq:pauliDecomp}
M = \frac{1}{2}\left(\Id{\vert}_1 \otimes \Id{\vert}_2 - X{\vert}_1 \otimes X{\vert}_2 - Y{\vert}_1 \otimes Y{\vert}_2 - Z{\vert}_1 \otimes Z{\vert}_2\right) = v v^{\dagger},
\end{align}
where $v = \braket{01} - \braket{10} \in \C^{2^2}$.

Going forward, we use our hamiltonian $\cH$ to denote 
\[
\cH(G) = \sum_{e=(u,v)\in E} w(e) \cdot  M{\vert}_{\{u,v\}} \otimes \Id{\vert}_{[n] - \{u,v\}}.
\]
We use $\mathrm{OPT}(G)$ to denote the \emph{maximum energy} of the above expression, across all states. I.e., 
\[
\mathrm{OPT}(G)= \max_{\braket{\psi} \in \C^{2^n}: \Vert \psi \Vert_2 = 1} \langle \psi {\vert} \cH {\vert} \psi \rangle.
\]
We have the following proposition: 
\begin{proposition}[See, for instance, \cite{KallaugherP22}]\label{prop:maxCutOPT}
    $\mathrm{OPT}(G) \geq \frac{m}{16}$, where $m$ is the total sum of edge weights in $G$.
\end{proposition}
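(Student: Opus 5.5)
Since $\mathrm{OPT}(G)$ is a maximum over states, it suffices to show the energy is large for one suitable state. The plan is to use a random product state and then derandomize by averaging.

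Write $\cH(G) = \sum_e w(e) M_e$ with $M = vv^\dagger$ and $v = \braket{01}-\braket{10}$ (up to normalization, $M$ is twice the projector onto the singlet). For a single edge $e=(u,v)$ and a product state $\ket{a}_u\otimes\ket{b}_v$ with unit vectors $a,b\in\C^2$, the energy is
\[
|\langle v, a\otimes b\rangle|^2 = |a_0b_1 - a_1b_0|^2 .
\]
Since $\mathrm{OPT}(G)\geq \langle \psi|\cH|\psi\rangle$ for any unit $\psi$, it is enough to exhibit a product state $\psi = \bigotimes_{u\in V} a_u$ with
\[
\sum_e w(e)\,|\det[a_u\ a_{u'}]|^2 \geq m/16 .
\]

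The first route is to pick each $a_u$ independently and uniformly from the computational basis $\{\ket0,\ket1\}$. Then each edge has energy $1$ when its endpoints differ and $0$ otherwise. Hence each edge contributes expected energy $w(e)\cdot\tfrac12$ (or $w(e)/4$ under the normalization $M=\tfrac12 vv^\dagger$). By linearity of expectation, some assignment achieves at least $m/4 \geq m/16$; this is simply the classical max-cut bound.

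One normalization point needs checking. The displayed Pauli formula
\[
\tfrac12(\Id - XX - YY - ZZ)
\]
has eigenvalue $2$ on the singlet, while $vv^\dagger$ with the unnormalized $v$ also has eigenvalue $2$. So the two agree, a classical cut edge contributes $1$, and the bound becomes $m/2$.

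Either way, the factor $1/16$ leaves slack. This covers any constant lost to normalization conventions, for example if one uses a Haar-random product state, where the expected singlet overlap per edge is $1/4$ times the eigenvalue scale. The stated bound follows in every convention.

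The only real obstacle is this bookkeeping of constants. The substantive step, that a uniformly random classical basis state cuts each edge with probability $1/2$ and that computational basis states on a cut edge have positive overlap with the singlet, is immediate from the formula $|a_0b_1-a_1b_0|^2$.
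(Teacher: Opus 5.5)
Your proof is correct. It takes a different route from the paper only in the trivial sense that the paper gives no argument at all: it just cites Kallaugher and Parekh.

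The core of your argument is that the Hamiltonian is diagonal on computational basis states and there equals the weighted cut function. With $M = vv^\dagger$ and $v = \ket{01}-\ket{10}$, for every $x\in\zo^n$ you get $\langle x|\cH(G)|x\rangle = \sum_{e=(u,u')} w(e)\,\one[x_u\neq x_{u'}]$. Averaging over uniform $x$ gives expected energy $m/2$. Hence $\mathrm{OPT}(G)\geq m/2\geq m/16$, using that the weights are nonnegative.

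Your normalization check is also right. The operator $\tfrac12(\Id - XX - YY - ZZ)$ has eigenvalue $2$ on the singlet and $0$ on the triplet, which matches $vv^\dagger$ with unnormalized $v$. So the paper's Pauli form and its $vv^\dagger$ form agree, and each cut edge contributes exactly $1$.

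Compared with the citation, your argument is self-contained and gives a bound stronger by a factor of $8$. The citation buys only brevity.

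When writing it up, state the $m/2$ bound directly in the paper's convention. Drop the hedging about an $m/4$ alternative and the Haar-random discussion; the single computational-basis computation is the whole proof.
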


Importantly, we can make the following observation:
\begin{align}\label{eq:PSDify}
    \hat{M} := M +\Id \succeq \Id,
\end{align}
where now each eigenvalue in $\hat{M}$ is at least $1$. 

Now, we can define 
\[
\widehat{\cH}(G) = \sum_{e=(u,v)\in E} (\hat{M}){\vert}_{\{u,v\}} \otimes \Id{\vert}_{[n] - \{u,v\}}.
\]
We refer to each term in the above expression as $\hat{H}_e$.
We now claim the following:

\begin{claim}\label{clm:hatSparsifier}
    Let $\widehat{\cH} = \sum_{e=(u,v)\in E} w(e) \cdot (\hat{H}_e \otimes \Id{\vert}_{[n] - \{u,v\}})$ over $n$ qubits. Then, for any $\eps > 0$ there exists a subset $L \subseteq E$, ${\vert}L{\vert}\leq O(\eps^{-2} n)$, along with new weights $\tilde{w}: L \rightarrow R_{\geq 0}$ such that for every state $\braket{\psi} \in \C^{2^n}$,
    \[
    \sum_{e = (u,v) \in L} \tilde{w}(e) \cdot\langle \psi {\vert} \hat{H}_e{\vert} \psi \rangle \in (1 \pm \eps) \cdot \langle \psi {\vert} \widehat{\cH} {\vert} \psi \rangle.
    \]
    Moreover, this set $L$ can be found in time polynomial in $m, n$, and $\sum_{e \in L}\tilde{w}(e) = \sum_{e \in E}w(e)$.
\end{claim}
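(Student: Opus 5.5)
The plan is to exploit that $\hat{M}$ is \emph{full rank and well-conditioned}, and to run the importance-sampling argument of \cref{subsec:full-rank}, now weighted by the edge weights $w(e)$. From \cref{eq:pauliDecomp}, $M = vv^\dagger$ with $\Vert v\Vert_2^2 = 2$, so the eigenvalues of $M$ lie in $\{0,2\}$ and those of $\hat{M} = M + \Id$ lie in $\{1,3\}$. Hence $\Id \preceq \hat{H}_e \preceq 3\Id$ for every edge. Writing $W := \sum_{e\in E} w(e)$ and $A := \widehat{\cH}$, this gives $W\cdot \Id \preceq A \preceq 3W\cdot\Id$.

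First, the sampling step. Fix a large constant $c$ and keep each edge $e$ independently with probability $p_e := \min\{1, c\, n\, w(e)/(\eps^2 W)\}$. A kept edge gets weight $w(e)/p_e$. Define the centered summands
\[
Y_e := \left(\frac{\1(e \text{ kept})}{p_e} - 1\right) w(e)\, A^{-1/2}\hat{H}_e A^{-1/2}.
\]
Edges with $p_e = 1$ contribute $Y_e = 0$. For the other edges, $\Vert Y_e\Vert_{\op} \le \frac{w(e)}{p_e}\cdot\frac{3}{W} = O(\eps^2/n)$. For the variance,
\[
\sum_e \E[Y_e^2] \preceq \sum_e \frac{w(e)}{p_e}\cdot\frac{3}{W}\, w(e)\,A^{-1/2}\hat{H}_eA^{-1/2} \preceq O(\eps^2/n)\cdot \Id.
\]
Applying \cref{fact:matrix-bernstein} with $N = 2^n$ then gives $\Vert \sum_e Y_e\Vert_{\op} \le \eps$ with probability $1 - 2^{-10n}$, provided $c$ is large enough. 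This is exactly the sparsification guarantee, as in \cref{eq:lev-spars}. The expected number of kept edges is at most $\sum_e p_e \le cn/\eps^2$. A scalar Chernoff bound makes $|L| = O(\eps^{-2}n)$ with high probability, and we resample on failure. The whole procedure clearly runs in $\poly(m,n)$ time.

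Second, the unbiasedness condition $\sum_{e\in L}\tilde{w}(e) = \sum_e w(e)$. I would run the step above with accuracy $\eps/3$. Averaging the sparsifier inequality over the computational basis states $\braket{x}$ converts energies into normalized traces. Since $\operatorname{tr}(\hat{H}_e) = 2^{n}\cdot\operatorname{tr}(\hat{M})/4$ is the same for every edge, we get $\sum_{e\in L}\tilde{w}(e) \in (1 \pm \eps/3)\, W$. Rescaling all weights by $W / \sum_{e\in L}\tilde{w}(e)$ then makes the weights sum exactly to $W$, and costs only another $(1\pm\eps/3)$ factor. This is the same trick as in \cref{cor:negxorsparsifier-unbiased}, so the result is an $\eps$-sparsifier.

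The only delicate point is getting $O(\eps^{-2}n)$ rather than $\widetilde{O}(\eps^{-2}n)$ with \emph{arbitrary} weights. This is why I use weight-proportional sampling rather than the uniform rate of \cref{subsec:full-rank}. If log factors were acceptable, an alternative route would be the decomposition $\hat{M} = \tfrac12[(\Id - XX) + (\Id - YY) + (\Id - ZZ)]$ into Pauli PSD matrices, followed by \cref{cor:sparspauliham-decomp}, which gives unbiasedness directly.
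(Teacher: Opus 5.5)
Your proposal is correct and follows essentially the paper's route: the identity shift gives $\Id \preceq \hat{H}_e \preceq 3\Id$ and hence $W\cdot\Id \preceq \widehat{\cH}$, so each weighted term has importance $O(w(e)/W)$; you then sample with probability $\min\{1, \Theta(n w(e)/(\eps^2 W))\}$, reweight by the inverse probability, and rescale so the weights sum to $W$, exactly as the paper does. The differences are minor: you use matrix Bernstein (as in \cref{subsec:full-rank}) where the paper uses the matrix-Chernoff argument of \cref{clm:importanceSamplingLouie}, and you get $\sum_{e\in L}\tilde{w}(e) \in (1\pm\eps/3)W$ deterministically by taking traces (every $\hat{H}_e$ has the same trace) instead of by a separate scalar Chernoff bound, which is a slightly cleaner justification of the rescaling step.
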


\begin{proof}
Again, we use the notion of the \emph{importance} of each term. We see that 
\[
\mathrm{Importance}(\hat{H}_e) = \max_{\braket{\psi}} \frac{\langle \psi {\vert} \hat{H}_e {\vert} \psi \rangle} {\langle \psi {\vert} \hat{\cH} {\vert} \psi \rangle} \leq \frac{\lambda_{\max}(\hat{H}_e)}{\lambda_{\min}(\hat{\cH})} = p_e\leq O(w_e /m ),
\]
where we have used that, by \cref{eq:PSDify}, $\hat{\cH} \succeq m \cdot \Id$.

By \cref{clm:importanceSamplingLouie}, we then see that if we sample each term with probability $\min(1, p_e \cdot \frac{100n}{\eps^2})$, and give weight to each sampled term which is $\frac{w_e}{\min(1, p_e \cdot \frac{100n}{\eps^2})}$, then with probability $1 - 2^{- \Omega(n)}$, the resulting set of terms $L$, with weights $\tilde{w}$, is a $(1 \pm \eps)$ sparsifier of $\cH$. Likewise, by a simple Chernoff bound, the total weight of sampled terms ${\vert}L{\vert}$ is $\in (1 \pm \eps) \cdot m $.

So, to derive the final condition in the claim statement, we simply run the above procedure with $\eps' = \eps / 10$. This finds $L, \tilde{w}$ which is a $(1 \pm \eps')$ sparsifier such that $\sum_{e \in L}\tilde{w}(e) \in (1 \pm \eps') \cdot m$. We can now re-scale the weights by a constant factor $\gamma$ such that $\sum_{e \in L}\tilde{w}(e)  = m$, while only shifting the energy by a factor of $(1 \pm \eps')$. Thus, our resulting sparsifier has accuracy $(1 \pm \eps')^2 \in (1 \pm \eps)$, as we desire.\qedhere
\end{proof}

Finally, we are now able to prove the following theorem:

\begin{theorem}\label{thm:maxcutSparsifier}
    Let $G = (V, E,w )$ be an instance of quantum MAX-CUT with corresponding hamiltonian $\cH$. Then, for any parameter $\eps > 0$, there exists a set $L \subseteq E$, along with weights $\tilde{w}: L \rightarrow \R_{\geq 0}$ such that, letting $\tilde{G} = (V, L, \tilde{w})$, for any state $\braket{\psi}$ which satisfies 
    \[
    \langle \psi {\vert} \cH(\tilde{G}) {\vert} \psi \rangle \geq (1 - \eps) \cdot \mathrm{OPT}(\tilde{G}),
    \]
    then it must also be the case that 
    \[
    \langle \psi {\vert} \cH(G) {\vert} \psi \rangle \geq (1 - 2\eps) \cdot \mathrm{OPT}(G).
    \]
    Moreover, this sparsifier $L, \tilde{w}$ can be found efficiently, in time polynomial in $m, n$.
\end{theorem}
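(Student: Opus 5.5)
We will take $(L,\tilde w)$ to be the sparsifier of the shifted Hamiltonian $\widehat{\cH}$ produced by \cref{clm:hatSparsifier}, applied with accuracy parameter $\eps' = c\eps$ for a small absolute constant $c$ (say $c = 1/40$). The point of shifting is that $\hat{M} = M + \Id$ is well-conditioned, so every term has small importance and a multiplicative sparsifier for $\widehat{\cH}$ exists. Since \cref{clm:hatSparsifier} also guarantees $\sum_{e\in L}\tilde w(e) = \sum_{e\in E} w(e) = m$, the identity parts of the shifted Hamiltonians agree exactly. That is, $\widehat{\cH}(G) = \cH(G) + m\Id$ and $\widehat{\cH}(\tilde G) = \cH(\tilde G) + m\Id$. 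Hence for every unit $\psi$,
\[
\langle\psi|\cH(\tilde G)|\psi\rangle + m \in (1\pm\eps')\bigl(\langle\psi|\cH(G)|\psi\rangle + m\bigr),
\]
and so $\bigl|\langle\psi|\cH(\tilde G)|\psi\rangle - \langle\psi|\cH(G)|\psi\rangle\bigr| \le \eps'\bigl(\langle\psi|\cH(G)|\psi\rangle + m\bigr)$.

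Next we convert this additive-style error into a relative one using \cref{prop:maxCutOPT}: $m \le 16\,\mathrm{OPT}(G)$. Also $\langle\psi|\cH(G)|\psi\rangle \le \mathrm{OPT}(G)$. Together these give $|\langle\psi|\cH(\tilde G)|\psi\rangle - \langle\psi|\cH(G)|\psi\rangle| \le 17\eps'\,\mathrm{OPT}(G)$ uniformly over unit states. Maximizing over $\psi$ gives $|\mathrm{OPT}(\tilde G) - \mathrm{OPT}(G)| \le 17\eps'\,\mathrm{OPT}(G)$. (We could equally apply \cref{prop:maxCutOPT} to $\tilde G$, since its total weight is also $m$.)

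Now take any $\psi$ with $\langle\psi|\cH(\tilde G)|\psi\rangle \ge (1-\eps)\mathrm{OPT}(\tilde G)$. Then
\[
\langle\psi|\cH(G)|\psi\rangle \ge \langle\psi|\cH(\tilde G)|\psi\rangle - 17\eps'\mathrm{OPT}(G) \ge (1-\eps)(1-17\eps')\mathrm{OPT}(G) - 17\eps'\mathrm{OPT}(G).
\]
With $\eps' = \eps/40$ this is at least $(1-2\eps)\mathrm{OPT}(G)$. The size bound $|L| = O(\eps^{-2}n)$ and the polynomial running time are inherited directly from \cref{clm:hatSparsifier}, since $\eps'$ is only a constant factor smaller than $\eps$.

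The main subtlety is the unbiasedness condition $\sum\tilde w = \sum w$. Without it, the identity shifts would not cancel, and the sparsifier would only preserve $\cH + m\Id$ rather than $\cH$ itself. Since \cref{clm:hatSparsifier} already provides this condition, what remains is the routine step above: converting the multiplicative guarantee on $\cH + m\Id$ into a relative guarantee on $\cH$ near its maximum, which is exactly where the lower bound $\mathrm{OPT}(G) = \Omega(m)$ from \cref{prop:maxCutOPT} enters.
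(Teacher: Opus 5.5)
Your proposal is correct and follows the paper's argument: take the sparsifier from \cref{clm:hatSparsifier} for the shifted Hamiltonian, use $\sum_{e\in L}\tilde w(e)=\sum_{e\in E}w(e)$ so that the identity shifts cancel, and turn the resulting $O(\eps' m)$ additive error into a relative one using $\mathrm{OPT}(G)\ge m/16$. Your uniform bound $\lvert\langle\psi|\cH(\tilde G)|\psi\rangle-\langle\psi|\cH(G)|\psi\rangle\rvert\le 17\eps'\,\mathrm{OPT}(G)$ is a slightly cleaner substitute for the paper's two-case analysis, and your shift of $m\Id$ is the correct one (the paper writes $2m\Id$).
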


\begin{proof}
    We let the set $L$ and weights $\tilde{w}$ be exactly those weights returned by \cref{clm:hatSparsifier} when invoked with parameter $\eps' = \eps / 100$. The size of this set $L$ already satisfies the conditions set forth above, and the algorithmic efficiency likewise follows.
    
    All that remains is to establish the correctness. 
    For this, denoting $\tilde{G} = (V, L, \tilde{w})$, we see that for every state $\psi \in \C^{2^n}$ that 
    \[
    \langle \psi {\vert}\hat{H}(\tilde{G}){\vert} \psi \rangle \in (1 \pm \eps') \cdot \langle \psi {\vert}\hat{H}(G){\vert} \psi \rangle.
    \]
    Importantly, we know that
    \begin{align}\label{eq:translateSparsifier}
    \hat{H}(G) = H(G) + 2m \cdot \Id, \quad \quad \hat{H}(\widetilde{G}) \in H(\widetilde{G}) + 2m \cdot \Id,
    \end{align}
    where we have used the fact that $\sum_{e \in L}\tilde{w}(e) = \sum_{e \in E}w(e)$ in \cref{clm:hatSparsifier}.
    Plugging this into the above expression, we see that 
    \[
     \langle \psi {\vert}\hat{H}(\tilde{G}){\vert} \psi\rangle =  \langle \psi {\vert}H(\tilde{G}){\vert} \psi \rangle + 2m.
    \]
    Thus, we obtain that 
    \[
    (1 - \eps') \cdot \langle \psi {\vert}H(G){\vert} \psi \rangle + 2m - 2\eps'm \leq \langle \psi {\vert}H(\tilde{G}){\vert} \psi \rangle + 2m \leq (1 + \eps') \cdot \langle \psi {\vert}H(G){\vert} \psi \rangle + 2m + 2\eps' m.
    \]
    Simplifying, this means that 
    \begin{align}\label{eq:IntermediatePreservation}
    (1 - \eps') \cdot \langle \psi {\vert}H(G){\vert} \psi \rangle - 2\eps'm \leq \langle \psi {\vert}H(\tilde{G}){\vert} \psi \rangle \leq (1 + \eps') \cdot \langle \psi {\vert}H(G){\vert} \psi \rangle +2\eps' m.
    \end{align}

    Now, we have two cases:
    \begin{enumerate}
        \item Consider any state $\braket{\psi}$ such that $\langle \psi {\vert} \cH(\tilde{G}) {\vert} \psi \rangle  \geq \frac{m}{32}$. Then, $2 \eps'm \leq 64 \eps' \langle \psi {\vert} \cH(\tilde{G}) {\vert} \psi \rangle$.

    Thus, by re-writing \cref{eq:IntermediatePreservation}, we see that 
    \[
     (1 - \eps') \cdot \langle \psi {\vert}H(G){\vert} \psi \rangle\leq \langle \psi {\vert}H(\tilde{G}){\vert} \psi \rangle (1 + 64 \eps'),
    \]
    and 
    \[
    \langle \psi {\vert}H(\tilde{G}){\vert} \psi \rangle (1 - 64 \eps') \leq (1 + \eps') \cdot \langle \psi {\vert}H(G){\vert} \psi \rangle.
    \]
    Re-writing the above expressions, we obtain that 
    \[
    \langle \psi {\vert}H(\tilde{G}){\vert} \psi \rangle \in (1 \pm 66 \eps') \cdot \langle \psi {\vert}H(G){\vert} \psi \rangle.
    \]
    \item Consider any state $\braket{\psi}$ such that $\langle \psi {\vert} \cH(G) {\vert} \psi \rangle  \geq \frac{m}{32}$. Then, $2 \eps'm \leq 64 \eps' \langle \psi {\vert} \cH(G) {\vert} \psi \rangle$.

    Thus, by re-writing \cref{eq:IntermediatePreservation}, we see that 
    \[
     (1 - 65\eps') \cdot \langle \psi {\vert}H(G){\vert} \psi \rangle\leq \langle \psi {\vert}H(\tilde{G}){\vert} \psi \rangle,
    \]
    and 
    \[
    \langle \psi {\vert}H(\tilde{G}){\vert} \psi \rangle \leq (1 + 65\eps') \cdot \langle \psi {\vert}H(G){\vert} \psi \rangle.
    \]
    Together, the above expressions imply that 
    \[
    \langle \psi {\vert}H(\tilde{G}){\vert} \psi \rangle \in (1 \pm 66 \eps') \cdot \langle \psi {\vert}H(G){\vert} \psi \rangle.
    \]
    \end{enumerate}

Now, we prove our desired theorem statement: indeed, by \cref{prop:maxCutOPT}, we know that $\mathrm{OPT}(G) \geq \frac{m}{16}$. Thus, by the above expressions, it must also be the case that $\mathrm{OPT}(\tilde{G}) \geq \mathrm{OPT}(G) \cdot (1 - 66 \eps')$. Now, let us consider any state $\braket{\psi}$ which satisfies 
    \[
    \langle \psi {\vert} \cH(\tilde{G}) {\vert} \psi \rangle \geq (1 - \eps) \cdot \mathrm{OPT}(\tilde{G}).
    \]
For this state $\braket{\psi}$, we then know that 
\[
 \langle \psi {\vert} \cH(\tilde{G}) {\vert} \psi \rangle \geq (1 - \eps ) \cdot \mathrm{OPT}(G) \cdot (1 - 66 \eps') \geq \frac{m}{32}.
 \]
Again, using our multiplicative error bound in this regime, we then know that 
\[
\langle \psi {\vert} \cH(G) {\vert} \psi \rangle \geq \frac{1}{1 + 66 \eps'} \cdot (1 - \eps ) \cdot \mathrm{OPT}(G) \cdot (1 - 66 \eps').
\]
By choosing $\eps'$ to be sufficiently small (say $\eps / 200$), we then obtain our desired bound that 
\[
\langle \psi {\vert} \cH(G) {\vert} \psi \rangle \geq (1 - 2\eps)  \mathrm{OPT}(G).
\]
    This concludes the theorem.
\end{proof}

In particular, observe that the sparsification procedure in \cref{thm:maxcutSparsifier} is simply a uniformly random sample of the terms. This, in turn, is immediately implementable in insertion-only streams via reservoir sampling (see, for instance \cite{li1994reservoir}). 

In this insertion-only streaming setting, the algorithms operate on a stream of edges, where in each time step, a single new edge is inserted. The resulting graph (the accumulation of all the edges) is then denoted by $G$.

In this setting, we get the following corollary:

\begin{corollary}\label{cor:streamingQuantumMaxCut}
    There is an algorithm, which, for any $\eps > 0$, and for any stream of edges defined over a graph of $n$ vertices, produces a set $L \subseteq E$, ${\vert}L{\vert}\leq O(\eps^{-2} n)$, along with weights $\tilde{w}: L \rightarrow \R_{\geq 0}$ such that, for any state $\braket{\psi}$ which satisfies 
    \[
    \langle \psi {\vert} \cH(\tilde{G}) {\vert} \psi \rangle \geq (1 - \eps) \cdot \mathrm{OPT}(\tilde{G}),
    \]
    then it must also be the case that 
    \[
    \langle \psi {\vert} \cH(G) {\vert} \psi \rangle \geq (1 - 2\eps) \cdot \mathrm{OPT}(G).
    \]
    Moreover, this sparsifier $L, \tilde{w}$ can be found efficiently, in time $\widetilde{O}(\max(m, n))$, and space $\widetilde{O}(\eps^{-2} n)$.
\end{corollary}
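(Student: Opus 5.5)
The plan is to implement the sparsifier of \cref{thm:maxcutSparsifier} in one pass over an insertion-only stream. In the proof of \cref{clm:hatSparsifier} the importance bound $p_e \leq O(w_e/m)$ came only from $\hat{\cH} \succeq m\cdot\Id$. For unweighted streams this makes every edge equally important. So the sparsifier is a uniform sample of about $k := \Theta(\eps^{-2} n)$ edges, each reweighted by $m/k$ so that the total weight is exactly $m$. We then apply the reduction of \cref{thm:maxcutSparsifier} (shift by $\Id$, sparsify $\hat{\cH}$, unshift) without change.

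Concretely, I would first restate \cref{clm:hatSparsifier} for a sampling scheme with a fixed number of samples: draw $k$ edges independently and uniformly, with replacement, and give each weight $m/k$. Set $Y_j := \frac{m}{k}\hat{H}_{e_j}$. Then $\E[\sum_j Y_j] = \hat{\cH}$, and $Y_j \preceq \frac{m}{k}\lambda_{\max}(\hat{M})\,\Id \preceq O(1/k)\,\hat{\cH}$, using $\lambda_{\max}(\hat{M}) = 2$ and $\hat{\cH} \succeq m\Id$. So \cref{fact:matrix-bernstein-restated} applies with ambient dimension $N = 2^n$ and ratio $R = O(1/k)$. With $k = C\eps'^{-2} n$ for a large enough constant $C$, it gives a $(1\pm\eps')$-sparsifier of $\hat{\cH}$ with probability $1 - 2^{-\Omega(n)}$.

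A further benefit of this scheme is that the total weight equals $m$ deterministically, so the renormalization step in \cref{clm:hatSparsifier} is no longer needed. The identity \cref{eq:translateSparsifier} then holds exactly, and the case analysis in the proof of \cref{thm:maxcutSparsifier} goes through verbatim with $\eps' = \eps/200$. The support size is at most $k = O(\eps^{-2} n)$.

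For the streaming implementation, I would run $k$ independent size-one reservoir samplers. Each maintains a single edge and, when the $t$-th edge arrives, replaces its stored edge with probability $1/t$. A global counter tracks $m$. At the end of the stream each sampler holds a uniform edge, independently of the others, and the counter supplies the weight $m/k$. Storing an edge costs $O(\log n)$ bits and the counter costs $O(\log m) = O(\log n)$ bits, so the space is $\widetilde{O}(\eps^{-2} n)$. A naive implementation takes $O(k)$ time per edge. To reach $\widetilde{O}(\max(m,n))$ total time, I would use the standard skip-based reservoir sampling of \cite{li1994reservoir}, whose expected number of replacements is $O(k\log(m/k))$. Alternatively, one can use a single reservoir of size $k$ sampled without replacement; this changes the Chernoff step only mildly, because sampling without replacement is at least as concentrated.

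The main obstacle is this last point: matching the stated time bound with sampling that is either with replacement or uses a size-$k$ reservoir. If I use a size-$k$ reservoir without replacement, I must justify the concentration, either by the Hoeffding-type comparison of without-replacement sampling to with-replacement sampling or by a direct argument. I would also handle weighted inputs by weighted reservoir sampling, although the corollary appears to concern unweighted edge streams. The case $m < k$ is trivial: keep every edge.
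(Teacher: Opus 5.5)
Your proposal is correct and takes the same route as the paper, which only remarks that the sparsifier of \cref{thm:maxcutSparsifier} is a uniform sample of the edges and so can be maintained by reservoir sampling. You also supply what the paper leaves implicit: you replace its Bernoulli sampling, whose rate depends on the a priori unknown $m$, by a fixed-size sample, re-run \cref{fact:matrix-bernstein-restated} for that scheme, and observe that the total weight is then exactly $m$. Two small points, neither affecting correctness: with the paper's normalization $M = vv^{\dagger}$ and $\Vert v\Vert_2^2 = 2$, one has $\lambda_{\max}(\hat{M}) = 3$ rather than $2$, which only changes constants; and the time bound is not really an obstacle, since a single size-$k$ reservoir (Algorithm R) costs $O(1)$ per edge and its concentration follows from the matrix form of Hoeffding's without-replacement comparison.
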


\begin{remark}
    This settles the ``information-theoretic'' complexity of QMC in the insertion-only streaming setting, answering an open question of \cite{KallaugherP22}. Indeed, one must only construct the aforementioned sparsifier of the underlying graph, and then solve QMC on this sparsifier. The total space complexity of the streaming algorithm for building this sparsifier is bounded by $\widetilde{O}(\eps^{-2} n)$ (classical) bits of space.
\end{remark}

\subsection{Sparsifying Quantum MAX-CSP}

The previous section can be generalized to \emph{any} quantum MAX-CSP problem, provided it satisfies two conditions. In particular, we assume we are working with an $r$-local Hamiltonian with interaction hypergraph $E$:
\[
\cH = \sum_{e \in E} H_e,
\]
where each $H_e = M_e \otimes \Id_{[n] - e}$, where $M_e$ is a PSD matrix with maximum eigenvalue $O(1)$.

We use $\mathrm{OPT}(G)$ to denote the \emph{maximum energy} of the above expression, across all states. I.e., 
\[
\mathrm{OPT}(G)= \max_{\braket{\psi} \in \C^{2^n}: \Vert \psi \Vert_2 = 1} \langle \psi {\vert} \cH {\vert} \psi \rangle.
\]
Our first assumption is the following:
\begin{proposition}\label{prop:maxSATOPT}
    $\mathrm{OPT}(\cH) \geq \Omega(m)$, where $m$ is the total number of terms in $\cH$.
\end{proposition}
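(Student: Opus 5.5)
The plan is to lower-bound $\mathrm{OPT}(\cH)$ by the energy of the \emph{maximally mixed} state, i.e.\ by the average eigenvalue of $\cH$. This only works under a mild non-degeneracy condition on the terms, which the statement as written leaves implicit. Besides $\lambda_{\max}(M_e) \leq O(1)$, we also need each term to be non-negligible, say $\lambda_{\max}(M_e) \geq c$ (equivalently $\operatorname{Tr}(M_e) \geq c$) for a constant $c > 0$ depending only on $r$. Without this, one can take terms with vanishing norm and the claim fails. So the first step is to make this normalization explicit, in the spirit of the boundedness conventions of \cref{def:OnlyCbounded}. For quantum MAX-CUT it holds automatically, since $M = vv^\dagger$ with $\Vert v\Vert_2^2 = 2$.

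The key computation is then a trace identity. Since $\cH$ is Hermitian, $\lambda_{\max}(\cH) \geq \frac{1}{2^n}\operatorname{Tr}(\cH)$, because the largest eigenvalue is at least the mean of the eigenvalues. By linearity of trace and $\operatorname{Tr}(A \otimes B) = \operatorname{Tr}(A)\operatorname{Tr}(B)$,
\[
\frac{1}{2^n}\operatorname{Tr}(\cH) = \sum_{e \in E} \frac{\operatorname{Tr}(M_e)\cdot 2^{n-r}}{2^n} = \sum_{e\in E}\frac{\operatorname{Tr}(M_e)}{2^r} \geq \frac{c}{2^r}\, m .
\]
Here we use $\operatorname{Tr}(M_e) \geq \lambda_{\max}(M_e) \geq c$, which holds because $M_e$ is PSD. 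Taking $\braket{\psi}$ to be a top eigenvector of $\cH$ gives $\mathrm{OPT}(\cH) \geq 2^{-r} c\, m = \Omega_r(m)$. For weighted instances the identical argument gives $\Omega(\sum_e w(e))$, which is the form needed to mirror \cref{prop:maxCutOPT}.

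The only real obstacle is the first step: identifying and justifying the hidden hypothesis that each term has constant-size trace. Once that is in place, the averaging argument is immediate and involves no structure of the interaction hypergraph. If a sharper, instance-independent constant is desired (e.g.\ $m/16$-type constants as in \cref{prop:maxCutOPT}), one could instead try a random product state, but the maximally mixed bound already suffices for the downstream sparsification argument, which only uses $\mathrm{OPT} \geq \Omega(m)$ to convert additive $\eps m$ error into multiplicative error.
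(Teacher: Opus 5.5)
The paper does not prove this statement. It introduces it explicitly as ``our first assumption'' for the quantum MAX-CSP generalization, and it reappears as a hypothesis ($\mathrm{OPT}(\cH)\geq\Omega(m)$) in the theorem that follows. The only analogous bound the paper actually justifies is \cref{prop:maxCutOPT}, and that one is cited from Kallaugher--Parekh rather than proved.

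Your argument is correct, and it takes a genuinely different route: you derive the bound instead of assuming it. You use the maximally mixed state,
\[
\mathrm{OPT}(\cH)=\lambda_{\max}(\cH)\geq 2^{-n}\operatorname{Tr}(\cH)=2^{-r}\sum_e \operatorname{Tr}(M_e).
\]
You are also right that some lower normalization on the terms is necessary. The paper's only stated condition is $\lambda_{\max}(M_e)\leq O(1)$, which allows $M_e=0$ or terms whose norms shrink with $m$; this is presumably why the paper phrases the bound as an assumption.

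What your route buys is that the assumption becomes a one-line consequence under the mild condition $\operatorname{Tr}(M_e)\geq c$. That condition is automatic, for example, when all terms are copies of a fixed nonzero $M$, or are drawn from a finite set of nonzero predicates. You then get an explicit constant $c/2^r$ that is independent of the interaction hypergraph. The argument extends to weighted instances as $\Omega(\sum_e w(e))$, which is exactly what the downstream additive-to-multiplicative error conversion needs. For quantum MAX-CUT in the paper's normalization ($M=vv^\dagger$, $\operatorname{Tr}M=2$) it gives $\mathrm{OPT}(G)\geq \tfrac12\sum_e w(e)$, stronger than the $m/16$ of \cref{prop:maxCutOPT}.

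The paper's assumption-based formulation, by contrast, is agnostic to normalization. It also covers instances where some terms are tiny but $\mathrm{OPT}$ is still linear for other reasons.
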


Importantly, we can make the following observation:
\begin{align}\label{eq:PSDifySAT}
    \hat{M}_e := M_e +\Id \succeq \Id,
\end{align}
where now each eigenvalue in $\hat{M}_e$ is at least $1$. 

Now, we can define 
\[
\widehat{\cH}(G) = \sum_{e=(u,v)\in E} (\hat{M}_e) \otimes \Id{\vert}_{[n] - \{u,v\}}.
\]
We refer to each term in the above expression as $\hat{H}_e$.
We now claim the following:

\begin{claim}\label{clm:hatSparsifierSAT}
    Let $\widehat{\cH} = \sum_{e=(u,v)\in E} w(e) \cdot (\hat{H}_e \otimes \Id{\vert}_{[n] - \{u,v\}})$ over $n$ qubits. Then, for any $\eps > 0$ there exists a subset $L \subseteq E$, ${\vert}L{\vert}\leq O(\eps^{-2} n)$, along with new weights $\tilde{w}: L \rightarrow \R_{\geq 0}$ such that for every state $\braket{\psi} \in \C^{2^n}$,
    \[
    \sum_{e = (u,v) \in L} \tilde{w}(e) \cdot\langle \psi {\vert} \hat{H}_e{\vert} \psi \rangle \in (1 \pm \eps) \cdot \langle \psi {\vert} \widehat{\cH} {\vert} \psi \rangle.
    \]
    Moreover, this set $L$ can be found in time polynomial in $m, n$, and $\sum_{e \in L}\tilde{w}(e) = m$.
\end{claim}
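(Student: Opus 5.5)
The argument will mirror \cref{clm:hatSparsifier} almost verbatim, replacing the quantum MAX-CUT projector by the general terms $M_e$. The only facts needed are that each $\hat{H}_e = \hat{M}_e\otimes \Id$ is bounded above and uniformly well-conditioned from below. By the standing assumption $\lambda_{\max}(M_e)\leq O(1)$ and by \cref{eq:PSDifySAT}, we have
\[
\Id \preceq \hat{H}_e \preceq O(1)\cdot \Id
\]
for every $e\in E$. Hence $\widehat{\cH}\succeq W\cdot \Id$, where $W:=\sum_{e} w(e)$; in the unweighted normalization used in the statement, $W=m$.

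\textbf{Step 1 (importance bound).} For each term,
\[
\mathrm{Importance}(w(e)\hat{H}_e) \leq \frac{w(e)\lambda_{\max}(\hat{H}_e)}{\lambda_{\min}(\widehat{\cH})} \leq O\!\left(\frac{w(e)}{W}\right) =: p_e .
\]
Note that $\sum_e p_e = O(1)$.

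\textbf{Step 2 (sampling).} Sample each $e$ independently with probability $q_e:=\min\bigl(1, p_e\cdot 100n\eps'^{-2}\bigr)$, with $\eps'=\eps/10$, and give sampled terms weight $w(e)/q_e$. Apply \cref{fact:matrix-bernstein-restated} exactly as in the proof of \cref{clm:importanceSamplingLouie}. Terms with $q_e=1$ are deterministic, and for the others $Y_e\preceq \frac{\eps'^2}{100n}\widehat{\cH}$. This yields a $(1\pm\eps')$-sparsifier with probability $1-2^{-\Omega(n)}$.

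The expected number of sampled terms is $\sum_e q_e \leq O(\eps^{-2} n)$, and a scalar Chernoff bound gives ${\vert}L{\vert}\leq O(\eps^{-2}n)$ with high probability. The running time is clearly polynomial: each $p_e$ is explicit from $w(e)$ and $W$, and no eigenvalue computation on $2^n$-dimensional matrices is needed.

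\textbf{Step 3 (normalizing total weight).} The scalar total $\sum_{e\in L}\tilde{w}(e)$ is a sum of independent bounded variables with mean $W$, and each summand is at most $O(\eps'^2 W/n)$. By Chernoff it therefore lies in $(1\pm\eps')W$ with high probability. (Alternatively, this follows from the sparsifier guarantee itself by testing against the identity component.) Rescaling all weights by $\gamma = W/\sum_{e\in L}\tilde{w}(e)\in 1\pm 2\eps'$ makes the total weight exactly $m$. The resulting multiplicative error is at most $(1\pm\eps')(1\pm 2\eps')\subseteq 1\pm\eps$.

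The main obstacle is Step 3, where we must ensure the exact total-weight identity without losing more than $O(\eps')$ of accuracy. I expect it to be routine, since the total weight is a well-concentrated scalar. A secondary subtlety is that the $\lambda_{\min}$ lower bound relies on the shift by $\Id$ rather than on any structure of $M_e$; this is exactly why no non-redundancy analysis is needed here.
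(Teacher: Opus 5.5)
Your proposal is correct and follows essentially the same route as the paper. Like the paper, you bound each importance by $O(w(e)/m)$ using $\widehat{\cH}\succeq m\cdot\Id$ from the identity shift, importance-sample via Matrix Chernoff, use a scalar Chernoff bound to get total weight in $(1\pm\eps')m$, and rescale with $\eps'=\eps/10$. One small caveat: your parenthetical alternative of reading the total weight off the sparsifier guarantee via the identity component (trace) only controls the weighted sum $\sum_e \tilde{w}(e)\,(1+\mathrm{Tr}(M_e)/2^r)$, so it fails unless all $M_e$ have equal trace; your primary Chernoff argument, which is what the paper uses, is the one that works here.
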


\begin{proof}
Again, we use the notion of the \emph{importance} of each term. We see that 
\[
\mathrm{Importance}(\hat{H}_e) = \max_{\braket{\psi}} \frac{\langle \psi {\vert} \hat{H}_e {\vert} \psi \rangle} {\langle \psi {\vert} \hat{\cH} {\vert} \psi \rangle} \leq \frac{\lambda_{\max}(\hat{H}_e)}{\lambda_{\min}(\hat{\cH})} = p_e\leq O(w_e /m ),
\]
where we have used that, by \cref{eq:PSDifySAT}, $\hat{\cH} \succeq m \cdot \Id$.

By \cref{clm:importanceSamplingLouie}, we then see that if we sample each term with probability $\min(1, p_e \cdot \frac{100n}{\eps^2})$, and give weight to each sampled term which is $\frac{w_e}{\min(1, p_e \cdot \frac{100n}{\eps^2})}$, then with probability $1 - 2^{- \Omega(n)}$, the resulting set of terms $L$, with weights $\tilde{w}$, is a $(1 \pm \eps)$ sparsifier of $\cH$. Likewise, by a simple Chernoff bound, the total weight of sampled terms ${\vert}L{\vert}$ is $\in (1 \pm \eps) \cdot m $.

So, to derive the final condition in the claim statement, we simply run the above procedure with $\eps' = \eps / 10$. This finds $L, \tilde{w}$ which is a $(1 \pm \eps')$ sparsifier such that $\sum_{e \in L}\tilde{w}(e) \in (1 \pm \eps') \cdot m$. We can now re-scale the weights by a constant factor $\gamma$ such that $\sum_{e \in L}\tilde{w}(e)  = m$, while only shifting the energy by a factor of $(1 \pm \eps')$. Thus, our resulting sparsifier has accuracy $(1 \pm \eps')^2 \in (1 \pm \eps)$, as we desire.
\end{proof}

Finally, we are now able to prove the following theorem:

\begin{theorem}
    Let $\cH$ be an instance of quantum MAX-CSP with $m$ terms, with $\mathrm{OPT}(\cH) \geq \Omega(m)$. Then, for any parameter $\eps > 0$, there exists a set $L \subseteq E$, along with weights $\tilde{w}: L \rightarrow \R_{\geq 0}$ such that, letting $\tilde{\cH}$ denote the instance $\sum_{e \in L} \tilde{w}(e) \cdot H_e$,  for any state $\braket{\psi}$ which satisfies 
    \[
    \langle \psi {\vert} \tilde{\cH} {\vert} \psi \rangle \geq (1 - \eps) \cdot \mathrm{OPT}(\tilde{\cH}),
    \]
    then it must also be the case that 
    \[
    \langle \psi {\vert} \cH {\vert} \psi \rangle \geq (1 - 2\eps) \cdot \mathrm{OPT}(\cH).
    \]
    Moreover, this sparsifier $L, \tilde{w}$ can be found efficiently, in time polynomial in $m, n$.
\end{theorem}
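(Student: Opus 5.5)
This mirrors the proof of \cref{thm:maxcutSparsifier}, with \cref{prop:maxCutOPT} replaced by the hypothesis $\mathrm{OPT}(\cH)\geq c\,m$ for some constant $c>0$. Take $L,\tilde w$ to be the output of \cref{clm:hatSparsifierSAT} run with a small parameter $\eps'=\eps/K$. Here $K$ is a large constant depending only on $c$ and on the $O(1)$ bound on $\lambda_{\max}(M_e)$. The shifted terms $\hat H_e=H_e+\Id$ satisfy $\hat\cH\succeq m\Id$ and $\lambda_{\max}(\hat H_e)=O(1)$, so every importance score is $O(1/m)$. Hence \cref{clm:importanceSamplingLouie} applies, and the sparsifier is found in polynomial time. \Cref{clm:hatSparsifierSAT} also rescales so that $\sum_{e\in L}\tilde w(e)=m$; this unbiasedness is what makes the identity shifts cancel.

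\textbf{Translating back to $\cH$.} Since the total weights agree, $\hat\cH=\cH+m\Id$ and $\hat{\tilde\cH}=\tilde\cH+m\Id$. The guarantee $\langle\psi|\hat{\tilde\cH}|\psi\rangle\in(1\pm\eps')\langle\psi|\hat\cH|\psi\rangle$ for unit $\psi$ then gives an additive-multiplicative bound:
\[
(1-\eps')\langle\psi|\cH|\psi\rangle-\eps' m\leq\langle\psi|\tilde\cH|\psi\rangle\leq(1+\eps')\langle\psi|\cH|\psi\rangle+\eps' m.
\]
On any state where either $\langle\psi|\cH|\psi\rangle$ or $\langle\psi|\tilde\cH|\psi\rangle$ is at least $cm/2$, the additive error $\eps'm$ is at most $(2/c)\eps'$ times that energy. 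Absorbing it as in the two cases of the quantum MAX-CUT proof gives $\langle\psi|\tilde\cH|\psi\rangle\in(1\pm O(\eps'/c))\langle\psi|\cH|\psi\rangle$.

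\textbf{Concluding.} Apply this to a maximizer of $\cH$, whose energy is at least $cm$. This gives $\mathrm{OPT}(\tilde\cH)\geq(1-O(\eps'/c))\mathrm{OPT}(\cH)\geq cm/2$. Now let $\psi$ satisfy $\langle\psi|\tilde\cH|\psi\rangle\geq(1-\eps)\mathrm{OPT}(\tilde\cH)$. For $\eps\leq 1/2$ and $\eps'$ small, its $\tilde\cH$-energy is at least $cm/4$, so after adjusting constants the multiplicative regime applies to $\psi$. Therefore
\[
\langle\psi|\cH|\psi\rangle\geq\frac{(1-\eps)(1-O(\eps'/c))}{1+O(\eps'/c)}\,\mathrm{OPT}(\cH)\geq(1-2\eps)\,\mathrm{OPT}(\cH)
\]
once $K$ is large. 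For $\eps>1/2$ the conclusion is trivial, since $\langle\psi|\cH|\psi\rangle\geq 0$.

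The main care point is that the constant $c$ from $\mathrm{OPT}(\cH)\geq\Omega(m)$ and the $O(1)$ eigenvalue bound must both enter the choice of $\eps'$. Everything else is bookkeeping identical to the MAX-CUT case.
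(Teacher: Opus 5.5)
Your proposal is correct and follows essentially the same route as the paper: run \cref{clm:hatSparsifierSAT} with $\eps'=\eps/K$, use the unbiased total weight to cancel the identity shifts, absorb the additive $\eps' m$ error in the two high-energy regimes, and conclude via $\mathrm{OPT}(\tilde\cH)\geq(1-O(\eps'))\,\mathrm{OPT}(\cH)$. Your bookkeeping (a shift of $m\Id$ on both sides and additive error $\eps' m$) is actually cleaner than the paper's write-up, which has stray factors of $2m$ and $2\eps' m$.
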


\begin{proof}
    We let the set $L$ and weights $\tilde{w}$ be exactly those weights returned by \cref{clm:hatSparsifierSAT} when invoked with parameter $\eps' = \eps / \gamma$, where $\gamma$ will be a sufficiently large constant (depending on the constant in \cref{prop:maxSATOPT}). In particular, $\gamma$ will be sufficiently large such that $\mathrm{OPT}(\cH) \geq \frac{10m}{\gamma}$. The size of this set $L$ already satisfies the conditions set forth above, and the algorithmic efficiency likewise follows.
    
    All that remains is to establish the correctness. 
    For this, we see that for every state $\psi \in \C^{2^n}$ that 
    \[
    \langle \psi {\vert}\widetilde{\hat{\cH}}{\vert} \psi \rangle \in (1 \pm \eps') \cdot \langle \psi {\vert}\hat{\cH}{\vert} \psi \rangle.
    \]
    Importantly, we know that
    \begin{align}\label{eq:translateSparsifierSAT}
    \hat{\cH} = \cH + m \cdot \Id, \quad \quad \widetilde{\hat{\cH}} = \tilde{\cH} + 2\cdot m \cdot \Id,
    \end{align}
    where we have used the fact that $\sum_{e \in L}\tilde{w}(e) = \sum_{e \in E}w(e)$ in \cref{clm:hatSparsifierSAT}.
    Plugging this into the above expression, we see that 
    \[
     \langle \psi {\vert}\widetilde{\hat{\cH}}{\vert} \psi\rangle =  \langle \psi {\vert}\widetilde{\cH}{\vert} \psi \rangle + m.
    \]
    Thus, we obtain that 
    \[
    (1 - \eps') \cdot \langle \psi {\vert}\cH{\vert} \psi \rangle + 2m - 2\eps'm \leq \langle \psi {\vert}\tilde{\cH}{\vert} \psi \rangle + 2m \leq (1 + \eps') \cdot \langle \psi {\vert}\cH{\vert} \psi \rangle + 2m + 2\eps' m.
    \]
    Simplifying, this means that 
    \begin{align}\label{eq:IntermediatePreservationSAT}
    (1 - \eps') \cdot \langle \psi {\vert}\cH{\vert} \psi \rangle - 2\eps'm \leq \langle \psi {\vert}\tilde{\cH}{\vert} \psi \rangle \leq (1 + \eps') \cdot \langle \psi {\vert}\cH{\vert} \psi \rangle +2\eps' m.
    \end{align}

    Now, we have two cases:
    \begin{enumerate}
        \item Consider any state $\braket{\psi}$ such that $\langle \psi {\vert} \tilde\cH {\vert} \psi \rangle  \geq \frac{m}{\gamma}$. Then, $2 \eps'm \leq 2\gamma \eps' \langle \psi {\vert} \tilde \cH {\vert} \psi \rangle$.

    Thus, by re-writing \cref{eq:IntermediatePreservationSAT}, we see that 
    \[
     (1 - \eps') \cdot \langle \psi {\vert}\cH{\vert} \psi \rangle\leq \langle \psi {\vert}\tilde \cH{\vert} \psi \rangle (1 + 2\eps'\gamma ),
    \]
    and 
    \[
    \langle \psi {\vert}\tilde \cH{\vert} \psi \rangle (1 - 2\eps'\gamma) \leq (1 + \eps') \cdot \langle \psi {\vert}\cH{\vert} \psi \rangle.
    \]
    Re-writing the above expressions, we obtain that 
    \[
    \langle \psi {\vert}\tilde \cH{\vert} \psi \rangle \in (1 \pm \eps')(1 \pm 3\eps' \gamma) \cdot \langle \psi {\vert}H(G){\vert} \psi \rangle.
    \]
    \item Consider any state $\braket{\psi}$ such that $\langle \psi {\vert} \cH {\vert} \psi \rangle  \geq \frac{m}{\gamma}$. Then, $2 \eps'm \leq 2\gamma \eps' \langle \psi {\vert}\cH{\vert} \psi \rangle$.

    Thus, by re-writing \cref{eq:IntermediatePreservationSAT}, we see that 
    \[
     (1 - \eps' - 2\eps' \gamma) \cdot \langle \psi {\vert}\cH{\vert} \psi \rangle\leq \langle \psi {\vert}\tilde \cH{\vert} \psi \rangle,
    \]
    and 
    \[
    \langle \psi {\vert}\tilde \cH{\vert} \psi \rangle \leq (1 + \eps' + 2\eps' \gamma ) \cdot \langle \psi {\vert}\cH{\vert} \psi \rangle.
    \]
    Re-writing the above expressions, we obtain that 
    \[
    \langle \psi {\vert}\tilde \cH{\vert} \psi \rangle \in (1 \pm \eps')(1 \pm 3\eps' \gamma) \cdot \langle \psi {\vert}H(G){\vert} \psi \rangle.
    \]
    \end{enumerate}

Now, we prove our desired theorem statement: indeed, by \cref{prop:maxSATOPT} and our choice of $\gamma$, we know that $\mathrm{OPT}(G) \geq \frac{10m}{\gamma}$. Thus, by the above expressions, it must also be the case that $\mathrm{OPT}(\tilde \cH) \geq \mathrm{OPT}(\cH) \cdot (1 - \eps')(1 - 3 \eps' \gamma)$. Now, let us consider any state $\braket{\psi}$ which satisfies 
    \[
    \langle \psi {\vert} \tilde \cH {\vert} \psi \rangle \geq (1 - \eps) \cdot \mathrm{OPT}(\tilde \cH).
    \]
For this state $\braket{\psi}$, we then know that 
\[
 \langle \psi {\vert} \tilde \cH {\vert} \psi \rangle \geq (1 - \eps ) \cdot \mathrm{OPT}(\cH) \cdot (1 - \eps')(1 - 3 \eps' \gamma) \geq \frac{5m}{\gamma}.
 \]
Again, using our multiplicative error bound in this regime, we then know that 
\[
\langle \psi {\vert} \cH {\vert} \psi \rangle \geq (1 - \eps ) \cdot \mathrm{OPT}(\cH) \cdot (1 - \eps')(1 - 3 \eps' \gamma) \cdot \frac{1}{(1 + \eps')(1 + 3 \eps' \gamma)}.
\]
By choosing $\eps'$ to be sufficiently small (say $\eps / 200\gamma$), we then obtain our desired bound that 
\[
\langle \psi {\vert} \cH {\vert} \psi \rangle \geq (1 - 2\eps)  \mathrm{OPT}(\cH).
\]
    This concludes the theorem.
\end{proof}

\section{Tensor Product Lower Bounds}\label{sec:tensor-lb}
Consider predicate matrices $M = M_1\otimes M_2\otimes \cdots\otimes M_r$, where $M_i\in\C^{2\times 2}$ are non-zero PSD matrices for all $i\in[r]$, to be viewed as single-qubit operators. Consequently, if $\cH$ is a $r$-partite $r$-uniform hypergraph with the parts being $V_1, \ldots, V_r$, and if we have a hyperedge $T = (t_1, \ldots, t_r)\in\cH$ with $t_i\in V_i$ for all $i\in[r]$, then in the Hamiltonian $M_T$, the matrix $M_i$ acts on the qubit represented by $t_i$ for all $i\in[r]$. 

\begin{claim}
\label{claim:tensor-NRD-construction}
    Let $M = M_1\otimes M_2\otimes \cdots\otimes M_r$ be a predicate matrix where $M_1, \ldots, M_r\in\C^{2\times 2}$ are non-zero PSD matrices that are \textbf{not} full-rank. Let $\cH$ be a $r$-partite $r$-uniform hypergraph. Then for every hyperedge $T\in\cH$, there exists a state $\psi_T$ such that $\cE_{T, M}(\psi_T) > 0$, but $\cE_{T', M}(\psi_{T}) = 0$ for all $T'\in\cH\setminus\{T\}$. 
\end{claim}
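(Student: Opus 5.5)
We will build $\psi_T$ as a product state, one single-qubit vector per qubit. This is the quantum analogue of the classical $\AND_r$ lower-bound construction.

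Since each $M_i\in\C^{2\times 2}$ is nonzero, PSD and not full rank, it has rank exactly one. Write $M_i = \lambda_i a_i a_i^\dagger$ with $\lambda_i>0$ and $a_i$ a unit vector, and let $b_i$ be a unit vector spanning $\ker M_i$, so $b_i\perp a_i$. Fix the hyperedge $T=(t_1,\dots,t_r)$ with $t_k\in V_k$. For every qubit $v\in V_k$ we set the single-qubit state
\[
u_v := \begin{cases} a_k & v = t_k,\\ b_k & v\in V_k\setminus\{t_k\},\end{cases}
\]
and define $\psi_T := \bigotimes_{v\in[n]} u_v$. Any qubits outside $V_1\cup\cdots\cup V_r$ are padded with an arbitrary unit vector.

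The key computation is that for a product state and a term $T'=(t'_1,\dots,t'_r)\in\cH$ with $t'_k\in V_k$, the energy factorizes. Because $M_{T'}$ acts as $M_k$ on qubit $t'_k$ and as $\Id$ elsewhere, and $\cH$ is $r$-partite so the $t'_k$ are distinct and each lies in the part matching its factor, we get
\[
\cE_{T',M}(\psi_T) = \prod_{k=1}^r \langle u_{t'_k}, M_k u_{t'_k}\rangle \cdot \prod_{v\notin T'} \|u_v\|^2 = \prod_{k=1}^r \langle u_{t'_k}, M_k u_{t'_k}\rangle.
\]
For $T'=T$, each factor equals $\langle a_k, M_k a_k\rangle=\lambda_k>0$, so the energy is positive. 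For $T'\neq T$, some coordinate satisfies $t'_k\neq t_k$, and since $t'_k\in V_k$ we have $u_{t'_k}=b_k$. That factor is $\langle b_k, M_k b_k\rangle = 0$, so the energy vanishes.

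The only step needing care is justifying the factorization, i.e.\ that the $k$-th tensor factor of $M$ acts on the qubit $t'_k\in V_k$ for every hyperedge. This is exactly where $r$-partiteness is used: it guarantees a consistent assignment of the factor $M_k$ to part $V_k$, so that a single choice of $u_v$ per qubit works simultaneously for all hyperedges. Everything else is a direct computation, as in the proof of \cref{lem:nalphalowerbound}. A consequence is that taking $\cH=V_1\times\cdots\times V_r$ with $|V_k|=\lfloor n/r\rfloor$ gives a non-redundant $M$-Hamiltonian of size $\lfloor n/r\rfloor^r$, which yields \cref{thm:tensor-intro}.
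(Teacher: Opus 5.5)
Your proposal is correct and matches the paper's proof: the paper also takes $\psi_T$ to be a product state with a positive-energy vector $\tau_p$ (chosen with $\langle\tau_p, M_p\tau_p\rangle = \Vert M_p\Vert_{\op}$) on $t_p$ and a kernel vector $\pi_p$ of $M_p$ on every other qubit of $V_p$. It then factorizes $\cE_{T',M}(\psi_T)=\prod_{i}\langle\phi_{t'_i}, M_i\phi_{t'_i}\rangle$, which is nonzero exactly when $T'=T$; your rank-one decomposition $M_k=\lambda_k a_k a_k^\dagger$ and your explicit note on where $r$-partiteness enters are just a slightly more detailed write-up of the same computation.
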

\begin{proof}
    For all $i \in [r]$, since $M_i$ is a non-zero PSD matrix which is not full-rank, there exist unit vectors $\tau_i, \pi_i\in\C^2$ such that $\langle\tau_i,M_i\tau_i\rangle = {\Vert}M_i{\Vert}_{\operatorname{op}} > 0$, while $\langle\pi_i,M_i\pi_i\rangle = 0$. Now, let the parts of $\cH$ be $V_1, \ldots, V_r$, and for any $T = (t_1, \ldots, t_r)\in\cH\seq\bigtimes_{i = 1}^rV_i$, consider the vector $\psi_T := \bigotimes_{t\in V}\phi_t$, where for any $t\in V = \bigsqcup_{i = 1}^r V_i$, we define
    \[\phi_t:= \begin{cases}
        \tau_p & \text{if }t = t_p,\\
        \pi_p & \text{if }t\in V_p\setminus\{t_p\}
    \end{cases}\mcom\]
    where $p\in[r]$ is the index for which $t\in V_p$. Then note that for any $T' = (t'_1, \ldots, t'_r)\in\cH$, we have
    \[\cE_{T', M}(\psi_T) = \langle\psi_T, (M\vert_{T'}\otimes\Id_{\bar{T'}})\psi_T\rangle = \prod_{i = 1}^r\langle\phi_{t'_i}, M_i\phi_{t'_i}\rangle = \prod_{i = 1}^r{\Vert}M_i{\Vert}_{\operatorname{op}}\cdot\prod_{i = 1}^r\1(t'_i = t_i)\] 
    \[= \prod_{i = 1}^r{\Vert}M_i{\Vert}_{\operatorname{op}}\cdot\1(T' = T)\mcom\]
    as desired.
\end{proof}
\begin{theorem}
\label{claim:tensor-SPR}
    Let $M = M_1\otimes M_2\otimes \cdots\otimes M_r$ be a predicate matrix where $M_1, \ldots, M_r\in\C^{2\times 2}$ are non-zero PSD matrices that are \textbf{not} full-rank. Then $\SPR(M, n, \eps)\geq\NRD(M, n) \geq (\lfloor n/r\rfloor)^r\geq\Omega_r(n^r)$ for any $\eps\in(0, 1)$.
\end{theorem}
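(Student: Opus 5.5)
The plan is to establish the chain of inequalities from right to left, using \cref{claim:tensor-NRD-construction} to produce a large non-redundant instance and then the barrier argument from the introduction to pass from non-redundancy to sparsifier size.

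First, the complete $r$-partite hypergraph gives the count. Set $k := \lfloor n/r\rfloor$ and choose disjoint parts $V_1, \ldots, V_r \seq [n]$, each of size $k$; leftover qubits are simply ignored, so every term acts as $\Id$ on them. Let $\cH := V_1\times\cdots\times V_r$, so ${\vert}\cH{\vert} = k^r$. Since $r$ is fixed, $k^r \geq \Omega_r(n^r)$ for $n \geq r$, which gives the last inequality.

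Second, I show that $\cH$ is non-redundant as an $M$-Hamiltonian, so $\NRD(M,n)\geq k^r$. Fix $T\in\cH$. \cref{claim:tensor-NRD-construction} supplies a state $\psi_T$ with $\cE_{T,M}(\psi_T)>0$ and $\cE_{T',M}(\psi_T)=0$ for all $T'\neq T$. By \cref{eq:ground-state}, $\psi_T\in\Psi_{\cH\setminus T}$. Since $\psi_T\notin\ker M_T$, we also have $\psi_T\notin\Psi_\cH$. Hence $\Psi_{\cH\setminus T}\neq\Psi_\cH$ for every $T$, which is exactly non-redundancy.

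Third, I show $\SPR(M,n,\eps)\geq\NRD(M,n)$. Take a non-redundant $\cH$ of maximum size and an $\eps$-sparsifier $w$ of it with $\eps<1$. Suppose some $T\in\cH$ has $w(T)=0$. Non-redundancy gives $\psi\in\Psi_{\cH\setminus T}\setminus\Psi_\cH$, so
\[
\cE_{\cH}(\psi)=\cE_{T,M}(\psi)>0, \qquad \cE_{\cH,w}(\psi)=0.
\]
This contradicts $\cE_{\cH,w}(\psi) \in [1-\eps,1+\eps]\cdot\cE_\cH(\psi)$, since $(1-\eps)\cE_\cH(\psi)>0$. Therefore $\supp(w)=\cH$, and $\SPR(M,n,\eps)\geq\SPR(\cH,\eps)={\vert}\cH{\vert}=\NRD(M,n)$.

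No step here is a real obstacle, since the substantive work is already done in \cref{claim:tensor-NRD-construction}. The only care needed is the hypothesis matching: the theorem assumes each $M_i$ is non-zero and not full rank, which for $2\times 2$ PSD matrices means rank one, exactly what the claim requires. One should also confirm that padding the unused qubits (for example, by tensoring $\psi_T$ with any fixed unit vector on them) does not affect the energies, because every $M_{T'}$ acts as $\Id$ there.
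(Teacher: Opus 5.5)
Your proposal is correct and follows essentially the same route as the paper: take the complete $r$-partite hypergraph with parts of size about $n/r$, use \cref{claim:tensor-NRD-construction} to get the witnessing states $\psi_T$, and conclude that every $\eps$-sparsifier must have full support. The only difference is organizational. You prove non-redundancy and the general inequality $\SPR(M,n,\eps)\geq\NRD(M,n)$ as separate steps. The paper instead argues full support directly for this one instance and leaves the non-redundancy step implicit, so your version spells out the middle inequality of the chain slightly more explicitly.
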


\begin{proof}
    Let $\cH$ be any $r$-partite $r$-uniform hypergraph. By \cref{claim:tensor-NRD-construction}, for every $T\in\cH$, there exists $\psi_T$ such that $\cE_{T, M}(\psi_T) > 0$, while $\cE_{T', M}(\psi_T) = 0$ for all $T'\in\cH\setminus\{T\}$. Thus, we claim that for any $\eps$-sparsifier $w:\cH\to\R_{\geq 0}$, we must have $\supp(w) = \cH$: Indeed, if $w:\cH\to\R_{\geq 0}$ is an $\eps$-sparsifier with $\cH\ni T\notin\supp(w)$, then we have $\cE_{\cH}(\psi_T) = \cE_{T, M}(\psi_T) > 0$, while 
    \[\cE_{\cH, w}(\psi_T) = \sum_{T'\in\cH}w(T', M)\cE_{T', M}(\psi_T) = \sum_{T'\in\cH\setminus\{T\}}w(T', M)\cE_{T', M}(\psi_T) = 0\mcom\]
    contradicting the definition of an $\eps$-sparsifier. Thus ${\vert}w^{-1}(\R_{>0}){\vert} = {\vert}\supp(w){\vert} = {\vert}\cH{\vert}$. Now, if $\cH$ is constrained to have $n$ vertices, then by choosing $\cH$ to have $r$ parts, each of size $\lfloor n/r\rfloor$ or $\lceil n/r\rceil$, we can ensure ${\vert}\cH{\vert}\geq(\lfloor n/r\rfloor)^r$, as desired.
\end{proof}

\begin{example}
As an example of a predicate covered by \cref{claim:tensor-SPR}, consider $\AND$ on $r$ bits Suppose we have the $\AND$ predicate on $r$ bits, i.e. $\AND(x_1, \ldots, x_r) := \bigwedge_{i = 1}^r\ell_i$, where $\ell_i\in\{x_i, \neg x_i\}$. By abuse of notation, let $\AND\in\C^{2^r\times 2^r}$ also denote the matrix representing this predicate. Then note that $\AND = \bigotimes_{i = 1}^r M_i$, where $M_i = \begin{bmatrix}
            0 & 0\\
            0 & 1
        \end{bmatrix}$ if $\ell_i = x_i$, and $\begin{bmatrix}
            1 & 0\\
            0 & 0
        \end{bmatrix}$ otherwise. Here we assume $\begin{bmatrix}
            1\\ 
            0
        \end{bmatrix}\in\C^2$ denotes the qubit $\braket{0}$, and $\begin{bmatrix}
            0\\ 
            1
        \end{bmatrix}\in\C^2$ denotes $\braket{1}$. Consequently, \cref{claim:tensor-SPR} implies that $\SPR(\AND, n, \eps)\geq\Omega_r(n^r)$, a fact well-known from the classical theory of CSP sparsification~\cite{FiltserK17,KhannaPS25}.
\end{example}

\section{Characterizing the Non-Redundancy of All $2$-Qubit Hamiltonians}\label{sec:2-qubit}

In this section, we characterize the non-redundancy of all nonzero 2-qubit predicate matrices $M \in \C^{4 \times 4}$. The characterization is surprisingly simple.

\begin{theorem}\label{thm:2-qubit}
Let $M \in \C^{4 \times 4}$ be a nonzero predicate matrix. Then $\NRD(M, n) = \Theta(n)$ unless $M = M_1 \otimes M_2$ where unary predicate matrices $M_1, M_2 \in \C^{2\times 2}$ have rank one, in which case $\NRD(M, n) = \Theta(n^2)$.
\end{theorem}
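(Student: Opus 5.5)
Three things have to be shown. First, the lower bound $\NRD(M,n)=\Omega(n)$ for every nonzero $M$. Second, the quadratic upper bound $O(n^2)$, which is trivial because there are at most $n^2$ ordered pairs. Third, the dichotomy: quadratic exactly in the rank-one tensor case, and $O(n)$ otherwise.

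\textbf{Lower bounds.} We handle the $\Omega(n)$ bound by cases on the rank of $M$.
\begin{itemize}
\item If $M$ is not full rank, \cref{lem:nalphalowerbound} gives $\NRD(M,n)\geq\lfloor n/2\rfloor$.
\item If $M$ is full rank, every $M_T$ is invertible, so any single term has trivial ground space. A one-term instance is non-redundant, since dropping it enlarges the ground space from $0$ to everything. Thus we only get $\NRD\geq 1$, and the claimed $\Theta(n)$ must be read with this degenerate case in mind. I would either note that full-rank $M$ has $\NRD=1$, or check how the paper intends to treat it. I expect the intended statement excludes it or uses the convention of the definition.
\item If $M=M_1\otimes M_2$ with rank-one factors, \cref{claim:tensor-SPR} gives $(\lfloor n/2\rfloor)^2=\Omega(n^2)$.
\end{itemize}

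\textbf{Linear upper bound, structural lemma.} The key step is a structural lemma for two terms sharing one qubit, say $T_a=(1,2)$ and $T_b=(1,3)$. Take any $\psi$ with $\psi\in\ker M_{12}\cap\ker M_{13}$, and restrict to the three qubits, conditioning on the rest. Write $K=\ker M\subseteq\C^4$.
\begin{itemize}
\item If $\dim K\leq 1$, then $\ker M_{12}\cap\ker M_{13}$ has dimension at most $\dim K\cdot 2$, and these pairs are very constrained.
\item If $\dim K\in\{2,3\}$, I would parametrize $K$ and show the following dichotomy. Either $K$ is spanned by product vectors in a way that forces $M=M_1\otimes M_2$ with rank-one factors (the excluded case). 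Or every state in the common kernel is invariant, up to a fixed local relabeling, under swapping qubits $2$ and $3$. Equivalently, $\ker(M_{12})\cap\ker(M_{13})\subseteq\ker(M_{13}')$-type relations hold, making a third term redundant.
\end{itemize}
I expect this to require a case analysis on $\rank M\in\{1,2,3\}$ using Schmidt decompositions of kernel vectors. It is the main obstacle: one must verify that the only way two overlapping copies fail to induce the swap symmetry is the rank-one tensor case. For rank $3$ (nullity $1$) this follows from \cref{lem:uniquepsi}-style reasoning. For rank $1$, $M=vv^\dagger$ with $v$ entangled versus product determines the split.

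\textbf{Linear upper bound, counting.} With this automorphism lemma, swaps compose along paths in the interaction graph. A cycle of even length in the bipartite interaction graph then yields enough automorphisms that the ground state of the other terms already lies in the kernel of some term on the cycle, so that term is redundant. To pass to the bipartite setting, I would use the bipartite reduction of \cref{prop:r-partite-WLOG} adapted to $\NRD$. A bipartite graph with no (suitable) even cycles, i.e.\ a forest, has fewer than $n$ edges. Therefore a non-redundant instance has $O(n)$ terms whenever $M$ is not a rank-one tensor product, completing the dichotomy.
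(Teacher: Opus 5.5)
Your lower bounds are fine, and your remark on full-rank $M$ is correct. Every nonempty $M$-Hamiltonian then has ground space $\{0\}$, so $\NRD(M,n)=1$. The theorem as stated overlooks this, as does the step of its proof that applies \cref{lem:nalphalowerbound} without that lemma's non-full-rank hypothesis. Your global architecture is also the paper's: a swap automorphism from two terms sharing a qubit, composition of swaps around a bipartite cycle to make the closing term redundant, and a forest count.

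The genuine gap is the structural lemma. It carries the whole $O(n)$ bound, and you only describe it as an expected case analysis.
\begin{itemize}
\item Your nullity-one hint can be completed. $U_{(23)}$ preserves the common kernel of the terms $(1,2)$ and $(1,3)$ on qubits $1,2,3$. That kernel is at most one-dimensional by \cref{lem:uniquepsi}, so its elements are $\pm 1$-eigenvectors of the swap, and a sign is harmless in the cycle argument.
\item For ranks $1$ and $2$, however, this common kernel always has dimension at least $2$. The trick fails there, and you give no argument.
\item Worse, the dichotomy you aim for is false as stated. Take $M=M_1\otimes M_2$ with $\rank M_1=1$, $\rank M_2=2$, and let $b$ span $\ker M_1$. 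Then $\ker M=\langle b\rangle\otimes\C^2$, so on qubits $1,2,3$ both $\ker M_{12}$ and $\ker M_{13}$ equal $\{b\otimes\chi:\chi\in\C^4\}$.
\item This contains $b\otimes\ket{01}$, which the swap does not map to a multiple of itself. Since the common kernel is all of $b\otimes\C^4$, no fixed local twist of the swap fixes its elements up to scalars either. Yet $M$ is not a product of rank-one factors.
\end{itemize}
Swap symmetry appears here only for terms sharing their \emph{second} qubit. So the lemma is orientation-sensitive, and the counting must pivot on shared vertices in the correct slot. (For this $M$, terms sharing their first qubit even have identical kernels, so redundancy is immediate.)

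The missing idea is to argue with $\Span(M)$, the space of linear constraints, rather than with $\ker M$.
\begin{itemize}
\item Each $u\in\Span(M)$, applied to the terms $(1,2)$ and $(1,3)$, gives four scalar equations on $\psi_x\in\C^{\zo^3}$. Suitable linear combinations yield $\det\begin{pmatrix}u(00)&u(01)\\u(10)&u(11)\end{pmatrix}\cdot(\psi_x(001)-\psi_x(010))=0$, and the same identity for $\psi_x(101)-\psi_x(110)$.
\item Hence a single nonsingular vector in $\Span(M)$ forces exact $(23)$-invariance, whatever the rank.
\item If every vector of $\Span(M)$ is singular, the vanishing of the linear coefficient of $\det(u+\lambda v)$ forces $\Span(M)\subseteq\langle a\rangle\otimes\C^2$ or $\Span(M)\subseteq\C^2\otimes\langle c\rangle$. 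That is, $M=M_1\otimes M_2$ with a rank-one factor.
\item Rank profile $(1,1)$ is the excluded case. Profile $(2,1)$ is checked directly by subtracting the two families of constraints. Profile $(1,2)$ reduces to $(2,1)$ by swapping the tensor factors, which leaves $\NRD$ unchanged.
\end{itemize}
Finally, in the counting step, use a single bipartite sub-instance containing at least a quarter of the terms. A random red/blue coloring that keeps ordered pairs from red to blue gives one. It is still non-redundant, since sub-instances of non-redundant instances are. The $O(\log n)$-piece partition of \cref{prop:r-partite-WLOG} would only give $O(n\log n)$.
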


Toward proving \Cref{thm:2-qubit}, we first handle a few easy cases.

\begin{enumerate}[(1)]
\item We always have that $\NRD(M, n) \leq 2 \binom{n}{2} \leq O(n^2)$, as a (simple) directed graph on $n$ vertices has at most $2 \binom{n}{2}$ edges.

\item We always have that $\NRD(M, n) \geq \Omega(n)$ by \cref{lem:nalphalowerbound}, as a Hamiltonian $\cH$ consisting of $\lfloor n/2\rfloor$ disjoint edges is always non-redundant (as our assignment $\psi \in \mathbb C^{2^n}$ uses separate qubits to satisfy / not-satisfy each constraint). 

\item If $M = M_1 \otimes M_2$, where $M_1$ and $M_2$ have rank 1, by applying \cref{claim:tensor-SPR} in the case that $r=2$, we get that $\NRD(M, n) \geq \Omega(n^2)$. 
\end{enumerate}

Thus, to prove \Cref{thm:2-qubit}, it suffices to show that $\NRD(M, n) \leq O(n)$ whenever $M$ cannot be expressed as the tensor product of two rank 1 predicate matrices. To do this, we study the \emph{automorphisms} of states $\psi$ in the ground state $\Psi_{\cH}$ of any sufficiently connected Hamiltonian $\cH$.

\paragraph{Automorphisms.} Given a permutation $\pi \in S_n$ (the symmetric group acting on $[n]$), we can define a unitary matrix $U_{\pi} \in \C^{2^n \times 2^n}$ which maps any qubit ${\vert}x\rangle$ with $x \in \{0,1\}^n$ to ${\vert}\pi(x)\rangle$, where $\pi(x)_{\pi(i)} := x_{i}$ for all $i \in [n]$. For example, if $x$ is the indicator vector of the $i^{\mathrm{th}}$ bit, then $\pi(x)$ is the indicator vector of the $\pi(i)^{\mathrm{th}}$ bit. One can verify for any $\pi, \tau \in S_n$, we have that $U_{\pi} \circ U_{\tau} = U_{\pi \circ \tau}$. Given a state $\psi \in \C^{2^n}$, we define its automorphism group $\Aut(\psi)$ as follows:
\[
    \Aut(\psi) = \{\pi \in S_n : U_{\pi} \psi = \psi\}.
\]
We note that $\Aut(\psi)$ is always a subgroup of $S_n$. Given a Hamiltonian $\cH$, we define $\Aut(\cH)$ to be the greatest common automorphism group of all states $\psi \in \Psi_{\cH}$. In other words,
\[
    \Aut(\cH) := \bigcap_{\psi \in \Psi_{\cH}} \Aut(\psi).
\]
The key technical lemma we need toward proving \cref{thm:2-qubit} is understanding how the structure of $\Span(M) \subseteq \C^{4}$ impacts $\Aut(\cH)$.

\subsection{The Nonsingular Case}

In this section, we handle the (general) case in which $\Span(M)$ contains what we call a \emph{nonsingular} vector.

\begin{definition}
We say that $u \in \Span(M)$ are \emph{nonsingular} if we have that
\[
    \det \begin{pmatrix}
    u(00) & u(01)\\
    u(10) & u(11)
    \end{pmatrix} \neq 0.
\]
Otherwise, we say that $u$ is singular.
\end{definition}

We now prove our key technical lemma.

\begin{lemma}\label{lem:nonsingular-aut}
Let $M \in \C^{4 \times 4}$ be a predicate matrix with $u \in \Span(M)$ non-singular, and let $\cH$ be an $M$-Hamiltonian. Let $i,j,k \in [n]$ be distinct indices. If $(i,j), (i,k) \in \cH$, then the permutation $(jk)$ is an element of $\Aut(\cH)$.
\end{lemma}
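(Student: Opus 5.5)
The plan is to reduce the statement to a computation on the three qubits $i,j,k$, with all other qubits held fixed. Take any $\psi \in \Psi_{\cH}$. For each assignment $z \in \zo^{[n]\setminus\{i,j,k\}}$, view the restriction $\psi_z$ (\cref{def:restrictedState}) as a tensor $\Phi(a,b,c)$, where $a,b,c$ are the values of qubits $i,j,k$. Since $U_{(jk)}$ acts only by exchanging the $j$ and $k$ coordinates, it suffices to show $\Phi(a,b,c) = \Phi(a,c,b)$ for every $z$. This gives $U_{(jk)}\psi = \psi$ for all $\psi \in \Psi_{\cH}$, and hence $(jk) \in \Aut(\cH)$.

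\textbf{Step 1: the constraints on $\Phi$.} Let $K := \ker M \subseteq \C^4$. By \cref{eq:etmexpression}, $\psi \in \ker M_{(i,j)}$ means that every $2$-qubit slice of $\psi$ on $(i,j)$ lies in $K$. In particular, for each $c$, the vector $(a,b) \mapsto \Phi(a,b,c)$ lies in $K$. Likewise, $\psi \in \ker M_{(i,k)}$ gives that for each $b$, the vector $(a,c) \mapsto \Phi(a,b,c)$ lies in $K$. Since $u \in \Span(M) = K^\perp$, both families of slices are orthogonal to $u$. I will use only these weaker constraints; this is harmless, because shrinking $K$ only adds constraints. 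Writing $\bar u(a,b)$ for the conjugate entries, the constraints read
\[
\sum_{a,b} \bar u(a,b)\,\Phi(a,b,c) = 0 \;\;\forall c, \qquad \sum_{a,c} \bar u(a,c)\,\Phi(a,b,c) = 0 \;\;\forall b .
\]

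\textbf{Step 2: swap-invariance of the solution set.} Let $\mathcal S$ be the linear space of all tensors $\Phi$ satisfying both families. Exchanging the roles of $b$ and $c$ maps the first family onto the second and vice versa, so $\mathcal S$ is invariant under the swap $\sigma : \Phi(a,b,c) \mapsto \Phi(a,c,b)$. Consequently, the symmetric part $\frac12(\Phi + \sigma\Phi)$ and the antisymmetric part $\frac12(\Phi - \sigma\Phi)$ of any $\Phi \in \mathcal S$ both lie in $\mathcal S$. It therefore suffices to show that every antisymmetric element of $\mathcal S$ is zero.

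\textbf{Step 3: killing the antisymmetric part.} This is the only step where nonsingularity enters. An antisymmetric tensor has the form $\Phi(a,\cdot,\cdot) = \alpha_a J$, where $J = \begin{pmatrix}0&1\\-1&0\end{pmatrix}$. Plugging this into the first family of constraints gives $\sum_a \alpha_a\, \bar u_a^{\top} J = 0$, where $\bar u_a = (\bar u(a,0), \bar u(a,1))$. Since $J$ is invertible, this is equivalent to $\sum_a \alpha_a \bar u_a = 0$. Because $u$ is nonsingular, the two rows $\bar u_0, \bar u_1$ of the conjugated $2\times 2$ matrix of $u$ are linearly independent, so $\alpha_0 = \alpha_1 = 0$. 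Hence $\Phi = \sigma\Phi$ for every $\Phi \in \mathcal S$, and Step 1 finishes the proof.

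I expect no serious obstacle. The one point needing care is conceptual: a naive attempt to solve the four linear equations on eight unknowns directly does not force symmetry. The symmetric/antisymmetric decomposition, justified by the swap-invariance of $\mathcal S$, is what makes the argument work. I should also double-check that the convention for $U_\pi$ indeed acts on $\psi$ by permuting coordinates, so that $U_{(jk)}\psi = \psi$ is exactly the pointwise identity $\Phi(a,b,c) = \Phi(a,c,b)$ for all $z$.
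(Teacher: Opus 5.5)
Your proof is correct and is essentially the paper's argument: the paper takes linear combinations of the same four orthogonality equations for $\psi_x$ (in effect, differences of the $T_{12}$- and $T_{13}$-equations, which isolate exactly your antisymmetric components $\psi_x(a01)-\psi_x(a10)$) to get $\det(u)\,(\psi_x(001)-\psi_x(010))=0$ and likewise for $\psi_x(101)-\psi_x(110)$, and your symmetric/antisymmetric packaging is the one the paper itself uses later for \cref{lem:deriveAutomorphismGeneric}. So, contrary to your closing remark, solving the four equations directly does force symmetry once you take those differences; your explicit use of $\bar u$ is a small improvement over the paper, which silently writes $u$ (harmless, since nonsingularity is invariant under conjugation).
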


\begin{proof}
Without loss of generality, we may assume that $T_{12} := (1,2), T_{13} := (1,3) \in \cH$, and we seek to prove that $(23) \in \Aut(\cH)$. Fix an arbitrary state $\psi \in \Psi_{\cH}$. To show that $(23) \in \Aut(\psi)$, it suffices to show $(23) \in \Aut(\psi_x)$ for all $x \in \{0,1\}^T$ where $T = \{4, \hdots, n\}$. This is equivalent to showing that $\psi_x(001) = \psi_x(010)$ and $\psi_x(101) = \psi_x(110)$.

Since $\psi \in \Psi_{\cH}$, we have that the following equations are true:
\begin{align*}
(M{\vert}_{T_{12}} \otimes \Id{\vert}_3) \psi_x &= 0\\
(M{\vert}_{T_{13}} \otimes \Id{\vert}_2) \psi_x &= 0.
\end{align*}
Since $u \in \Span(M)$, we then have the following equations.
\begin{align}
u(00)\psi_x(000) + u(01)\psi_x(010) + u(10)\psi_x(100) + u(11)\psi_x(110) &= 0\label{eq:u0}\\
u(00)\psi_x(001) + u(01)\psi_x(011) + u(10)\psi_x(101) + u(11)\psi_x(111) &= 0\label{eq:u1}\\
u(00)\psi_x(000) + u(01)\psi_x(001) + u(10)\psi_x(100) + u(11)\psi_x(101) &= 0\label{eq:u2}\\
u(00)\psi_x(010) + u(01)\psi_x(011) + u(10)\psi_x(110) + u(11)\psi_x(111) &= 0\label{eq:u3}
\end{align}
Now, if we consider the expression $u(10)(\text{\cref{eq:u0}}) + u(11)(\text{\cref{eq:u1}}) - u(10)(\text{\cref{eq:u2}}) - u(11)(\text{\cref{eq:u3}})$, we can deduce that
\begin{align*}
\det \begin{pmatrix} u(00) & u(01)\\
u(10) & u(10)
\end{pmatrix}\left(\psi_x(001) - \psi_x(010)\right) = 0 \implies \psi_x(001) = \psi_x(010).
\end{align*}
Likewise, if we consider $u(01)(\text{\cref{eq:u0}}) + u(00)(\text{\cref{eq:u1}}) - u(01)(\text{\cref{eq:u2}}) - u(00)(\text{\cref{eq:u3}})$, we can deduce that
\begin{align*}
\det \begin{pmatrix} u(00) & u(01)\\
u(10) & u(10)
\end{pmatrix}\left(\psi_x(101) - \psi_x(110)\right) = 0 \implies \psi_x(101) = \psi_x(110).
\end{align*}
Thus, $(23) \in \Aut(\psi_x)$ so $(23) \in \Aut(\cH)$.
\end{proof}

Given a graph $\cG$ on vertex set $[n]$, we say that a sequence of edges $e_1, \hdots, e_{2k} \in \cG$ is a \emph{bipartite cycle} if there exist (distinct) indices $i_1, \hdots, i_k, j_1, \hdots, j_k \in [n]$ such that
\begin{align}
e_1 = (i_1, j_1), e_2 = (i_1, j_2), e_3 = (i_2, j_2), \hdots, e_{2k-2} = (i_{k-1}, j_k), e_{2k-1} = (i_k, j_k), e_{2k} = (i_k, j_1).\label{eq:bc}
\end{align}

\begin{lemma}\label{lem:bipartite-cycle-redundant}
Assume that $u \in \Span(M)$ is nonsingular. If an $M$-Hamiltonian $\cH$ has a bipartite cycle, then $\cH$ is redundant.
\end{lemma}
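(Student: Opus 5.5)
We will show that the last edge $e_{2k} = (i_k, j_1)$ of the bipartite cycle is redundant. Let $\cH' := \cH \setminus \{e_{2k}\}$; since $\Psi_{\cH} \subseteq \Psi_{\cH'}$ always holds, it suffices to prove $\Psi_{\cH'} \subseteq \ker M_{(i_k, j_1)}$. We plan to get this from automorphisms. Every state in $\Psi_{\cH'}$ will turn out to be invariant under a suitable transposition. Transporting the constraint $M_{(i_k,j_k)}$, which does lie in $\cH'$, along that transposition then yields the constraint $M_{(i_k,j_1)}$.

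First we collect automorphisms from the path $e_1, \dots, e_{2k-1}$, all of whose edges lie in $\cH'$. For each $t \in [k-1]$, the edges $e_{2t-1} = (i_t, j_t)$ and $e_{2t} = (i_t, j_{t+1})$ share their first coordinate $i_t$. \Cref{lem:nonsingular-aut} therefore gives $(j_t\, j_{t+1}) \in \Aut(\cH')$. Since $\Aut(\cH')$ is an intersection of groups, it is a group, so it also contains the composite $\sigma$ of these transpositions that equals $(j_1\, j_k)$. For instance, one can take
\[
\sigma = (j_1 j_2)(j_2 j_3)\cdots(j_{k-1} j_k)(j_{k-2}j_{k-1})\cdots(j_1 j_2).
\]
One minor point needs checking: \cref{lem:nonsingular-aut} is stated with $(i,j),(i,k)$ as ordered pairs with the shared qubit first. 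Here the shared qubit $i_t$ is indeed first in both edges, so the lemma applies verbatim. If orientation ever mattered, we would note that the argument of \cref{lem:nonsingular-aut} is symmetric after transposing the roles of $u(01)$ and $u(10)$, which preserves nonsingularity.

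Next we transport the constraint. Fix $\psi \in \Psi_{\cH'}$, so that $U_\sigma \psi = \psi$ and $M_{(i_k,j_k)}\psi = 0$. We will verify the conjugation identity $U_\sigma M_{(a,b)} U_\sigma^{-1} = M_{(\sigma(a),\sigma(b))}$. This should be a direct computation from the definition $\pi(x)_{\pi(i)} = x_i$: permuting the tensor factors relabels which qubits $M$ acts on. Since $\sigma(i_k) = i_k$ and $\sigma(j_k) = j_1$, we get
\[
M_{(i_k, j_1)}\psi = U_\sigma M_{(i_k,j_k)} U_\sigma^{-1}\psi = U_\sigma M_{(i_k,j_k)}\psi = 0.
\]
Hence $\Psi_{\cH'} = \Psi_{\cH}$, and $\cH$ is redundant.

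The main obstacle is bookkeeping, namely getting the permutation convention and the conjugation identity right. Once these are fixed, the result follows directly from \cref{lem:nonsingular-aut} and closure of $\Aut(\cH')$ under composition. An alternative that avoids conjugation is to check invariance coordinatewise, as in the proof of \cref{lem:nonsingular-aut}, showing that $\psi_z$ on the $(i_k,j_1)$ qubits is the image of $\psi_{z'}$ on the $(i_k,j_k)$ qubits and so lies in $\ker M$.
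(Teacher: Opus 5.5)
Your proof is correct and follows essentially the same route as the paper: remove $e_{2k}$, apply \cref{lem:nonsingular-aut} to the consecutive pairs $(e_{2t-1},e_{2t})$ to get the transpositions $(j_t\,j_{t+1})$, compose them into $(j_1\,j_k)$, and use $U_{(j_1 j_k)}$ to carry the constraint on $e_{2k-1}$ over to $e_{2k}$. Your explicit conjugation word for $(j_1\,j_k)$ and the identity $U_\sigma M_{(a,b)}U_\sigma^{-1}=M_{(\sigma(a),\sigma(b))}$ simply spell out what the paper compresses into ``a suitable product'' and $U_{(j_1j_k)}\Psi_{e_{2k-1}}=\Psi_{e_{2k}}$.
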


\begin{proof}
Assume without loss of generality that \cref{eq:bc} are edges of $\cH$. We claim that $\Psi_{\cH} = \Psi_{\cH \setminus e_{2k}}$. Clearly $\Psi_{\cH} \subseteq \Psi_{\cH \setminus e_{2k}}$, so it suffices to prove that $\Psi_{\cH \setminus e_{2k}} \subseteq \Psi_{\cH}$. Consider any $\psi \in \Psi_{\cH \setminus e_{2k}}$. By invoking \cref{lem:nonsingular-aut} on the pairs of edges $(e_1, e_2), \hdots, (e_{2k-3}, e_{2k-2})$, we have that $(j_1j_2), \hdots, (j_{k-1}j_k) \in \Aut(\psi)$. By taking a suitable product of these automorphisms, we may deduce that $(j_1j_k) \in \Aut(\psi)$. Since $\psi \in \Psi_{e_{2k-1}}$, we have that
\[
    \psi = U_{(j_1j_k)}\psi \in U_{(j_1j_k)}\Psi_{e_{2k-1}} = \Psi_{e_{2k}}.
\]
Thus, $\psi \in \Psi_{\cH \setminus e_{2k}} \cap \Psi_{e_{2k}} = \Psi_{\cH}$, as desired.
\end{proof}

\begin{lemma}\label{lem:nonsingular-NRD}
If $u \in \Span(M)$ is nonsingular, then $\NRD(M, n) \leq O(n)$.
\end{lemma}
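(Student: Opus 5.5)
Let $\cH$ be a non-redundant $M$-Hamiltonian on $n$ qubits with $u \in \Span(M)$ nonsingular. By \cref{lem:bipartite-cycle-redundant}, $\cH$ contains no bipartite cycle. The plan is therefore purely combinatorial: show that a directed graph on $[n]$ (with ordered pairs as edges, possibly containing both $(i,j)$ and $(j,i)$) with no bipartite cycle has $O(n)$ edges.

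The key construction is an auxiliary bipartite graph $B$. Its left side is a copy $L=\{\ell_i : i\in[n]\}$ and its right side is a copy $R=\{r_j : j\in[n]\}$. For each $(i,j)\in\cH$ we add the edge $\ell_i r_j$. Since the edges of $\cH$ are ordered pairs of distinct indices, $B$ is a simple bipartite graph on $2n$ vertices with $|\cH|$ edges. A cycle in $B$ has the form $\ell_{i_1} r_{j_1} \ell_{i_2} r_{j_2}\cdots$ with the $i$'s distinct and the $j$'s distinct, and it corresponds to a sequence of edges of $\cH$ of the shape \cref{eq:bc}. The definition of a bipartite cycle, however, requires $i_1,\dots,i_k,j_1,\dots,j_k$ to be all distinct, and a cycle in $B$ does not guarantee that some $i_a$ differs from every $j_b$.

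This overlap is the main obstacle, and I would remove it with a random bipartition. Color each vertex of $[n]$ red or blue uniformly at random, and keep only the edges $(i,j)$ with $i$ red and $j$ blue. In expectation this retains $|\cH|/4$ edges, so some coloring retains at least that many; fix one. Build $B$ from the retained edges only. Now every left endpoint is red and every right endpoint is blue, so any cycle in $B$ automatically has all $i$'s disjoint from all $j$'s. Such a cycle is therefore a genuine bipartite cycle in $\cH$, because the retained edges form a subset of $\cH$ and cycles of length $2k \ge 4$ in $B$ give $k\ge 2$, as the definition needs.

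It follows that $B$ restricted to the retained edges is a forest on at most $2n$ vertices, so it has fewer than $2n$ edges. Hence $|\cH|/4 < 2n$, which gives $|\cH| \le 8n = O(n)$ and proves the bound $\NRD(M,n) \le O(n)$. One subtlety remains: \cref{lem:bipartite-cycle-redundant} is stated for $\cH$ containing the cycle, whereas here the cycle lives in a subset of $\cH$. I would check that its proof only uses the cycle's edges together with $\psi\in\Psi_{\cH\setminus e_{2k}}$, and so it applies verbatim to any cycle inside $\cH$.
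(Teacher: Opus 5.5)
Your proof is correct and is essentially the paper's argument: you randomly 2-color the vertices, keep the red-to-blue edges (at least $|\cH|/4$ of them for some coloring), use \cref{lem:bipartite-cycle-redundant} to conclude that these edges form a forest, and deduce the linear bound. The auxiliary graph $B$ is harmless but not needed, since after the coloring the retained edges already form a simple bipartite graph on the $n$ original vertices; working there directly gives the paper's slightly sharper bound $|\cH|\le 4(n-1)$ in place of your $8n$.
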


\begin{proof}
Randomly color every vertex of $\cH$ uniformly at random either red or blue. Let $\cH'$ be the subset of edges that direct from a red vertex to a blue vertex. By \cref{lem:bipartite-cycle-redundant}, we have that $\cH'$ lacks any cycles, so $\cH'$ has at most $n-1$ edges. Since each edge of $\cH$ is included in $\cH'$ with probability $1/4$, we have that there exists a choice of $\cH'$ with at least ${\vert}\cH{\vert}/4$ edges. Thus, ${\vert}\cH{\vert} \leq 4(n-1) \leq O(n)$, as desired.
\end{proof}

\subsection{The Singular Case}

We still need to classify all predicate matrices $M$ such that every $u \in \Span(M)$ is singular. We start with the following algebraic description of such $M$.

\begin{claim}
If every $u \in \Span(M)$ is singular, then $M = M_1 \otimes M_2$, where $M_1, M_2 \in \C^{2\times 2}$ are unary predicate matrices with at least one of $M_1$ or $M_2$ lacking full rank.
\end{claim}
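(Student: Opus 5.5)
The plan is to translate the hypothesis into a statement about a linear space of $2\times 2$ matrices, classify such spaces, and then use positive semidefiniteness to recover the tensor structure of $M$.

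\textbf{Reformulation.} Identify each $u \in \C^4$ with the $2\times 2$ matrix $\hat u = \begin{pmatrix} u(00) & u(01)\\ u(10) & u(11)\end{pmatrix}$. Under this identification, $u = a\otimes b$ corresponds to $\hat u = ab^{\top}$. So ``singular'' means that $\hat u$ has rank at most one, i.e.\ $u$ is a product vector $a \otimes b$. Set $V := \Span(M)$. This is a nonzero subspace (since $M \neq 0$), and every element of $V$ is a product vector.

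\textbf{Classification of $V$.} I will show that either $V \subseteq \C^2 \otimes b$ for a fixed $b$, or $V \subseteq a \otimes \C^2$ for a fixed $a$.
\begin{enumerate}[(1)]
\item Take two nonzero elements $u = a\otimes b$ and $u' = c \otimes d$ of $V$. Every combination $u + t u'$ is singular, so $\det(\hat u + t\hat u') \equiv 0$ as a polynomial in $t$.
\item The constant and $t^2$ coefficients vanish automatically. The linear coefficient is the mixed determinant of $ab^\top$ and $cd^\top$, which expands to $\pm\det[a\;c]\cdot\det[b\;d]$.
\item Hence any two elements of $V$ either have parallel left factors ($a \parallel c$) or parallel right factors ($b \parallel d$).
\item Suppose neither containment holds. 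Then there are $a\otimes b,\ a\otimes d \in V$ with $b \not\parallel d$, and also some $c\otimes e \in V$ with $c \not\parallel a$. Pairing $c\otimes e$ with each of the first two forces $e \parallel b$ and $e \parallel d$, a contradiction.
\item The degenerate cases, where $\dim V = 1$ or all elements share both factors, fall under either form.
\end{enumerate}
The only step needing care is item (4): I must first ensure that, when $\dim V \ge 2$, $V$ actually contains two elements differing in one factor only. If all elements were pairwise parallel in both factors, $V$ would be one-dimensional, so this is fine.

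\textbf{Recovering $M$.} Since $M$ is Hermitian PSD, its column space equals $V$, and $M = \sum_k x_k x_k^\dagger$ with each $x_k \in V$ (for instance from the spectral decomposition). Consider the case $V \subseteq \C^2\otimes b$. Then $x_k = a_k \otimes b$, so
\[ M = \Bigl(\sum_k a_k a_k^\dagger\Bigr) \otimes b b^\dagger =: M_1 \otimes M_2. \]
Here $M_1$ is PSD and nonzero, and $M_2 = bb^\dagger$ is PSD of rank one, so it is not full rank. The case $V \subseteq a\otimes\C^2$ is symmetric, giving $M_1 = aa^\dagger$ of rank one.

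I expect the main obstacle to be the subspace classification, specifically the mixed-determinant identity in item (2) and the exclusion argument in item (4). The final PSD step is routine.
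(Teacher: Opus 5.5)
Your proof is correct and follows essentially the same route as the paper. Both arguments rest on the fact that the linear coefficient of $\det(\hat u + t\hat u')$ factors as $\det[a\;c]\cdot\det[b\;d]$, so any two elements of $\Span(M)$ share a left or a right factor; a three-element argument then forces one common factor across all of $\Span(M)$. The paper organizes this as a case split on $\rank M \in \{1, 2, \geq 3\}$, using pigeonhole to rule out rank $\geq 3$, whereas you prove the containment directly. You also make explicit the spectral-decomposition step that recovers $M = M_1 \otimes M_2$ from the containment of its range, which the paper leaves implicit.
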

\begin{proof}
First consider the case that $\rank M = 1$. Then, $M = u u^*$ with $u$ singular. Since $u$ is singular, there exists $u_1, u_2 \in \mathbb C^2$ such that $u = u_1 \otimes u_2$. Thus, if we let $M_1 = u_1 u_1^*$ and $M_2 = u_2 u_2^*$, we have that $M = M_1 \otimes M_2$, with $\rank M_1, \rank M_2 = 1$.

Now assume $\rank M \geq 2$. Pick linearly independent $u, v \in \Span(M)$. For any scalar $\lambda \in \mathbb C$, by assumption we have that
\begin{align*}
    \det \begin{pmatrix}
    u(00) + \lambda v(00) & u(01) + \lambda v(01)\\
    u(10) + \lambda v(10) & u(11) + \lambda v(11)
    \end{pmatrix} &= 0. \implies\\
    \lambda^2 (v(00)v(11) - v(01)v(10)) + \lambda (v(00)u(11)+u(00)v(11) &- u(01)v(10) - u(10)v(01))\\
    + (u(00)u(11) - u(01)u(10)) &= 0.
\end{align*}
In particular, the coefficient of $\lambda$ must be zero:
\begin{align}
v(00)u(11)+u(00)v(11) - u(01)v(10) - u(10)v(01) = 0\label{eq:uv}
\end{align}
Since $u$ and $v$ are singular, we can write $u = u_1 \otimes u_2$ and $v = v_1 \otimes v_2$ with $u_1, u_2, v_1, v_2 \in \C^2$. Then \cref{eq:uv} factors as
\[
(v_1(0) u_1(1) - v_1(1) u_1(0))(v_2(0) u_2(1) - v_2(1) u_2(0)) = 0.
\]
Thus, either $u_1$ and $v_1$ are related by a scalar multiple or $u_2$ and $v_2$ are related by a scalar multiple. If $\rank M = 2$, then we have that either $\Span(M) = \langle u_1\rangle \otimes \langle u_2, v_2\rangle$ or $\Span(M) = \langle u_1, v_1\rangle \otimes \langle u_2\rangle.$ In either case, we have that $M = M_1 \otimes M_2$ with one of $M_1$ or $M_2$ lacking full rank.

Now assume $\rank M \geq 3$. Consider linearly independent $u, v, w \in \Span(M)$ and assume they are all singular, so $u = u_1 \otimes u_2$, $v = v_1 \otimes v_2$, $w = w_1 \otimes w_2$ with $u_1, u_2, v_1, v_2, w_1, w_2 \in \C^2$. By the previous discussion, we have each of the following three logical facts:
\begin{align*}
\langle u_1\rangle = \langle v_1\rangle &\vee \langle u_2\rangle = \langle v_2\rangle\\
\langle u_1\rangle = \langle w_1\rangle &\vee \langle u_2\rangle = \langle w_2\rangle\\
\langle v_1\rangle = \langle w_1\rangle &\vee \langle v_2\rangle = \langle w_2\rangle.
\end{align*}
By pigeonhole, either two equalities are true on the left, or two equalities are true on the right, so by transitivity either $\langle u_1\rangle = \langle v_1\rangle = \langle w_1\rangle$ or $\langle u_2\rangle = \langle v_2\rangle = \langle w_2\rangle$. Since $u_1, u_2, v_1, v_2, w_1, w_2 \in \C^2$, in either case we cannot have $u, v, w$ be linearly independent. Thus, we have completed our classification.
\end{proof}

Thus, the cases we have not classified yet are when the predicate matrix $M$ has the form $M_1 \otimes M_2$ where $(\rank M_1, \rank M_2) \in \{(1,2), (2,1)\}.$ Since swapping the order of the qubits of $M$ leaves $\NRD(M, n)$ unchanged, we may WLOG assume that $(\rank M_1, \rank M_2) = (2,1)$. We now prove an analogue of \cref{lem:nonsingular-aut}.

\begin{lemma}\label{lem:nonsingular-aut-2}
Let $M \in \C^{4 \times 4}$ be a predicate matrix such that $M = M_1 \otimes M_2$ with $M_1, M_2 \in \C^{2\times 2}$ and $(\rank M_1, \rank M_2) = (2,1)$. Let $\cH$ be an $M$-Hamiltonian and $i,j,k \in [n]$ be distinct indices. If $(i,j), (i,k) \in \cH$, then the permutation $(jk)$ is an element of $\Aut(\cH)$.
\end{lemma}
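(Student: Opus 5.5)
The plan is to exploit the fact that for $M = M_1 \otimes M_2$ with $M_1$ invertible and $M_2$ of rank one, the constraint imposed by a term $(i,j)$ does not involve qubit $i$ at all. It only forces qubit $j$ into a fixed one-dimensional subspace of $\C^2$. Two terms $(i,j)$ and $(i,k)$ therefore force qubits $j$ and $k$ into the same pure state, and any such state is invariant under swapping them.

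\emph{Step 1: reduce the kernel condition to one qubit.} Write $M_2 = vv^{\dagger}$ with $v \neq 0$, and let $w \in \C^2$ be a nonzero vector orthogonal to $v$, so that $v^\perp = \spn\{w\}$. For a term $T=(i,j)$ we have $\psi \in \ker M_T$ if and only if $(M_1\vert_i \otimes M_2\vert_j \otimes \Id)\psi = 0$. Since $M_1$ is PSD of rank $2$, it is invertible. Multiplying by $M_1^{-1}\vert_i \otimes \Id$ shows that this is equivalent to $(M_2\vert_j \otimes \Id_{[n]\setminus\{j\}})\psi = 0$. Because $M_2 = vv^\dagger$, this in turn is equivalent to $(v^\dagger\vert_j \otimes \Id)\psi = 0$.

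\emph{Step 2: identify the states satisfying it.} Using restrictions as in \cref{def:restrictedState}, fix any $z \in \zo^{[n]\setminus\{j\}}$. The condition says that the $2$-dimensional vector $\psi_z$ indexed by qubit $j$ lies in $v^\perp = \spn\{w\}$. Hence $\psi = w\vert_j \otimes \phi$ for some $\phi \in \C^{2^{n-1}}$, and conversely every such vector is in the kernel. In other words, $\ker M_{(i,j)} = \spn\{w\}\vert_j \otimes \C^{2^{[n]\setminus\{j\}}}$.

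\emph{Step 3: intersect the two kernels.} Now take any $\psi \in \Psi_\cH \subseteq \ker M_{(i,j)} \cap \ker M_{(i,k)}$. Applying Step 2 to qubit $j$ and then to qubit $k$ gives $\psi = w\vert_j \otimes w\vert_k \otimes \phi'$ for some $\phi' \in \C^{2^{[n]\setminus\{j,k\}}}$. To justify this, write $\psi = w\vert_j \otimes \phi$. Contracting qubit $k$ with $v^\dagger$ must give zero, which forces $(v^\dagger\vert_k \otimes \Id)\phi = 0$ because $w \neq 0$, and then Step 2 applied to $\phi$ factors out $w\vert_k$. The permutation unitary $U_{(jk)}$ only exchanges the tensor factors on qubits $j$ and $k$, so it fixes $w\vert_j \otimes w\vert_k \otimes \phi'$. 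Thus $(jk) \in \Aut(\psi)$ for every $\psi \in \Psi_\cH$, which means $(jk) \in \Aut(\cH)$.

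The only point requiring care is Step 1: the invertibility of $M_1$ must be used to discard qubit $i$ entirely. This is exactly where the hypothesis $(\rank M_1, \rank M_2) = (2,1)$ enters. With the opposite rank order the argument would not apply, which is why the lemma fixes this order without loss of generality. The remaining steps are routine linear algebra.
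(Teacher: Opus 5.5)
Your proof is correct and uses the same key fact as the paper's. Because $M_1$ is invertible, $\Span(M)=\C^2\otimes\langle v\rangle$, so the terms $(i,j)$ and $(i,k)$ constrain only qubit $j$ (resp.\ $k$), through $v$. The paper then subtracts the resulting coordinate equations to get $\psi_x(y01)=\psi_x(y10)$ directly, whereas you solve them to show $\psi = w\vert_j\otimes w\vert_k\otimes\phi'$, which is a slightly stronger and cleaner conclusion.
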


\begin{proof}
We may assume that $\Span(M) = \mathbb C^2 \otimes \langle v\rangle$ for some nonzero $v \in \C^2$. Without loss of generality, we may assume that $T_{12} := (1,2), T_{13} := (1,3) \in \cH$, and we seek to prove that $(23) \in \Aut(\cH)$. Fix an arbitrary state $\psi \in \Psi_{\cH}$. To show that $(23) \in \Aut(\psi)$, it suffices to show $(23) \in \Aut(\psi_x)$ for all $x \in \{0,1\}^T$ where $T = \{4, \hdots, n\}$. This is equivalent to showing that $\psi_x(001) = \psi_x(010)$ and $\psi_x(101) = \psi_x(110)$.

Since $\psi \in \Psi_{\cH}$, we have that the following equations are true:
\begin{align*}
(M{\vert}_{T_{12}} \otimes \Id{\vert}_3) \psi_x &= 0\\
(M{\vert}_{T_{13}} \otimes \Id{\vert}_2) \psi_x &= 0.
\end{align*}
Thus, for all bits $y,z \in \{0,1\}$ we have that
\begin{align*}
v(0)\psi_x(y0z) + v(1)\psi_x(y1z) &= 0\\
v(0)\psi_x(yz0) + v(1)\psi_x(yz1) &= 0.
\end{align*}
If we set $z = 0$ and subtract the above pair of equations, we get that
\[
    v(1)(\psi_x(y10)-\psi_x(y01)) = 0.
\]
Otherwise, if we set $z=1$ and subtract the pair of equations,we get that
\[
    v(0)(\psi_x(y01) - \psi_x(y10)) = 0.
\]
Since $v$ is nonzero, at least one of $v(0) = v(1)$, so we can deduce that $\psi_x(y01) = \psi_x(y10)$. That is, $(23) \in \Aut(\psi_x)$, as desired.
\end{proof}

By suitably adapting the proofs of \cref{lem:bipartite-cycle-redundant} and \cref{lem:nonsingular-NRD}, we may deduce the following

\begin{lemma}
Let $M \in \C^{4 \times 4}$ be a predicate matrix such that $M = M_1 \otimes M_2$ with $M_1, M_2 \in \C^{2\times 2}$ and $(\rank M_1, \rank M_2) \in \{(1,2), (2,1)\}$. Then, $\NRD(M, n) \leq O(n)$.
\end{lemma}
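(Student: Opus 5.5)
The plan is to reproduce the two-step argument from the nonsingular case, with \cref{lem:nonsingular-aut-2} substituted for \cref{lem:nonsingular-aut}. Since reversing the orientation of every edge leaves $\NRD$ unchanged, I will assume $(\rank M_1, \rank M_2) = (2,1)$, exactly as in the statement of \cref{lem:nonsingular-aut-2}. The proof then has three steps.

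\textbf{Step 1: a bipartite cycle forces redundancy.} Suppose $\cH$ contains the edges $e_1, \hdots, e_{2k}$ of \cref{eq:bc}. I will show $\Psi_{\cH \setminus e_{2k}} \subseteq \Psi_{\cH}$.
\begin{itemize}
\item Take $\psi \in \Psi_{\cH\setminus e_{2k}}$. Each consecutive pair $e_{2s-1} = (i_s, j_s)$, $e_{2s} = (i_s, j_{s+1})$ shares the first coordinate $i_s$.
\item The argument of \cref{lem:nonsingular-aut-2} only uses that $\psi$ is annihilated by the two terms at hand. So applying it to the Hamiltonian $\cH \setminus e_{2k}$ gives $(j_s j_{s+1}) \in \Aut(\psi)$ for $s = 1, \hdots, k-1$.
\item Composing these transpositions yields $(j_1 j_k) \in \Aut(\psi)$.
\item Since $\psi \in \Psi_{e_{2k-1}}$ and $U_{(j_1 j_k)}$ maps $\Psi_{(i_k, j_k)}$ to $\Psi_{(i_k, j_1)} = \Psi_{e_{2k}}$, we get $\psi = U_{(j_1j_k)}\psi \in \Psi_{e_{2k}}$.
\end{itemize}
Hence any non-redundant $\cH$ has no bipartite cycle.

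\textbf{Step 2: counting.} Color each vertex red or blue uniformly at random, and keep the edges $(i,j)$ with $i$ red and $j$ blue; each edge survives with probability $1/4$. In the surviving subgraph $\cH'$, first coordinates are red and second coordinates are blue. Any cycle in the underlying undirected graph of $\cH'$ therefore alternates between red and blue vertices. Reading it off, every red vertex is the first coordinate of its two incident cycle edges, so the cycle has exactly the form \cref{eq:bc}. A non-redundant $\cH$ has no such cycle, and subsets of a non-redundant $\cH$ are still non-redundant. So $\cH'$ is a forest with at most $n-1$ edges. Choosing a coloring with $|\cH'| \geq |\cH|/4$ gives $|\cH| \leq 4(n-1)$.

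\textbf{Expected obstacles.} Two points need care.
\begin{itemize}
\item The asymmetry of $M$. \cref{lem:nonsingular-aut-2} gives automorphisms only for pairs of edges sharing the \emph{first} coordinate, where $M_1$ (full rank) acts. The bipartite-cycle pattern is exactly of this shape, so no extra lemma is needed. I will state explicitly that the lemma is applied to the subhamiltonian $\cH \setminus e_{2k}$.
\item The duplicated edges $(i,j)$ and $(j,i)$. These cannot both appear in $\cH'$, since that would require $i$ to be both red and blue. So $\cH'$ is a simple graph and the forest bound applies.
\end{itemize}
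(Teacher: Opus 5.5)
Your proposal is correct and matches the paper's own argument: the paper obtains this lemma by substituting \cref{lem:nonsingular-aut-2} for \cref{lem:nonsingular-aut} in the proofs of \cref{lem:bipartite-cycle-redundant} and \cref{lem:nonsingular-NRD}, which is exactly your route. You also spell out what the paper's ``suitably adapting'' leaves implicit: the consecutive pairs $(e_{2s-1},e_{2s})$ of a bipartite cycle share their \emph{first} coordinate, where the full-rank factor acts, and every cycle of the red-to-blue subgraph $\cH'$ is automatically a bipartite cycle in the sense of \cref{eq:bc}.
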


This completes the proof of \cref{thm:2-qubit}.

\section{A Theory of Non-Redundancy for Generic Hamiltonians}\label{sec:nrd-generic}

In this section, we provide more general upper bounds on the non-redundancy of random Hamiltonians. In particular, in \cref{thm:EfficientRandomSparsifierLouie}, we saw that when the predicate matrix $M \in \C^{2^{r} \times 2^r}$ is a random PSD matrix of rank $2^{r-1}+1$, then Hamiltonians using $M$ admit sparsifiers of size $\widetilde{O}(\eps^{-2}n^2)$. As we shall see, we can even show that when the rank falls \emph{below} $2^{r-1}$, that the non-redundancy of these Hamiltonians is still bounded.

We show the following theorem:

\begin{theorem}\label{thm:GenericNRDUpperBound}
    Let $M \in \C^{2^r \times 2^r}$ be a random predicate matrix of rank $R = 2^{r-1}-1$. Then, with probability $1$ over $M$, $\NRD(M, n) \leq \widetilde{O}(n)$.
\end{theorem}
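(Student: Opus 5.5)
The plan follows the intro's sketch: show that two terms sharing exactly one qubit force an automorphism of the ground state, then count automorphisms via a subgroup chain in $S_n$.

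\textbf{Step 1 (automorphism lemma).} Let $T=(1,2,\dots,r)$ and $T'=(1,r+1,\dots,2r-1)$ be two terms of $\cH$ sharing only qubit $1$. Let $\sigma=(2\ r{+}1)(3\ r{+}2)\cdots(r\ 2r{-}1)$ be the product of $r-1$ transpositions that maps $T$ to $T'$. I will prove that, with probability $1$ over $M$, every $\phi\in\C^{2^{2r-1}}$ with $\phi\in\ker(M_T\otimes\Id)\cap\ker(M_{T'}\otimes\Id)$ satisfies $U_\sigma\phi=\phi$. Restricting $\psi\in\Psi_\cH$ to $\psi_x$ as in \cref{lem:nonsingular-aut} then gives $\sigma\in\Aut(\cH)$.

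To prove the lemma, write $K=\ker(M)$, which has dimension $2^r-R=2^{r-1}+1$. The joint kernel is the intersection of $K\otimes\C^{2^{r-1}}$ (on $T$) with its image under $U_\sigma$ (on $T'$). This intersection is $U_\sigma$-invariant, so it splits into symmetric and antisymmetric parts under the swap of the two $(r-1)$-qubit blocks. The claim is that the antisymmetric part vanishes generically, which is a rank condition on a linear system whose coefficients are polynomial in a basis of $K$.

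Following \cref{clm:existVectorsLouie}, I would exhibit one explicit kernel $K$ (built from shifted standard basis vectors, as there) for which the antisymmetric solution space is $0$. Some minor determinant is then a nonzero polynomial, and \cref{clm:zeroEvaluation} gives the probability-$1$ statement. Because $\operatorname{span}(M)$ is generic of dimension $R$, the orthogonal complement $K$ is also a generic subspace, so working with $K$ directly is legitimate.

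Counting makes this plausible. Identify $\phi$ with a pair of $2^{r-1}\times 2^{r-1}$ matrices $\Phi_0,\Phi_1$ indexed by the bit on qubit $1$. Antisymmetric solutions are those with $\Phi_a^\top=-\Phi_a$, giving $2\binom{2^{r-1}}{2}$ unknowns, against $R\cdot 2^{r-1}=(2^{r-1}-1)2^{r-1}$ constraints from $T$ alone. These numbers match exactly, which is why $R=2^{r-1}-1$ is the threshold. This determinant computation is the main obstacle: the rank must come out full exactly, not just on average. I would first try a banded, Hankel-type construction like the one in \cref{clm:existVectorsLouie}, restricted to antisymmetric matrices, so that propagation along diagonals forces $\Phi_0=\Phi_1=0$. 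Overlaps of size larger than one I would either reduce to this case by the lifting trick from \cref{lem:nullSpaceLouie}, or simply discard, since they are not needed for the counting.

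\textbf{Step 2 (counting).} Let $\cH$ be non-redundant with $m$ terms, and assume it is $r$-partite by the remark after \cref{prop:r-partite-WLOG}. By a pigeonhole argument like \cref{clm:disjointIntersectingTermsLouie}, I find $\Omega_r(m)-O(n)$ pairs of terms sharing exactly one qubit, arranged so that each pair contributes a new term. Order the terms $T_1,\dots,T_m$ and let $\cH_i=\{T_1,\dots,T_i\}$.

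If $T_{i+1}=\sigma(T_j)$ with $\sigma\in\Aut(\cH_i)$, then $\Psi_{\cH_i}\subseteq U_\sigma\Psi_{T_j}=\Psi_{T_{i+1}}$. So $T_{i+1}$ is redundant relative to $\cH_i$, hence redundant in $\cH$, because $\Psi_\cH\subseteq\Psi_{\cH_i}$. Non-redundancy therefore forces every term added as the second member of a single-qubit-overlap pair with an earlier term to bring a new automorphism $\sigma\notin\Aut(\cH_i)$. The groups then form a strictly increasing chain $\Aut(\cH_i)\subsetneq\Aut(\cH_{i+1})$, and each step at least doubles the order by Lagrange.

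Since $|S_n|=n!$, there are at most $\log_2 n!=O(n\log n)$ such steps. The terms not covered by pairs number $O(n)$, and the partite reduction costs only $O_r(\log n)$ factors. Altogether $\NRD(M,n)\le\widetilde O(n)$.

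A subtle point is the ordering: whenever a term $T'$ is added and overlaps an earlier term in exactly one qubit, it must genuinely generate a new automorphism. I would handle this by adding terms greedily so that each new term meets some earlier term in exactly one qubit, and then bounding the leftover terms by $O(n)$, since a family of terms that pairwise share no single qubit is sparse.
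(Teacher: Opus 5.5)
Your Step 1 matches the paper's \cref{lem:deriveAutomorphismGeneric}: the same symmetric/antisymmetric split, the same square count, and the same conclusion via \cref{clm:zeroEvaluation}. The witness you leave open is supplied there as $A=[\Id_R\ 0]$, $B=[0\ \Id_R]$, the blocks of the span vectors on the two values of the shared qubit. For this choice $(AK_0+BK_1)_{i,p}=(K_0)_{i,p}+(K_1)_{i+1,p}$, and antisymmetry lets you kill the diagonals $p-i=\pm1,\pm2,\dots$ one at a time.

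The genuine gap is in Step 2. You use only pairs sharing \emph{exactly one} qubit and assert that the remaining terms number $O(n)$. This fails for $r\ge 4$. Take an $r$-partite instance with parts of size $\Theta(n/r)$, fix $u_1\in V_1$ and $u_2\in V_2$, and consider $\{(u_1,u_2,x_3,\dots,x_r): x_k\in V_k\}$. This family has $\Theta_r(n^{r-2})$ terms, and any two of them share at least two qubits. The pigeonhole of \cref{clm:disjointIntersectingTermsLouie} only produces \emph{intersecting} pairs, not single-qubit ones. So on this family your argument derives no automorphism and gives no bound, which means larger overlaps cannot be discarded. Nor is the lifting trick of \cref{lem:nullSpaceLouie} a safe way to reduce them. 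The slice of the lifted state indexed by copy values $c$ and outside values $y$ is $P_c\phi_y$, where $P_c$ projects the duplicated shared qubits onto $c$. But $\sum_c\langle P_c\phi_y,MP_c\phi_y\rangle$ differs from $\langle\phi_y,M\phi_y\rangle$ by cross terms, so zero energy is not transferred.

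The paper's counting instead fixes a spanning forest $F$ with $|F|\le n$ (\cref{prop:buildSpanningForest}), so every other term $T$ meets some $T_j\in F$. It then uses the position-wise swap of $T\setminus(T\cap T_j)$ with $T_j\setminus(T\cap T_j)$, whatever the overlap size. Each non-forest term is either redundant (\cref{clm:RedundantAutomorphismNullSpace}) or at least doubles the automorphism group, giving $m\le n+\log_2 n!$ before the partite loss.

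This needs the automorphism lemma for an overlap of every size $k$. The paper's lemma is phrased for any intersecting pair, but its computation only covers $k=1$. Your method handles general $k$. Let the shared qubits form a block of dimension $2^k$ and the swapped blocks have dimension $d=2^{r-k}$. Then the antisymmetric parts satisfy $\sum_{z\in\zo^k}A_zK_z=0$, with $2^k\binom{d}{2}=2^{r-1}(d-1)$ unknowns against $Rd$ equations. To get a witness, split $2^{k-1}(d-1)\le R$ of the rows into $2^{k-1}$ groups. Each group carries $[\Id_{d-1}\ 0]$ and $[0\ \Id_{d-1}]$ on its own pair of shared values and zeros elsewhere. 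The system then decouples into copies of the $k=1$ computation, which works for any $d$. Since there are finitely many overlap patterns, genericity holds for all of them simultaneously. With this lemma, and the forest ordering in place of your single-overlap greedy, the rest of your Step 2 goes through: strict subgroup chain, Lagrange, $\log_2 n!$, and the partite reduction.
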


To prove \cref{thm:GenericNRDUpperBound}, we will suppose that $\mathcal{H} = (T_i, M)_{i = 1, \dots m}$ is a non-redundant Hamiltonian. By \cref{prop:r-partite-WLOG}, we can also assume that $\mathcal{H}$ is $r$-partite. 

\subsection{Analyzing Intersecting Terms}

Towards proving \cref{thm:GenericNRDUpperBound}, we first prove the following key lemma:

\begin{lemma}\label{lem:deriveAutomorphismGeneric}
    Let $M \in \C^{2^r \times 2^r}$ be a random predicate matrix of rank $R = 2^{r-1} - 1$, and let $T, T' \subseteq [n]$, ${\vert}T{\vert} = r$ denote two local constraints. Suppose that $T \cap T' \neq \emptyset$, let $a \in T \cap T'$, and let $S = T - \{a \} \subseteq [n]$ and $S' = T' - \{a\} \subseteq [n]$.
    Then, with probability $1$ over $M$, for any state $\psi \in \C^{2^n}$ such that $\cE_{T, M}(\psi) = \cE_{T', M}(\psi) = 0$, it must be the case that the automorphism $(S, S') \in \mathrm{Aut}(\psi)$.
\end{lemma}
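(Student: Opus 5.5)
We will follow the template of \cref{lem:nullSpaceLouie} and \cref{lem:nonsingular-aut}. First reduce to the case $T\cap T'=\{a\}$, so $U=T\cup T'$ has $2r-1$ qubits. (If the overlap is larger, we read the automorphism $(S,S')$ as the product of transpositions pairing $S$ with $S'$ in the $r$-partite order, with shared qubits fixed. The lifting trick from the end of the proof of \cref{lem:nullSpaceLouie} then reduces to the single-overlap case.) Fixing $y\in\zo^{[n]\setminus U}$, it suffices to show the following. Every $\phi=\psi_y\in\C^{2^{2r-1}}$ that is annihilated by $A_T$ and $A_{T'}$ satisfies $U_\sigma\phi=\phi$, where $\sigma$ swaps the $r-1$ qubits of $S$ with the corresponding qubits of $S'$.

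Index $\phi$ as a pair of matrices $\Phi_0,\Phi_1\in\C^{2^{r-1}\times 2^{r-1}}$, with $\Phi_c(s,s')=\phi(c\circ s\circ s')$, as in \cref{clm:existVectorsLouie}. Write the spanning vectors $v_1,\dots,v_R$ of $M$ as pairs $(v_j^0,v_j^1)$ of vectors in $\C^{2^{r-1}}$, and let $B_c$ be the $R\times 2^{r-1}$ matrix whose rows are the conjugates of $v_j^c$. The constraints then read
\[
B_0\Phi_0+B_1\Phi_1=0, \qquad \Phi_0B_0^{\top}+\Phi_1B_1^{\top}=0,
\]
and the goal is $\Phi_c=\Phi_c^{\top}$ for $c=0,1$. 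Since $R=2^{r-1}-1$, a generic $B_0$ has full row rank with a one-dimensional left complement. After a generic change of basis we can take $B_0=[\Id_{R}\mid 0]$ and $B_1$ a generic $R\times 2^{r-1}$ matrix. The first equation then says that the first $R$ rows of $\Phi_0$ equal $-B_1\Phi_1$, and the second equation says the same for columns. Eliminating $\Phi_0$ leaves linear relations on $\Phi_1$ and on the free last row and column of $\Phi_0$. The claim is that the solution space $K\subseteq\C^{2^{2r-1}}$ of this joint system is contained in the $\sigma$-symmetric subspace.

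To prove the containment generically, the strategy is the same as in \cref{lem:nullSpaceLouie}. The condition ``$K\subseteq\ker(\Id-U_\sigma)$'' is equivalent to a rank condition: stacking the rows of $\Id-U_\sigma$ onto $A=[A_T;A_{T'}]$ does not increase the rank. First we compute $\dim K$ for one explicit choice of vectors, together with a basis showing that it consists of symmetric vectors. Then we argue that $\dim K$ cannot be smaller generically, since $K$ always contains the symmetric solutions produced by $\sigma$-invariance of the system. With the dimension fixed, the containment is the vanishing of certain minors, which is a polynomial identity. By \cref{clm:zeroEvaluation} and upper semicontinuity of the kernel dimension, it then holds with probability $1$.

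Two observations feed into this. Note that $\sigma$ maps the $T$-constraints onto the $T'$-constraints, so $U_\sigma K=K$, and $K$ splits into symmetric and antisymmetric parts. It therefore suffices to show that the antisymmetric part $K^-$ is trivial generically. On antisymmetric $\phi$ the two constraint families coincide, so $K^-$ is cut out by $A_T$ alone restricted to the antisymmetric subspace, which has dimension $2\binom{2^{r-1}}{2}$. The $T$-constraints number $R\cdot 2^{r-1}$, and a direct count shows this is at least $2\binom{2^{r-1}}{2}=2^{r-1}(2^{r-1}-1)$ exactly when $R\geq 2^{r-1}-1$, with equality in our case. So the rank $R=2^{r-1}-1$ is exactly the threshold. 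The main work, and the step I expect to be the obstacle, is exhibiting one explicit choice of $v_j$ for which the resulting square system on antisymmetric matrices pairs $(\Phi_0,\Phi_1)$ is nonsingular. I would try a shifted-diagonal construction like that of \cref{clm:existVectorsLouie}: take $v_j^0=e_j$ and $v_j^1=e_{j-1}$ or a similar shift, so that the antisymmetry relations propagate along anti-diagonals and force every entry to zero from the boundary rows. Once that determinant is nonzero as a polynomial, \cref{clm:zeroEvaluation} gives $K^-=0$ almost surely. Hence $K$ is symmetric and $(S,S')\in\mathrm{Aut}(\psi)$.
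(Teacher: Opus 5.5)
For a single shared qubit, your argument is the paper's proof. The paper gets the same split by adding and subtracting the two matrix equations. It then observes that $(\Phi_0,\Phi_1)\mapsto B_0\Phi_0+B_1\Phi_1$ is a square linear map on pairs of antisymmetric matrices, of dimension $2^{r-1}(2^{r-1}-1)$ on both sides. It certifies a nonzero determinant polynomial with $B_0=[\Id_R\mid 0]$, $B_1=[0\mid \Id_R]$, which is your shift with the blocks interchanged.

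That witness works. For antisymmetric $X=\Phi_0$, $Y=\Phi_1$, the equations $X_{i,k}=-Y_{i+1,k}$ together with antisymmetry give $X_{i,i+d+1}=-Y_{i+1,i+d+1}$ and $Y_{k,k+d+1}=-X_{k,k+d}$ for all $d\geq 0$. Induction on the distance from the (zero) main diagonal then forces $X=Y=0$.

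The semicontinuity paragraph is unnecessary, and as written it does not follow. Vanishing of minors is a closed condition, so it does not transfer from one example to generic $M$. By contrast, $K^-=0$ is the nonvanishing of a single determinant, which is exactly what \cref{clm:zeroEvaluation} provides.

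The step that fails is your reduction of larger overlaps via the lifting trick from the end of the proof of \cref{lem:nullSpaceLouie}: that lift does not preserve kernels. Let $\hat\phi$ copy the shared qubits and vanish whenever $b_i\neq b_{i+r-1}$, and write $z=(c,z'')$. Then $\hat\phi_z=P_c\phi_{z''}$, where $P_c$ is the coordinate projection onto $\{w:(w_2,\dots,w_\ell)=c\}$. So the $S$-energy of $\hat\phi$ is $\sum_{z''}\sum_c\langle P_c\phi_{z''},MP_c\phi_{z''}\rangle$, which drops the cross terms $c\neq c'$ of $\langle \phi_{z''},M\phi_{z''}\rangle$. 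Since $\phi_{z''}\in\ker M$ does not imply $P_c\phi_{z''}\in\ker M$, the lift need not be a joint ground state, and the single-overlap case cannot be invoked. As a check, taking $T'=T$, the same lift would "prove" $\ker M_T=0$ in \cref{lem:nullSpaceLouie}, which is false since $M$ is rank-deficient.

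The paper itself simply assumes a single overlap. However, the proof of \cref{thm:GenericNRDUpperBound} applies the lemma to arbitrary intersecting pairs, so you should handle them directly with your own method. With $\ell$ shared qubits in matching positions, $U_\sigma K=K$ still holds, and the two constraint families still coincide on $\sigma$-antisymmetric vectors. This gives $R\cdot 2^{r-\ell}$ equations in $2^{r-1}(2^{r-\ell}-1)$ unknowns. A full-column-rank witness comes from pairing up the $2^\ell$ values of the shared register and placing the $\ell=1$ witness of size $2^{r-\ell}$ on each pair. This uses only $2^{r-1}-2^{\ell-1}\le R$ rows, and the equations decouple across pairs.
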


When a predicate matrix $M$ satisfies the above requirement, we call it \emph{generic}.

\begin{proof}[Proof of \cref{lem:deriveAutomorphismGeneric}]
Without loss of generality, we will assume that $T = \{1, 2, 3, 4, \dots r\}$ and $T' = \{1, r+1, r+2, \dots 2r-1 \}$. As in the proof of \cref{thm:EfficientRandomSparsifierLouie}, we let $M$ be the span of (random) vectors $v_1, \dots v_R$, and we let $U = T \cup T'$. In this setting, our goal is to show that $\psi$ satisfies the automorphism $(\{2, 3, 4, \dots r\}, \{r+1, \dots 2r-1\}\})$.

Now, for simplicity, we restrict our attention to the sub-state $\phi = \psi_y \in \C^{2^{U}}$ for any $y \in \zo^{[n] - U}$. We construct two matrices $M_0, M_1 \in \C^{2^{r-1} \times 2^{r-1}}$, where the entries of $M_0, M_1$ are indexed by bit strings of length $r-1$, where $M_0(a, b) = \phi_{0 \circ a \circ b}$ and $M_1({a, b}) = \phi_{1 \circ a \circ b}$. To show that $\phi$ satisfies our desired automorphism, we equivalently must only show that $M_0 = M_0^{\top}$ and $M_1 = M_1^{\top}$, as this immediately implies that \[
\phi_{0 \circ a \circ b} = \phi_{0 \circ b \circ a}, \quad \quad \quad \phi_{1 \circ a \circ b} = \phi_{1 \circ b \circ a},
\]
and thus $(\{2, 3, 4, \dots r\}, \{r+1, \dots 2r-1\}\}) \in \Aut(\phi)$ (and correspondingly, of $\psi$). Thus, for the remainder of this proof, we focus on showing that $M_0$ and $M_1$ are symmetric. 

Now, because $\cE_{T, M}(\psi) = \cE_{T', M}(\psi) = 0$, it must be the case that for every $\widehat{y} \in \zo^{U - T}$ and for every $j \in [R]$, $\langle \phi_{\widehat{y}}, v_j \rangle = 0$, and likewise for every $\widehat{y} \in \zo^{U - T'}$, $\langle \phi_{\widehat{y}}, v_j \rangle = 0$. We can now re-write these constraints: for $j \in [R]$, we define $\alpha_j$ to be a length $2^{r-1}$ vector (indexed by strings $a \in \zo^{r-1}$) such that $\alpha_j(a) = (v_j)_{0 \circ a}$. Likewise, we define $\beta_j$ such that $\beta_j(a) = (v_j)_{1 \circ a}$. From these, we construct matrices $A, B \in \C^{R \times 2^{r-1}}$, where $A_{j, a} = \alpha_j(a)$.

Importantly, the constraints that $\langle \phi_{\widehat{y}}, v_j \rangle = 0$ for $\widehat{y} \in \zo^{U - T}$ are in exact correspondence with the property that 
\begin{align}\label{eq:equalZeroFirst}
AM_0 + BM_1 = 0.
\end{align}
Similarly, the fact that $\langle \phi_{\widehat{y}}, v_j \rangle = 0$ for $\widehat{y} \in \zo^{U - T'}$ implies that 
\begin{align}\label{eq:equalZeroSecond}
A(M_0)^{\top} + B (M_1)^{\top} = 0.
\end{align}

Now, for $M_0, M_1$, we decompose these matrices into their symmetric and anti-symmetric parts. I.e., we write
\[
M_0 = Q_0 + K_0, \quad \quad M_1 = Q_1 + K_1,
\]
where $Q_0 = Q_0^{\top}, Q_1 = Q_1^{\top}$, while $K_0 = - K_0^{\top}$ and $K_1 = K_1^{\top}$. Plugging in this decomposition into \cref{eq:equalZeroFirst} and \cref{eq:equalZeroSecond} (and subtracting one from the other), we derive that 
\[
AQ_0 + BQ_1 = 0
\]
and that 
\[
AK_0 + BK_1 = 0.
\]

Because our goal was to show that, in fact, $M_0 = M_0^{\top}$ and $M_1 = M_1^{\top}$, it remains only for us to show that $K_0 = K_1 = 0$, as this then directly implies that our matrices $M_0, M_1$ are symmetric, and we derive our desired automorphism.

Indeed, let us define the function $F_{A, B}(K_0, K_1) = AK_0 + AK_1$, where $K_0, K_1$ are anti-symmetric matrices in $\C^{2^{r-1} \times 2^{r-1}}$, and $A, B \in \C^{R \times 2^{r-1}}$. Because the space of anti-symmetric matrices of size $2^{r-1} \times 2^{r-1}$ is of dimension $2^{r-2} \cdot (2^{r-1}-1)$, we see that $F_{A, B}$ is a linear map with domain of dimension $2 \cdot 2^{r-2} \cdot (2^{r-1}-1) = 2^{r-1} \cdot (2^{r-1}-1)$. Likewise, the co-domain of $F$ is simply $\C^{R \times 2^{r-1}}$, which is of dimension $R \cdot 2^{r-1} = 2^{r-1} \cdot (2^{r-1}-1)$.

Thus, $F_{A, B}$ is a $2^{r-1} \cdot (2^{r-1}-1) \times 2^{r-1} \cdot (2^{r-1}-1)$ dimensional linear map. We let its determinant be $p(A, B) = \det(F_{A, B})$. Observe that this is a polynomial in $A, B$, and because the entries of the matrices $A, B$ are exactly the entries of the vectors $v_1, \dots v_R$, we see $p(A, B)$ is simply a polynomial in the entries of the vectors $v_1, \dots v_R$.

Now, we have the following claim:

\begin{claim}\label{clm:pNonZero}
    The polynomial $p(A, B)$ as defined above is not the zero-polynomial. 
\end{claim}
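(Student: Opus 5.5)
The plan is to prove \cref{clm:pNonZero} by exhibiting a single explicit choice of $v_1, \dots, v_R$, equivalently of $A, B \in \C^{R \times 2^{r-1}}$, at which $F_{A,B}$ is injective. I read $F_{A,B}$ as $(K_0,K_1)\mapsto AK_0 + BK_1$, as in the relation $AK_0+BK_1=0$ derived above. Since $F_{A,B}$ is a square linear map, injectivity at one point gives $p(A,B) \neq 0$ there, so $p$ is not the zero polynomial. This mirrors the ``shift'' construction in \cref{clm:existVectorsLouie}. Write $N = 2^{r-1}$, so $R = N-1$, and index rows and columns by $\{0,\dots,N-1\}$.

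\textbf{Choice of $A$ and $B$.} Take $A$ to have rows $e_0, \dots, e_{N-2}$ and $B$ to have rows $e_1, \dots, e_{N-1}$. Equivalently, set $v_j = (e_{j-1}, e_j) \in \C^{2^r}$ for $j \in [R]$. These $R$ vectors are linearly independent, so they are a legitimate basis for the span of a rank-$R$ predicate matrix. With this choice, the equation $AK_0 + BK_1 = 0$ reads, row by row,
\[
K_1(a,b) = -K_0(a-1,b) \qquad \text{for all } 1 \le a \le N-1,\ 0 \le b \le N-1 .
\]

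\textbf{Showing the kernel is trivial.} Apply the antisymmetry of $K_1$ to this relation for $a,b \ge 1$:
\[
K_0(a-1,b) = -K_1(a,b) = K_1(b,a) = -K_0(b-1,a).
\]
Since $K_0(b-1,a) = -K_0(a,b-1)$, this gives $K_0(a-1,b) = K_0(a,b-1)$ whenever $1 \le a,b \le N-1$. In other words, $K_0(i,j) = K_0(i+1,j-1)$ whenever $i \le N-2$ and $j \ge 1$.

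Now fix $i < j$. Walking along the antidiagonal from $(i,j)$ to $(j,i)$ by steps $(x,y)\mapsto(x+1,y-1)$ keeps every intermediate index within the allowed range. Hence $K_0(i,j) = K_0(j,i)$. Combined with $K_0(j,i) = -K_0(i,j)$, this forces $K_0 = 0$.

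The relation then gives $K_1(a,b) = 0$ for every $a \ge 1$. By antisymmetry, row $0$ of $K_1$ also vanishes, and its diagonal entry $K_1(0,0)$ is zero anyway. So $K_1 = 0$, $F_{A,B}$ is injective, $\det F_{A,B} \neq 0$, and $p \not\equiv 0$.

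\textbf{Main obstacle.} The only delicate step is the index bookkeeping: checking that the antidiagonal chain from $(i,j)$ to $(j,i)$ never leaves the range $i\le N-2$, $j\ge 1$ where the shift relation holds. This is routine, since all intermediate points have both coordinates between $i$ and $j$.

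The payoff, to be used after the claim, is the same as before. By \cref{clm:zeroEvaluation}, $p(A,B) \neq 0$ with probability $1$ over the random $v_j$. Therefore $K_0 = K_1 = 0$, so $M_0$ and $M_1$ are symmetric, which gives the desired automorphism.
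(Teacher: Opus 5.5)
Your proof is correct and uses the same witness as the paper, namely $A = \begin{bmatrix}\Id_R & 0\end{bmatrix}$ and $B = \begin{bmatrix}0 & \Id_R\end{bmatrix}$ (equivalently $v_j = (e_{j-1}, e_j)$). The only difference is how invertibility is checked. The paper reconstructs $(K_0,K_1)$ from $F_{A,B}(K_0,K_1)$ one diagonal at a time, whereas you show the kernel is trivial via the antidiagonal relation $K_0(i,j)=K_0(i+1,j-1)$; this is a cleaner, more carefully indexed rendering of the same argument.
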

\begin{proof}[Proof of \cref{clm:pNonZero}.]
    It suffices to find a single choice of $A, B$ such that $F_{A, B}$ is invertible. Indeed, consider 
    \[
    A = \begin{bmatrix}
        \Id_R & 0
    \end{bmatrix}, \quad B = \begin{bmatrix}
        0 & \Id_R
    \end{bmatrix}.
    \]

Now, consider any anti-symmetric matrices 
\[
K_0 = \begin{bmatrix}
0 & x_{1, 2} & x_{1, 3} & \dots & x_{1, 2^{r-1}} \\
-x_{1, 2} & 0 & x_{2, 3} & \dots & x_{2, 2^{r-1}} \\
-x_{1, 3} & -x_{2, 3} & 0 & \dots & x_{3, 2^{r-1}} \\
\vdots & \vdots & \vdots & \ddots & \vdots \\
-x_{1, 2^{r-1}} & -x_{2, 2^{r-1}} & -x_{3, 2^{r-1}} & \dots & 0 \\
\end{bmatrix}\]
\[
K_1 = \begin{bmatrix}
0 & y_{1, 2} & y_{1, 3} & \dots & y_{1, 2^{r-1}} \\
-y_{1, 2} & 0 & y_{2, 3} & \dots & y_{2, 2^{r-1}} \\
-y_{1, 3} & -y_{2, 3} & 0 & \dots & y_{3, 2^{r-1}} \\
\vdots & \vdots & \vdots & \ddots & \vdots \\
-y_{1, 2^{r-1}} & -y_{2, 2^{r-1}} & -y_{3, 2^{r-1}} & \dots & 0 \\
\end{bmatrix}.
\]
Note that $AK_0$ is just $K_0$ with the last row deleted and $BK_1$ is just $K_1$ with the first row deleted. Therefore, we can then see that $F_{A, B}(K_0, K_1) =$
\begingroup
\setlength\arraycolsep{2.5pt}
\[
 \begin{bmatrix}
    -y_{1, 2} & x_{1, 2} & x_{1,3} + y_{2,3} & x_{1, 4} + y_{2, 4} & \dots & x_{1, 2^{r-1}} + y_{2, 2^{r-1}} \\
    -x_{1, 2} - y_{1,3} & -y_{2, 3} & x_{2,3} & x_{2, 4} + y_{3, 4} & \dots & x_{2, 2^{r-1}} + y_{3, 2^{r-1}} \\
    \vdots & \vdots & \vdots & \vdots & \ddots & \vdots \\
    -x_{1, 2^{r-1} -2} - y_{1, 2^{r-1}-1}& -x_{2, 2^{r-1} -2} - y_{2, 2^{r-1}-1} & \dots & \dots & \dots & x_{2^{r-1}-2, 2^{r-1}} + y_{2^{r-1} - 1, 2^{r-1}} \\
    -x_{1, 2^{r-1} -1} - y_{1, 2^{r-1}}& -x_{2, 2^{r-1} -1} - y_{2, 2^{r-1}} & \dots & \dots & \dots & x_{2^{r-1} - 1, 2^{r-1}}
\end{bmatrix}.
\]
\endgroup

To see that this map is invertible, we suppose that we are given a matrix of the form $F_{A, B}(K_0, K_1)$, as prescribed above. It is trivial to see then that one can read off the values of $x_{i, i+1}$ and $y_{i, i+1}$ for $i \in [R]$, as these are ``isolated'' along the diagonal of $F_{A, B}(K_0, K_1)$. Once these values of $x_{i, i+1}, y_{i, i+1}$ are known, we can then calculate $F_{A, B}((K_0){\vert}_{x_{i,i+1} = 0}, (K_1){\vert}_{y_{i, i+1} = 0})$, where $(K_0){\vert}_{x_{i,i+1} = 0}$ refers to the matrix where the variables $x_{i,i+1}: i \in [R]$ are set to $0$.

Again by calculation, we can see that 
\begin{align*}
&F_{A, B}((K_0){\vert}_{x_{i,i+1} = 0}, (K_1){\vert}_{y_{i, i+1} = 0})\\ &= \begin{bmatrix}
    0 & 0 & x_{1,3}  & x_{1, 4} + y_{2, 4} & \dots & x_{1, 2^{r-1}} + y_{2, 2^{r-1}} \\
    y_{1,3} & 0 & 0 & x_{2, 4}  & \dots & x_{2, 2^{r-1}} + y_{3, 2^{r-1}} \\
    \vdots & \vdots & \vdots & \vdots & \ddots & \vdots \\
    -x_{1, 2^{r-1} -2} - y_{1, 2^{r-1}-1}& -x_{2, 2^{r-1} -2} - y_{2, 2^{r-1}-1} & \dots & \dots & \dots & x_{2^{r-1}-2, 2^{r-1}} \\
    -x_{1, 2^{r-1} -1} - y_{1, 2^{r-1}}& -x_{2, 2^{r-1} -1} - y_{2, 2^{r-1}} & \dots & \dots & \dots &0
\end{bmatrix}.
\end{align*}
From here, we can then compute the values of $x_{i, i+2}, y_{i, i+2}: i \in [R-1]$, as these are now ``isolated'' entries along the off-diagonal. 

We then repeat this argument iteratively, where in the $j$th round, where we use $K_0^{(j)} = (K_0){\vert}_{x_{i,i+1} = 0, \dots x_{i, i+ j -1} = 0}$, and $K_1^{(j)} = (K_1){\vert}_{y_{i,i+1} = 0, \dots y_{i, i+ j -1} = 0}$. For these matrices, we will see that in $F_{A, B}(K_0^{(j)}, K_1^{(j)})$, the entries $x_{i,i+j}$ and $y_{i, i+j}$ will be isolated on the off-diagonal, and can therefore be explicitly computed.

Thus, by induction, after $R$ rounds of this procedure, we will have computed $x_{i, p}$ for $i, p \in [2^{r-1}]$, given only $F_{A, B}(K_0, K_1)$. Thus, given $F_{A, B}(K_0, K_1)$, we are able to compute $K_0, K_1$ exactly, and thus $F_{A, B}(K_0, K_1)$ is invertible. 

Thus, for this choice of $A, B$ it must be the case that $p(A, B) = \det(F_{A, B}) \neq 0$ (as a polynomial), as there is a choice of $A, B$ which makes $F_{A, B}$ invertible.
\end{proof}

To conclude, \cref{clm:pNonZero} establishes that the polynomial $p(A, B) \neq 0$. But, at the same time, we know that the variables in $p(A, B)$ are exactly the entries of the vectors $v_1, \dots v_R$ which have continuous joint density. Thus, by \cref{clm:zeroEvaluation}, we know that with probability $1$ over $v_1, \dots v_R$, $p(A, B) \neq 0$. This in turn implies that $K_0 = K_1 = 0$, and so $M_0 = M_0^\top $ and $M_1 = M_1^\top$, thereby deriving our desired automorphism.
\end{proof}

\subsection{Bounding the Non-Redundancy}

Now, we use \cref{lem:deriveAutomorphismGeneric} to prove \cref{thm:GenericNRDUpperBound}.

\begin{proof}[Proof of \cref{thm:GenericNRDUpperBound}.]
    Indeed, suppose that $M$ is generic (as in \cref{lem:deriveAutomorphismGeneric}). Let $\mathcal{H} = (T_i, M)_{i = 1, \dots m}$ denote a non-redundant Hamiltonian with predicate matrix $M$. Note that, as per \cref{prop:r-partite-WLOG}, we can assume that $\mathcal{H}$ is $r$-partite, while only paying a $\widetilde{O}(1)$ factor in the resulting bound on $\NRD$.

    We now make the following claim:
    \begin{claim}\label{clm:RedundantAutomorphismNullSpace}
        Let $\mathcal{H} = (T_i, M)_{i = 1, \dots m}$ denote a Hamiltonian with generic predicate matrix $M$. Fix a term $T_j$ from $\mathcal{H}$, and let $\psi$ be a state such that $\cE_{(T_i, M)_{i = 1, \dots m}, i \neq j}(\psi) = 0$. Suppose further that there is some term $T_{\ell}: \ell \in [m] - i$ such that:
        \begin{enumerate}[(1)]
            \item $T_{\ell} \cap T_j \neq \emptyset$.
            \item For $S_j = T_j - T_j \cap T_{\ell}$ and $S_{\ell} = T_{\ell} - T_j \cap T_{\ell}$, it is the case that $(S_j, S_{\ell}) \in \Aut(\psi)$.
        \end{enumerate}
        Then, it must be the case that $\cE_{(T_j, M)}(\psi) = 0$.
    \end{claim}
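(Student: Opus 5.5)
The plan is to transport the zero energy of $\psi$ on $T_\ell$ to zero energy on $T_j$ using the automorphism. Write $\sigma := (S_j, S_\ell)$ for the permutation swapping $S_j$ and $S_\ell$ coordinatewise, in the ordering inherited from the $r$-partite structure. By the WLOG appeal to \cref{prop:r-partite-WLOG}, $T_j$ and $T_\ell$ are ordered tuples with the $p$-th coordinate in $V_p$. Their common qubits therefore occupy the same positions, and $\sigma$ pairs the $p$-th coordinate of $T_j$ with the $p$-th coordinate of $T_\ell$ for each position $p$ where they differ. The key observation is then $\sigma(T_\ell) = T_j$ as ordered tuples: common positions are fixed, and the differing positions are exchanged. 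This holds in the actual setting of the proof, where $\ell \neq j$, so that $\cE_{T_\ell, M}(\psi) = 0$ by hypothesis.

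First I will verify the conjugation identity $U_\sigma M_{T_\ell} U_\sigma^{-1} = M_{\sigma(T_\ell)} = M_{T_j}$. This follows directly from the definition $U_\pi\ket{x} = \ket{\pi(x)}$: relabelling qubits by $\sigma$ carries the operator $M$ acting on the ordered tuple $T_\ell$, tensored with the identity elsewhere, to $M$ acting on the ordered tuple $\sigma(T_\ell)$. Since $U_\sigma$ is a permutation matrix, it is unitary.

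Next, I will combine this with the automorphism hypothesis $U_\sigma \psi = \psi$, which gives
\[
\cE_{T_j, M}(\psi) = \langle \psi, M_{T_j}\psi\rangle = \langle U_\sigma\psi, U_\sigma M_{T_\ell}U_\sigma^{-1} U_\sigma \psi\rangle = \langle \psi, M_{T_\ell}\psi\rangle = \cE_{T_\ell, M}(\psi) = 0.
\]
This is exactly the conclusion of the claim. Genericity of $M$ is not needed for this step; it enters only through \cref{lem:deriveAutomorphismGeneric}, which is what produces hypothesis (2) in applications.

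The main obstacle is bookkeeping rather than analysis. The automorphism $(S_j,S_\ell)$ must be interpreted as the coordinate-matched product of transpositions, and not an arbitrary bijection between $S_j$ and $S_\ell$. If $T_j \cap T_\ell$ had more than one element, the matching must be the one dictated by the partite positions, so that $\sigma(T_\ell) = T_j$ holds as an ordered tuple and not merely as a set. This matters because $M$ need not be symmetric under permuting its own qubits. I will state this convention explicitly, consistent with the automorphism produced in \cref{lem:deriveAutomorphismGeneric}, where after relabelling $T = (1,2,\dots,r)$ and $T' = (1, r+1,\dots,2r-1)$ the swap sends $i+1 \leftrightarrow r+i$. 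I will also check that the definition of $U_\pi$ composes correctly with the ordering in $M_T$.
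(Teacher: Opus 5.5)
Your proposal is correct and is essentially the paper's argument. The paper works fiberwise: it matches each restriction $\psi_y$ for $y \in \zo^{[n]\setminus T_\ell}$ with a restriction $\psi_{y'}$ for $T_j$ under the bijection $y \mapsto y'$ induced by the swap. That is the coordinate form of your identity $U_\sigma M_{T_\ell} U_\sigma^{-1} = M_{T_j}$ combined with $U_\sigma \psi = \psi$.

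Your explicit use of the position-matched pairing from the $r$-partite structure, so that $\sigma(T_\ell) = T_j$ as ordered tuples, states precisely what the paper's construction of $y'$ leaves implicit. You are also right that genericity of $M$ plays no role in this step.
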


    \begin{proof}
        Indeed, because $\cE_{(T_i, M)_{i = 1, \dots m}, i \neq j}(\psi) = 0$, it must be the case that for the term $T_{\ell}$, that for every $y \in \zo^{[n] - T_{\ell}}$ that 
        \[
        \langle \psi_y, M \psi_y \rangle = 0.
        \]

        Now, let us construct the bit string $y' \in \zo^{[n] - T_j}$ such that:
        \begin{enumerate}[(1)]
            \item For indices $g \in [n] - T_j$, $y'_g = y_g$.
            \item For indices $g \in T_j \cap T_{\ell}$, $y'_g = y_g$.
            \item If we let $ S_j = \{ a_1, a_2, \dots \}$ and $S_{\ell} = \{ a'_1, a'_2, \dots \}$, then $y'_{a'_1} = y_{a_1}$.
        \end{enumerate}

        The key observation is exactly that, because of the automorphism $(S_j, S_{\ell}) \in \Aut(\psi)$, it must be the case that $\psi_y = \psi_{y'}$, and thus, 
        \[
        \langle \psi_{y'}, M \psi_{y'} \rangle = 0.
        \]
        Because the strings $y$ and $y'$ as constructed above are in bijection with one another, this means that for every $y' \in \zo^{[n] - T_j}$, $\langle \psi_{y'}, M \psi_{y'} \rangle = 0$, which in turn means that $\cE_{(T_j, M)}(\psi) = 0$, as we desire.
    \end{proof}

    Next, we let $F \subseteq [m]$ denote a subset of $[m]$ of size $\leq n$ which is a spanning forest of $\mathcal{H}$ (see \cref{def:spanningForest} and \cref{prop:buildSpanningForest}).

    Now, we consider adding each other term $T_i: i \in [m] - F$ one at a time. We let $\mathcal{H}^{(p)}$ denote the result of adding the first $p$ terms from $T_i: i \in [m] - F$ (in addition to $F$). Likewise, we let $\Aut^{(p)}(\psi)$ denote the automorphisms that any state $\psi$ must satisfy in order for $\cE_{\mathcal{H}^{(p)}}(\psi) = 0$. Now, when we add the $p+1$st term (call this term $T$) to get $\mathcal{H}^{(p+1)}$, we know that there must be some term $T_j: j \in F$ such that $T_i \cap T_j \neq \emptyset$. We let $S = T - T \cap T_j$ and we let $S_j = T_j - T \cap T_j$. We have have two cases:
    \begin{enumerate}[(1)]
        \item If $(S, S_j) \in \Aut^{(p)}(\psi)$, then we immediately reach a contradiction. Indeed, by \cref{clm:RedundantAutomorphismNullSpace}, this means that for any state $\psi$ such that $\cE_{\mathcal{H}^{(p)}}(\psi) = 0$, it must also be the case that $\cE_{(T, M)}(\psi) = 0$. However, this violates the definition of non-redundancy; indeed, in order for $\mathcal{H}$ to be a non-redundant hypergraph, there must be some choice of $\psi \in \C^{2^n}$ such that $\cE_{\mathcal{H} - (T, M)}(\psi) = 0$ (and thus $\cE_{\mathcal{H}^{(p)}}(\psi) = 0$), while at the same time $\cE_{(T, M)}(\psi) > 0$.
        \item So, it must be the case that $(S, S_j) \notin \Aut^{(p)}(\psi)$. But, by \cref{lem:deriveAutomorphismGeneric}, we know that any state $\psi$ which receives $0$ energy on $T, T_j$ must introduce an automorphism $(S, S_j)$ on $\psi$. This means that $\Aut^{(p+1)}(\psi) = \langle \Aut^{(p)}(\psi) \cup (S, S_j)\rangle$ is a strictly larger subgroup than $\Aut^{(p)}(\psi)$. Thus, it must be the case that
        \[
        2 \leq {\vert} \Aut^{(p+1)}(\psi) : \Aut^{(p)}(\psi) {\vert} = \frac{{\vert}\Aut^{(p+1)}(\psi){\vert}}{ {\vert}\Aut^{(p)}(\psi){\vert}}.
        \]
    \end{enumerate}

    Thus, after $p$ terms are added (beyond $F$), it must be the case that ${\vert}\Aut^{(p)}(\psi){\vert} \geq 2^p$. Once $p = n \log(n)$, this means that ${\vert}\Aut^{(p)}(\psi){\vert} \geq n! = {\vert}S_n{\vert}$, and so the automorphism group is the entire symmetric group on $n$ elements. At this point, ${\vert}\Aut^{(p)}(\psi){\vert}$ cannot grow any larger (i.e., case 2 above cannot occur), and so for any additional term it must be that case $1$ occurs, which in turn violates the non-redundancy of $\mathcal{H}$. Thus, it must be that the number of terms in any non-redundant hypergraph with generic $M$ is $m\leq {\vert}F{\vert} + n \log(n)\leq O(n \log(n))$, as we desire.
\end{proof}

\begin{remark}
We remark that one $\log n$ term in the bound of \cref{thm:GenericNRDUpperBound} can be removed by more carefully analyzing the structure of the symmetric group. More precisely, let $t_1, \hdots, t_\ell$ be the automorphisms identified in the non-redundant instance considered in the proof of \cref{thm:GenericNRDUpperBound}. These automorphisms have the property that $t_i$ does not reside in the subgroup  of $S_n$ generated by $\{t_1, \hdots, t_\ell\} \setminus \{t_i\}$. By a result of Whiston~\cite{Whiston00} any such list of group elements must satisfy $\ell \leq n-1$ (rather than $O(n \log n)$). We omit further details. 
\end{remark}

\section{Conclusion}\label{sec:conclusion}

In this paper, we systematically studied the sparsifiability of Hamiltonians composed of PSD local terms. In comparison to Hamiltonians with negative eigenvalues~\cite{AZ19}, a very rich family of Hamiltonians admit near-linear sparsifiers. A crucial tool we used for constructing these sparsifiers was a quantum analogue of the concept of non-redundancy which captures the interactions between the ground states of various terms in the Hamiltonian. In addition to constructing a few families of sparsifiers, we also systematically studied the non-redundancy of a few types of Hamiltonians, showing the potential for the further construction of Hamiltonian sparsifiers.

Overall, we find that the study of hamiltonian sparsification has been largely unexplored, despite the promise it holds. We hope that others will continue to study questions in this direction, and so we conclude the paper with a few crucial open questions:

\begin{question}\label{question:SPR-NRD}
For an arbitrary predicate matrix $M \in \C^{2^r \times 2^r}$, what is the relationship between $\SPR(M, n, \eps)$ and $\NRD(M, n)$? In particular, when is $\SPR(M, n, \eps) = \widetilde{O}_M(\NRD(M, n))$?
\end{question}

By a result of Brakensiek and Guruswami~\cite{BrakensiekG25} in combination with \cref{prop:classical-equivalence}, we know that the answer to \cref{question:SPR-NRD} is yes whenever $M$ is a classical predicate matrix. A natural quantum extension of \cite{BrakensiekG25} would be to first study \cref{question:SPR-NRD} for \emph{projection matrices} $M$, i.e. matrices $M$ all of whose eigenvalues are $0$ or $1$.

We also highlight a particularly well-studied special case of \cref{question:SPR-NRD}: Quantum Cut.

\begin{question}
What is the sparsifiability of Quantum Cut: $M_{\mathrm{QC}} = ({\vert}01\rangle - {\vert}10\rangle)(\langle 01{\vert} - \langle 10{\vert})$?
\end{question}

Although $M_{\mathrm{QC}}$ has rank $1$, $M_{\mathrm{QC}}$ cannot be decomposed as the product of two rank-$1$ tensors. Thus, by \cref{thm:2-qubit}, we know that the non-redundancy of $M$ is linear, suggesting the the sparsifiability of $M$ could also be near-linear.

\section*{Acknowledgments}

A.B. thanks Alex Lombardi and Siddhant Midha for listening to our results carefully and suggesting Pauli sparsification as a use-case. A.B. also thanks Siddhant Midha for providing comments on a draft of the manuscript.  J.B. thanks Venkatesan Guruswami and Uri Zwick for valuable feedback on an earlier version of the manuscript. A.P. thanks Madhu Sudan, Sanjeev Khanna, and Harald Putterman for helpful comments and discussion.

The authors thank Pravesh Kothari for his involvement in the early stages of this project and for helpful conversations throughout.

The authors used ChatGPT 5.2-pro to assist with the development of concrete examples of vectors which yield non-zero determinants in \cref{sec:sparse-genericLouie} and \cref{sec:nrd-generic}. The authors verified the correctness and originality of all content including references.

\bibliographystyle{alphaurl}
\bibliography{quantum}

@inproceedings{AnshuBN23,
  author       = {Anurag Anshu and
                  Nikolas P. Breuckmann and
                  Chinmay Nirkhe},
  editor       = {Barna Saha and
                  Rocco A. Servedio},
  title        = {{NLTS} Hamiltonians from Good Quantum Codes},
  booktitle    = {Proceedings of the 55th Annual {ACM} Symposium on Theory of Computing,
                  {STOC} 2023, Orlando, FL, USA, June 20-23, 2023},
  pages        = {1090--1096},
  publisher    = {{ACM}},
  year         = {2023},
  url          = {https://doi.org/10.1145/3564246.3585114},
  doi          = {10.1145/3564246.3585114},
  bibsource    = {dblp computer science bibliography, https://dblp.org}
}

@inproceedings{AZ19,
  author       = {Dorit Aharonov and
                  Leo Zhou},
  editor       = {Avrim Blum},
  title        = {Hamiltonian Sparsification and Gap-Simulation},
  booktitle    = {10th Innovations in Theoretical Computer Science Conference, {ITCS}
                  2019, San Diego, California, USA, January 10-12, 2019},
  series       = {LIPIcs},
  pages        = {2:1--2:21},
  publisher    = {Schloss Dagstuhl - Leibniz-Zentrum f{\"{u}}r Informatik},
  year         = {2019},
  url          = {https://doi.org/10.4230/LIPIcs.ITCS.2019.2},
  doi          = {10.4230/LIPIcs.ITCS.2019.2},
  bibsource    = {dblp computer science bibliography, https://dblp.org}
}

@article{Gharibian2019complexityof,
  doi = {10.22331/q-2019-09-30-189},
  url = {https://doi.org/10.22331/q-2019-09-30-189},
  title = {The complexity of simulating local measurements on quantum systems},
  author = {Gharibian, Sevag and Yirka, Justin},
  journal = {{Quantum}},
  issn = {2521-327X},
  publisher = {{Verein zur F{\"{o}}rderung des Open Access Publizierens in den Quantenwissenschaften}},
  volume = {3},
  pages = {189},
  month = sep,
  year = {2019}
}

@article{Kohler2020TranslationallyIU,
  title={Translationally Invariant Universal Quantum Hamiltonians in 1D},
  author={Tamara Kohler and Stephen Piddock and Johannes Bausch and Toby S. Cubitt},
  journal={Annales Henri Poincar{\'e}},
  year={2020},
  volume={23},
  pages={223 - 254},
  url={https://api.semanticscholar.org/CorpusID:214727962}
}

@incollection{parekh2021application,
  author       = {Ojas Parekh and
                  Kevin Thompson},
  editor       = {Nikhil Bansal and
                  Emanuela Merelli and
                  James Worrell},
  title        = {Application of the Level-2 Quantum Lasserre Hierarchy in Quantum Approximation
                  Algorithms},
  booktitle    = {48th International Colloquium on Automata, Languages, and Programming,
                  {ICALP} 2021, Glasgow, Scotland (Virtual Conference), July 12-16,
                  2021},
  series       = {LIPIcs},
  pages        = {102:1--102:20},
  publisher    = {Schloss Dagstuhl - Leibniz-Zentrum f{\"{u}}r Informatik},
  year         = {2021},
  url          = {https://doi.org/10.4230/LIPIcs.ICALP.2021.102},
  doi          = {10.4230/LIPICS.ICALP.2021.102},
  bibsource    = {dblp computer science bibliography, https://dblp.org}
}

@article{anshu2020beyond,
  title={Beyond product state approximations for a quantum analogue of max cut},
  author={Anshu, Anurag and Gosset, David and Morenz, Karen},
  journal={arXiv preprint arXiv:2003.14394},
    url={https://arxiv.org/abs/2003.14394},
  year={2020}
}

@article{arad2010note,
  author       = {Itai Arad},
  title        = {A note about a partial no-go theorem for quantum {PCP}},
  journal      = {Quantum Inf. Comput.},
  volume       = {11},
  number       = {11-12},
  pages        = {1019--1027},
  year         = {2011},
  url          = {https://doi.org/10.26421/QIC11.11-12-10},
  doi          = {10.26421/QIC11.11-12-10},
  bibsource    = {dblp computer science bibliography, https://dblp.org}
}

@article{hastings2012trivial,
  author       = {Matthew B. Hastings},
  title        = {Trivial low energy states for commuting Hamiltonians, and the quantum
                  {PCP} conjecture},  
  editor       = {Nikhil Bansal and
                  Emanuela Merelli and
                  James Worrell},
  booktitle    = {48th International Colloquium on Automata, Languages, and Programming,
                  {ICALP} 2021, Glasgow, Scotland (Virtual Conference), July 12-16,
                  2021},
  series       = {LIPIcs},
  pages        = {102:1--102:20},
  publisher    = {Schloss Dagstuhl - Leibniz-Zentrum f{\"{u}}r Informatik},
  year         = {2021},
  url          = {https://doi.org/10.4230/LIPIcs.ICALP.2021.102},
  doi          = {10.4230/LIPICS.ICALP.2021.102},
  bibsource    = {dblp computer science bibliography, https://dblp.org}
}

@article{aharonov2013guest,
  author       = {Dorit Aharonov and
                  Itai Arad and
                  Thomas Vidick},
  title        = {Guest column: the quantum {PCP} conjecture},
  journal      = {{SIGACT} News},
  volume       = {44},
  number       = {2},
  pages        = {47--79},
  year         = {2013},
  url          = {https://doi.org/10.1145/2491533.2491549},
  doi          = {10.1145/2491533.2491549},
  bibsource    = {dblp computer science bibliography, https://dblp.org}
}

@article{csahinouglu2021hamiltonian,
  title={Hamiltonian simulation in the low-energy subspace},
  author={{\c{S}}ahino{\u{g}}lu, Burak and Somma, Rolando D},
  journal={npj Quantum Information},
  volume={7},
  number={1},
  pages={119},
  year={2021},
  publisher={Nature Publishing Group UK London}
}

@article{zlokapa2024hamiltonian,
  title={Hamiltonian simulation for low-energy states with optimal time dependence},
  author={Zlokapa, Alexander and Somma, Rolando D},
  journal={Quantum},
  volume={8},
  pages={1449},
  year={2024},
  publisher={Verein zur F{\"o}rderung des Open Access Publizierens in den Quantenwissenschaften}
}

@article{zhou2021strongly,
  title={Strongly universal Hamiltonian simulators},
  author={Zhou, Leo and Aharonov, Dorit},
     archivePrefix={arXiv},
  url={https://arxiv.org/abs/2102.02991},
  journal={arXiv preprint arXiv:2102.02991},
  year={2021}
}

@article{li1994reservoir,
  title={Reservoir-sampling algorithms of time complexity o (n (1+ log ({N}/n)))},
  author={Li, Kim-Hung},
  journal={ACM Transactions on Mathematical Software (TOMS)},
  volume={20},
  number={4},
  pages={481--493},
  year={1994},
  publisher={ACM New York, NY, USA},
  doi={https://doi.org/10.1145/198429.198435}
}

@article{jordan2010quantum,
  title={Quantum-Merlin-Arthur--complete problems for stoquastic Hamiltonians and Markov matrices},
  author={Jordan, Stephen P and Gosset, David and Love, Peter J},
  journal={Physical Review A—Atomic, Molecular, and Optical Physics},
  volume={81},
  number={3},
  pages={032331},
  year={2010},
  publisher={APS},
  doi={https://doi.org/10.1103/physreva.81.032331}
}

@inproceedings{KallaugherP22,
  author       = {John Kallaugher and
                  Ojas Parekh},
  title        = {The Quantum and Classical Streaming Complexity of Quantum and Classical
                  Max-Cut},
  booktitle    = {63rd {IEEE} Annual Symposium on Foundations of Computer Science, {FOCS}
                  2022, Denver, CO, USA, October 31 - November 3, 2022},
  pages        = {498--506},
  publisher    = {{IEEE}},
  year         = {2022},
  url          = {https://doi.org/10.1109/FOCS54457.2022.00054},
  doi          = {10.1109/FOCS54457.2022.00054},
  bibsource    = {dblp computer science bibliography, https://dblp.org}
}

@article{Tro15,
url = {https://doi.org/10.1561/2200000048},
year = {2015},
volume = {8},
journal = {Foundations and Trends® in Machine Learning},
title = {An Introduction to Matrix Concentration Inequalities},
doi = {10.1561/2200000048},
issn = {1935-8237},
number = {1-2},
pages = {1-230},
author = {Joel A. Tropp}
}

@inproceedings{BenczurK96,
author = {Bencz\'{u}r, Andr\'{a}s A. and Karger, David R.},
title = {Approximating s-t minimum cuts in \~{O}(n2) time},
year = {1996},
isbn = {0897917855},
publisher = {Association for Computing Machinery},
address = {New York, NY, USA},
url = {https://doi.org/10.1145/237814.237827},
doi = {10.1145/237814.237827},
booktitle = {Proceedings of the Twenty-Eighth Annual ACM Symposium on Theory of Computing},
pages = {47–55},
numpages = {9},
location = {Philadelphia, Pennsylvania, USA},
series = {STOC '96}
}

@article{SpielmanT11,
author = {Spielman, Daniel A. and Teng, Shang-Hua},
title = {Spectral Sparsification of Graphs},
journal = {SIAM Journal on Computing},
volume = {40},
number = {4},
pages = {981-1025},
year = {2011},
doi = {10.1137/08074489X},
URL = {https://doi.org/10.1137/08074489X},
}

@article{SpielmanS11,
author = {Spielman, Daniel A. and Srivastava, Nikhil},
title = {Graph Sparsification by Effective Resistances},
journal = {SIAM Journal on Computing},
volume = {40},
number = {6},
pages = {1913-1926},
year = {2011},
doi = {10.1137/080734029},
}

@article{BatsonSS14,
author = {Batson, Joshua and Spielman, Daniel A. and Srivastava, Nikhil},
title = {Twice-Ramanujan Sparsifiers},
journal = {SIAM Review},
volume = {56},
number = {2},
pages = {315-334},
year = {2014},
doi = {10.1137/130949117},
URL = {https://doi.org/10.1137/130949117},
}

@inproceedings{KhannaPS24,
author={Khanna, Sanjeev and Putterman, Aaron and Sudan, Madhu},
  editor       = {David P. Woodruff},
  title        = {Code sparsification and its applications},
  booktitle    = {Proceedings of the 2024 {ACM-SIAM} Symposium on Discrete Algorithms,
                  {SODA} 2024, Alexandria, VA, USA, January 7-10, 2024},
  pages        = {5145--5168},
  publisher    = {{SIAM}},
  year         = {2024},
  url          = {https://doi.org/10.1137/1.9781611977912.185},
  doi          = {10.1137/1.9781611977912.185},
  bibsource    = {dblp computer science bibliography, https://dblp.org}
}

@BOOK{Ash1990,
  title     = "Information Theory",
  author    = "Ash, Robert",
  publisher = "Dover Publications",
  series    = "Dover Books on Mathematics",
  month     =  nov,
  year      =  1990,
  address   = "Mineola, NY",
  language  = "en"
}

@inproceedings{KhannaPS25,
author = {Khanna, Sanjeev and Putterman, Aaron and Sudan, Madhu},
title = {Efficient Algorithms and New Characterizations for CSP Sparsification},
year = {2025},
isbn = {9798400715105},
publisher = {Association for Computing Machinery},
address = {New York, NY, USA},
url = {https://doi.org/10.1145/3717823.3718205},
doi = {10.1145/3717823.3718205},
booktitle = {Proceedings of the 57th Annual ACM Symposium on Theory of Computing},
pages = {407–416},
numpages = {10},
location = {Prague, Czechia},
series = {STOC '25}
}

@inproceedings{BrakensiekG25,
author = {Brakensiek, Joshua and Guruswami, Venkatesan},
title = {Redundancy Is All You Need},
year = {2025},
isbn = {9798400715105},
publisher = {Association for Computing Machinery},
address = {New York, NY, USA},
url = {https://doi.org/10.1145/3717823.3718212},
doi = {10.1145/3717823.3718212},
booktitle = {Proceedings of the 57th Annual ACM Symposium on Theory of Computing},
pages = {1614–1625},
numpages = {12},
location = {Prague, Czechia},
series = {STOC '25}
}

@inbook{BasuKLM26,
author = {Arpon Basu and Pravesh K. Kothari and Yang P. Liu and Raghu Meka},
title = {Sparsifying Sums of Positive Semidefinite Matrices},
booktitle = {Proceedings of the 2026 Annual ACM-SIAM Symposium on Discrete Algorithms (SODA)},
pages = {6042-6064},
doi = {10.1137/1.9781611978971.216},
URL = {https://doi.org/10.1137/1.9781611978971.216},
year = {2026}
}

@article{FiltserK17,
  author       = {Arnold Filtser and
                  Robert Krauthgamer},
  title        = {Sparsification of Two-Variable Valued Constraint Satisfaction Problems},
  journal      = {{SIAM} J. Discret. Math.},
  volume       = {31},
  number       = {2},
  pages        = {1263--1276},
  year         = {2017},
  url          = {https://doi.org/10.1137/15M1046186},
  doi          = {10.1137/15M1046186},
  bibsource    = {dblp computer science bibliography, https://dblp.org}
}

@inproceedings{KoganK15,
  author       = {Dmitry Kogan and
                  Robert Krauthgamer},
  editor       = {Tim Roughgarden},
  title        = {Sketching Cuts in Graphs and Hypergraphs},
  booktitle    = {Proceedings of the 2015 Conference on Innovations in Theoretical Computer
                  Science, {ITCS} 2015, Rehovot, Israel, January 11-13, 2015},
  pages        = {367--376},
  publisher    = {{ACM}},
  year         = {2015},
  url          = {https://doi.org/10.1145/2688073.2688093},
  doi          = {10.1145/2688073.2688093},
  bibsource    = {dblp computer science bibliography, https://dblp.org}
}

@inproceedings{ChenKN20,
  author       = {Yu Chen and
                  Sanjeev Khanna and
                  Ansh Nagda},
  editor       = {Sandy Irani},
  title        = {Near-linear Size Hypergraph Cut Sparsifiers},
  booktitle    = {61st {IEEE} Annual Symposium on Foundations of Computer Science, {FOCS}
                  2020, Durham, NC, USA, November 16-19, 2020},
  pages        = {61--72},
  publisher    = {{IEEE}},
  year         = {2020},
  url          = {https://doi.org/10.1109/FOCS46700.2020.00015},
  doi          = {10.1109/FOCS46700.2020.00015},
  bibsource    = {dblp computer science bibliography, https://dblp.org}
}

@article{BrakensiekGP25,
  author       = {Joshua Brakensiek and
                  Venkatesan Guruswami and
                  Aaron Putterman},
  title        = {Tight Bounds for Sparsifying Random CSPs},
  journal      = {CoRR},
  volume       = {abs/2508.13345},
  year         = {2025},
  url          = {https://doi.org/10.48550/arXiv.2508.13345},
  doi          = {10.48550/arXiv.2508.13345},
  eprinttype   = {arXiv},
  bibsource    = {dblp computer science bibliography, https://dblp.org}
}

@article{ButtiZ20,
  title = {Sparsification of {{Binary CSPs}}},
  author = {Butti, Silvia and {\v Z}ivn{\'y}, Stanislav},
  year = {2020},
  month = jan,
  journal = {SIAM Journal on Discrete Mathematics},
  volume = {34},
  number = {1},
  pages = {825--842},
  publisher = {{Society for Industrial and Applied Mathematics}},
  issn = {0895-4801},
  doi = {10.1137/19M1242446}
}

@inproceedings{Carbonnel22,
  author       = {Cl{\'{e}}ment Carbonnel},
  editor       = {Christine Solnon},
  title        = {On Redundancy in Constraint Satisfaction Problems},
  booktitle    = {28th International Conference on Principles and Practice of Constraint
                  Programming, {CP} 2022, Haifa, Israel, July 31 - August 8, 2022},
  series       = {LIPIcs},
  pages        = {11:1--11:15},
  publisher    = {Schloss Dagstuhl - Leibniz-Zentrum f{\"{u}}r Informatik},
  year         = {2022},
  url          = {https://doi.org/10.4230/LIPIcs.CP.2022.11},
  doi          = {10.4230/LIPIcs.CP.2022.11},
  bibsource    = {dblp computer science bibliography, https://dblp.org}
}

@article{BrakensiekGJLW25,
  author       = {Joshua Brakensiek and
                  Venkatesan Guruswami and
                  Bart M. P. Jansen and
                  Victor Lagerkvist and
                  Magnus Wahlstr{\"{o}}m},
  title        = {The Richness of {CSP} Non-redundancy},
  journal      = {CoRR},
  volume       = {abs/2507.07942},
  year         = {2025},
  url          = {https://doi.org/10.48550/arXiv.2507.07942},
  doi          = {10.48550/arXiv.2507.07942},
  eprinttype   = {arXiv},
  bibsource    = {dblp computer science bibliography, https://dblp.org}
}

@article{ChenJP20,
  title = {Best-{{Case}} and {{Worst-Case Sparsifiability}} of {{Boolean CSPs}}},
  author = {Chen, Hubie and Jansen, Bart M. P. and Pieterse, Astrid},
  year = {2020},
  month = aug,
  journal = {Algorithmica},
  volume = {82},
  number = {8},
  pages = {2200--2242},
  issn = {1432-0541},
  doi = {10.1007/s00453-019-00660-y},
  langid = {english}
}

@article{LagerkvistW20,
  title = {Sparsification of {{SAT}} and {{CSP Problems}} via {{Tractable Extensions}}},
  author = {Lagerkvist, Victor and Wahlstr{\"o}m, Magnus},
  year = {2020},
  month = jun,
  journal = {ACM Transactions on Computation Theory},
  volume = {12},
  number = {2},
  pages = {1--29},
  issn = {1942-3454, 1942-3462},
  doi = {10.1145/3389411},
  langid = {english}
}

@article{BessiereCK20,
  title = {Chain {{Length}} and {{CSPs Learnable}} with {{Few Queries}}},
  author = {Bessiere, Christian and Carbonnel, Cl{\'e}ment and Katsirelos, George},
  year = {2020},
  month = apr,
  journal = {Proceedings of the AAAI Conference on Artificial Intelligence},
  volume = {34},
  number = {02},
  pages = {1420--1427},
  issn = {2374-3468},
  doi = {10.1609/aaai.v34i02.5499},
  copyright = {Copyright (c) 2020 Association for the Advancement of Artificial Intelligence},
  langid = {english}
}

@inproceedings{BessiereCHKLNQW13,
  author       = {Christian Bessiere and
                  Remi Coletta and
                  Emmanuel Hebrard and
                  George Katsirelos and
                  Nadjib Lazaar and
                  Nina Narodytska and
                  Claude{-}Guy Quimper and
                  Toby Walsh},
  editor       = {Francesca Rossi},
  title        = {Constraint Acquisition via Partial Queries},
  booktitle    = {{IJCAI} 2013, Proceedings of the 23rd International Joint Conference
                  on Artificial Intelligence, Beijing, China, August 3-9, 2013},
  pages        = {475--481},
  publisher    = {{IJCAI/AAAI}},
  year         = {2013},
  url          = {http://www.aaai.org/ocs/index.php/IJCAI/IJCAI13/paper/view/6659},
  bibsource    = {dblp computer science bibliography, https://dblp.org}
}

@article{JansenP19,
  author       = {Bart M. P. Jansen and
                  Astrid Pieterse},
  title        = {Optimal Sparsification for Some Binary CSPs Using Low-Degree Polynomials},
  journal      = {{ACM} Trans. Comput. Theory},
  volume       = {11},
  number       = {4},
  pages        = {28:1--28:26},
  year         = {2019},
  url          = {https://doi.org/10.1145/3349618},
  doi          = {10.1145/3349618},
  bibsource    = {dblp computer science bibliography, https://dblp.org}
}

@inproceedings{LagerkvistGE26,
  title={Towards Single Exponential Time for Temporal and Spatial Reasoning: A Study via Redundancy and Dynamic Programming},
  author={Lagerkvist, Victor and Groven, Johanna and Eriksson, Leif},
  booktitle={Proceedings of the AAAI Conference on Artificial Intelligence},
  volume={40},
  pages={14287--14294},
  year={2026},
  doi={https://doi.org/10.1609/aaai.v40i17.38443}
}

@misc{BrakensiekGP26,
      title={Classification of Non-redundancy of Boolean Predicates of Arity 4}, 
      author={Brakensiek, Joshua and Guruswami, Venkatesan and Putterman, Aaron},
      year={2026},
      archivePrefix={arXiv},
      primaryClass={cs.CC},
      url={https://arxiv.org/abs/2603.21353}, 
      journal={arXiv preprint arXiv:2603.21353},
}

@article {Whiston00,
    AUTHOR = {Whiston, Julius},
     TITLE = {Maximal independent generating sets of the symmetric group},
   JOURNAL = {J. Algebra},
  FJOURNAL = {Journal of Algebra},
    VOLUME = {232},
      YEAR = {2000},
    NUMBER = {1},
     PAGES = {255--268},
      ISSN = {0021-8693,1090-266X},
   MRCLASS = {20B30 (20B20)},
  MRNUMBER = {1783924},
MRREVIEWER = {Mohammad-Reza\ Darafsheh},
       DOI = {10.1006/jabr.2000.8399},
       URL = {https://doi.org/10.1006/jabr.2000.8399},
}

@inproceedings{SchiexFV95,
  author       = {Thomas Schiex and
                  H{\'{e}}l{\`{e}}ne Fargier and
                  G{\'{e}}rard Verfaillie},
  title        = {Valued Constraint Satisfaction Problems: Hard and Easy Problems},
  booktitle    = {Proceedings of the Fourteenth International Joint Conference on Artificial
                  Intelligence, {IJCAI} 95, Montr{\'{e}}al Qu{\'{e}}bec, Canada,
                  August 20-25 1995, 2 Volumes},
  pages        = {631--639},
  publisher    = {Morgan Kaufmann},
  year         = {1995},
  url          = {http://ijcai.org/Proceedings/95-1/Papers/083.pdf},
  bibsource    = {dblp computer science bibliography, https://dblp.org}
}

@book{GunningR65,
  title={Analytic Functions of Several Complex Variables},
  author={Gunning, Robert C. and Rossi, Hugo},
  year={1965},
  publisher={Prentice-Hall},
  doi={https://doi.org/10.1090/chel/368}
}

@article{SattathMLM16,
author = {Or Sattath  and Siddhardh C. Morampudi  and Chris R. Laumann  and Roderich Moessner },
title = {When a local Hamiltonian must be frustration-free},
journal = {Proceedings of the National Academy of Sciences},
volume = {113},
number = {23},
pages = {6433-6437},
year = {2016},
doi = {10.1073/pnas.1519833113},
}

@article{MichalakisZ13,
  title={Stability of frustration-free Hamiltonians},
  author={Michalakis, Spyridon and Zwolak, Justyna P},
  journal={Communications in Mathematical Physics},
  volume={322},
  number={2},
  pages={277--302},
  year={2013},
  publisher={Springer},
  doi={https://doi.org/10.1007/s00220-013-1762-6}
}

@article{AharonovALV10,
  title={Quantum Hamiltonian complexity and the detectability lemma},
  author={Aharonov, Dorit and Arad, Itai and Landau, Zeph and Vazirani, Umesh},
   archivePrefix={arXiv},
     journal={arXiv preprint arXiv:1011.3445},
  url={https://arxiv.org/abs/1011.3445},
  year={2010}
}

@article{Bravyi06,
    AUTHOR = {Bravyi, Sergey},
     TITLE = {Efficient algorithm for a quantum analogue of 2-{SAT}},
 BOOKTITLE = {Cross disciplinary advances in quantum computing},
    SERIES = {Contemp. Math.},
    VOLUME = {536},
     PAGES = {33--48},
 PUBLISHER = {Amer. Math. Soc., Providence, RI},
      YEAR = {2011},
      ISBN = {978-0-8218-4975-0},
   MRCLASS = {68Q12 (68Q15 81P68)},
  MRNUMBER = {2768792},
MRREVIEWER = {Salman\ Beigi},
       DOI = {10.1090/conm/536/10552},
       URL = {https://doi.org/10.1090/conm/536/10552},
}

@inproceedings{BeaudrapG16,
  author       = {J. Niel de Beaudrap and
                  Sevag Gharibian},
  editor       = {Ran Raz},
  title        = {A Linear Time Algorithm for Quantum 2-SAT},
  booktitle    = {31st Conference on Computational Complexity, {CCC} 2016, Tokyo, Japan,
                  May 29 - June 1, 2016},
  series       = {LIPIcs},
  pages        = {27:1--27:21},
  publisher    = {Schloss Dagstuhl - Leibniz-Zentrum f{\"{u}}r Informatik},
  year         = {2016},
  url          = {https://doi.org/10.4230/LIPIcs.CCC.2016.27},
  doi          = {10.4230/LIPIcs.CCC.2016.27},
  bibsource    = {dblp computer science bibliography, https://dblp.org}
}

@article{GossetN16,
  author       = {David Gosset and
                  Daniel Nagaj},
  title        = {Quantum 3-SAT Is QMA\({}_{\mbox{1}}\)-Complete},
  journal      = {{SIAM} J. Comput.},
  volume       = {45},
  number       = {3},
  pages        = {1080--1128},
  year         = {2016},
  url          = {https://doi.org/10.1137/140957056},
  doi          = {10.1137/140957056},
  bibsource    = {dblp computer science bibliography, https://dblp.org}
}

@article{PiddockM17,
  author       = {Stephen Piddock and
                  Ashley Montanaro},
  title        = {The complexity of antiferromagnetic interactions and 2D lattices},
  journal      = {Quantum Inf. Comput.},
  volume       = {17},
  number       = {7{\&}8},
  pages        = {636--672},
  year         = {2017},
  url          = {https://doi.org/10.26421/QIC17.7-8-6},
  doi          = {10.26421/QIC17.7-8-6},
  bibsource    = {dblp computer science bibliography, https://dblp.org}
}

@inproceedings{HwangNP0W23,
  author       = {Yeongwoo Hwang and
                  Joe Neeman and
                  Ojas Parekh and
                  Kevin Thompson and
                  John Wright},
  editor       = {Nikhil Bansal and
                  Viswanath Nagarajan},
  title        = {Unique Games hardness of Quantum Max-Cut, and a conjectured vector-valued
                  Borell's inequality},
  booktitle    = {Proceedings of the 2023 {ACM-SIAM} Symposium on Discrete Algorithms,
                  {SODA} 2023, Florence, Italy, January 22-25, 2023},
  pages        = {1319--1384},
  publisher    = {{SIAM}},
  year         = {2023},
  url          = {https://doi.org/10.1137/1.9781611977554.ch48},
  doi          = {10.1137/1.9781611977554.ch48},
  bibsource    = {dblp computer science bibliography, https://dblp.org}
}

@article{Piddock25,
  author       = {Stephen Piddock},
  title        = {Quantum Max-Cut is {NP} hard to approximate},
  journal      = {CoRR},
  volume       = {abs/2510.07995},
  year         = {2025},
  url          = {https://doi.org/10.48550/arXiv.2510.07995},
  doi          = {10.48550/arXiv.2510.07995},
  eprinttype   = {arXiv},
  bibsource    = {dblp computer science bibliography, https://dblp.org}
}

@article{CubittM16,
  author       = {Toby S. Cubitt and
                  Ashley Montanaro},
  title        = {Complexity Classification of Local Hamiltonian Problems},
  journal      = {{SIAM} J. Comput.},
  volume       = {45},
  number       = {2},
  pages        = {268--316},
  year         = {2016},
  url          = {https://doi.org/10.1137/140998287},
  doi          = {10.1137/140998287},
  bibsource    = {dblp computer science bibliography, https://dblp.org}
}

@inproceedings{GharibianP19,
  author       = {Sevag Gharibian and
                  Ojas Parekh},
  editor       = {Dimitris Achlioptas and
                  L{\'{a}}szl{\'{o}} A. V{\'{e}}gh},
  title        = {Almost Optimal Classical Approximation Algorithms for a Quantum Generalization
                  of Max-Cut},
  booktitle    = {Approximation, Randomization, and Combinatorial Optimization. Algorithms
                  and Techniques, {APPROX/RANDOM} 2019, Massachusetts Institute of Technology,
                  Cambridge, MA, USA, September 20-22, 2019},
  series       = {LIPIcs},
  pages        = {31:1--31:17},
  publisher    = {Schloss Dagstuhl - Leibniz-Zentrum f{\"{u}}r Informatik},
  year         = {2019},
  url          = {https://doi.org/10.4230/LIPIcs.APPROX-RANDOM.2019.31},
  doi          = {10.4230/LIPIcs.APPROX-RANDOM.2019.31},
  bibsource    = {dblp computer science bibliography, https://dblp.org}
}

\appendix

\end{document}